\documentclass[11pt,a4paper]{article}

\usepackage[utf8]{inputenc}
\usepackage[T1]{fontenc}
\IfFileExists{lmodern.sty}{\usepackage{lmodern}}{}
\IfFileExists{microtype.sty}{\IfFileExists{lmodern.sty}%
  {\usepackage{microtype}}{\usepackage[expansion=false]{microtype}}}{}
\usepackage[british]{babel}

\usepackage[a4paper,margin=2.5cm]{geometry}
\usepackage{amsmath,amssymb,amsthm,mathtools}
\usepackage{bm}
\usepackage{graphicx}

\usepackage{float}                 
\usepackage[section]{placeins}

\usepackage{booktabs}
\usepackage{tabularx}
\usepackage{longtable}
\usepackage{array}
\usepackage{multirow}
\graphicspath{{figs/}{./}}
\usepackage{caption}
\usepackage{enumitem}
\setlist{nosep,leftmargin=1.5em}
\usepackage[numbers,square,sort&compress]{natbib}
\usepackage{url}
\usepackage{authblk}
\usepackage{titlesec}
\titleformat*{\section}{\large\bfseries}
\titleformat*{\subsection}{\normalsize\bfseries}
\usepackage{orcidlink}
\usepackage{tocloft}
\usepackage{hyperref}
\hypersetup{
  colorlinks  = true,
  linkcolor   = blue!50!black,
  citecolor   = blue!50!black,
  urlcolor    = blue!50!black,
  pdftitle    = {Quantifying Compromise Risk in Exceptional Access
                 Architectures Under Sparse and Indirect Evidence},
  pdfauthor   = {Alan Woodward},
  pdfsubject  = {Cybersecurity risk quantification; exceptional access systems;
                 Bayesian hierarchical model; Monte Carlo simulation;
                 prior predictive analysis},
  pdfkeywords = {exceptional access; lawful intercept; quantitative risk;
                 Monte Carlo; Bayesian hierarchical model;
                 prior predictive analysis; attack graph; uncertainty quantification}
}
\usepackage{xcolor}
\usepackage{tikz}
\usetikzlibrary{arrows.meta,fit,positioning,shapes.geometric,calc}

\theoremstyle{definition}
\newtheorem{definition}{Definition}
\newtheorem{assumption}{Assumption}
\theoremstyle{plain}
\newtheorem{proposition}{Proposition}
\newtheorem{remark}{Remark}

\newcommand{\Lk}{\Lambda^{\mathrm{key}}}
\newcommand{\La}{\Lambda^{\mathrm{acc}}}

\newcommand{\pcondkey}{p^{\mathrm{cond,key}}}
\newcommand{\pcondacc}{p^{\mathrm{cond,acc}}}

\newcommand{\wapt}{w^{\mathrm{APT}}}
\newcommand{\Pinsider}{p_{\mathrm{insider}}}
\newcommand{\Pzeroday}{p_{\mathrm{zeroday}}}
\newcommand{\Psupply}{p_{\mathrm{supply}}}
\newcommand{\Pops}{p_{\mathrm{ops}}}
\newcommand{\Pref}{P_{\mathrm{ref}}}
\newcommand{\given}{\,|\,}
\newcommand{\PR}[1]{\Pr\!\left(#1\right)}

\newcommand{\calG}{\mathcal{G}}

\newcommand{\calV}{\mathcal{V}}
\newcommand{\calE}{\mathcal{E}}

\newcommand{\btheta}{\bm{\theta}}

\title{\bfseries Quantifying Compromise Risk in Exceptional Access
Architectures Under Sparse and Indirect Evidence}
\author[1]{Alan Woodward\,\orcidlink{0000-0002-8472-4836}}

\affil[1]{Surrey Centre for Cyber Security, University of Surrey,

  Guildford GU2~7XH, UK\\

  \texttt{alan.woodward@surrey.ac.uk}\\

Approximate word count: 25{,}600 (main text excluding abstract, tables, figures, captions, and references).}
\date{}
\begin{document}
\maketitle
\tableofcontents
\newpage

% ============================================================
\begin{abstract}
\noindent
Lawful exceptional access (EA) systems hold the cryptographic keys
that let authorised parties decrypt protected communications. The
debate over their risks has been long and qualitative, and it is
complicated by two problems: no public dataset of EA-specific
compromise events exists, so risk assessment must proceed from
sparse and indirect evidence, and prior work has treated
structurally different designs as equivalent, although
transmission-layer EA in carrier infrastructure (T-EA) and
over-the-top EA at the platform layer (OTT-EA) differ in how
cryptographic keys relate to ciphertext data.

This paper builds a structured uncertainty framework for evaluating
systemic compromise risk in EA architectures. It does not produce
predictive forecasts, which the available evidence cannot support;
it separates findings robust to assumption choices from findings
that depend on calibration. Four analytical layers are applied in
parallel to T-EA and OTT-EA: three empirical pillars (historical
analogues, a Monte Carlo scenario layer, and a channel-independence
decomposition) plus a Bayesian Structural Risk Model on a
parallel-subgraph attack graph.

The central findings are structural in the model's sense of the
word. \emph{First}, EA-equipped architectures of either class carry
weakly higher modelled risk than their no-EA counterfactual, an
ordering independent of calibration because it is enforced by the
modifier constraint $\delta \geq 1$ rather than found; its substance
is the empirical case for that constraint. \emph{Second}, the classes differ in distribution
\emph{shape}: T-EA risk is dominated by central tendency, OTT-EA by
the tail under correlated campaigns, a divergence driven by the
cross-cutting coupling prior. \emph{Third},
calibration-conditional annual probability ranges span $1.4\%$
to $12.9\%$ for T-EA across the $3{\times}$--$15{\times}$
structured-judgement targeting-premium interval, with Fréchet--Hoeffding intervals of
$[2.2\%, 7.5\%]$ for T-EA and $[1.1\%, 4.0\%]$ for OTT-EA under any
dependence structure, conditional on the Pillar~II per-scenario
probabilities.
Over multi-decade horizons, cumulative compromise is well above
zero, and key-material exfiltration is irreversible, an asymmetry
weighing heavily on OTT-EA's larger user populations. The framework
quantifies compromise \emph{probability}, not expected harm;
consequence modelling and benefit estimation are outside its
scope.

\smallskip
\noindent\textbf{Keywords:} exceptional access; lawful intercept;
quantitative risk; uncertainty quantification; Bayesian hierarchical model;
attack graph; tail risk.
\end{abstract}

% =============================================================================
\section{Introduction}\label{sec:intro}

The debate over lawful exceptional access (EA) systems is now
three decades old. It dates back at least to the Clipper Chip
proposals of the early 1990s \citep{abelson1997}. EA systems let
authorised parties decrypt data or intercept end-to-end encrypted
communications under defined legal conditions. The landmark
analyses by \citet{abelson1997,abelson2015} argued, on structural
grounds, that any EA system is inherently less secure than the
same system without EA. Key-recovery mechanisms introduce attack
surfaces that would otherwise not exist. Both papers made this
case qualitatively. To our knowledge, the peer-reviewed literature
has not yet produced a structured framework for comparative
quantitative reasoning across EA architectures. A framework of that kind should
not be confused with predictive forecasts of specific compromise
probabilities. The available evidence cannot support those.

A second limitation of prior work is that it treats different EA
designs as if they were the same. EA proposals fall into at least
four recognisable architectural classes
(Section~\ref{subsec:taxonomy}). \emph{Transmission-layer EA}
(T-EA) operates in carrier infrastructure. The canonical example
is CALEA-mandated lawful intercept at US telecommunications
carriers. \emph{Over-the-top EA} (OTT-EA) operates at the
messaging-platform layer. It covers server-side decryption
mandates and the Levy--Robinson ``ghost user'' proposal.
\emph{Hybrid client-side scanning} architectures include the EU
Chat Control proposals. \emph{Distributed threshold schemes} are
the fourth class. These four differ in how cryptographic key
material relates to the ciphertext data it can decrypt. That
relationship is a first-order driver of compromise risk, in both
form and magnitude. A risk framework that lumps all four together
cannot support the comparative reasoning the policy debate
requires.

The problem is hard for a structural reason. A forecasting-style
risk analysis would estimate a specific quantity: the annual
probability that an EA system's cryptographic keys are
exfiltrated, or that ciphertext is obtained along with enough key
material to decrypt the target communications. This quantity
cannot be measured directly. No public dataset of EA-specific
compromise events exists. The systems whose risk most warrants
assessment are prospective by definition. The problem therefore
sits in the regime of \emph{deep uncertainty}
\citep{marchau2019,lempert2003}. In a deep-uncertainty setting,
analysts can reasonably disagree about three things at once:
parameter values, the structure of the probability distributions,
and which historical analogues are relevant. Methods designed for
predictive estimation under ordinary parameter uncertainty cannot
simply be imported. They assume a level of empirical
identification that is not available here.

\subsection{Decision Support Objective}

This paper takes a decision support stance rather than an
estimation stance. The framework does not try to predict the
annual compromise probability of a specific deployed EA system. It
supports \emph{structured comparative reasoning} about three
questions. Which conclusions are robust across the range of
defensible assumptions? Which are sensitive to specific
calibration choices? Which architectural features drive the
divergence between EA classes? The intended output is not a
calibrated point projection. It is an explicit account of what
has to be assumed, and what cannot be assumed, for a risk claim
of any given magnitude or shape to follow.

Four goals shape the design. \emph{First}, constrain the
plausible values within each architectural class to a
structurally bounded range. \emph{Second}, identify the
architectural features driving the comparison between classes.
\emph{Third}, make explicit which conclusions are sensitive to
unobservable assumptions and which are not. \emph{Fourth},
characterise the distributional uncertainty that any specific
projection conceals. These goals are different in kind. No single
analytical method delivers all four, and the design of this paper
is shaped by that fact.

\emph{Two scoping choices are explicit from the outset.} First, the
framework addresses the risk side of EA evaluation only. A complete
risk-benefit determination requires comparable quantitative estimates
of operational benefits: counterfactual law-enforcement effectiveness,
the case-fraction in which EA access is decisive, harm reduction
attributable to EA-enabled interceptions, and displacement and
substitution effects. None is estimated here. No published analogue of
the present framework synthesises that literature into comparable
quantitative form. The output is one input to a net-benefit
determination, not a substitute for it. Second, the empirical inputs
draw predominantly from Western liberal-democratic jurisdictions:
Stream~A from US National Security Systems, Five~Eyes infrastructure,
and CALEA-mandated carriers; Stream~B from Western-rooted browser
trust stores; Stream~C from US-based operators disclosed under US/EU
breach-notification regimes. The calibrated central magnitudes are
specific to these jurisdictional contexts. The architectural-ordering,
Fr\'echet--Hoeffding interval, and tail-shape findings rest on
structural arguments (particularly $\gamma_{\mathrm{X}}/\gamma \geq 1$) that
are jurisdiction-independent and survive transfer to other EA regimes,
even though their calibrated magnitudes there would differ.

We use four layers, applied in parallel to T-EA and OTT-EA. The
first three form a \emph{Three-Pillar Empirical Framework} that
approaches the problem through distinct evidence types and
inferential methods. Each pillar produces
architecture-conditional outputs that bound the problem from a
different direction: empirical analogy, scenario-conditional
projection, and architectural-channel decomposition. The fourth layer,
a \emph{Bayesian Structural Risk Model}, replaces those point estimates
with full prior predictive distributions and explicitly models the
dependence structure that the three-pillar approach can only bound. For
OTT-EA, the Bayesian attack graph is extended with parallel key-side and
data-side subgraphs joined by an operational-compromise AND-node, with a
latent campaign variable $Z(t)$ coupled to cross-cutting attack paths.

\subsection{The Four Analytical Layers}

\textbf{Pillar I. Historical Analogy Analysis.} Three data streams provide
empirical anchors covering different architectural classes. Stream~A
applies six statistical methods to a 13-year dataset of major breaches of
analogous high-security government systems, anchoring the T-EA class.
Stream~B applies Poisson cohort methods to Certificate Authority private-key
compromises across 130 trusted CAs over 19 years, providing an
architectural bridge that holds keying material at scale across a
well-defined operator population. Stream~C provides the OTT-leaning anchor
through a documented incident base of platform-level key-or-data
compromises (Storm-0558, LastPass 2022, Okta 2022/2023, Microsoft Midnight
Blizzard 2024, and related incidents), classified by whether keys, data,
both, or neither were obtained. Pillar~I establishes empirical grounds:
an upper bound from the general analogue population ($\sim$36\%
T-EA-leaning), a lower-magnitude base rate from the CA cohort (0.54\%
per system-year), and an architecture-conditional rate for OTT-leaning
analogues from Stream~C.

\textbf{Pillar II. Prospective Scenario Model.} A formally specified Monte
Carlo simulation integrating EPSS, MITRE ATT\&CK, the Verizon DBIR, and
adversary-tier intelligence parameterises seven attack scenarios across a
two-tier adversary model. The conditional key-material probability
$\pcondkey_i$ is decomposed differently for each architectural class:
for T-EA, $p^{\mathrm{cond,key,T}}_i = P_1 \cdot P_2$ (reach key-management
layer; HSM extraction), and for OTT-EA, $p^{\mathrm{cond,key,OTT}}_i = P_1
\cdot P_2 \cdot P_3 \cdot P_4$ (the same two stages plus segregated-data
acquisition $P_3$ and multi-stage detection avoidance $P_4$). Under
independence, central projections fall in the low single-digit
percentage range (around 7\% for T-EA; lower for well-segregated
OTT-EA, with magnitude depending on the segregation parameter and
cross-cutting attack-path fraction). A Fréchet--Hoeffding bounding
analysis establishes the model-consistent intervals under any
dependence structure, and these intervals (not the point
projections) are the framework's primary policy-relevant output
from Pillar~II.

\textbf{Pillar III. Architectural Heuristic Decomposition.} An analysis of
the minimum viable EA architecture identifies four non-eliminable risk
channels (insider, zero-day, supply chain, operational error). Combining
empirical best-in-class component performance under independence
across the four channels gives the \emph{channel-aggregate projection}
($1 - \prod_c (1 - P_c)$), used as a single-number reference within the
plausibility range. For T-EA this yields a value in the low single-digit
percentage range (around 5\% annually). For OTT-EA,
each channel is weighted by its cross-cutting fraction (the proportion of
that channel's risk that simultaneously compromises $\mathcal{K}$ and
$\mathcal{D}$), with non-cross-cutting risk attenuated by the segregation
factor $P_3 \cdot P_4$.

\textbf{Layer IV. Bayesian Structural Risk Model.} A Bayesian hierarchical
model combines a directed attack-graph representation, a non-homogeneous
Poisson hazard process with log-normal priors on edge hazard rates, domain
transfer from analogue systems, an explicit latent-variable dependence model
for coordinated adversarial campaigns, and Monte Carlo simulation. For the
OTT-EA case the attack graph is extended with parallel key-side and
data-side subgraphs joined by an operational-compromise AND-node, with the
campaign variable $Z(t)$ coupled specifically to cross-cutting edges.
The model's priors are calibrated, separately for each architectural class,
to produce prior predictive outputs consistent with the architecture-matched
three-pillar empirical range. Consequently, the per-class median is not
independent corroboration of that range. The model's unique contributions
are (1) tail shape, (2) dependence quantification, and (3) exceedance
probabilities conditional on the calibration. Three key findings
follow. \emph{(a)}~Explicit dependence modelling inflates the
annual 95th percentile substantially for both architectural
classes. The inflation is larger for OTT-EA, where correlated
campaigns collapse the segregation gain. \emph{(b)}~Exceedance
probabilities at low annual thresholds (e.g.\ $1\%$) are above
$0.9$ under the primary calibration for both classes. This is
conditional on prior choice, with full prior-sensitivity reported.
\emph{(c)}~EA-equipped architectures of either class carry
weakly higher modelled risk than the no-EA counterfactual. The
ordering is a coherence property enforced by the modifier
constraint $\delta \geq 1$; its empirical content is the
observation, argued from the incident record, that EA
infrastructure adds attack surface, attracts adversarial targeting,
and concentrates key material above and beyond comparable non-EA
systems.

\subsection{Motivating Cases}

The threat model is not hypothetical for either architectural class.
Real incidents from both, summarised below and discussed at length in
Sections~\ref{sec:pillar1_streamA} and~\ref{sec:pillar1_streamC},
make the modelled scenarios concrete.

\textit{T-EA cases.} The 2024 Salt Typhoon campaign, in which Chinese
state-sponsored actors compromised CALEA-mandated lawful-intercept
infrastructure at nine or more US telecommunications carriers and obtained
the government's active wiretap target list
\citep{volz2024salttyphoon,eff_salttyphoon}, provides the
most recent operational demonstration that T-EA systems constitute
high-value, specifically targeted attack surfaces. The Greek Vodafone
incident of 2004--2005, in which malicious software was installed on the
country's lawful-intercept infrastructure enabling surveillance of over 100
senior government officials \citep{prevelakis2007}, provides an earlier
independently documented precedent. The Crypto~AG disclosures
(Operation~Rubicon, declassified 2020) \citep{cryptoag2020} illustrate a
third class of T-EA-relevant compromise, in which the cryptographic
infrastructure itself was operated as a covert intelligence channel.

\textit{OTT-EA cases.} No public OTT-EA-specific compromise is documented,
because no jurisdiction has yet mandated OTT-EA at scale. The closest
empirical analogues are platform-level key-or-data compromise incidents in
non-EA contexts, which constitute the Stream~C empirical base
(Section~\ref{sec:pillar1_streamC}). The Storm-0558 campaign (Microsoft
2023), in which a Chinese state-sponsored actor obtained a Microsoft consumer
signing key and forged tokens to access Outlook Web Access mailboxes of US
government officials, demonstrates the catastrophic outcome of master-key
compromise at platform scale. The LastPass 2022 incident, in which an
attacker obtained both encrypted password vaults and source code revealing
encryption details across two compromise stages, illustrates the
multi-stage compromise pattern characteristic of OTT-relevant attacks. The
Microsoft Midnight Blizzard incident (2024), in which the same Russian
state-sponsored actor that conducted SolarWinds obtained source code and
internal systems access through password-spray and OAuth abuse, illustrates
the persistence of platform-level targeting against trusted infrastructure
operators. These incidents do not establish frequency in the EA-mandated
case, no such case yet exists, but they document that platform-level
compromise at the relevant scale is operational.

\textbf{Role of these cases in the framework.} The T-EA incidents enter the
quantitative parameter calibration only via Stream~A, with Salt Typhoon
treated as a sensitivity case (Supplementary \S\ref{sec:salt_typhoon_role}); they
are not used as frequency calibration points individually. The OTT-leaning
incidents enter the calibration via Stream~C, with explicit inclusion
criteria documented in Section~\ref{sec:pillar1_streamC}. No incident from
either set is used to fix a probability point estimate. Their function is
to establish scenario plausibility and to provide architecturally matched
prior calibration for the Bayesian layer.

\subsection{Structure of the Paper}

The paper is laid out as follows. Section~\ref{sec:background} situates the
work in the existing literature and lists the empirical sources we draw on.
Section~\ref{sec:threat_model} fixes our system and threat model and gives
formal compromise definitions. Section~\ref{sec:framework} sets out the
four-layer scheme. The three pillars then occupy
Sections~\ref{sec:pillar1}--\ref{sec:pillar3}, in order: historical analogy,
the scenario model, and the channel-aggregate projection. The Bayesian
structural risk model follows in Section~\ref{sec:bayesian}. Results from all
four layers are gathered in Section~\ref{sec:results}, with the
cross-layer synthesis in Section~\ref{sec:synthesis}.
Sections~\ref{sec:limitations} and~\ref{sec:policy} address limitations
and policy implications respectively.

\subsection{Principal Findings at a Glance}

\paragraph{Reader's guide to the principal claims.} Different
findings of the framework carry different assumption-robustness
status, and the synthesis hierarchy (S1 most assumption-robust,
through S6) gives each one an explicit tier. Table~\ref{tab:readers_guide}
maps each principal claim to the section where it is established and
to its tier in the synthesis hierarchy; full definitions of the
tiers are given in \S\ref{sec:synthesis}.

\begin{table}[!htbp]
\centering
\small
\caption{Reader's guide to the principal claims of the framework. The
``Assumption-robustness tier'' column refers to the synthesis hierarchy of
\S\ref{sec:synthesis}, with S1 the most assumption-robust (structural
finding holding under any defensible parameter choice) and S6 the
most calibration-dependent. The tier is conservative: where a claim
spans tiers (typically because its direction is structural but its
magnitude is calibration-conditional) the lower-tier (more
robust) qualifier is shown.}
\label{tab:readers_guide}
\begin{tabularx}{\textwidth}{@{}p{0.50\textwidth}lX@{}}
\toprule
\textbf{Claim} & \textbf{Established in} & \textbf{Assumption-robustness tier} \\
\midrule
EA-equipped architectures carry weakly higher modelled compromise
risk than the no-EA counterfactual within the same architectural
class, conditional on the superset-graph and $\delta \geq 1$
assumptions. A coherence property of the model, not an empirical
finding; the dominance is weak, with prior mass $0.108$ on equality
  & Proposition~\ref{prop:dominance} (\S\ref{sec:bayesian})
  & S1: internal coherence \\
OTT-EA risk distribution has a heavier upper tail than T-EA under
correlated-campaign conditions. The direction is a consequence of the
coupling prior, since $\xi_{\mathrm{OTT}}/\xi_{\mathrm{T}} \approx
\gamma_{\mathrm{X}}/\gamma \geq 1$ by construction, and is not
evidential. The informative content is the goodness-of-fit asymmetry
between the classes
  & \S\ref{subsec:tail_algebra},
    \S\ref{subsec:dependence_results},
    Table~\ref{tab:dependence_comparison}
  & S2: direction assumed, fit asymmetry observed \\
Upper-quantile risk ordering: OTT-EA upper quantiles exceed T-EA's
across the upper $5$--$10\%$ of the prior predictive distribution,
with the difference widening under correlated-campaign dependence
(direct empirical percentile comparison, distribution-free; the
parametric GPD characterization is supported for T-EA but rejected
for OTT-EA, Supplementary \S\ref{sec:supp_gpd_gof})
  & \S\ref{subsec:tail_risk},
    Supplementary \S\ref{sec:supp_gpd_gof}
  & S2--S3: structural mechanism, parametric tail-index claim
    weakened \\
Empirical anchor cohorts (Streams~A, B, C) project baseline
compromise rates on the order of $1\%$ per system-year (pre-EA
targeting premium)
  & \S\ref{sec:pillar1_streamA}--\ref{sec:pillar1_streamC}
  & S4: empirical-anchor plausibility \\
Annual EA-projected compromise probability lies in the
Fr\'echet--Hoeffding interval $[2.2\%, 7.5\%]$ for T-EA and
$[1.1\%, 4.0\%]$ for OTT-EA under any dependence structure
compatible with the seven-scenario model
  & \S\ref{sec:fh_bounds}
  & S4: interval dominance \\
The T-EA-versus-OTT-EA comparison is not first-order stochastic
dominance but a central-tendency-versus-tail-risk trade-off: T-EA
exceeds OTT-EA at the median, OTT-EA exceeds T-EA in the upper tail
  & Remark~\ref{rem:cross_arch}
  & S6: cross-architecture trade-off \\
Multi-decade (10--25 year) cumulative compromise probability is
materially above zero under any defensible parameter choice
  & \S\ref{subsec:dependence_results},
    Table~\ref{tab:cumulative}
  & S1--S4: qualitative direction structural; specific magnitudes
    calibration-conditional \\
\bottomrule
\end{tabularx}
\end{table}

\begin{figure}[!htbp]
\centering
\resizebox{0.96\textwidth}{!}{%
\begin{tikzpicture}[
  font=\small,
  box/.style={draw, rounded corners=3pt, line width=0.5pt, align=center, text width=4.0cm, minimum height=3.0cm, fill=blue!5, inner sep=8pt},
  arr/.style={-{Latex[length=2.1mm]}, thick}
]
\node[align=center, font=\bfseries] (title1) at (6.6,0) {Three-pillar analytical structure for exceptional-access risk assessment};
\node[box, below=0.7cm of title1.south, anchor=north, xshift=-4.5cm] (p1) {\textbf{Pillar I}\\[2pt]Historical analogy\\[3pt]\footnotesize Empirical breach records;\\classical statistical\\estimators\\[4pt]\textbf{Output:}\\upper anchor};
\node[box, right=0.55cm of p1] (p2) {\textbf{Pillar II}\\[2pt]Prospective scenario\\model\\[3pt]\footnotesize EPSS, ATT\&CK, DBIR;\\adversary tiering;\\Monte Carlo projection\\[4pt]\textbf{Output:}\\model-consistent interval};
\node[box, right=0.55cm of p2] (p3) {\textbf{Pillar III}\\[2pt]Architectural heuristic\\[3pt]\footnotesize Component minima for\\irreducible risks;\\independence-based\\aggregation\\[4pt]\textbf{Output:}\\channel-aggregate projection};
\node[draw, rounded corners=3pt, line width=0.5pt, align=center, fill=green!8, text width=12.0cm, minimum height=1.1cm, below=0.95cm of p2] (syn) {\textbf{Synthesis}\\[2pt]\footnotesize Constrain plausible values, distinguish benchmark classes,\\ and support cumulative-risk and policy analysis under deep uncertainty};
\draw[arr] (p1.south) -- ($(p1.south) + (0,-0.3)$) -| ($(syn.north) + (-4.4,0)$);
\draw[arr] (p2.south) -- (syn.north);
\draw[arr] (p3.south) -- ($(p3.south) + (0,-0.3)$) -| ($(syn.north) + (4.4,0)$);
\node[draw, rounded corners=5pt, line width=0.4pt, fit=(title1)(p1)(p2)(p3)(syn), inner sep=10pt] {};
\end{tikzpicture}%
}
\captionsetup{width=0.88\textwidth}
\caption{Overview of the three-pillar framework. Each pillar uses a distinct evidence type and inferential role; their outputs are complementary rather than interchangeable, and are interpreted as decision-support inputs rather than as point estimates.}
\label{fig:pillar_overview}
\par\smallskip\noindent\small\textit{Alt text:} Diagram of the three-pillar framework: three labelled boxes (Pillar I: historical analogy; Pillar II: Monte Carlo scenarios; Pillar III: structural decomposition) feed into a synthesis node, with arrows indicating complementary evidence types and inferential roles.
\end{figure}

% =============================================================================
\section{Background and Related Work}\label{sec:background}

\subsection{Exceptional Access Systems}

Three closely related terms recur in the policy and technical literature
and are worth distinguishing at the outset. \emph{Lawful intercept} (LI)
denotes the authorised real-time interception of communications content
under a specific legal warrant, typically implemented at carrier
infrastructure (the CALEA architecture in the United States is the
canonical instance). \emph{Key escrow} denotes any architectural
arrangement that deposits cryptographic key material with a third party
to make it recoverable independently of the communicating parties. The
1990s Clipper Chip proposal is the historical exemplar.
\emph{Exceptional access} (EA) is the broader category, encompassing
both LI and key-escrow mechanisms as well as later proposals (split-key
schemes, server-side decryption mandates, ghost-user constructions,
client-side scanning). Where this paper analyses a specific
architectural class (transmission-layer EA / T-EA; over-the-top EA /
OTT-EA), the term EA refers to the umbrella concept. Where it analyses
historical incidents at carrier infrastructure (Salt Typhoon, the
Athens affair), LI refers specifically to that subset.

The term \emph{exceptional access} encompasses key escrow, key recovery,
split-key schemes, and server-side decryption mandates
\citep{abelson2015,landau2017,oecd2024}. All such schemes share the structural
property of creating an alternative access path to protected communications
that is available independently of the communicating parties \citep{abelson1997}.
At the protocol-engineering level, the Internet Engineering Task Force
formally adopted a policy in 2000 declining to support wiretap
capabilities in IETF protocols \citep{rfc2804}. The Internet
Architecture Board has since reaffirmed that IETF protocol work
prioritises the interests of end users \citep{rfc8890}, reflecting a
sustained architectural objection from the protocol-design community.
Synthesising the technical literature on exceptional access
\citep{abelson2015,abelson1997}, we distinguish five structural risk classes
that separate EA systems from their non-EA counterparts: (1)~single
point of failure at the key store; (2)~insider threat through privileged EA
roles; (3)~increased complexity correlated with vulnerability density;
(4)~forward-secrecy degradation; and (5)~scale risk from deployment across
diverse operators.

Related proposals include the ``ghost user'' model of Levy and Robinson
\citep{levy2018ghosts}, which attracted substantial cryptographic critique
\citep{ghostletter2019}, and EU-level client-side scanning approaches \citep{eu2022chatcontrol},
assessed technically by Kulshrestha and Mayer \citep{kulshrestha2021identifying}.
Formal cryptographic attacks against key-escrow and malicious cryptography
constructions are surveyed by Young and Yung \citep{young2004malicious}. The
provable-security framework for analysing cryptographic primitives, within which
EA schemes would be formally evaluated, is exemplified by \citet{bellare1996provably}.

\subsection{Cyber Risk Quantification}

Quantitative cyber risk has been approached through several frameworks. The
FAIR methodology \citep{freund_jones_2014} decomposes risk into probability and
magnitude factors estimated from expert judgment and analogical data. The
Gordon-Loeb model \citep{gordon2002economics} addresses security investment
optimisation. Bayesian belief networks have been applied to network intrusion
risk \citep{poolsappasit2012dynamic} and software defect prediction
\citep{fenton2012risk}. Attack-graph approaches are developed by Ou
\emph{et al.}\ \citep{ou2006scalable} (scalable logical-graph
generation) and Wang \emph{et al.}\
\citep{wang2006network} (attack-graph--based alert correlation and
prediction), while Homer \emph{et al.}\ \citep{homer2013aggregating}
contribute probabilistic aggregation of per-vulnerability metrics across attack
graphs. Models for the time to compromise a system, as a function of attacker
skill and visible vulnerabilities, have also been developed \citep{mcqueen2006}.
None of this literature specifically addresses
the structural properties of EA key-management infrastructure.

For the empirical pillars, application of Poisson-process models, Bayesian
Gamma--Poisson conjugate inference, exponential survival analysis, actuarial
modelling, and bootstrap resampling to security incident data follows standard
treatments \citep{lawless2003,gelman2013,efron_tibshirani_1993}. The
rule-of-three zero-event estimator \citep{hanley1983} and actuarial
compound-Poisson models \citep{klugman2012} complete the methodological
repertoire. \citet{allodi_massacci_2014} provide foundational empirical work on
vulnerability severity and exploitation frequency distributions.

\subsection{Empirical Sources and Evidential Status}

Pillar~II draws on four published empirical sources, each
documented and applied in detail in the methodology section. The
Exploit Prediction Scoring System (EPSS)
\citep{jacobs2021epss,jacobs2023epss} is an open framework
assigning daily exploitation-probability scores to published CVEs.
It is used here as an empirically motivated prior for the
opportunistic adversary tier. MITRE ATT\&CK v15
\citep{strom2018attack,mitre2024attack} provides externally
auditable mappings from attack scenarios to documented adversary
techniques. The Verizon DBIR \citep{dbir2024,dbir2025} supplies
breach-frequency statistics and the actor-mix data used to bound
the EA targeting premium: the share of breaches attributable to
state-aligned espionage actors, resolved by target sector. Mandiant
M-Trends 2024 \citep{mtrends2024} provides adversary dwell-time and
detection statistics.
The CrowdStrike Global Threat Report 2024
\citep{crowdstrike2024} and CISA Joint Advisories
\citep{cisa_volt,cisa_salt} provide complementary nation-state
campaign documentation.

The framework treats EA risk estimation as a problem of \emph{deep
uncertainty} \citep{marchau2019,lempert2003}. This is a regime in
which analysts may disagree about parameter values, about the
structure of probability distributions, and even about the relevance
of historical data. Following established approaches in
climate-change economics and infrastructure planning, the goal is
not to resolve deep uncertainty. The goal is to identify
projections robust across defensible assumptions, specify their
conditions of validity, and equip decision-makers to assess
net-benefit given the range of possible outcomes consistent with
current knowledge.

One methodological objection should be answered at the outset.
The decision-under-deep-uncertainty literature holds that where a
single prior distribution is not defensible, analysis should
characterise performance across the space of plausible assumptions
rather than integrate over one \citep{lempert2003,marchau2019}.
The framework adopts exactly that discipline, in two forms. The
Fr\'echet--Hoeffding analysis (\S\ref{sec:fh_bounds}) bounds the
aggregate compromise probability under \emph{every} dependence
structure compatible with the scenario model, a bounded-probability
treatment that requires no dependence prior at all. And the
Bayesian layer is run not under one prior but under a calibration
set (primary, conservative, pessimistic), with findings reported by
whether they hold across the set or only within members of it
(\S\ref{sec:structural_vs_calibration}). Single-prior medians
appear in this paper as labelled reference points inside those
robustness envelopes, never as the analysis itself. Source versions
for the empirical inputs are pinned (MITRE ATT\&CK v15, DBIR 2024
and 2025, M-Trends 2024, with the EPSS score snapshot archived at
analysis date), and the derived parameter values ship with the
reproducibility package.

% =============================================================================
\section{System and Threat Model}\label{sec:threat_model}

\subsection{Abstract EA Architecture}

\begin{definition}[EA System]
\label{def:ea_system}
An \emph{exceptional access system} is a tuple
$\Sigma = (\mathcal{C}, \mathcal{K}, \mathcal{P}, \mathcal{T}, \mathcal{D})$,
where: $\mathcal{C}$ is the set of communications or storage objects subject
to EA; $\mathcal{K}$ is the key management infrastructure, comprising one or
more hardware or software modules holding cryptographic material enabling EA;
$\mathcal{P}$ is the procedural access-control layer governing requests to
$\mathcal{K}$; $\mathcal{T}$ is the trust hierarchy, comprising the set of
entities whose correct operation is necessary for the security of
$\mathcal{K}$; and $\mathcal{D}$ is the ciphertext-data infrastructure
holding the encrypted communications to which EA access pertains.
\end{definition}

The earlier literature on EA risk has typically treated $\mathcal{K}$ and the
ciphertext-data infrastructure as co-located, since for transmission-layer
EA architectures keys and encrypted traffic converge at the intercept point
\citep{abelson2015,abelson1997}. The explicit inclusion of $\mathcal{D}$ as
a separate tuple component is essential for the architectural taxonomy that
follows: in some EA architectures $\mathcal{K}$ and $\mathcal{D}$ share a
trust boundary and access path, while in others they are deliberately
segregated, and this segregation is a first-order determinant of the
operational compromise probability.

\begin{definition}[Architectural Segregation]
\label{def:segregation}
An EA system $\Sigma$ exhibits \emph{architectural segregation} of degree
$s \in [0, 1]$ if, conditional on adversarial compromise of $\mathcal{K}$
(unauthorised access to key material), the probability of also obtaining
sufficient ciphertext from $\mathcal{D}$ to decrypt target communications,
in the absence of additional adversarial effort directed at $\mathcal{D}$,
is $1 - s$. The fully co-located case ($s = 0$) describes systems in which
$\mathcal{K}$ and $\mathcal{D}$ share a trust boundary, access path, and
operational locus. The fully segregated case ($s = 1$) is a theoretical
limit not achievable in operationally viable systems, since some shared
trust dependency is necessary to reconcile keys with ciphertext under
authorised access.
\end{definition}

\subsubsection{Architectural Taxonomy}\label{subsec:taxonomy}

We distinguish four architectural classes of EA system, characterised by the
relationship between $\mathcal{K}$, $\mathcal{D}$, and the operator's
organisational position:

\textit{Class T (Transmission-layer EA, T-EA).} EA mandates implemented in
network or carrier infrastructure, intercepting traffic in transit. The
operator is a telecommunications carrier or equivalent network provider
acting under state mandate. Examples: CALEA-mandated lawful-intercept
facilities at US telecommunications carriers; ETSI TS 103 221-class
retained-data and intercept architectures in European regimes. In Class T,
$\mathcal{K}$ and $\mathcal{D}$ are operationally co-located at the
intercept point: the same orchestration system that holds session-key
material also processes the encrypted traffic flow. Architectural
segregation is structurally low ($s \approx 0$). Motivating compromises:
Salt Typhoon (2024) \citep{volz2024salttyphoon,eff_salttyphoon},
the Athens affair (2004--05) \citep{prevelakis2007}, and the Crypto AG
disclosures (Operation Rubicon, declassified 2020) \citep{cryptoag2020}.

\textit{Class A (Application-layer or Over-The-Top EA, OTT-EA).} EA mandates
implemented at the messaging-platform or storage-service layer, where the
platform operator is required to retain or escrow keying material enabling
decryption of user communications, or to implement server-side decryption
under lawful order. Examples: server-side decryption mandates and the
Levy--Robinson ``ghost user'' proposal \citep{levy2018ghosts}; the
broader policy debate over law-enforcement access to encrypted content,
including key-escrow and related approaches, is surveyed by the OECD
\citep{oecd2024}. (Client-side scanning architectures are a related but
architecturally distinct mechanism, treated separately as Class~H below.)
The storage-service
variant of Class A applies to deployed end-to-end-encrypted
cloud-storage systems such as Apple's Advanced Data
Protection for iCloud \citep{apple_adp}, in which user data is encrypted under keys
held only by the user's devices. A storage-layer EA mandate would
require the operator to retain or escrow access to data that the
architecture is otherwise designed to place beyond operator reach.
In Class A, $\mathcal{K}$ (key
custody) and $\mathcal{D}$ (encrypted message or file storage) are
typically architecturally segregated: keys reside in dedicated
key-management infrastructure with access controls distinct from
the bulk storage layer holding ciphertext. Segregation $s$ is materially greater than zero, with
magnitude governed by platform implementation. No public EA-specific Class
A compromise is documented because no jurisdiction has yet mandated Class A
EA at scale. The closest empirical analogues are platform-level key-or-data
compromise incidents in non-EA contexts (Storm-0558 against Microsoft
2023, LastPass 2022, the Okta 2022 and 2023 incidents, and Microsoft
Midnight Blizzard 2024), which form the basis for the Stream~C empirical
calibration developed in Section~\ref{sec:pillar1_streamC}.

\textit{Class H (Hybrid client-side scanning, CSS).} Architectures in which
detection or content-matching is performed on client devices using
operator-distributed key material, with positive matches escalated to the
operator. The EU Chat Control proposals \citep{eu2022chatcontrol} and
historical client-side perceptual hashing systems are the canonical
examples. The risk profile combines elements of Class A (operator key
custody) with endpoint-distributed key material whose attack surface is
governed by mobile-platform security properties
\citep{kulshrestha2021identifying,abelson2024bugs}. Class H is treated only
qualitatively in this paper (Supplementary \S\ref{sec:hybrid_css}). Its quantitative
analysis requires endpoint risk parameters outside our calibrated set.

\textit{Class D (Distributed threshold).} Architectures in which key
material is split across $N$ independent trustees with a $t$-of-$N$
reconstruction requirement \citep{micali1992fair,bellare1997verifiable,
denning1996taxonomy}. Class D is treated through architectural analysis in
Section~\ref{sec:threshold}.

\begin{table}[!htbp]
\centering
\small
\caption{Architectural taxonomy of exceptional access systems and their
treatment in this paper. The segregation parameter $s$ is defined in
Definition~\ref{def:segregation}.}
\label{tab:architectures}
\begin{tabularx}{\textwidth}{@{}lXXl@{}}
\toprule
\textbf{Class} & \textbf{$\mathcal{K}$/$\mathcal{D}$ relationship} &
\textbf{Operator type} & \textbf{Treatment} \\
\midrule
T-EA & Co-located, $s \approx 0$ & Carrier under state mandate &
  Quantitative \\
OTT-EA & Segregated, $s \in (0, 1)$ & Global platform under state mandate &
  Quantitative \\
Hybrid CSS & Endpoint-distributed & Platform with client co-operation &
  Qualitative (Supp.~\S\ref{sec:hybrid_css}) \\
Threshold & Split across $t$-of-$N$ trustees & Multi-party &
  \S\ref{sec:threshold} \\
\bottomrule
\end{tabularx}
\end{table}

The EA-specific attack surface arises from $\mathcal{K}$, $\mathcal{P}$,
$\mathcal{T}$, and (in segregated architectures) $\mathcal{D}$ jointly.
The quantitative analysis in this paper covers Classes T (transmission-layer)
and OTT (application-layer). These are the two architectural classes that
dominate contemporary EA policy debate, and their parallel treatment is the
principal extension of this paper relative to prior risk-quantification work
that has implicitly conflated them. Hybrid client-side scanning architectures
(Class H) and distributed threshold schemes (Class D) are treated
qualitatively and through dedicated architectural analysis respectively.

\subsection{Adversary Model}

We distinguish three adversary classes:
\begin{description}
  \item[Nation-state APT ($\mathcal{A}_{\mathrm{NS}}$).] Advanced persistent
    threat actors with high capability (sophisticated zero-day exploitation,
    long-horizon campaigns), high motivation (strategic intelligence value),
    and substantial resources.
  \item[Insider threat ($\mathcal{A}_{\mathrm{IN}}$).] Individuals with
    legitimate access to components of $\mathcal{K}$ or $\mathcal{T}$ who
    abuse that access, whether through malice, coercion, or negligence.
  \item[Criminal / opportunistic ($\mathcal{A}_{\mathrm{CR}}$).] Actors
    motivated primarily by financial gain, with moderate capability.
    Relevant because EA infrastructure presents an attractive target for
    ransomware and data exfiltration.
\end{description}

\begin{assumption}[Adversary independence within classes]
Within each adversary class, individual actors are modelled as statistically
independent except as captured by the shared latent campaign variable $Z$
defined in Section~\ref{subsec:dependence}.
\end{assumption}

% --- Figure: Attack-surface schematic ---
\begin{figure}[!htbp]
\centering
\resizebox{0.94\textwidth}{!}{%
\begin{tikzpicture}[
  font=\small,
  comp/.style={draw, rounded corners=3pt, minimum width=3.0cm, minimum height=1.0cm, align=center, fill=blue!8, line width=0.5pt, inner sep=4pt},
  risk/.style={draw, rounded corners=3pt, minimum width=2.8cm, minimum height=0.9cm, align=center, fill=red!10, line width=0.5pt, inner sep=4pt, font=\footnotesize\itshape},
  arr/.style={-{Latex[length=2.4mm]}, thick, shorten >=2pt, shorten <=2pt},
  riskarr/.style={-{Latex[length=2.0mm]}, thick, dashed, red!70!black, shorten >=2pt, shorten <=2pt}
]
\node[risk] (r_access) at ( 5.0, 2.6) {Access compromise};
\node[risk] (r_supply) at (10.0, 2.6) {Supply-chain /\\orchestration compromise};
\node[comp] (device)   at ( 0.0, 0.0) {User device};
\node[comp] (provider) at ( 5.0, 0.0) {Provider /\\service};
\node[comp] (orch)     at (10.0, 0.0) {EA orchestration\\and access control};
\node[risk] (r_law) at (14.0, 2.6) {Lawful-request workflow\\compromise};
\node[comp] (law)   at (14.0, 0.0) {Lawful-request\\workflow};
\node[comp] (ops) at ( 4.0,-2.6) {Operators /\\trustees};
\node[comp] (hsm) at (10.0,-2.6) {Key escrow / HSM};
\node[risk] (r_insider) at ( 4.0,-4.6) {Insider misuse};
\node[risk] (r_keymat)  at (10.0,-4.6) {Key-material\\exfiltration};
\draw[arr] (device.east)   -- (provider.west);
\draw[arr] (provider.east) -- (orch.west);
\draw[arr] (law.west)      -- (orch.east);
\draw[arr] (orch.south)    -- (hsm.north);
\draw[arr] (ops.east)      -- (hsm.west);
\draw[riskarr] (r_access.south)  -- (provider.north);
\draw[riskarr] (r_supply.south)  -- (orch.north);
\draw[riskarr] (r_law.south)     -- (law.north);
\draw[riskarr] (r_insider.north) -- (ops.south);
\draw[riskarr] (r_keymat.north)  -- (hsm.south);
\end{tikzpicture}%
}
\caption{Simplified attack-surface schematic for a transmission-layer
(Class~T) exceptional-access architecture, in which key material
$\mathcal{K}$ and ciphertext data $\mathcal{D}$ are operationally
co-located. Solid arrows show the primary data-flow paths; dashed red
arrows indicate where each risk class attaches. The mandate creates
compromise pathways concentrated around orchestration, escrow, and
operator trust boundaries that do not exist in the corresponding non-EA
architecture. The corresponding schematic for an OTT-EA (Class~A)
architecture, with $\mathcal{K}$ and $\mathcal{D}$ architecturally
segregated, is given as Figure~\ref{fig:attack_surface_ott} in
Section~\ref{sec:bayesian} where the parallel attack graph is developed.}
\label{fig:attack_surface}
\par\smallskip\noindent\small\textit{Alt text:} Schematic of a transmission-layer EA architecture: cryptographic key material and ciphertext data are operationally co-located; five risk-class attack paths are shown as dashed red arrows attaching at orchestration, escrow, and operational nodes.
\end{figure}

\subsection{Compromise Definitions}\label{subsec:compromise_defs}

For architecturally segregated EA systems, multiple distinct compromise
outcomes must be distinguished, since the operationally relevant outcome is
not equivalent to either of its components alone. We define four annual
compromise probabilities corresponding to qualitatively distinct adversarial
achievements.

\begin{definition}[Technical Compromise]
\label{def:technical_compromise}
A \emph{technical compromise} occurs when an entity not authorised under
$\mathcal{P}$ obtains EA-specific cryptographic key material from
$\mathcal{K}$ sufficient to decrypt one or more communications, were the
corresponding ciphertext also available. We denote the annual probability of
technical compromise by $\Lambda^{\mathrm{key}}$.
\end{definition}

\begin{definition}[Data Compromise]
\label{def:data_compromise}
A \emph{data compromise} occurs when an entity not authorised under
$\mathcal{P}$ obtains EA-pertinent ciphertext from $\mathcal{D}$ in
sufficient volume that, were the corresponding key material also available,
target communications could be decrypted. We denote the annual probability
of data compromise by $\Lambda^{\mathrm{data}}$.
\end{definition}

\begin{definition}[Operational Compromise]
\label{def:op_compromise}
An \emph{operational compromise} occurs when an entity not authorised under
$\mathcal{P}$ jointly obtains key material from $\mathcal{K}$ and ciphertext
from $\mathcal{D}$ sufficient to decrypt one or more target communications.
We denote the annual probability of operational compromise by
$\Lambda^{\mathrm{op}}$. The policy-relevant quantity is this, since
technical or data compromise alone does not enable the adversarial outcome
that motivates concern.
\end{definition}

\begin{definition}[Access-Pathway Compromise]
\label{def:access_compromise}
An \emph{access-pathway compromise} occurs when an entity not authorised
under $\mathcal{P}$ obtains unauthorised access to the EA operational
interface sufficient to issue interception requests through
legitimate-appearing channels. We denote the annual probability by
$\Lambda^{\mathrm{acc}}$. This outcome is distinct from technical, data, or
operational compromise: it involves authorisation abuse rather than
cryptographic-material or ciphertext exfiltration. For policy purposes
$\Lambda^{\mathrm{acc}}$ and $\Lambda^{\mathrm{op}}$ are both
relevant: the former enables targeted decryption under the operator's
own infrastructure; the latter enables out-of-band decryption.
\end{definition}

\begin{remark}[Architectural collapse]
\label{rem:collapse}
For Class T architectures with co-located $\mathcal{K}$ and $\mathcal{D}$,
the conditional probability $\Pr(\text{data} \mid \text{technical})$ is
approximately unity, and consequently $\Lambda^{\mathrm{op}} \approx
\Lambda^{\mathrm{key}}$. The earlier literature's elision of these outcomes
\citep{abelson2015,abelson1997} is correct for the Class T case.
For Class A (OTT) architectures, $\Pr(\text{data} \mid \text{technical}) =
1 - s < 1$ in the absence of correlated adversarial effort, and $1$ under
correlated effort exploiting cross-cutting access paths. The joint
probability $\Lambda^{\mathrm{op}}$ is therefore strictly less than
$\min(\Lambda^{\mathrm{key}}, \Lambda^{\mathrm{data}})$ under independence
and approaches $\min(\Lambda^{\mathrm{key}}, \Lambda^{\mathrm{data}})$
under perfect correlation. The gap is the segregation gain, which is the
quantity that distinguishes the OTT risk profile from the T-EA risk
profile.
\end{remark}

\begin{definition}[Operational Viability]
An EA system is \emph{operationally viable} if it can fulfil its legal
mandate: process authorised lawful-intercept requests, maintain audit
records, operate continuously, and be staffed and maintained over a
multi-year operational lifetime. Any system specification that eliminates a
risk channel at the cost of operational viability is excluded from the
Pillar~III heuristic decomposition.
\end{definition}

\begin{definition}[Compromise Event -- Bayesian Model]
\label{def:compromise}
A \emph{compromise event} $\omega$ occurs at time $t$ if, at time $t$, an
entity not authorised under $\mathcal{P}$ obtains the conditions necessary
to decrypt one or more communications in $\mathcal{C}$ without generating a
detectable audit record. For Class T architectures, those conditions reduce
to obtaining sufficient cryptographic material from $\mathcal{K}$, since
$\mathcal{K}$ and $\mathcal{D}$ are co-located. For Class A architectures,
those conditions require joint compromise of $\mathcal{K}$ and $\mathcal{D}$,
modulated by the segregation parameter $s$
(Definition~\ref{def:segregation}) and the cross-cutting attack-path
fraction (Section~\ref{subsec:dependence}). In either class, sufficient
authority within $\mathcal{T}$ to authorise issuance of intercept material
constitutes a compromise event by the same operational criterion.
\end{definition}

We analyse risk over a primary horizon of $T = 10$ years, reflecting a
plausible operational deployment window. We also report cumulative risk over
$[0, 25]$ years.

\subsection{Notation Summary}\label{sec:notation_main}

Table~\ref{tab:notation} collects the core architectural and
probabilistic notation used throughout. Subscripts $\mathrm{T}$ and
$\mathrm{OTT}$ denote the two configurations where disambiguation is
required; the full glossary, including secondary symbols, is in
Supplementary \S\ref{sec:supp_notation}.

\begin{table}[!htbp]
\centering
\small
\caption{Core notation and recurring terms.}
\label{tab:notation}
\begin{tabular}{@{}lp{0.78\textwidth}@{}}
\toprule
$q_1$ & Annual (year-1) compromise probability of the relevant
        operational outcome (key-material compromise for T-EA; joint
        key-and-data compromise for OTT-EA). \\
$Q_T$ & Cumulative compromise probability over $T$ years. \\
$\La$, $\Lk$ & Pillar~II annual access-pathway and
        technical-compromise probabilities under scenario
        independence. \\
$P_3 P_4$ & OTT-EA segregation factor: probability that an attacker
        holding key material also reaches the encrypted-data store
        before mitigation (central $0.30$). \\
$\xi_c$ & Per-channel cross-cutting fraction: share of channel-$c$
        compromises traversing infrastructure shared between key
        custody and data planes. \\
$Z(t)$ & Latent annual campaign intensity,
        $Z \sim \mathrm{Gamma}(0.6, 2.0)$ at the primary
        calibration. \\
$\gamma$, $\gamma_{\mathrm{X}}$ & Campaign amplification factors
        (T-EA uniform; OTT-EA cross-cutting), with
        $\gamma_{\mathrm{X}}/\gamma \geq 1$ by construction
        (\S\ref{subsec:dependence}). \\
$\delta_{\mathrm{EA}}, \delta_{\mathrm{target}},
 \delta_{\mathrm{concen}}$ & EA exposure, targeting, and
        concentration multipliers, all $\geq 1$. \\
\midrule
Channel-minimum heuristic & Fr\'echet lower-bound floor within the
        four-channel Pillar~III model (\S\ref{sec:pillar3}). \\
Interval dominance & A finding holding at every point of a stated
        interval, here the Fr\'echet--Hoeffding brackets
        (\S\ref{sec:fh_bounds}). \\
\bottomrule
\end{tabular}
\end{table}

% =============================================================================
\section{Framework Architecture and Epistemic Foundations}\label{sec:framework}

\subsection{Rationale for a Multi-Method Approach}

No single analytical method can credibly estimate the compromise probability
of a prospective EA system. Three fundamental challenges preclude any unified
approach:
\begin{enumerate}
  \item \textit{Absence of EA-specific incident data.} Ground-truth compromise
    frequencies for EA systems do not exist in any public dataset. Any estimate
    must be an extrapolation from other system classes, a model
    projection, or a structural argument. Or, as here, a structured
    combination of all three.
  \item \textit{Uncertainty about adversary behaviour and capability
    evolution.} The threat landscape is non-stationary: adversary capabilities,
    tooling, and targeting priorities evolve in ways that historical data can
    at best partially capture.
  \item \textit{Unknown dependence structure between risk channels.} The extent
    to which multiple attack vectors, component vulnerabilities, and operational
    failures co-occur or are independent is a first-order determinant of
    system-level risk, but cannot be directly measured for a system that does
    not yet exist.
\end{enumerate}

Each of the four analytical layers individually is subject to known and
significant limitations. Their value lies in their collective ability to bound
the problem from multiple directions and to make visible the specific
assumptions on which any particular projection depends.

\subsection{The Epistemic Status of Framework Outputs}

The following taxonomy establishes the interpretation that applies to all
numerical results in subsequent sections.

\textit{Pillar~I: Empirical Upper Anchor.} The ${\sim}36\%$
population-average rate is an empirical measurement (with associated
confidence intervals) of the historical compromise rate of a defined
population of analogous systems. The implied multiplicative improvement
required to reach the low single-digit annual range under a
$3$--$10{\times}$ EA targeting premium is substantial; the
projections developed in Section~\ref{sec:pillar1} are the
quantitative form of that requirement.

\textit{Pillar~II: Scenario-Conditional Projections.} The Monte Carlo
outputs are projections conditional on (a)~the model structure, (b)~specific
parameter ranges drawn from adapted empirical sources, and (c)~the independence
assumption embedded in the aggregation formula. The Fr\'echet--Hoeffding
interval [2.2--7.5\%] provides the model-consistent range under any dependence
structure, and we regard this interval as a more honest representation of
Pillar~II's implications than any point estimate.

\textit{Pillar~III: Channel-Aggregate Projection.} A channel-aggregate projection
in the low single-digit percentage range (around 5\% annually for
T-EA) is constructed by combining empirical best-in-class component
performance under the independence assumption. It answers the
conditional question: ``If an EA system can match, but not beat, the
best historically observed performance for each of its four
irreducible risk components, and if those components fail
independently, what combined risk follows?'' The channel aggregate is an
interpretive reference within the plausibility range, not a
best-guess forecast.

\textit{Layer~IV: Prior Predictive Distributions.} The Bayesian model
replaces all point estimates with full prior predictive distributions,
explicitly preserving the epistemic uncertainty inherent in the problem.
Qualitative conclusions stable across wide parameter ranges are
distinguished from conclusions that depend on specific parameter choices. The
stochastic dominance result (Proposition~\ref{prop:dominance}) rests on the
empirically grounded observation that EA system modifiers ($\delta_{\mathrm{EA}}$,
$\delta_{\mathrm{target}}$, $\delta_{\mathrm{concen}}$) are each $\geq 1$,
supported by documented incidents (Salt Typhoon, Athens affair) and structural
analysis. It is not a definitional claim.

\subsection{Preview of the Synthesis Hierarchy}\label{sec:synthesis_preview}

The four method layers produce findings of varying epistemic
weight. Some claims survive any defensible parameter choice;
others depend on the independence assumption that Pillars~I--III
share; still others are conditional on a specific prior
calibration. The synthesis hierarchy of \S\ref{sec:synthesis}
distinguishes six tiers (S1 most assumption-robust through S6
most calibration-dependent) and assigns each principal claim to
its tier. Table~\ref{tab:interpretive_status} below gives the
interpretive status of the framework's principal numerical
outputs as a reading guide for the pillar sections that follow.

\begin{table}[!htbp]
\centering
\small
\caption{Interpretive status of the framework's principal numerical
outputs. The values are reference points for structured comparison,
not predictions. Each quantity is paired with the epistemic regime in
which it should be read.}
\label{tab:interpretive_status}
\begin{tabularx}{\textwidth}{@{}lXX@{}}
\toprule
\textbf{Quantity} & \textbf{Interpretation} & \textbf{Role in argument} \\
\midrule
$\sim$1.5\%   & Heuristic component-minima floor (conditional on four-channel
          exhaustiveness) & Lower reference under stated assumptions
          (not structural) \\
$\sim$5\%   & Channel-aggregate projection under Pillar~III independence
          assumption & Single-number reference within the
          plausibility range \\
$\sim$7\%   & Independence-based central projection from Pillar~II &
          Scenario-model reference value \\
$[2.2\%, 7.5\%]$ & Fr\'echet--Hoeffding interval for Pillar~II &
          Model-consistent range under arbitrary dependence
          (interval dominance) \\
$\sim$4\%   & Bayesian prior predictive central tendency (Layer~IV) &
          Uncertainty-preserving central reference (by prior design,
          not independent estimate) \\
$[1.4\%, 16.5\%]$ & Bayesian 90\% prior predictive interval (Layer~IV)\footnote{Throughout this paper, intervals reported from Layer~IV are \emph{prior predictive} intervals derived from forward sampling under the specified prior distributions, not posterior credible intervals in the formal Bayesian-inference sense. No EA-specific likelihood is available, so no posterior is computed; the priors carry the informational content via their calibration against analogue cohorts (Pillar~I). See \S\ref{sec:bayesian} for the prior-predictive-not-posterior-inference framing.} &
          Full distributional uncertainty range \\
$\sim$0.99   & $\Pr(\text{annual risk}>1\%)$ under Bayesian full model &
          Illustrative exceedance row; full threshold range in
          Table~\ref{tab:exceedance} \\
\bottomrule
\end{tabularx}
\end{table}

% =============================================================================
\section{Pillar~I: Historical Analogy Analysis}\label{sec:pillar1}

Pillar~I establishes architecture-conditional empirical anchors by answering
a circumscribed question: at what rate have systems broadly analogous in
sensitivity, scale, and architectural class to a prospective EA system been
compromised by major breaches in recent history? The pillar comprises three
independent data streams, each architecturally aligned with a different EA
class, and a formal cross-stream meta-analysis.

Stream~A (Section~\ref{sec:pillar1_streamA}) covers government high-security
systems whose architectural profile most closely matches T-EA: tightly
controlled operator population, vetted personnel, classified infrastructure,
and co-location of cryptographic material with the
operationally relevant data. Stream~B (Section~\ref{sec:pillar1_streamB})
covers Certificate Authority private-key infrastructure, providing an
architectural bridge: CAs hold cryptographic material at scale across
defined subscriber populations under unitary operator authority, but
without the data co-location that characterises T-EA. Stream~C
(Section~\ref{sec:pillar1_streamC}) covers platform-level key-or-data
compromise incidents from major cloud, identity, and SaaS operators,
providing the OTT-EA-leaning anchor: large user populations, segregated
key and data infrastructure, and the multi-stage compromise pattern
characteristic of the OTT case.

The three streams are pooled cautiously: each anchors a different
architectural class, and cross-stream meta-analysis is performed only to
the extent that common-rate assumptions can be justified. When the
streams disagree, that disagreement is itself architecturally informative.

\subsection{Stream~A: Government High-Security Systems}\label{sec:pillar1_streamA}

Stream~A is our T-EA anchor. Its constituent incidents share the
architectural property of $\mathcal{K}/\mathcal{D}$ co-location: the
compromised system was, in each case, a single locus where cryptographic
material and the operationally relevant data converged. The Salt Typhoon
incident is the canonical instance, with CALEA-mandated lawful-intercept
infrastructure compromised at the orchestration layer that holds both
session-key material and target-list data. The architectural fit to T-EA
is strong, modulo the inherent imperfection of any analogue.

\subsubsection{Empirical Dataset}

An incident is included in Stream~A if and only if it satisfies all three
conditions: (C1)~the system was a national-scale government system holding
or managing cryptographic key material for critical-infrastructure
communications; (C2)~operated by or under the direct authority of a national
security, signals intelligence, or critical-infrastructure agency; and
(C3)~the compromise is independently documented in a peer-reviewed publication,
government report, or contemporaneous official statement.

The six included incidents over 2011--2024 are listed in
Supplementary Table~\ref{tab:streamA_incidents} with their tier
classification and a
brief justification. Three are classified as Tier~1 (direct compromise
of an EA-architecture system, whether the breach was of its
cryptographic primitives or of its operational infrastructure); one as
Tier~2 (the SolarWinds compromise of cryptographic distribution
infrastructure via build-chain); two as Tier~3 (contextual analogues:
insider exfiltration of cryptographic tooling, and compromise of
equivalent-target-class government systems where the mechanism was not
architecturally specific to EA). The Salt Typhoon sensitivity (its
exclusion from the cohort) is reported in Supplementary
\S\ref{sec:salt_typhoon_role}.
Tier-stratified rate estimates are in Table~\ref{tab:tiered}.

\begin{table}[!htbp]
\centering
\small
\caption{Stratified annual compromise rates by analogical tier.
95\% CIs from Garwood exact Poisson intervals.
\textbf{Population rate}: probability of at least one compromise event
across the population of $N_A$ analogous systems in a given year, computed
as $1 - \exp(-k/T)$. \textbf{Per-system rate}: Poisson rate of compromise
per individual system per year, $k/(N_A T)$. The per-system rate is the
policy-relevant quantity for an individual operator. The population rate
gives the probability that \emph{some} system in the analogue cohort is
compromised in a year. The two answer different questions and should not
be conflated.}
\label{tab:tiered}
\begin{tabular}{@{}lcccc@{}}
\toprule
\textbf{Tier} & $k$ & \textbf{Population rate} & \textbf{95\% CI} &
\textbf{Per-system rate} \\
\midrule
Tier~1 only (direct analogue) & 3 & 20.6\% & $[4.6\%,\ 49.1\%]$ & 0.46\%/yr \\
Tier~1+2 (strong analogue) & 4 & 26.5\% & $[8.0\%,\ 54.5\%]$ & 0.62\%/yr \\
Tier~1+2+3 (all) & 6 & 37.0\% & $[15.6\%,\ 63.4\%]$ & 0.92\%/yr \\
\bottomrule
\end{tabular}
\end{table}

\subsubsection{Maximum Likelihood and Bayesian Estimation}

The denominator $N_A = 50$ comprises the count of national-scale systems
active during 2011--2024 that satisfy inclusion criteria C1--C2 above
and fall within the analogue class motivated by the requirements
analysis of \citet{abelson1997}, which argued that secure third-party
access infrastructure would have to operate at the assurance level of
the most sensitive existing government cryptographic systems.
The cohort is the union of five sovereign cryptographic
sub-populations: US National Security Systems ($\sim$14), Five Eyes
sovereign infrastructures ($\sim$10), EU national frameworks
($\sim$12), NATO collective infrastructure ($\sim$8), and
CALEA-mandated or ETSI-equivalent lawful-intercept carriers
($\sim$6), with the documented construction in Supplementary
\S\ref{sec:supp_streamA}. The enumeration unit is an independently
operated and accredited cryptographic domain: a system with its own
key-management infrastructure, personnel, and accreditation boundary.
The boundary depends on definitional choice in two respects---which
programmes qualify, and at what granularity a sovereign framework is
counted, with programme-level enumeration yielding the coarser count
and installation-level enumeration the finer---so $N_A$ could
plausibly be defended at any value between 35 (coarse) and 75
(fine), and the denominator bracket below is simultaneously a
unit-of-analysis sensitivity. The
qualitative finding (a per-system compromise rate of order
$1\%$/yr) is robust across this range: at $N_A = 35$ the per-system
rate becomes $6/(35\cdot 13) = 1.32\%$/yr (projecting to $\sim 12\%$
at $10\times$), and at $N_A = 75$ it becomes $6/(75\cdot 13) =
0.62\%$/yr (projecting to $\sim 6\%$ at $10\times$). Both bracket the
primary projection.

Breach events are modelled as a Poisson process with rate $\lambda$ across a
population of $N_A = 50$ analogous systems over $T = 13$ years. (The
2011--2024 window is counted as $T = 13$ years of exposure, treating
the endpoint years as partial; counting the window inclusively as 14
calendar years would lower the full-cohort per-system rate from
$0.92\%$ to $0.86\%$/yr without affecting any qualitative finding.
Streams~B and~C count their windows inclusively.) The
\emph{cohort-level} MLE rate (compromises per year across the whole
analogue cohort) is $\hat{\lambda}^{\mathrm{cohort}}_{\mathrm{MLE}} =
k/T = 6/13 = 0.462$ breaches per year, giving the cohort-annual
population rate $P_{\mathrm{MLE}}^{\mathrm{cohort}} = 1-\exp(-k/T) =
37.0\%$ with 95\% CI $[15.6\%, 63.4\%]$. The \emph{per-system} MLE rate
(the policy-relevant quantity for an individual operator) is
$\hat{\lambda}^{\mathrm{sys}}_{\mathrm{MLE}} = k/(N_A T) = 6/650 =
0.92\%$ per system-year. The two answer different questions. Both are
used below.

Under three Bayesian prior specifications (working prior
$\mathrm{Gamma}(2, 5)$; moderately optimistic $\mathrm{Gamma}(2, 100)$; and
extreme strong $\mathrm{Gamma}(2, 400)$ with an MTTF prior of 200 years),
the posterior means are 35.9\%, 6.8\%, and 1.9\% respectively. The structurally
significant finding is that under either parameterisation of an extreme 200-year
MTTF prior, the empirical data prevent the posterior mean from falling below
approximately 2\% (strong prior) or 5.6\% (diffuse prior). The five non-Weibull
methods yield a consensus central estimate of approximately 36\%.

\subsubsection{Summary}

Pillar~I Stream~A produces two distinct headline quantities that must not be
conflated. The \emph{population rate} (probability that at least one system
in the 50-system analogue cohort is compromised in a given year) is $\sim$37\%
under the full cohort. This is not the policy-relevant quantity for an
individual EA operator. The \emph{per-system rate} (Poisson rate per
individual system per year) is the policy-relevant baseline. Its value
depends on which inclusion criterion is applied:
\begin{itemize}
  \item Tier~1 only (direct T-EA architectural analogue, $k = 3$): $0.46\%$/yr
  \item Tier~1+2 (adding build-chain compromise, $k = 4$): $0.62\%$/yr
  \item Tier~1+2+3 (adding contextual analogues, $k = 6$): $0.92\%$/yr
\end{itemize}
Applying the projection formula $1 - \exp(-\lambda \cdot \text{premium})$
across this range and across the defensible EA targeting-premium range
$[3\times, 15\times]$ yields the joint range $[1.4\%, 12.9\%]$ annually for
the per-system Stream~A Pillar~I projection. The central cell of the grid
(full cohort, $10\times$ premium) gives $8.8\%$; the Tier~1 cell at
$10\times$ gives $4.5\%$; the most aggressive cell (full cohort,
$15\times$) gives $12.9\%$. Under the most optimistic Bayesian prior
(Gamma$(2, 400)$ with an MTTF prior of 200 years), the empirical data hold
the posterior mean above $1.9\%$. The Salt Typhoon sensitivity (full cohort
with it excluded, $k = 5$) is reported in
Supplementary \S\ref{sec:salt_typhoon_role} and does not change the qualitative
range.

\subsection{Stream~B: Certificate Authority Cohort}\label{sec:pillar1_streamB}

Stream~B provides an architectural bridge between Stream~A (T-EA-leaning)
and Stream~C (OTT-EA-leaning). Certificate Authorities resemble T-EA
operators in operating under regulatory oversight with unitary corporate
authority and concentrated cryptographic-key custody, but resemble OTT-EA
operators in serving large distributed subscriber populations through
key infrastructure that is operationally segregated from the certificate
issuance and revocation data flows. The architectural fit to either pure
class is partial. We treat Stream~B as anchoring the centralised
single-operator class generally, and use it as a check on the consistency
of Stream~A and Stream~C estimates rather than as a primary anchor for
either architectural extreme.

Stream~B applies Poisson cohort methods to a 19-year record of Certificate
Authority private-key compromises across 130 trusted CAs. This provides a
completely independent dataset against a known population denominator. Key
incidents include DigiNotar (2011) \citep{prins2011diginotar}, the Comodo
reseller breaches (2008, 2011), TURKTRUST (2013) \citep{turktrust2013}, the
WoSign/StartCom back-dating disclosures (2016), and the Symantec
mis-issuance pattern (2017). The complete eleven-incident list with
documentation and inclusion criterion is given in Supplementary
Section~S2. Stream~B's standalone MLE rate is $11/2{,}470 = 0.445\%$ per CA-year (Garwood 95\% CI $[0.222\%, 0.797\%]$). The annual incident counts are consistent with the Poisson
specification: the dispersion index across the 19 cohort-years is
$0.83$ (slightly under-dispersed; $\chi^2_{18}$ dispersion test
$p = 0.67$), and a negative-binomial refit collapses to the Poisson
boundary (dispersion MLE of zero). The test has limited power at
these counts, but the practical point stands either way:
overdispersion-robust intervals would be no wider than the Garwood
intervals reported. Projected through a $10{\times}$ EA targeting premium (a value in the upper-mid of the structured-judgement $3$--$15{\times}$ range; \S\ref{subsec:targeting_premium_limitation} derives the range and reports the range-median projections) this yields an annual range of $[2.20\%, 7.66\%]$, which overlaps with the Pillar~II Fr\'echet--Hoeffding interval of $[2.2\%, 7.5\%]$ derived from per-scenario probabilities. Both anchors place the centralised single-operator class in the low single-digit annual range. Stream~B and Pillar~II share calibrating evidence on cryptographic-infrastructure operator failure modes, so the agreement is partial corroboration within a coherent evidence base rather than independent verification. Pooling with Stream~A under a
common-rate Poisson model gives the fixed-effect estimator
$\hat{\lambda} = k/E = 17/3{,}120 = 0.545\%$ per system-year (Garwood
95\% CI $[0.32\%, 0.87\%]$). Under the common-rate assumption this is the
inverse-variance weighted estimator. We use it as the headline pooled rate
for the centralised single-operator class generally, while noting that
Stream~A and Stream~B individually anchor different architectural sub-cases.
The two streams have visibly different observed per-system rates (Stream~A
$0.92\%$/yr; Stream~B $0.45\%$/yr), and a strict-inclusion sensitivity
restricting Stream~B to clearly adversarial compromises (Supplementary
Section~S2) yields a lower pooled rate. Both choices produce a $10{\times}$
projection in the low-single-digit range.

\subsection{Stream~C: Platform-Level Key/Data Compromise Cohort}\label{sec:pillar1_streamC}

Stream~C is the OTT-EA case's empirical anchor. Its constituent incidents share the
architectural property that the compromised operator is a major platform
provider holding cryptographic key material, authentication infrastructure,
or encrypted user data at scale across a large user population, with
$\mathcal{K}$ and $\mathcal{D}$ operationally segregated within the
operator's infrastructure. The architectural fit to OTT-EA is partial in
the same sense that Stream~A's fit to T-EA is partial. No public
EA-specific compromise of either class exists, but the Stream~C cohort
captures the multi-stage, segregation-traversing compromise pattern that
distinguishes OTT-EA risk from T-EA risk, and provides the empirical
foundation for the $P_3$ (segregated-data acquisition) and $P_4$
(multi-stage detection avoidance) parameters introduced in
Section~\ref{sec:pillar2}.

\subsubsection{Inclusion Criteria}

An incident is included in Stream~C if and only if it satisfies all four
conditions: (D1)~the system was operated by a major platform provider
holding cryptographic key material, authentication tokens, or encrypted
user data on behalf of an external user population numbering in the
millions or larger; (D2)~the compromise involved unauthorised access to,
or exfiltration from, infrastructure components serving the platform's
key custody, identity, or encrypted-data-handling functions; (D3)~the
operator was not a national-security or critical-infrastructure agency in
the sense of Stream~A's C2 criterion (which would route the incident to
Stream~A) and was not a Certificate Authority in the sense of Stream~B
(which would route the incident there); and (D4)~the compromise is
independently documented in a vendor post-mortem or transparency report,
a regulatory disclosure (8-K, ICO, or equivalent), a national CERT or
CISA advisory, or a peer-reviewed publication. Criteria D1--D3 jointly
ensure that Stream~C is disjoint from Streams~A and~B and that included
incidents have plausible architectural relevance to a hypothetical OTT-EA
deployment.

\subsubsection{Empirical Dataset}

Fourteen incidents over $T = 7$ years (2018--2024) satisfy the
inclusion criteria. One further candidate (Cloudflare~2023) is
documented in the near-miss appendix (Supplementary
\S\ref{sec:supp_streamC_nearmiss}) but not counted in any rate calculation, on the basis
that no customer data or service keys were accessed. Each incident is
classified along four binary or ordinal
dimensions: whether key material was obtained ($K \in \{Y, P, N\}$, with
$P$ denoting partial or downstream key acquisition); whether ciphertext
or otherwise-encrypted user data was obtained ($D \in \{Y, P, N\}$);
whether \emph{joint} key-and-data compromise occurred at a level
sufficient to map operationally onto a hypothetical OTT-EA decryption
($J \in \{Y, P, N\}$); and whether the attack path traversed
\emph{cross-cutting} infrastructure shared between the key-management
and data-storage subsystems ($X \in \{Y, P, N\}$). The $J$ classification
is the most consequential for OTT-EA calibration. The $X$ classification
informs the cross-cutting fraction parameter used in
Section~\ref{sec:pillar3}.

The complete classification table is Supplementary
Table~\ref{tab:streamC}; the per-incident documentation, including
the source for each $K/D/J/X$ classification and the rationale for
inclusion or exclusion, is in Supplementary \S\ref{sec:supp_streamC}.

\subsubsection{Cohort Population and Rate Estimation}

The relevant population denominator for Stream~C is the count of platform
operators meeting the D1 criterion: major platform providers holding
cryptographic key material, authentication infrastructure, or encrypted
user data at scale, active during the 2018--2024 window. We estimate this
population at $N_C = 75$ operators, drawing on the union of: top-tier cloud
infrastructure providers ($\sim$8); major SaaS and productivity platforms
($\sim$15); identity and authentication providers ($\sim$8); password and
secret managers ($\sim$6); code hosting and CI/CD platforms ($\sim$6);
content delivery and edge providers ($\sim$5); messaging and
communications platforms ($\sim$8); and other globally significant
platform operators ($\sim$19). The full operator inventory is provided
in Supplementary \S\ref{sec:supp_streamC} and in the reproducibility archive
(Section~\ref{sec:conclusion}). We treat $N_C = 75$ as the central
value and report sensitivity across $N_C \in \{50, 100\}$.

The exposure is $E_C = N_C \cdot T = 75 \cdot 7 = 525$ operator-years.
Stratified rates by classification dimension:

\begin{table}[!htbp]
\centering
\small
\caption{Stream~C rate estimates by classification. Per-system rate
computed as $k/(N_C \cdot T)$; population rate (probability of at
least one qualifying incident across the $N_C$-operator cohort in a
given year) computed as $1 - \exp(-k/T)$. Garwood
exact Poisson 95\% CIs. The recommended headline anchor is the
\textbf{Tier~0 + Tier~1, $J = Y$ strict-joint} row at $0.95\%$/yr,
which is the rate restricted to architecturally clean OTT-EA
analogues where joint key-and-data compromise is documented rather
than interpretive. The Tier~0 row is the very-conservative lower
bracket. The all-tiers, inclusive-joint row is the upper bracket.
The qualitative ordering Stream~B $<$ Stream~A $\approx$ Stream~C at
the order of $1\%$/yr holds across the full range.}
\label{tab:streamC_rates}
\begin{tabular}{@{}lcccc@{}}
\toprule
\textbf{Classification} & $k$ & \textbf{Per-system rate} &
\textbf{95\% CI} & \textbf{Population rate} \\
\midrule
\multicolumn{5}{l}{\emph{Architectural-fit stratification}} \\
Tier~0 only (gold standard)        &  3 & 0.57\%/yr & $[0.12\%, 1.67\%]$ & 34.9\% \\
Tier~0 + Tier~1                    &  5 & 0.95\%/yr & $[0.31\%, 2.22\%]$ & 51.0\% \\
Tier~0 + Tier~1 + Tier~2           & 12 & 2.29\%/yr & $[1.18\%, 4.00\%]$ & 82.0\% \\
All tiers (incl.\ Tier~3)          & 14 & 2.67\%/yr & $[1.46\%, 4.48\%]$ & 86.5\% \\
\midrule
\multicolumn{5}{l}{\emph{Joint-compromise stratification ($J$)}} \\
$J = Y$ strict, all tiers          &  7 & 1.33\%/yr & $[0.54\%, 2.75\%]$ & 63.2\% \\
$J = Y$ strict, Tier~0 + 1         &  5 & 0.95\%/yr & $[0.31\%, 2.22\%]$ & 51.0\% \\
$J \in \{Y, P\}$ inclusive          & 13 & 2.48\%/yr & $[1.32\%, 4.23\%]$ & 84.4\% \\
\midrule
\multicolumn{5}{l}{\emph{Robustness checks}} \\
$J = Y$ strict, no Microsoft       &  6 & 1.14\%/yr & $[0.42\%, 2.49\%]$ & 57.6\% \\
$J = Y$, Tier~0+1, no Microsoft    &  4 & 0.76\%/yr & $[0.21\%, 1.95\%]$ & 43.5\% \\
$J = Y$ strict, $N_C = 50$         &  7 & 2.00\%/yr & $[0.80\%, 4.12\%]$ & 63.2\% \\
$J = Y$ strict, $N_C = 100$        &  7 & 1.00\%/yr & $[0.40\%, 2.06\%]$ & 63.2\% \\
\bottomrule
\end{tabular}
\end{table}

The headline Stream~C estimate is the architectural-fit-restricted
strict-joint rate $\hat\lambda_C^{\mathrm{T0+T1, J=Y}} = 0.95\%$ per
operator-year ($k = 5$ over $E_C = 525$ operator-years, Garwood
$95\%$ CI $[0.31\%, 2.22\%]$). This is the rate restricted to
architecturally clean OTT-EA analogues: incidents where the
operator's key-custody role is documented (Tier~0 or 1) and the
compromise involved \emph{joint} key-and-data acquisition at a level
operationally sufficient to map onto OTT-EA decryption ($J = Y$). We
also report a very-conservative Tier~0 lower bracket of $0.57\%$/yr
($k = 3$) and an all-tiers inclusive-$J$ upper bracket of $2.48\%$/yr
($k = 13$). The substantive comparison to Stream~A
($\sim\!0.92\%$/yr) and Stream~B ($\sim\!0.45\%$/yr) is robust to
this interval. The all-tiers strict-joint rate of $1.33\%$/yr
($J = Y$ across all tiers; coincidentally also the rate implied by
the seven cross-cutting incidents, $X = Y$) provides a closely
related upper anchor specifically for the correlated-campaign
analysis in Section~\ref{sec:bayesian}.

\subsubsection{Architectural Caveats and Cross-Stream Comparison}

\paragraph{Caveats.} \emph{(i)~Denominator imprecision.} $N_C = 75$
is constructed from rough subcategory estimates and is less precisely
defined than $N_A = 50$ or $N_B = 130$. The central value could
plausibly be defended at any value in $[50, 100]$, with rate
sensitivity reported in Table~\ref{tab:streamC_rates}.
\emph{(ii)~Operator clustering.} Three of the fourteen incidents
involve Microsoft. The no-Microsoft sensitivity rows ($k = 6$ across
all tiers, $1.14\%$/yr; $k = 4$ restricted to Tier~0+1, $0.76\%$/yr)
bracket the alternative interpretation that Microsoft's incident
clustering reflects within-operator concentration rather than a
population-level rate. \emph{(iii)~Disclosure visibility bias.}
Selection is driven by disclosure visibility, and the bias runs in
both directions. Toward overstatement: catastrophic compromises of
prominent operators are far more visible than the quiet operating
history of the broader cohort, so an incident-led literature can
inflate perceived base rates. Toward understatement: SEC-reporting
US-based public companies are over-represented relative to private
operators and operators in jurisdictions without mandatory
cyber-incident disclosure, so incidents outside that visibility
regime are missing from the numerator while the denominator counts
the operators. The construction here controls the first direction by
fixing the denominator to an explicit operator inventory ($N_C = 75$,
Supplementary \S\ref{sec:supp_streamC}) rather than letting
prominence define the cohort, which leaves the
missing-incident direction as the residual bias. On that basis the
headline rate is treated as a lower bound, while the two-sided
character of the selection problem is acknowledged rather than
assumed away. \emph{(iv)~Time-window selection.} The 2018--2024
window is bounded below by EPSS introduction (2018) and post-2018
maturation of cloud-platform architectures matching the OTT-EA
reference class; earlier candidates (Yahoo 2013--2014, RSA SecurID
2011, Adobe 2013) would shift the central rate toward the lower
bracket without changing the qualitative ordering.

\paragraph{Cross-stream ordering.} The Stream~C strict-joint Tier-0+1
rate ($0.95\%$/yr) exceeds Stream~B's CA-cohort rate ($0.45\%$/yr)
and is close to Stream~A's per-system rate ($0.92\%$/yr
at the primary inclusion criterion). Two architectural factors are
consistent: platform operators present larger and more dynamic
attack surfaces than tightly controlled CA infrastructure, and the
joint-compromise classification specifically counts the multi-stage
compromise pattern characteristic of OTT-relevant attacks. The
ordering Stream~B $<$ Stream~A $\approx$ Stream~C, all on the order
of $1\%$/yr, is robust across the full range of Tier and $J$
stratifications, the no-Microsoft sensitivity, and the $N_C$
population sensitivity.

\subsection{Cross-Stream Meta-Analysis}

The three streams provide architecture-conditional empirical anchors:
Stream~A for T-EA, Stream~C for OTT-EA, and Stream~B as the centralised
single-operator bridge. We do \emph{not} pool all three streams into a
single rate, because the architectural classes they anchor differ in
ways that the common-rate Poisson model would obscure. Instead, we
report stream-conditional projections and use the cross-stream
disagreement itself as architectural information.

The appropriate targeting premium is uncertain and likely
architecture-conditional. The most directly relevant external
calibration is the Verizon DBIR's actor-mix data: the share of
breaches attributable to state-aligned espionage actors is several
times higher for high-value and government target classes than the
cross-sector base rate, a disparity of roughly $3{\times}$ to
$10{\times}$ depending on sector and reporting year
\citep{dbir2024,dbir2025}. The Salt Typhoon incident demonstrates
successful targeting of CALEA infrastructure, and the Storm-0558
incident equivalently successful targeting of platform-level
signing-key infrastructure; neither incident provides a
frequency multiplier on its own. We instead present projections across
a range of premiums ($3{\times}$, $5{\times}$, $10{\times}$, $15{\times}$),
applied separately to each stream's empirical base rate.

\textit{T-EA projection (Stream~A anchor).} Applying targeting premiums
of $3{\times}$ to $15{\times}$ across the three concentric inclusion
criteria (Tier~1, Tier~1+2, full cohort) for Stream~A's per-system rate
yields a joint projection range of $1.4\%$ to $12.9\%$ annually. The
central cell (full cohort, $10{\times}$ premium) gives $8.8\%$; the
most conservative cell (Tier~1, $3{\times}$) gives $1.4\%$; the most
aggressive (full cohort, $15{\times}$) gives $12.9\%$. The Tier~1 cell
at $10{\times}$ gives $4.5\%$. The defensible T-EA empirical-projection
interval is therefore $1.4$--$12.9\%$, with the centre of the grid in
the low single-digit range.

\textit{Centralised single-operator projection (Stream~B anchor).} At a
$10{\times}$ premium applied to Stream~B's $0.45\%$/yr, the projection is
$4.4\%$ annually. At $3{\times}$, $1.3\%$; at $15{\times}$, $6.5\%$.
(All premium projections in this section use the transform
$1 - \exp(-\mathrm{premium} \cdot \lambda)$.)

\textit{OTT-EA projection (Stream~C anchor).} At a $10{\times}$ premium
applied to Stream~C's architectural-fit-restricted strict-joint rate of
$0.95\%$/yr, the projection is $9.1\%$ annually (or $12.5\%$ at the
all-tiers strict-joint upper bracket of $1.33\%$/yr). However, the appropriate
targeting premium for OTT-EA
relative to the Stream~C base may be \emph{lower} than the T-EA
analogue, because the platform-operator base rate already reflects
substantial existing targeting (these operators are already
high-priority adversarial targets), whereas Stream~A's
government-system base rate reflects systems that are largely shielded
from non-EA-related routine targeting. Treating the EA-specific premium
as $3$--$5\times$ over the Stream~C headline base gives an OTT-EA
projection of $2.8$--$4.7\%$ annually under independence
($3.9$--$6.4\%$ at the $1.33\%$/yr all-tiers upper bracket), before applying the
segregation discount that distinguishes OTT-EA from T-EA at the
operational-compromise level (Section~\ref{sec:pillar2}).

Figure~\ref{fig:premium_sensitivity_arch} illustrates the premium
sensitivity across all three Pillar~I streams (Stream~A T-EA-anchored,
the pooled Stream~A+B centralised single-operator class, and Stream~C
OTT-EA-anchored), with the applicable premium range differing across
architectures.

\begin{figure}[!htbp]
\centering
\includegraphics[width=\textwidth]{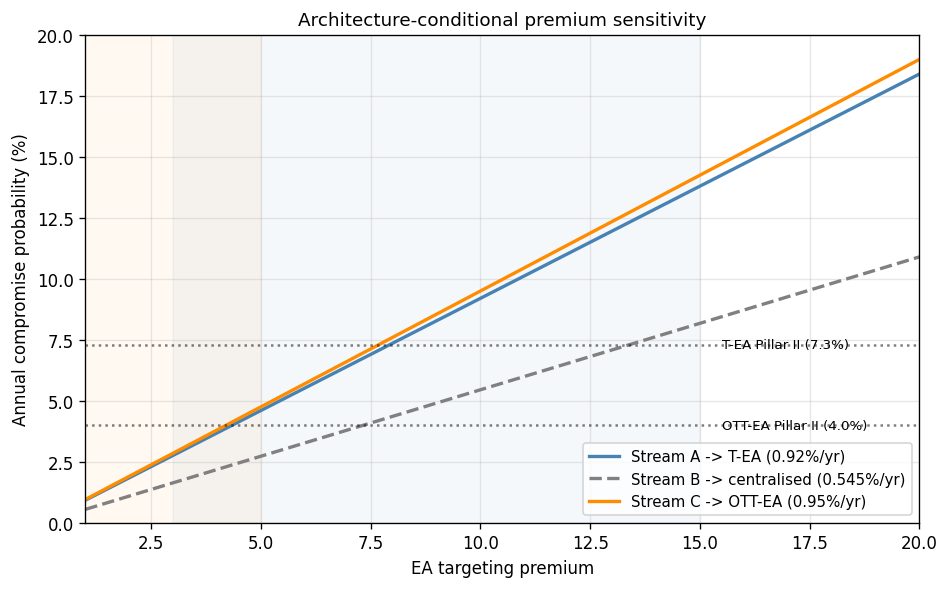}
\caption{Architecture-conditional sensitivity of the meta-analysis
projection to the EA targeting premium. The empirical base rates
(pooled centralised single-operator: $0.545\%$/yr, $k = 17$,
$E = 3{,}120$ system-years from Streams~A and~B; Stream~A T-EA-anchored:
$0.92\%$/yr; Stream~C OTT-EA-anchored: $0.95\%$/yr) are multiplied by
a premium representing elevated EA targeting. Applicable premium
ranges differ across architectures: T-EA and the pooled centralised
class span $3\times$--$15\times$; OTT-EA's defensible range is
$1\times$--$5\times$ (because Stream~C's base rate already reflects
substantial existing platform targeting). At the centralised
$10\times$ choice the projection is $5.3\%$, close to
the Pillar~III channel-aggregate projection of $5.4\%$;
$5\times$ gives $2.7\%$ and $15\times$ gives $7.8\%$. All projections
use the transform $1-\exp(-\mathrm{premium}\cdot\lambda)$. At their respective central premium choices,
the per-stream projections converge in the low-single-digit annual
range. Shaded bands: 95\% CIs transformed through each premium.
\textbf{No single premium is empirically identified}: the
$10\times$ value used in subsequent text is an illustrative
upper-central choice rather than a measurement.}
\label{fig:premium_sensitivity_arch}
\par\smallskip\noindent\small\textit{Alt text:} Line plot of projected annual EA compromise probability as a function of the EA targeting premium (1x to 20x) for three base-rate anchors. Pooled centralised single-operator at 0.545\%/yr (Streams A+B), Stream A T-EA-anchored at 0.92\%/yr, and Stream C OTT-EA-anchored at 0.95\%/yr; projections remain in low single-digit percent across the 3x to 15x defensible range.
\end{figure}

% =============================================================================
\section{Pillar~II: Prospective Scenario Model}\label{sec:pillar2}

Pillar~II uses Monte Carlo simulation to project the annual EA
compromise probability under a defined set of threat scenarios. The
model fixes seven scenarios spanning insider misuse, supply-chain
attacks, zero-day exploitation, advanced persistent threats,
operational failures, cryptographic compromise, and architectural
exploits. For each scenario, parameter values are drawn from
probability distributions calibrated to public threat-intelligence
data (Mandiant M-Trends, EPSS). The pillar produces both an annual
compromise probability under independence between scenarios and a
Fr\'echet--Hoeffding bracket of the rate under any dependence
structure compatible with the marginal distributions. Both T-EA and
OTT-EA configurations are evaluated with a common scenario set and
a common adversary parameterisation, so divergence between
projections reflects architectural difference rather than
calibration choices. The reader may take the headline FH brackets
($[2.2\%, 7.5\%]$ for T-EA and $[1.1\%, 4.0\%]$ for OTT-EA) as the
Pillar~II contribution to the framework's calibration-conditional
output ranges, and skip the technical details of scenario
calibration and Monte Carlo implementation.

\subsection{Model Architecture and Epistemic Framing}

Pillar~II is an explicit scenario-projection model, not a calibrated
predictive model: no EA compromise events exist against which to validate
it. Its contributions are the decomposition of risk into specific
parameterised threat vectors, transparent propagation of input uncertainty,
and quantification of the sensitivity of projections to the independence
assumption through Fr\'echet--Hoeffding bounding.

The Pillar~II model is structured to produce architecture-conditional
projections. A common scenario set $S_1, \ldots, S_7$ and a common
two-tier adversary parameterisation are applied across both architectural
classes, with the architectural distinction entering through the
conditional outcome decomposition (Section~\ref{sec:pillar2_outcome}):
T-EA outcomes follow a two-stage decomposition $\pcondkey_i =
P_1 \cdot P_2$, while OTT-EA outcomes follow a four-stage decomposition
incorporating the segregated-data acquisition and multi-stage detection
parameters. This structure ensures that the architecture-conditional
projections share their underlying threat-model parameterisation, so
that the divergence between T-EA and OTT-EA projections is interpretable
as an architectural effect rather than a parameterisation choice.

\subsection{Two-Tier Adversary Model and Parameter Structure}

Each scenario $S_i$ ($i = 1, \ldots, 7$) has a mixing weight
$\wapt_i \in [0, 1]$ representing the fraction of attempts from nation-state
actors. For iteration $k$:
\begin{align}
T_{ki} &\sim \mathrm{Bernoulli}(\wapt_i), \label{eq:tier_indicator} \\
P^{\mathrm{EPSS}}_{ki} &= T_{ki} \cdot X^{\mathrm{APT}}_{ki}
                       + (1 - T_{ki}) \cdot X^{\mathrm{opp}}_{ki},
\label{eq:tier_mixing}
\end{align}
where $X^{\mathrm{opp}}_{ki} \sim \mathrm{Beta}(\alpha^{\mathrm{opp}}_i,
\beta^{\mathrm{opp}}_i)$ is calibrated to the EPSS score $P_{10}$--$P_{90}$
range for CVEs in the relevant ATT\&CK technique class, and
$X^{\mathrm{APT}}_{ki}$ applies an APT-tier exploitation uplift to the
opportunistic midpoint, modelled as a log-normal multiplier with median
$2{\times}$ and examined over a $1.2{\times}$--$3.0{\times}$ sensitivity range.
Scenarios $S_1$, $S_3$, and $S_4$ form a coordinated-campaign group with
campaign-activation probability $p^{\mathrm{camp}} \in [0.10, 0.20]$ and boost
factor $b \in [1.30, 1.80]$.

\subsection{Architecture-Conditional Outcome Decomposition}\label{sec:pillar2_outcome}

Each scenario is parameterised with sequential conditional probabilities
that decompose differently for the two architectural classes. The first
stage, in which a scenario attempt $E_i$ produces access-pathway
compromise, is common across architectures:
\begin{equation}
C^{\mathrm{acc}}_i \mid E_i = 1 \sim \mathrm{Bernoulli}(\pcondacc_i),
\label{eq:cacc}
\end{equation}
representing the probability of obtaining unauthorised operational access
to the EA system given an attempt via scenario~$i$.

\subsubsection{T-EA Outcome Decomposition}

For T-EA, the second stage, technical compromise (key-material
exfiltration) given access, uses a two-factor decomposition:
\begin{equation}
C^{\mathrm{key},\mathrm{T}}_i \mid C^{\mathrm{acc}}_i = 1
  \sim \mathrm{Bernoulli}(\pcondkey_i),
\quad \pcondkey_i = P_1 \cdot P_2,
\label{eq:pkey_T}
\end{equation}
where $P_1$ is the probability of reaching the key-management layer given
system access, and $P_2$ is the HSM-tier-conditional probability of key
extraction. Because $\mathcal{K}$ and $\mathcal{D}$ are co-located in
T-EA architectures (Definition~\ref{def:segregation},
Remark~\ref{rem:collapse}), operational compromise collapses to technical
compromise: $C^{\mathrm{op},\mathrm{T}}_i = C^{\mathrm{key},\mathrm{T}}_i$.

\subsubsection{OTT-EA Outcome Decomposition}

For OTT-EA, the technical compromise stage is identical. Obtaining key
material from $\mathcal{K}$ has the same mechanics regardless of whether
$\mathcal{D}$ is co-located, but operational compromise additionally
requires obtaining ciphertext from architecturally segregated $\mathcal{D}$
and completing the multi-stage operation without detection-and-block:
\begin{align}
C^{\mathrm{key},\mathrm{OTT}}_i \mid C^{\mathrm{acc}}_i = 1
  &\sim \mathrm{Bernoulli}(\pcondkey_i),
\label{eq:pkey_OTT} \\
C^{\mathrm{op},\mathrm{OTT}}_i \mid C^{\mathrm{key},\mathrm{OTT}}_i = 1
  &\sim \mathrm{Bernoulli}\!\left(p^{\mathrm{op,seg}}_i\right),
\label{eq:pop_OTT}
\end{align}
where $p^{\mathrm{op,seg}}_i$ is the conditional probability of completing
operational compromise given technical compromise. This factor decomposes
into two terms reflecting the two ways an attacker reaches operational
compromise from a position of having obtained keys:
\begin{equation}
p^{\mathrm{op,seg}}_i = \xi_i + (1 - \xi_i) \cdot P_3 \cdot P_4,
\label{eq:pop_seg}
\end{equation}
where:
\begin{itemize}[itemsep=2pt,topsep=2pt]
\item $\xi_i \in [0, 1]$ is the \emph{cross-cutting fraction} for
scenario~$i$: the probability that a scenario-$i$ attack path
traverses shared infrastructure (identity provider, build pipeline,
privileged-role credentials) so that segregation between
$\mathcal{K}$ and $\mathcal{D}$ does not present an additional
barrier;
\item $P_3 \in [0, 1]$ is the conditional probability of obtaining
ciphertext from $\mathcal{D}$ given key access in the
\emph{non-cross-cutting} case;
\item $P_4 \in [0, 1]$ is the conditional probability of completing
the multi-stage operation without detection-and-block in the
non-cross-cutting case.
\end{itemize}
A fraction $\xi_i$ of attack paths inherently bypass segregation
(operational factor $= 1$); the remaining $(1 - \xi_i)$ must traverse
a multi-stage chain with conditional success $P_3 \cdot P_4$.

\subsubsection{System-Level Aggregation}

System-level outcomes aggregate using the complement-of-product formula:
\begin{align}
\La &= 1 - \prod_{i=1}^{7} \!\left(1 - C^{\mathrm{acc}}_i\right),
\label{eq:cop_acc} \\
\Lk &= 1 - \prod_{i=1}^{7} \!\left(1 - C^{\mathrm{key},\mathrm{T}}_i\right),
\label{eq:cop_key} \\
\Lambda^{\mathrm{op},\mathrm{T}} &= \Lk
  \quad \text{(architectural collapse, Remark~\ref{rem:collapse})},
\label{eq:lop_T} \\
\Lambda^{\mathrm{op},\mathrm{OTT}}
  &= 1 - \prod_{i=1}^{7} \!\left(1 - C^{\mathrm{op},\mathrm{OTT}}_i\right).
\label{eq:lop_OTT}
\end{align}
This formula is exact under mutual independence of scenario outcomes.
That is a strong assumption, and its implications are bounded in
Section~\ref{sec:fh_bounds} and addressed structurally in
Section~\ref{sec:bayesian}.

\subsubsection{Calibration of $\xi_i$, $P_3$, and $P_4$}

The cross-cutting fractions $\xi_i$ are calibrated against the Stream~C
incident base (Section~\ref{sec:pillar1_streamC}; Supplementary Table~\ref{tab:streamC}),
mapping each Stream~C incident's documented attack path onto the
nearest of the seven scenarios. The Stream~C aggregate cross-cutting
fraction $X = Y$ is $7/14 = 0.50$. Per-scenario values reflect the
architectural specificity of each scenario:

\begin{table}[!htbp]
\centering
\small
\caption{Per-scenario cross-cutting fractions for the OTT-EA outcome
decomposition. Values calibrated against Stream~C
(Section~\ref{sec:pillar1}) with sensitivity range reflecting
classification uncertainty. These are Pillar~II scenario-level
quantities, distinct from the Pillar~III per-channel fractions of
Table~\ref{tab:xi_channels}.}
\label{tab:xi_scenarios}
\begin{tabular}{@{}llccl@{}}
\toprule
\textbf{Scenario} & \textbf{Description} &
\textbf{Central $\xi_i$} & \textbf{Range} &
\textbf{Calibration} \\
\midrule
$S_1$ & Stolen credentials      & 0.30 & $[0.20, 0.45]$
  & Shared identity provider risk \\
$S_2$ & External app exploit    & 0.20 & $[0.10, 0.30]$
  & Typically subsystem-localised \\
$S_3$ & Edge exploitation       & 0.45 & $[0.30, 0.60]$
  & Edge appliances commonly cross-cutting \\
$S_4$ & Lateral / supply chain  & 0.55 & $[0.40, 0.70]$
  & Cross-cutting by design \\
$S_5$ & Insider                 & 0.30 & $[0.15, 0.50]$
  & Privileged insider variability \\
$S_6$ & HSM extraction          & 0.10 & $[0.05, 0.20]$
  & Highly key-specific \\
$S_7$ & Cryptographic protocol  & 0.10 & $[0.05, 0.20]$
  & Highly key-specific \\
\bottomrule
\end{tabular}
\end{table}

The conditional probabilities $P_3$ and $P_4$ are calibrated against the
Stream~C incidents that did not exhibit cross-cutting attack paths, and
against Mandiant M-Trends 2024 dwell-time and detection-block data
respectively. Central values: $P_3 = 0.50$ (range $[0.30, 0.80]$),
$P_4 = 0.60$ (range $[0.40, 0.80]$). The product $P_3 \cdot P_4$
has central value $0.30$ with range $[0.12, 0.64]$.

These calibrations are deliberately wide. The Stream~C incident base of
14 incidents is too small to support tight $P_3$ or $P_4$ estimates, and
the $\xi_i$ values above are primarily judgement-based mappings of
Stream~C incident classifications onto the seven scenarios. The
sensitivity analysis in Section~\ref{sec:pillar2_sensitivity} reports
how the OTT-EA projection varies across the full joint range, and
Supplementary \S\ref{sec:supp_ott_pillar2_grid} tabulates the
$P_3 P_4$-by-$\xi$ grid in full.

\subsection{System-Level Annual Projections}\label{sec:pillar2_proj}

The central independence-based projection aggregates the seven
per-scenario technical-compromise probabilities at their central
values; computed in closed form, it carries no prior-width parameter.
A Monte Carlo with $N = 200{,}000$ iterations propagates parameter
uncertainty around this projection, drawing the per-scenario rates and
the shared aggregation parameters from the priors specified in
Supplementary~\S\ref{sec:supp_pillar2_priors}. The Monte Carlo median
agrees with the closed-form central to within rounding and, because
each prior is centred on the corresponding central value, the central
figures reported below are governed by those central values rather
than by the prior \emph{widths}. The widths affect only the dispersion
of the simulated distribution. That dispersion is examined in the
sensitivity analysis of Section~\ref{sec:pillar2_sensitivity} and is
not reported as a confidence interval on the central projection. The
projections are reported separately for each architectural class.

\subsubsection{T-EA projections}

For T-EA, the central independence-based projections are:
EA access-pathway $\La = 16.8\%$; EA technical compromise $\Lk = 7.3\%$;
operational compromise $\Lambda^{\mathrm{op},\mathrm{T}} = \Lk = 7.3\%$
(architectural collapse); non-EA baseline access-pathway 12.3\%; non-EA
baseline technical compromise 0\% by construction. The dominant
contributors to system-level risk are $S_1$ (stolen credentials,
$\sim$2.2\% technical compromise per scenario), $S_3$ (edge exploitation,
$\sim$1.8\%), and $S_5$ (insider, $\sim$1.7\%). EA-specific scenarios
$S_6$ (HSM extraction) and $S_7$ (cryptographic protocol) each contribute
less than 0.5\% on average, reflecting the rarity of the underlying
technique classes while marking the existence of risk channels unique to
EA systems.

\subsubsection{OTT-EA projections}

For OTT-EA, the same per-scenario technical compromise probabilities
$\Pr(C^{\mathrm{key}}_i)$ are propagated through
Equations~\eqref{eq:pop_OTT}--\eqref{eq:pop_seg}, with the per-scenario
cross-cutting fractions from Table~\ref{tab:xi_scenarios} and the central
values $P_3 = 0.50$, $P_4 = 0.60$ (so $P_3 \cdot P_4 = 0.30$).

The per-scenario operational-compromise factor
$p^{\mathrm{op,seg}}_i = \xi_i + (1-\xi_i) \cdot 0.30$ takes values
ranging from $0.37$ ($S_6, S_7$, where the architectural specificity
limits cross-cutting) to $0.69$ ($S_4$, where supply-chain attack paths
are cross-cutting by design). Applied to the per-scenario technical
compromise probabilities and aggregated, this gives:
\begin{equation}
\Lambda^{\mathrm{op},\mathrm{OTT}}_{\mathrm{central}} \approx 3.9\%.
\end{equation}

The dominant contributors shift modestly relative to T-EA: $S_3$ and
$S_4$ retain or strengthen their relative contribution under OTT
because their higher cross-cutting fractions reduce the segregation
discount, while $S_6$ and $S_7$ become less significant because their
high architectural specificity attracts the full segregation discount.
$S_1$ and $S_5$ remain prominent contributors with intermediate
cross-cutting fractions.

\subsubsection{T-EA and OTT-EA central comparison}

The central projection ratio $\Lambda^{\mathrm{op},\mathrm{OTT}} /
\Lambda^{\mathrm{op},\mathrm{T}} \approx 0.54$ at the central
parameterisation expresses the segregation gain as a near-halving of
the T-EA central operational compromise probability for OTT-EA under
independence. The interpretation is architectural: well-segregated OTT
systems benefit from the multi-stage AND-chain at the central
tendency, but, as the next two subsections show, this gain is
strongly conditional on the cross-cutting fraction and substantially
attenuated at the tail.

\begin{figure}[!htbp]
\centering
\includegraphics[width=0.95\textwidth]{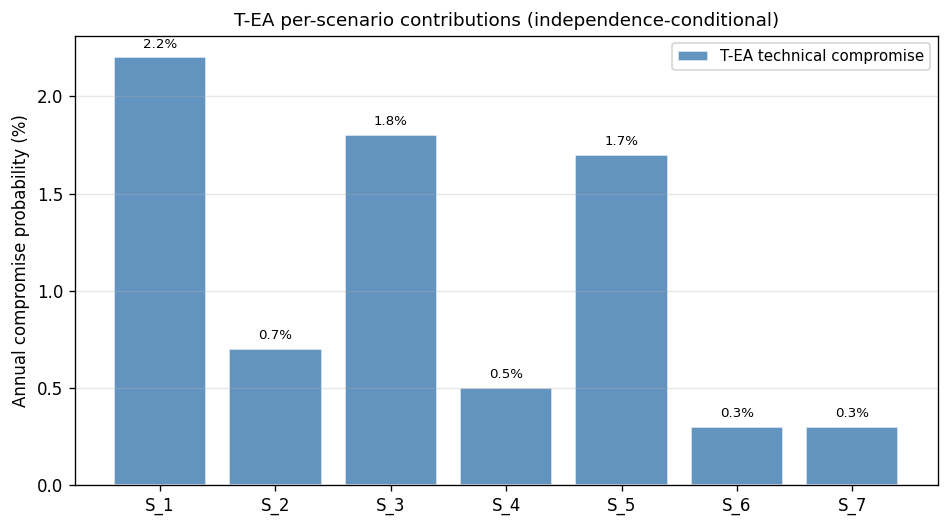}
\caption{Per-scenario mean annual compromise contribution under the
Pillar~II central parameterisation ($10^5$ iterations, seed 2024). Error
bars show 5th--95th percentile range on access-pathway probability.
Hatched bars give mean technical compromise probability per scenario.
The dominant contributors are $S_1$ (stolen credentials, $\sim$2.2\%),
$S_3$ (edge exploitation, $\sim$1.8\%), and $S_5$ (insider,
$\sim$1.7\%). EA-specific specialist scenarios $S_6$ (HSM extraction)
and $S_7$ (cryptographic protocol) each contribute less than 0.5\%.
\textbf{Independence-conditional values for the T-EA projection.} The
parallel OTT-EA contributions are given in
Figure~\ref{fig:scenario_contributions_ott} (Section~\ref{sec:results}),
where each scenario's contribution is reweighted by its
operational-compromise factor $p^{\mathrm{op,seg}}_i$.}
\label{fig:scenario_contributions}
\par\smallskip\noindent\small\textit{Alt text:} Bar chart of mean annual compromise contribution per scenario S1 through S7 under the Pillar II T-EA central parameterisation. Three scenarios dominate: S1 (stolen credentials, approximately 2.2\%), S3 (edge exploitation, approximately 1.8\%), and S5 (insider); remaining scenarios contribute under 1\%. Error bars show 5th to 95th percentile ranges.
\end{figure}

\begin{figure}[!htbp]
\centering
\includegraphics[width=0.95\textwidth]{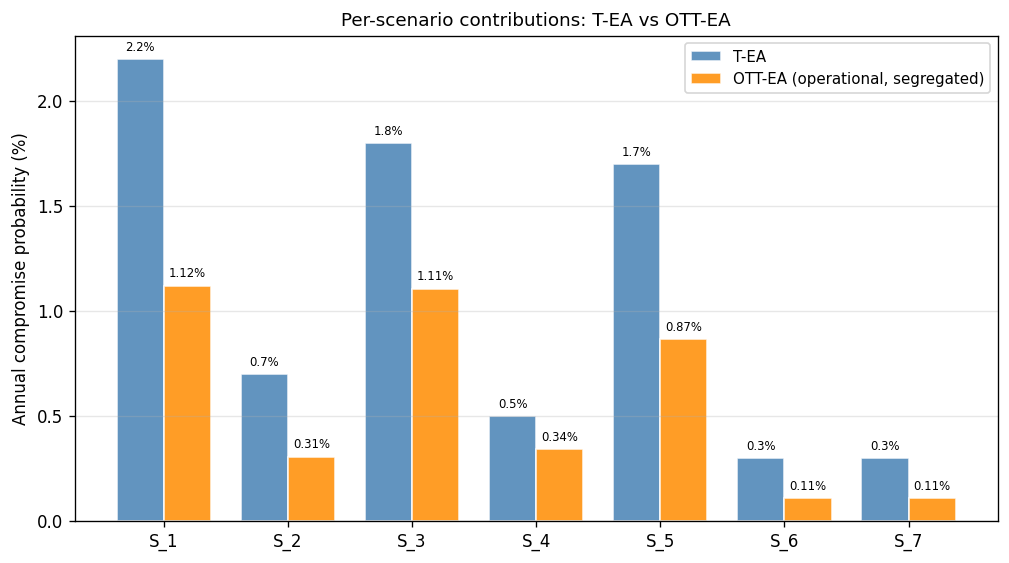}
\caption{Per-scenario mean annual operational compromise contribution
under the Pillar~II OTT-EA central parameterisation, with each scenario's
contribution weighted by its operational-compromise factor
$p^{\mathrm{op,seg}}_i = \xi_i + (1-\xi_i) P_3 P_4$. Side-by-side
comparison with T-EA contributions shows that $S_4$ (lateral / supply
chain) retains its relative contribution because of its high
cross-cutting fraction ($\xi = 0.55$), while $S_6$ (HSM extraction) and
$S_7$ (cryptographic protocol) are further attenuated because of their
low cross-cutting fractions and consequent near-full segregation
discount. $10^5$ iterations, seed 2024.}
\label{fig:scenario_contributions_ott}
\par\smallskip\noindent\small\textit{Alt text:} Bar chart of per-scenario operational-compromise contributions under the Pillar II OTT-EA central parameterisation, with each scenario weighted by its operational-compromise factor. S4 (lateral / supply chain) retains its relative contribution because of high cross-cutting fraction; remaining scenarios are attenuated by the multi-stage detection structure.
\end{figure}

\subsection{Fr\'echet--Hoeffding Bounding Analysis}\label{sec:fh_bounds}

The independence-based projections are conditional on the
complement-of-product formula. To bound the effect of this assumption,
we apply the Fr\'echet--Hoeffding limits for the union of dependent
events, separately to each architectural class. The Fr\'echet--Hoeffding
bounds are mathematically valid for any dependence structure given the
per-scenario marginal probabilities estimated from the model. The
qualifier ``model-consistent'' acknowledges that the marginals
themselves are model outputs, not known constants. The bounds are sharp
with respect to the dependence assumption but inherit any uncertainty
in the marginal calibration.

\subsubsection{T-EA bounds}

For T-EA, the Fr\'echet--Hoeffding bounds take the standard form:
\begin{equation}
\max_i \Pr(C^{\mathrm{key},\mathrm{T}}_i)
\leq \Pr\!\left(\bigcup_i C^{\mathrm{key},\mathrm{T}}_i\right)
\leq \min\!\left(1,\ \sum_i \Pr(C^{\mathrm{key},\mathrm{T}}_i)\right).
\label{eq:fh_T}
\end{equation}
Substituting per-scenario technical compromise means: lower bound
(maximum individual, $S_1$) $\approx 2.2\%$; upper bound (sum of means)
$\approx 7.5\%$. The independence-based projection of 7.3\% sits near
the upper bound. The model-consistent T-EA interval under any
dependence structure is therefore approximately $[2.2\%, 7.5\%]$.

\subsubsection{OTT-EA bounds}

For OTT-EA, the same bounding formula applies to the per-scenario
operational compromise events $C^{\mathrm{op},\mathrm{OTT}}_i$, whose
marginal probabilities satisfy $\Pr(C^{\mathrm{op},\mathrm{OTT}}_i) =
\Pr(C^{\mathrm{key}}_i) \cdot p^{\mathrm{op,seg}}_i$:
\begin{equation}
\max_i \Pr(C^{\mathrm{op},\mathrm{OTT}}_i)
\leq \Pr\!\left(\bigcup_i C^{\mathrm{op},\mathrm{OTT}}_i\right)
\leq \min\!\left(1,\ \sum_i \Pr(C^{\mathrm{op},\mathrm{OTT}}_i)\right).
\label{eq:fh_OTT}
\end{equation}

Because each $\Pr(C^{\mathrm{op},\mathrm{OTT}}_i) =
\Pr(C^{\mathrm{key}}_i) \cdot p^{\mathrm{op,seg}}_i$ and
$p^{\mathrm{op,seg}}_i \leq 1$, both bounds for OTT-EA are
\emph{less than or equal to} the corresponding T-EA bounds
under independence. Substituting per-scenario means at the central
parameterisation:
\begin{itemize}[itemsep=2pt]
  \item Lower bound (maximum individual): $\Pr(C^{\mathrm{op},
        \mathrm{OTT}}_{S_1}) \approx 2.2\% \times 0.51 \approx 1.1\%$
        (where $0.51 = \xi_{S_1} + (1-\xi_{S_1}) \cdot 0.30 = 0.30 +
        0.70 \cdot 0.30$);
  \item Upper bound (sum of per-scenario means):
        $\sum_i p^{\mathrm{op,seg}}_i \cdot \Pr(C^{\mathrm{key}}_i)
        \approx 4.0\%$
        (equivalently, the $\Pr(C^{\mathrm{key}}_i)$-weighted average of
        $p^{\mathrm{op,seg}}_i$ is approximately $0.53$, so the upper
        bound is approximately $0.53 \times 7.5\% \approx 4.0\%$).
\end{itemize}
The model-consistent OTT-EA interval under any dependence structure is
therefore approximately $[1.1\%, 4.0\%]$ at the central
$P_3, P_4, \xi_i$ parameterisation.

\subsubsection{Architectural comparison and the segregation gain}

The architecturally interesting comparison is between the T-EA interval
$[2.2\%, 7.5\%]$ and the OTT-EA interval $[1.1\%, 4.0\%]$. The OTT-EA
interval is shifted downward, reflecting the segregation gain. The
\emph{interval width} is also smaller for OTT, reflecting that the
segregation factor compresses the effect of underlying parameter
variation. However, this comparison is at independent risk channels
within each class. The architecturally distinctive feature of OTT-EA
is that under correlated campaigns the segregation gain partially or
fully collapses (Section~\ref{sec:bayesian}).

\textit{Caveat on the OTT-EA lower bound.} The Fr\'echet--Hoeffding
lower bound is the comonotonic case, in which the seven scenarios are
treated as if perfectly positively correlated. For OTT-EA, perfect
correlation across scenarios is the worst case from the segregation
perspective: it corresponds to a single APT campaign exploiting all
seven attack vectors simultaneously, which is precisely the condition
under which cross-cutting infrastructure (large $\xi_i$ effective
values) would be activated. The OTT-EA Fr\'echet lower bound therefore
\emph{understates} the worst-case dependence risk, because the
$p^{\mathrm{op,seg}}_i$ values used to derive it are calibrated against
the average cross-cutting fraction rather than the campaign-correlated
worst case. It is this limitation that motivates the explicit
correlated-campaign treatment in Section~\ref{sec:bayesian}, where
$Z(t)$ is coupled specifically to cross-cutting edges.

\textit{Caveat on the upper bound.} As in the T-EA case, the
Fr\'echet--Hoeffding upper bound is attained when the seven scenario
events are pairwise disjoint, and it is informative only under the
assumption that the seven scenarios exhaust the attack surface.
(Countermonotonicity has no analogue above two dimensions, so the
upper bound is characterised by disjointness rather than by a
negative-dependence extreme.) The seven scenarios cover the dominant attack vectors
documented in DBIR, ATT\&CK, and CISA advisory sources but do not
constitute a mathematically complete enumeration. The bound is
``model-consistent'' rather than ``model-free''.

The OTT-EA interval $[1.1\%, 4.0\%]$ is the honest representation of
what Pillar~II tells us about the OTT-EA central tendency under the
calibrated parameterisation, with the caveat that tail behaviour
under correlated campaigns is governed by the cross-coupling structure
addressed in Section~\ref{sec:bayesian} rather than by Pillar~II
alone.

\subsection{Sensitivity Analysis}\label{sec:pillar2_sensitivity}

Sensitivity analysis is reported separately for each architectural class,
and also for the OTT-EA-specific parameters $P_3$, $P_4$, and
$\xi_i$ that have no T-EA analogue.

\subsubsection{Common parameters: APT mixing and uplift}

A two-way grid varying global APT mixing weight (10--50\%) and APT
EPSS uplift factor ($1.2{\times}$--$3.0{\times}$) yields T-EA
technical-compromise central projections spanning $5.5$--$7.8\%$ (the
grid minimum $5.5\%$ is the Pillar~II parametric floor under
independence). The corresponding OTT-EA grid spans $3.0$--$4.3\%$,
shifted downward by roughly the central segregation factor
($\sim$0.55). Dominant sensitivity drivers in both grids are the
conditional compromise probabilities for $S_1$, $S_3$, and $S_5$ and
the global APT mixing weight, each contributing more than
$1.5$~percentage points of swing.

\subsubsection{OTT-specific parameters: $P_3$, $P_4$, and $\xi_i$}

The OTT-EA-specific parameters introduce a sensitivity dimension with
no T-EA analogue. A three-way analysis over
$P_3 \in [0.30, 0.80]$, $P_4 \in [0.40, 0.80]$, and a global multiplier
$m_\xi \in [0.5, 1.5]$ on the central $\xi_i$ values from
Table~\ref{tab:xi_scenarios} yields an OTT-EA central projection
spanning $2.0\%$--$6.0\%$ across the full joint range. The defensible
central interval at intermediate $m_\xi \in [0.75, 1.25]$ and central
$P_3 \cdot P_4$ is $[3.5\%, 4.3\%]$. The full $(P_3 \cdot P_4, m_\xi)$
sensitivity grid is reported in Supplementary
\S\ref{sec:supp_ott_pillar2_grid}, Table~\ref{tab:supp_ott_pillar2_sens}.

\subsubsection{Scenario ordering and sensitivity reporting}

The relative ordering of the seven
scenarios should be treated as an informed modelling choice rather
than an externally validated ranking. The same caveat applies to the
per-scenario $\xi_i$ values. The two-way sensitivity grid and one-way
tornado for the Pillar~II central projection are reported in
Supplementary \S\ref{sec:supp_pillar2_sens_figs}.

% =============================================================================
\section{Pillar~III: Architectural Heuristic Decomposition}\label{sec:pillar3}

Pillar~III provides a structural lower-bound check on the Pillar~II
projections by treating compromise risk as the union of risk across
four irreducible channels (insider, zero-day, supply-chain, and
operations) and computing the architecture-conditional compromise
probability under the assumption that channels operate
independently. The point is to show that EA compromise probability
cannot be reduced below a structural floor determined by these
four channels, regardless of the calibration choices made in the
more detailed Pillar~II scenario model. Pillar~III is not a
separate empirical analysis. It is an internal consistency check
that the Pillar~II projections sit above (or near) a defensible
floor. The reader may take the channel-aggregate projections
(approximately $5.4\%$ per system-year for T-EA and $3.0\%$ for
OTT-EA under independence, with channel-minimum heuristic floors of
$1.5\%$ and $1.03\%$ respectively) as the Pillar~III contribution,
and skip the per-channel parameter calibration. The architectural ordering
(T-EA $>$ OTT-EA at the central tendency under independence) holds
under this construction, reflecting the additional operational
steps OTT-EA requires for a full compromise.

\subsection{Rationale and Approach}

Pillars~I and II both work \emph{outwards} from data: historical incidents
(Pillar~I) and parameterised scenarios (Pillar~II). Pillar~III works
\emph{inwards}, from architecture. The question it asks is this: given the
components any operationally viable EA system must contain, and given the
best historically observed security performance for each component class in
analogous domains, what is the lowest combined risk we can prudently project?
The output is a heuristic aggregate, not a structural law, and the contrast
with the other pillars is itself useful: agreement across all three is more
than three times one method's confirmation, but disagreement tells us which
assumptions are doing the work.

The decomposition uses the same cross-cutting weighting machinery introduced
for Pillar~II (Section~\ref{sec:pillar2_outcome}). The four non-eliminable
risk channels are architecture-independent: any operationally viable EA
system must employ cleared personnel, contain non-trivial code, depend on
supply chain, and admit operational error, regardless of whether it is
T-EA or OTT-EA. The architectural distinction enters at conversion. We
distinguish \emph{channel-triggered compromise} (the channel produces
unauthorised access to some part of the system) from \emph{operational
compromise} (the access is sufficient to decrypt target communications).
For T-EA the conversion is automatic by architectural collapse
(Remark~\ref{rem:collapse}); for OTT-EA it is modulated by a per-channel
cross-cutting fraction.

\subsection{Four Non-Eliminable Component Risk Channels}

Any operationally viable EA system must maintain a key escrow database,
access-control mechanisms, cryptographic operations, authorised personnel, network
connectivity, and audit systems. The presence of each component class establishes
an irreducible attack surface: the component cannot be removed without violating
operational viability, and creates at least one exploitable entry point.

\textit{1. Insider threat ($\Pinsider \geq 0.015$).} Any operational EA system
must employ cleared personnel. The CERT \emph{Common Sense Guide to
Mitigating Insider Threats} \citep{cert2023} documents insider-threat
incidents as a recurring exposure across critical-infrastructure and
government-sector organisations. The OPM exfiltration, CIA Vault~7, and
Snowden disclosures all occurred within high-security clearance environments.
The Stream~C platform-incident base
(Section~\ref{sec:pillar1_streamC}) does not lower this estimate. The LastPass
2022, Microsoft Midnight Blizzard 2024, and several Okta incidents involve
insider-mediated initial access at platform operators of comparable scale.
We adopt 1.5\% as a conservative floor applicable to both architectural classes.

\textit{2. Zero-day vulnerabilities ($\Pzeroday \geq 0.015$).} Any system of
non-trivial complexity contains latent unpatched vulnerabilities. A minimum
viable EA system requires at least $\sim$100{,}000 lines of code. Even at
1 defect per 1{,}000 LOC, far below the industry-average range of 15--50
per 1{,}000 lines of delivered code documented by \citet{mcconnell2004}
and comparable to the best released-software rates reported there, this
implies on the order of 100 latent defects.
\citet{bilge_dumitras_2012} document an average zero-day
attack window of 312 days before public disclosure. The CISA KEV database
confirms that HSM firmware, cryptographic libraries, and access-control
middleware are all represented in actively exploited CVEs. OTT platform
codebases are typically larger by several orders of magnitude (millions of
LOC), but benefit from faster patch cadence and better continuous monitoring;
the net effect on zero-day exposure rate is unresolved by the available
evidence. We adopt 1.5\% as a conservative floor applicable to both
architectural classes, with sensitivity to the OTT-specific upward
adjustment discussed in Section~\ref{sec:pillar3_sensitivity}.

\textit{3. Supply-chain exposure ($\Psupply \geq 0.015$).} No EA operator can
achieve complete vertical integration. The SolarWinds (2020) and XZ Utils (2024)
incidents demonstrate that supply-chain attacks against specifically targeted
security infrastructure are operational. The DBIR 2024 records supply-chain
attacks in approximately 4\% of breaches of critical-infrastructure
organisations. The Stream~C incident base contains four documented
supply-chain or supply-chain-adjacent compromises (CodeCov 2021, MOVEit 2023,
Storm-0558 via consumer-key exposure, Cloudflare via Atlassian) at platform
operators, consistent with the floor. We adopt 1.5\% as a conservative
floor applicable to both architectural classes.

\textit{4. Operational errors ($\Pops \geq 0.010$).} Operational
errors (misconfigured access controls, key-management procedure
deviations, certificate mismanagement) are distinct from software
defects. The Mozilla {CA} incident dashboard
\citep{mozilla_ca_incidents} documents formal misissuance incident reports
at a rate of 1--3\% annually among the largest tier-one CAs
(population denominator from CCADB \citep{ccadb2024}). The
lawful-intercept infrastructure compromise in the Athens affair arose from a
procedure failure (rogue software installation
\citep{prevelakis2007}). We adopt 1.0\%.

\subsection{Per-Channel Cross-Cutting Fractions}\label{sec:pillar3_xcut}

For the OTT-EA case, each channel-triggered compromise additionally requires
that the resulting access support operational compromise of architecturally
segregated $\mathcal{K}$ and $\mathcal{D}$. We define the per-channel
cross-cutting fraction $\xi_c$ as the probability that the channel's
realised compromise traverses cross-cutting infrastructure (shared identity
provider, shared build pipeline, common privileged-role credentials, or
analogous mechanism), so that the segregation between $\mathcal{K}$ and
$\mathcal{D}$ does not present an additional barrier. The remaining
$(1 - \xi_c)$ fraction must traverse the multi-stage segregation chain
whose conditional success probability is the product $P_3 \cdot P_4$
(Section~\ref{sec:pillar2_outcome}).

\begin{table}[!htbp]
\centering
\small
\caption{Per-channel cross-cutting fractions for the OTT-EA Pillar~III
decomposition. Calibrated against the Stream~C incident base
(Section~\ref{sec:pillar2_outcome}) and the per-scenario $\xi_i$ values
of Pillar~II (Table~\ref{tab:xi_scenarios}). Architecturally consistent
calibration: each channel's $\xi_i$ is the cross-cutting fraction
appropriate for the dominant attack mechanism in that channel.}
\label{tab:xi_channels}
\begin{tabularx}{\textwidth}{@{}lp{0.18\textwidth}c X@{}}
\toprule
\textbf{Channel} & \textbf{Dominant mechanism} &
\textbf{$\xi_c$} & \textbf{Calibration} \\
\midrule
Insider                & Privileged-role abuse  & 0.30
  & SRE / security-engineer cross-cutting variability \\
Zero-day               & Software defect        & 0.30
  & Some shared-component zero-days, most subsystem-localised \\
Supply chain           & Build/distribution     & 0.55
  & Build pipelines and SDKs cross-cutting by design \\
Operational error      & Configuration drift    & 0.20
  & Misconfigurations typically subsystem-localised \\
\bottomrule
\end{tabularx}
\end{table}

The effective per-channel OTT-EA rate is
\begin{equation}
P^{\mathrm{OTT,eff}}_c = P_c \cdot \left[\xi_c + (1 - \xi_c) \cdot P_3 \cdot P_4\right],
\label{eq:pillar3_eff}
\end{equation}
which reduces to $P_c$ in the limit $\xi_c \to 1$ (perfect cross-cutting,
no segregation gain) or $P_3 \cdot P_4 \to 1$ (no segregation effect),
recovering the T-EA case. At the central calibration ($P_3 \cdot P_4 =
0.30$):
\begin{itemize}[itemsep=2pt]
  \item Insider: $0.015 \cdot [0.30 + 0.70 \cdot 0.30] = 0.015 \cdot 0.51 = 0.77\%$
  \item Zero-day: $0.015 \cdot [0.30 + 0.70 \cdot 0.30] = 0.015 \cdot 0.51 = 0.77\%$
  \item Supply chain: $0.015 \cdot [0.55 + 0.45 \cdot 0.30] = 0.015 \cdot 0.685 = 1.03\%$
  \item Operational error: $0.010 \cdot [0.20 + 0.80 \cdot 0.30] = 0.010 \cdot 0.44 = 0.44\%$
\end{itemize}

\subsection{Architecture-Conditional Channel-Aggregate Projection Estimates}

Assuming independence among the four component risk channels (an assumption
examined in Section~\ref{sec:independence_treatment}), the channel-aggregate
projection for each architectural class follows from the
complement-of-product formula applied to the architecture-conditional
effective rates.

\subsubsection{T-EA channel-aggregate projection}

For T-EA, $\xi_c \cdot 1 + (1 - \xi_c) \cdot P_3 \cdot P_4 \to 1$ for all
channels under the architectural collapse (Remark~\ref{rem:collapse}), so
$P^{\mathrm{T,eff}}_c = P_c$:
\begin{align}
\Pref^{\mathrm{T}} &= 1 - (1 - \Pinsider)(1 - \Pzeroday)(1 - \Psupply)(1 - \Pops)
\notag \\
&= 1 - (0.985)(0.985)(0.985)(0.990) = 1 - 0.946 = 5.4\%.
\label{eq:pillar3_T}
\end{align}
This is the channel-aggregate projection for the centralised single-operator
class, applied here to T-EA without architectural modification.

\subsubsection{OTT-EA channel-aggregate projection}

For OTT-EA, applying the effective rates from
Equation~\eqref{eq:pillar3_eff}:
\begin{align}
\Pref^{\mathrm{OTT}}
  &= 1 - (1 - 0.00765)(1 - 0.00765)(1 - 0.010275)(1 - 0.0044)
\notag \\
  &= 1 - 0.97035 = 0.02965 \approx 3.0\%.
\label{eq:pillar3_OTT}
\end{align}
The OTT-EA channel-aggregate projection of $\sim$3.0\% reflects the segregation
gain at the central parameterisation of $P_3 \cdot P_4$, $\xi_c$ values.
The ratio $\Pref^{\mathrm{OTT}} / \Pref^{\mathrm{T}} \approx 0.55$ is
quantitatively consistent with the Pillar~II ratio of $\sim 0.54$,
because both pillars apply the same architectural correction
mechanism through the segregation factor.

\begin{center}
\fbox{\parbox{0.93\linewidth}{\small\textbf{Independence-conditional only,
both architectures.} These estimates assume the four risk channels fail
independently within each architectural class. The Bayesian model
(Section~\ref{subsec:dependence_results}) shows that realistic dependence
inflates the annual 95th percentile substantially for both classes, with
larger inflation for OTT-EA because correlated campaigns activate
cross-cutting infrastructure and collapse the segregation gain. Under
dependence, the system-level median may be higher or lower than the
independence-based aggregates, but the tail is materially heavier in
both cases. The independence-based aggregates should be read as
lower-anchors for central tendency under independence, not as robust
central estimates. The architecture-conditional Fr\'echet--Hoeffding
intervals from Pillar~II ($[2.2\%, 7.5\%]$ for T-EA; $[1.1\%, 4.0\%]$
for OTT-EA) are preferred for policy use.}}
\end{center}

\subsubsection{Channel-minimum heuristics}

The T-EA \emph{channel-minimum heuristic} of 1.5\% requires two assumptions that are
not structurally guaranteed: (a) the four channels are the \textbf{dominant}
risk sources, and (b) they are collectively \textbf{exhaustive} of all
plausible compromise paths. If additional risk channels exist (e.g.,
compromised hardware supply chain, side-channel attacks on HSMs), the true
floor would be higher than 1.5\%. The figure follows from the Fr\'echet
lower bound \emph{within the four-channel model}: the probability of the
union cannot be less than the maximum of the individual component
probabilities so long as the four channels span the relevant compromise
space.

The OTT-EA channel-minimum heuristic under the same logic is the maximum of the
effective per-channel rates: $\max_c P^{\mathrm{OTT,eff}}_c = 1.03\%$
(supply chain). The result is approximately one-third lower than the T-EA
channel-minimum heuristic under the same assumptions. The same exhaustiveness caveat
applies, with the additional caveat that the OTT-specific calibration
parameters ($P_3$, $P_4$, $\xi_c$) are themselves only weakly empirically
identified.

\subsection{Sensitivity Analysis}\label{sec:pillar3_sensitivity}

The OTT-EA channel-aggregate projection is sensitive to two clusters
of parameters that have no T-EA analogue: the segregation factor
$P_3 \cdot P_4$ and the per-channel cross-cutting fractions $\xi_c$.
The full joint $(P_3 \cdot P_4, m_\xi)$ sensitivity grid is reported
in Supplementary \S\ref{sec:supp_ott_pillar3_grid},
Table~\ref{tab:supp_ott_pillar3_sens}. The OTT-EA Pillar~III aggregate
spans $1.5$--$4.5\%$ across the full joint sensitivity range, with the
defensible central interval at intermediate $\xi$-multiplier values
approximately $[2.6\%, 3.3\%]$. An alternative calibration in which OTT
channel base rates are taken higher than T-EA rates (doubling
$\Pzeroday$ and $\Psupply$) gives an OTT-EA channel-aggregate
projection of approximately $4.7\%$, narrowing the gap to T-EA. This is
reported as a sensitivity rather than the headline because the
empirical evidence on whether platform zero-day and supply-chain rates
exceed those of T-EA infrastructure is mixed.

\subsection{Independence Assumption and Its Status}\label{sec:independence_treatment}

The single most powerful driver of the within-class agreement across
Pillars~I--III is the shared independence assumption. Positive correlation
between risk channels places the combined risk closer to the Fr\'echet
lower bound (the maximum individual channel) than to the
independence-based aggregate. This is the plausible case in which a
single APT campaign simultaneously exploits insider access, a zero-day
vulnerability, and supply-chain positioning.

For T-EA, this positive correlation moves the combined risk from $5.4\%$
toward $1.5\%$ (the maximum individual channel rate). For OTT-EA, the
behaviour under correlation is qualitatively different: positive
correlation activates cross-cutting infrastructure, the very mechanism
through which $\xi_c$ is realised at above-average values, so the
correlated case both moves the combined risk toward the maximum
individual rate \emph{and} inflates the effective $\xi_c$ toward 1.
The combined effect is that under correlated campaigns, the OTT-EA
combined risk approaches the T-EA combined risk, eroding the segregation
gain that distinguishes them under independence. As the Bayesian model
demonstrates in Section~\ref{sec:bayesian}, this is the structural
mechanism by which OTT-EA tail risk grows faster than T-EA tail risk
under correlated-campaign conditions.

The independence assumption therefore produces results that are neither
systematically conservative nor liberal: it underestimates tail risk
in both architectural classes, while potentially overestimating the
median for OTT-EA more than for T-EA. The Bayesian model in the next
section is specifically designed to address this structural limitation
with explicit cross-cutting coupling.

% ============================================================
% PART 2: Bayesian model through conclusion
% ============================================================
\section{The Bayesian Structural Risk Model (Layer~IV)}\label{sec:bayesian}

The Bayesian Structural Risk Model (Layer~IV) provides the formal
probabilistic framework underlying the paper's tail-risk and
dependence-related findings. It represents EA compromise as a
path-based attack-graph problem with hazard rates on individual
edges, common-mode correlated-campaign coupling, and EA-specific
modifier factors that adjust baseline edge hazards by
mechanism-specific multipliers ($\geq 1$). The OTT-EA model adds
a parallel-subgraph structure: a key-side subgraph for
cryptographic-key access, a data-side subgraph for ciphertext
access, joined at an operational-compromise AND-node, with
cross-cutting edges that traverse both subgraphs and a
cross-cutting amplification factor
$\gamma_{\mathrm{X}}/\gamma \geq 1$. This structure produces the
OTT-EA tail-divergence finding: under correlated campaigns,
cross-cutting edges amplify more than subgraph-internal edges,
inflating the upper-tail compromise probability. The reader may
take the central output, a prior predictive distribution on
annual compromise probability with 90\% prior predictive intervals
$[1.4\%, 16.5\%]$ for T-EA and $[0.8\%, 17.4\%]$ for OTT-EA, and
skip the formal hierarchical-prior specification, hyperprior
structure, and dependence-coupling derivations.

\subsection{Overview and Rationale}

The three-pillar framework produces empirically grounded estimates
under stated assumptions, but its outputs are conditional on their
assumptions, in particular the independence assumption, which cannot
be validated for a prospective system. The Bayesian Structural Risk Model (Layer~IV)
produces full prior predictive distributions that propagate uncertainty
from all model parameters through to the final output. It is
implemented in two parallel architectural configurations, T-EA and
OTT-EA, sharing common infrastructure (hierarchical priors, domain
transfer, latent campaign variable) but differing in attack-graph
topology and in the coupling of $Z(t)$ to cross-cutting edges.

\paragraph{Prior predictive, not posterior inference.}
No public dataset of EA-specific compromise events exists, so the
Bayesian model is evaluated by forward sampling from the joint prior
rather than by updating against an EA-specific likelihood. The outputs
$\{q_1^{(r)}\}$ and $\{Q_T^{(r)}\}$ are prior predictive distributions
whose informativeness comes from the empirical content of the priors
(calibrated via domain transfer from Pillar~I analogues), not from
posterior conditioning on direct observations. ``Bayesian'' here
refers to hierarchically structured priors, latent variables, and
probabilistic forward modelling. No MCMC posterior inference is
performed. The contribution is not specific numerical estimates but a
structured framework for uncertainty-aware deliberation in which
qualitative conclusions robust across wide parameter ranges are
explicitly distinguished from those that are not.

\paragraph{Status of the Bayesian medians: a calibration disclosure.}
The edge-hazard priors are calibrated, separately for each
architectural class, to produce prior predictive outputs consistent
with the architecture-matched three-pillar empirical range (3--6\%
annually for T-EA, 1.5--4\% annually for OTT-EA). The Bayesian medians
($4.0\%$ T-EA; $\sim$2.6\% OTT-EA) therefore \emph{cannot} be
presented as independent corroboration of three-pillar
near-equivalence. They are consistent with the within-class agreement
by prior design. The genuinely
additive contributions of this layer are: (i)~tail shape (architecture-
conditional), (ii)~quantification of dependence uplift (which differs
sharply between architectures because of cross-cutting coupling),
(iii)~exceedance probabilities at policy thresholds (conditional on
calibration), and (iv)~parameter uncertainty decomposition. Each is
developed in \S\S\ref{sec:results}--\ref{sec:synthesis}.

The model comprises five layers, applied in parallel for both
architectures: \emph{(1)~Attack-graph layer}: a directed graph
representing multi-step adversarial paths. For T-EA a single graph
terminating at a unitary compromise node, for OTT-EA a compound graph
with parallel key-side and data-side subgraphs joined at an
operational-compromise AND-node with explicit cross-cutting edges.
\emph{(2)~Bayesian hierarchical layer}: hierarchically specified
priors over edge hazards, informed by analogous incident data via
domain transfer (Stream~A for T-EA, Stream~C for OTT-EA).
\emph{(3)~Dependence layer}: a latent campaign variable $Z(t)$
inducing positive correlation among attack events. For T-EA $Z(t)$
couples uniformly to all edges, for OTT-EA preferentially to
cross-cutting edges. \emph{(4)~Adversarial targeting layer}: a
utility-based model for adversary attention. \emph{(5)~Simulation
layer}: Monte Carlo sampling from the joint prior, propagating
uncertainty to produce prior predictive distributions over annual and
cumulative compromise probabilities.

\subsection{Attack Graph Model}

\subsubsection{T-EA attack graph}

For T-EA, the attack graph is the standard directed-acyclic-graph
formulation:

\begin{definition}[T-EA Attack graph]
\label{def:graph_T}
Let $\calG^{\mathrm{T}} = (\calV^{\mathrm{T}}, \calE^{\mathrm{T}})$ be a
directed acyclic graph where $\calV^{\mathrm{T}} = \{v_0, v_1, \ldots, v_n\}$
is the set of nodes, each representing a security-relevant system state
($v_0$: adversary prior to any compromise; $v_n$: full operational compromise
under architectural collapse, Definition~\ref{def:compromise} and
Remark~\ref{rem:collapse}), and $\calE^{\mathrm{T}} \subseteq
\calV^{\mathrm{T}} \times \calV^{\mathrm{T}}$ is the set of directed edges,
each representing a feasible attack step.
\end{definition}

\subsubsection{OTT-EA compound attack graph}

For OTT-EA, the architectural segregation between $\mathcal{K}$ and
$\mathcal{D}$ is represented through a compound graph structure with
parallel key-side and data-side subgraphs joined at an
operational-compromise AND-node:

\begin{definition}[OTT-EA Compound attack graph]
\label{def:graph_OTT}
Let $\calG^{\mathrm{OTT}} = (\calV^{\mathrm{OTT}}, \calE^{\mathrm{OTT}})$
be a compound directed acyclic graph with node set
\[
\calV^{\mathrm{OTT}} =
  \{v_0\} \cup \calV^{\mathrm{X}} \cup \calV^{\mathrm{K}}
  \cup \calV^{\mathrm{D}} \cup \{v_{\mathrm{op}}\},
\]
where $v_0$ is the adversary's initial state; $\calV^{\mathrm{X}}$ is the
set of \emph{cross-cutting} nodes representing shared infrastructure
(identity provider, build pipeline, privileged-role credentials, shared
network perimeter); $\calV^{\mathrm{K}}$ is the set of \emph{key-side}
nodes representing the key-management subsystem; $\calV^{\mathrm{D}}$ is
the set of \emph{data-side} nodes representing the encrypted-data
subsystem; and $v_{\mathrm{op}}$ is the operational-compromise terminal
state. Edge set
\[
\calE^{\mathrm{OTT}} = \calE_{0\to X} \cup \calE_{0\to K} \cup
  \calE_{0\to D} \cup \calE_{X\to K} \cup \calE_{X\to D} \cup
  \calE_{K\to \mathrm{op}} \cup \calE_{D\to \mathrm{op}},
\]
with $v_{\mathrm{op}}$ reached only through the joint condition: the
adversary has reached at least one node in $\calV^{\mathrm{K}}$
\emph{and} at least one node in $\calV^{\mathrm{D}}$.
\end{definition}

The AND-condition at $v_{\mathrm{op}}$ is the formal representation of
operational compromise (Definition~\ref{def:op_compromise}): unauthorised
key access is necessary but not sufficient. Data access is necessary but
not sufficient. Both are required for decryption of target communications.
The cross-cutting nodes $\calV^{\mathrm{X}}$ are the architectural feature
that allows the AND-condition to be satisfied through a single
infrastructure compromise rather than through two distinct attack
campaigns: a node in $\calV^{\mathrm{X}}$ has outgoing edges to nodes in
both $\calV^{\mathrm{K}}$ and $\calV^{\mathrm{D}}$, so that traversing
through $\calV^{\mathrm{X}}$ provides simultaneous access to both
subgraphs.

\begin{figure}[!htbp]
\centering
\resizebox{0.96\textwidth}{!}{%
\begin{tikzpicture}[
  font=\small,
  comp/.style={draw, rounded corners=3pt, minimum width=3.0cm, minimum height=1.0cm, align=center, fill=blue!8, line width=0.5pt, inner sep=4pt},
  xcut/.style={draw, rounded corners=3pt, minimum width=3.4cm, minimum height=1.0cm, align=center, fill=orange!15, line width=0.5pt, inner sep=4pt},
  andnode/.style={draw, diamond, aspect=2, minimum width=3.2cm, align=center, fill=red!15, line width=0.5pt, inner sep=2pt},
  arr/.style={-{Latex[length=2.4mm]}, thick, shorten >=2pt, shorten <=2pt},
  xarr/.style={-{Latex[length=2.4mm]}, thick, dashed, orange!70!black, shorten >=2pt, shorten <=2pt}
]
% Top row: cross-cutting infrastructure
\node[xcut] (idp)   at ( 2.0, 4.0) {Shared identity\\provider};
\node[xcut] (build) at ( 7.0, 4.0) {Build / CI\\pipeline};
\node[xcut] (sre)   at (12.0, 4.0) {Privileged SRE\\credentials};
% Middle-left: key-side subgraph V^K
\node[comp] (kvault) at ( 2.0, 1.0) {Key vault\\(HSM)};
\node[comp] (kapi)   at ( 5.0, 1.0) {Key-management\\API};
% Middle-right: data-side subgraph V^D
\node[comp] (dstore) at ( 9.0, 1.0) {Encrypted\\message store};
\node[comp] (dapi)   at (12.0, 1.0) {Data-access\\API};
% Bottom: operational compromise AND-node
\node[andnode] (op)  at ( 7.0,-2.0) {$v_{\mathrm{op}}$: operational\\compromise (AND)};
% Initial state
\node[draw, circle, minimum width=1.0cm] (v0) at ( 7.0, 6.0) {$v_0$};
% Edges from v0
\draw[arr] (v0.south) -- ($(idp.north) + (0.3,0)$);
\draw[arr] (v0.south) -- (build.north);
\draw[arr] (v0.south) -- ($(sre.north) + (-0.3,0)$);
% Cross-cutting fan-out edges (orange dashed)
\draw[xarr] (idp.south) -- (kvault.north);
\draw[xarr] (idp.south) -- (kapi.north);
\draw[xarr] (build.south) -- ($(kapi.north) + (0.5, 0)$);
\draw[xarr] (build.south) -- ($(dstore.north) + (-0.3, 0)$);
\draw[xarr] (sre.south) -- (dstore.north);
\draw[xarr] (sre.south) -- (dapi.north);
% Internal subgraph edges (key-side)
\draw[arr] (kvault.east) -- (kapi.west);
% Internal subgraph edges (data-side)
\draw[arr] (dstore.east) -- (dapi.west);
% Edges to op (AND-node)
\draw[arr] (kapi.south) -- (op.north west);
\draw[arr] (dstore.south) -- (op.north east);
% Labels
\node[align=center, font=\bfseries\footnotesize] at ( 3.5, 2.0) {Key-side subgraph $\calV^{\mathrm{K}}$};
\node[align=center, font=\bfseries\footnotesize] at (10.5, 2.0) {Data-side subgraph $\calV^{\mathrm{D}}$};
\node[align=center, font=\bfseries\footnotesize, orange!70!black] at ( 7.0, 5.2) {Cross-cutting infrastructure $\calV^{\mathrm{X}}$};
% Legend
\node[align=left, font=\footnotesize] at (4.5,-3.8) {Solid arrow: subgraph-internal edge};
\node[align=left, font=\footnotesize] at (10.5,-3.8) {Dashed orange: cross-cutting edge ($Z(t)$-coupled)};
\end{tikzpicture}%
}
\caption{Compound attack-graph schematic for an OTT-EA (Class~A)
architecture. Cross-cutting infrastructure nodes (top, orange) have
outgoing edges to both the key-side subgraph
$\mathcal{G}^{\rm OTT}_{\rm key}$ and the data-side subgraph $\mathcal{G}^{\rm OTT}_{\rm data}$
(dashed orange edges). Reaching the operational-compromise AND-node
requires either (i) traversing both subgraphs
independently through their respective subgraph-internal edges, or (ii)
traversing a cross-cutting node that provides simultaneous access to
both. The latent campaign variable $Z(t)$ couples preferentially to
cross-cutting edges (Section~\ref{sec:bayesian}), so under
correlated-campaign conditions the segregation gain that distinguishes
OTT-EA from T-EA at the median is eroded.}
\label{fig:attack_surface_ott}
\par\smallskip\noindent\small\textit{Alt text:} Compound attack-graph schematic for OTT-EA showing a key-side subgraph (left) and a data-side subgraph (right), with cross-cutting infrastructure nodes at top whose edges (dashed orange) reach both subgraphs. The operational-compromise AND-node requires either traversing both subgraphs independently or using the cross-cutting path.
\end{figure}

\subsubsection{Hazard rates and attack-path aggregation}\label{sec:cutsets}

For both architectures, each edge $(v_i, v_j) \in \calE$ is assigned a
hazard rate function $\lambda_{ij}(t) \geq 0$ under a piecewise-constant
approximation within annual periods. The system-level hazard rate is
derived by combining minimal attack-path sets $\{M_1, \ldots, M_K\}$
(minimal sets of edges whose joint traversal achieves the compromise
condition) via inclusion--exclusion. For computational tractability
and dimensional consistency, we adopt a \emph{weakest-link
(bottleneck) approximation} for the per-path hazard,
\begin{equation}
\lambda_{M_k}(t) = \min_{(i,j)\in M_k} \lambda_{ij}(t),
\label{eq:cutset_minrate}
\end{equation}
which is exact in the limit where one edge in the path has a hazard rate
substantially smaller than the others. In the present model, edge log-hazards
differ by ${\sim}3$ log-units across adversary tiers (Supplementary
Table~\ref{tab:edge_hazards}), so the approximation error from this step is small relative to
other modelling uncertainties.

For T-EA, minimal $v_0 \to v_n$ attack paths are enumerated following
standard reliability-graph path-set methodology. For OTT-EA, the
AND-condition at $v_{\mathrm{op}}$ requires a generalised path-set
construction: a minimal attack-path set must \emph{either} (a)
comprise at least one $v_0 \to V^{\mathrm{K}}$ path and at least one
$v_0 \to V^{\mathrm{D}}$ path \emph{jointly} (traversing the two
subgraphs separately), \emph{or} (b) traverse at least one
cross-cutting node in $\calV^{\mathrm{X}}$, which provides
simultaneous access to both subgraphs. The second class dominates
when cross-cutting edges have high hazard rates: compromising a
single cross-cutting node simultaneously elevates the traversal
hazard on both subgraphs.

The system hazard for either architecture is then
\begin{equation}
\lambda_{\mathrm{sys}}(t) = \sum_{k=1}^{K} \lambda_{M_k}(t)
  - \sum_{k < \ell} \lambda_{M_k \cup M_\ell}(t)
  + \cdots,
\label{eq:sys_hazard_exact}
\end{equation}
where $\lambda_{M_k \cup M_\ell}(t) = \min_{(i,j) \in M_k \cup M_\ell}
\lambda_{ij}(t)$ and any negative result from inclusion--exclusion truncation
is clamped to zero (occasional sign reversals arise from the min-rate
heuristic; the clamped fraction is below~$1\%$ in our runs). The exact
reliability formulation in terms of path-traversal probabilities
$\prod_{(i,j) \in M_k}(1-\exp(-\lambda_{ij}\Delta t))$ would not reduce to a
closed-form hazard rate. The topology-sensitivity analysis
(Section~\ref{sec:limitations}) bounds the consequence of the weakest-link
approximation at approximately ${\pm}10\%$ on median estimates, and the
qualitative dependence-uplift findings are reliable across these variations.
Under the piecewise-constant hazard assumption, the probability of at least
one compromise in $[0, t]$ follows a non-homogeneous Poisson process:
\begin{equation}
\PR{T_{\mathrm{comp}} \leq t}
= 1 - \exp\!\left(-\int_0^t \lambda_{\mathrm{sys}}(s)\, ds\right).
\label{eq:time_to_compromise}
\end{equation}

The graphs are stylised representations. Sensitivity analysis
(Section~\ref{sec:limitations}) shows that plausible topological variations
change median estimates by approximately 10\%. The dependence-uplift findings
are robust to these variations.

\subsection{Bayesian Hierarchical Model}\label{sec:bayesian_priors_arch}

\subsubsection{Priors for edge hazard rates}

For each edge $(i,j) \in \calE^{\mathrm{A}}$ (architecture
$\mathrm{A} \in \{\mathrm{T}, \mathrm{OTT}\}$), we adopt a log-normal
prior:
\begin{equation}
\log \lambda_{ij}^{\mathrm{A}} \sim
  \mathcal{N}(\mu_{ij}^{\mathrm{A}}, (\sigma_{ij}^{\mathrm{A}})^2),
\label{eq:edge_prior}
\end{equation}
chosen for support on $\mathbb{R}_{>0}$ and widespread use in
reliability modelling. The architecture-conditional location
parameters $\mu_{ij}^{\mathrm{A}}$ are derived from
architecture-matched domain transfer
(\S\ref{sec:domain_transfer_arch}): T-EA priors are calibrated
against Streams~A and~B; OTT-EA priors against Streams~B and~C,
weighted by architectural fit.

\subsubsection{Hyperpriors}

The hyperparameters $\mu_{ij}^{\mathrm{A}}$ and
$\sigma_{ij}^{\mathrm{A}}$ are themselves assigned priors:
\begin{align}
\mu_{ij}^{\mathrm{A}} &\sim
  \mathcal{N}(\hat\mu_{ij}^{\mathrm{transfer},\mathrm{A}}, \tau_\mu^2),
\label{eq:hyperprior_mu}\\
\sigma_{ij}^{\mathrm{A}} &\sim \mathrm{HalfNormal}(\tau_\sigma),
\label{eq:hyperprior_sigma}
\end{align}
where $\hat\mu_{ij}^{\mathrm{transfer},\mathrm{A}}$ is the
architecture-conditional log-scale point estimate from domain
transfer, and $\tau_\mu, \tau_\sigma > 0$ govern additional variance
from domain mismatch. Defaults $\tau_\mu = 1.0$, $\tau_\sigma = 0.5$
in the primary analysis; sensitivity to $\tau_\mu \in [0.5, 1.5]$
reported for both configurations.

\subsubsection{EA-specific system modifiers}

Three multiplicative modifiers are applied to all baseline hazard
rates:
\begin{equation}
\tilde\lambda_{ij}^{\mathrm{A}} = \delta_{\mathrm{EA}}^{\mathrm{A}} \cdot
  \delta_{\mathrm{target}}^{\mathrm{A}} \cdot
  \delta_{\mathrm{concen}}^{\mathrm{A}} \cdot \lambda_{ij}^{\mathrm{A}},
\label{eq:modified_hazard}
\end{equation}
where $\delta_{\mathrm{EA}}^{\mathrm{A}} \geq 1$ represents
increased attack surface from the EA mechanism;
$\delta_{\mathrm{target}}^{\mathrm{A}} \geq 1$ captures elevated
adversarial targeting; and $\delta_{\mathrm{concen}}^{\mathrm{A}}
\geq 1$ reflects key-aggregation concentration risk.

The three modifiers are architecture-independent. The same prior, and
in the simulation the same draw, applies to T-EA and OTT-EA. Each is a
log-censored lognormal, $\delta_k = \exp(\max(\mathcal{N}(\mu_k,
\sigma_k^2), 0))$, which enforces $\delta_k \geq 1$ and places an atom
at exactly unity: the joint modifier equals $1$ with prior probability
$0.108$, so on that mass the EA system adds no hazard and the dominance
of Proposition~\ref{prop:dominance} is weak rather than strict. Full
parameters are in Supplementary \S\ref{sec:supp_modifier_priors}.

Version~1 of this paper described these priors as
architecture-conditional, with the OTT-EA median for
$\delta_{\mathrm{concen}}$ shifted upward by approximately $0.8$
natural-log units on the basis of platform user-population scale. That
specification was not implemented and no result depends on it. The
claim is withdrawn. All architectural difference in Layer~IV enters
through the channel composition of
\S\ref{subsec:dependence}, namely the cross-cutting fractions
$\xi_c$, the coupling ratio $\gamma_{\mathrm{X}}/\gamma$, and the
segregation factor $P_3 P_4$. Readers should attribute the
architectural ordering of the results to that mechanism alone.

The constraint that all three modifiers are $\geq 1$ is a prior
restriction rather than a definitional convenience. Each is defended
on empirical grounds in \S\ref{sec:dominance}, and a reader who
rejects Proposition~\ref{prop:dominance} must identify which modifier
can credibly fall below unity.

\subsection{Domain Transfer}\label{sec:domain_transfer_arch}

In the absence of EA-specific incident data, baseline hazard rates
are estimated by transferring information from architecturally
analogous systems. The streams of Pillar~I provide the appropriate
analogue classes: T-EA transfer draws primarily on Stream~A
(government high-security systems) with Stream~B (CA cohort) as
secondary, reflecting fit to $\mathcal{K}/\mathcal{D}$ co-located
operator-bounded infrastructure; OTT-EA transfer draws primarily on
Stream~C (platform-level key/data compromises) with Stream~B as
secondary, reflecting fit to segregated infrastructure with
large user populations and multi-stage compromise patterns. The
Stream~C-to-OTT-EA transfer is the most analogically direct mapping
in the framework, since Stream~C incidents themselves come from the
architectural class to which OTT-EA belongs.

The domain transfer applies a transport function:
\begin{equation}
\hat\lambda^{\mathrm{transfer},\mathrm{A}}_{ij}
= \hat\lambda'_{ij} \cdot
  \exp\!\left(\sum_r \phi_{ij,r}^{\mathrm{A}} \cdot \Delta_r^{\mathrm{A}} \right),
\label{eq:transfer_fn}
\end{equation}
where $\Delta_r^{\mathrm{A}}$ are signed log-scale adjustment factors
for structural dimension $r$ (architecture-conditional), and
$\phi_{ij,r}^{\mathrm{A}} \sim \mathcal{N}(0, \tau_\phi^2)$ are
random weighting coefficients propagating transport uncertainty. This
formalises what Pillar~I does informally through its analogical
distance criteria and makes the uncertainty in that mapping explicit.
The full set of structural adjustment dimensions
$\{\Delta_r^{\mathrm{A}}\}$, their per-dimension calibration against
the Stream~C incident base, and the dimension-by-dimension
sensitivity analysis are in Supplementary
\S\ref{sec:supp_structural}.

\subsection{Dependence Modelling}\label{subsec:dependence}

A fundamental structural limitation of the complement-of-product
aggregation used in Pillars~I--III is the failure to capture correlated
attack campaigns. An advanced adversary conducting a coordinated operation
does not attack each component independently. Reconnaissance, tooling, and
timing are shared across attack vectors. A single latent campaign variable
$Z(t) \geq 0$ induces positive correlation across attack events.

\subsubsection{T-EA dependence: uniform coupling}

For T-EA, $Z(t)$ couples uniformly to all edges:
\begin{equation}
\lambda_{ij}^{\mathrm{eff,T}}(t \given Z(t)) =
  \tilde\lambda_{ij} \cdot e^{\gamma Z(t)},
  \quad (i,j) \in \calE^{\mathrm{T}},
\label{eq:campaign_amplification_T}
\end{equation}
where $\gamma > 0$ is the campaign amplification factor governing the
sensitivity of edge hazard rates to campaign intensity, and $Z(t)$ evolves
as a non-negative process with prior $Z(t) \sim \mathrm{Gamma}(\alpha_Z,
\beta_Z)$. In high-campaign-intensity periods (large $Z$), all edge
hazards are elevated simultaneously, producing the correlated tail risk
that independence-based models cannot describe.

\subsubsection{OTT-EA dependence: cross-cutting coupling}

For OTT-EA, the architectural distinction between subgraph-internal edges
and cross-cutting edges motivates a structured coupling specification.
Edges into $\calV^{\mathrm{X}}$ and from $\calV^{\mathrm{X}}$ to either
subgraph receive a higher amplification factor than subgraph-internal
edges, reflecting the architectural feature that correlated campaigns
preferentially exploit shared infrastructure:
\begin{equation}
\lambda_{ij}^{\mathrm{eff,OTT}}(t \given Z(t)) =
  \begin{cases}
    \tilde\lambda_{ij} \cdot e^{\gamma_{\mathrm{X}} Z(t)} &
      \text{if } (i,j) \text{ is a cross-cutting edge}, \\
    \tilde\lambda_{ij} \cdot e^{\gamma Z(t)} &
      \text{otherwise},
  \end{cases}
\label{eq:campaign_amplification_OTT}
\end{equation}
where $\gamma_{\mathrm{X}} \geq \gamma > 0$. The constraint
$\gamma_{\mathrm{X}} \geq \gamma$ formalises the architectural
argument that correlated campaigns preferentially exploit
cross-cutting infrastructure: an APT seeking simultaneous access
to key custody and data storage will preferentially compromise
shared identity providers, build pipelines, or privileged-role
credentials over mounting two distinct subsystem-specific
campaigns.

The cross-cutting coupling ratio $\gamma_{\mathrm{X}}/\gamma$ is
treated as a random variable with prior
\begin{equation}
  \gamma_{\mathrm{X}}/\gamma  \;=\;  1  \;+\;  \mathrm{LogNormal}(\log(m - 1),\, \sigma^2),
\label{eq:gamma_X_prior}
\end{equation}
where $m = 2$ is the prior median and $\sigma = 0.40$ is the
log-space spread. The shift by~$1$ enforces the structural
constraint $\gamma_{\mathrm{X}}/\gamma \geq 1$ by construction.
Because the log-normal component is strictly positive, the prior in
fact places probability one on $\gamma_{\mathrm{X}}/\gamma > 1$: the
weak constraint guarantees that the OTT-EA cross-cutting
amplification is never below T-EA's uniform coupling, and the strict
inequality, which holds almost surely under the prior, is what
produces the tail divergence. The
default $(m, \sigma) = (2, 0.40)$ gives a 90\% credible range of
approximately $[1.52, 2.93]$.

\paragraph{Calibration anchor.} The prior median $m = 2$ is
\emph{derived} from, not merely consistent with, the Stream~C
cross-cutting fraction $\hat p_X = 7/14 = 0.50$ (seven of fourteen
incidents exhibited documented cross-cutting attack paths, $X = Y$
in Supplementary Table~\ref{tab:streamC}): under the attack-graph specification, a
cross-cutting incident share of one-half at central campaign
intensity maps to an amplification ratio of approximately~$2$, with
the derivation given in Supplementary
\S\ref{sec:supp_xcut_prior_calibration}. The Salt Typhoon record, in
which a single campaign traversed lawful-intercept and adjacent
administrative infrastructure together, independently rules out
values close to~$1$. The $\sigma = 0.40$ choice
matches the prior width used in the variance-decomposition and
tornado analyses (\S\ref{subsec:evppi}) to keep sensitivity
treatments internally consistent. The architectural-divergence
finding (OTT-EA upper-tail inflation exceeding T-EA's) is robust
across $\sigma = 0.25$, $0.40$, $0.50$, $0.60$, $0.75$
(Table~\ref{tab:gamma_X_sigma_sensitivity}). The specific
magnitude is calibration-dependent and reported as the
prior-integrated value at $\sigma = 0.40$ with explicit
sensitivity. Full prior-calibration derivation and the
empirical-anchor mapping from Stream~C are in Supplementary
\S\ref{sec:supp_xcut_prior_calibration}. Empirical considerations
constraining $\sigma$ more tightly are discussed in
\S\ref{subsec:xcut_coupling_limitation}.

Treating $\gamma_{\mathrm{X}}/\gamma$ as sampled rather than fixed
has two substantive consequences: the headline distributional
results absorb the uncertainty in $\gamma_{\mathrm{X}}/\gamma$
directly (no separate sensitivity number needed to interpret a
published quantile), and the prior-integrated upper tail is
heavier than the tail at the prior median in isolation, because
exponential amplification $e^{\gamma_{\mathrm{X}} Z}$ is convex in
$\gamma_{\mathrm{X}}$ and high realisations dominate the upper
tail.

\subsubsection{Architectural consequence: tail expansion under correlation}

The substantive consequence of the cross-cutting coupling specification is
that under high-$Z(t)$ realisations, which dominate the upper tail of the
prior predictive distribution, the segregation gain that distinguishes
OTT-EA from T-EA at the median is partially or fully eroded. Concretely,
at $Z(t) = 0$ (no campaign activity), cross-cutting edges have the same
nominal hazard as subgraph-internal edges, and the effective $\xi_c$
values realised through the graph structure are at their architectural
defaults. At large $Z(t)$, cross-cutting edges receive amplification
$e^{\gamma_{\mathrm{X}} Z(t)} \geq e^{\gamma Z(t)}$, raising the
proportion of attack-graph traversals that pass through
$\calV^{\mathrm{X}}$ and effectively inflating the realised cross-cutting
fraction toward 1. The OTT-EA tail therefore behaves like the T-EA tail
with the segregation gain progressively erased, while the T-EA tail
behaves under uniform coupling. This is the
structural mechanism by which the OTT-EA distribution exhibits a lower
median but a heavier right tail than the T-EA distribution
(Section~\ref{sec:results}).

\subsubsection{What the specification determines before any
data}\label{subsec:tail_algebra}

The preceding mechanism is a property of the prior, and the paper
should not present its consequences as findings. Two results follow
analytically from the choice of $Z \sim \mathrm{Gamma}(\alpha_Z,
\beta_Z)$ with amplification $e^{\gamma Z}$, and they should be read
before any tail estimate in \S\ref{sec:results}.

For a gamma-distributed $Z$ with rate $\beta_Z$,
\begin{equation}
\Pr\!\left(e^{\gamma Z} > x\right) = \Pr\!\left(Z > \tfrac{\log x}{\gamma}\right)
 \;\sim\; C\,(\log x)^{\alpha_Z - 1}\, x^{-\beta_Z/\gamma},
\label{eq:tail_index}
\end{equation}
so the amplification factor is regularly varying with tail index
$\beta_Z/\gamma$. A positive fitted generalised-Pareto shape is
therefore guaranteed by construction, and is not evidence of
anything. Further, the fitted shape approximates $\gamma/\beta_Z$, so
$\xi_{\mathrm{OTT}}/\xi_{\mathrm{T}} \approx
\gamma_{\mathrm{X}}/\gamma$. Since the model imposes
$\gamma_{\mathrm{X}} \geq \gamma$, the ordering
$\xi_{\mathrm{OTT}} > \xi_{\mathrm{T}}$ is a theorem about the prior.
The fitted values bear this out exactly: $0.518/0.271 = 1.91$ against
a prior median $\gamma_{\mathrm{X}}/\gamma = 2$, and a direct sweep
over pinned coupling ratios recovers shapes tracking
$\gamma_{\mathrm{X}}/\beta_Z$
(\texttt{tail\_index\_diagnostic.py} in the archive).

A second property constrains how far into the tail the model can be
read. The mean $\mathbb{E}[e^{gZ}] = (1 - g/\beta_Z)^{-\alpha_Z}$
exists only for $g < \beta_Z$. With $\beta_Z = 2.0$ and $\gamma = 1.0$
the T-EA amplification has mean $1.516$, but the OTT-EA exponent is
$\gamma_{\mathrm{X}}$, whose prior median is exactly~$2.0$. Half the
prior draws therefore give an amplification factor with no finite
mean. Simulated outputs remain bounded only because $q = 1 -
e^{-\lambda} \leq 1$. Capping the amplification leaves the medians
unchanged to two decimal places while pulling the OTT-EA 99th
percentile from $57\%$ to $44\%$ at a fivefold cap
(\texttt{moment\_condition.py}). Central estimates thus do not rest on
the divergent region, and percentiles above roughly the 95th should be
read as prior artefacts rather than as risk estimates.

Neither property is patched away here. Raising $\beta_Z$ above the
$\gamma_{\mathrm{X}}/\gamma$ 99th percentile of $3.54$ would restore a
finite mean throughout, but it would move every reported number, and
disclosure is the more useful course for a corrections release. The
consequence for interpretation is stated in
\S\ref{sec:synthesis}: the \emph{direction} of the tail contrast is
assumed, not found, and only its magnitude is informative, conditional
on the coupling prior.

\subsection{Adversarial Targeting Layer}

The probability that each adversary class directs resources toward the EA system
in a given period is modelled through a utility-based framework. The expected
utility of targeting an EA system for nation-state actors ($\mathcal{A}_{\mathrm{NS}}$)
is proportional to the intelligence value of the system's content, its
vulnerability to the actor's tooling, and the probability of successful
exfiltration without attribution. Operationally, this is captured by the system
modifier $\delta_{\mathrm{target}}$, with the constraint $\delta_{\mathrm{target}} \geq 1$
reflecting the structural observation that EA systems are, by design, higher-value
targets than their non-EA analogues.

\subsection{Simulation Procedure}\label{sec:simulation_procedure}

The Monte Carlo simulation is run independently for the T-EA and OTT-EA
configurations, sharing common hyperprior structure but differing in
attack-graph topology and in the $Z(t)$-coupling specification. For each
architectural class $\mathrm{A} \in \{\mathrm{T}, \mathrm{OTT}\}$, the
simulation proceeds for each replicate $r = 1, \ldots, R$ ($R = 8{,}000$ in
the primary analysis):
\begin{enumerate}[label=\arabic*.]
  \item \textbf{Sample parameters} $\btheta^{\mathrm{A},(r)}$ from the
    architecture-conditional joint prior over all model parameters,
    including the architecture-conditional hazard hyperpriors
    (Section~\ref{sec:bayesian_priors_arch}), the EA modifiers, and the
    architecture-conditional dependence parameters $\gamma$ and (for OTT)
    $\gamma_{\mathrm{X}}$.
  \item \textbf{Compute effective hazards}: for each year $k$ and edge
    $(i,j) \in \calE^{\mathrm{A}}$, compute
    $\lambda_{ij}^{\mathrm{det},\mathrm{A},(r)}(k)$ via
    Equations~\eqref{eq:modified_hazard} and the architecture-appropriate
    coupling specification (Equation~\eqref{eq:campaign_amplification_T}
    for T-EA; Equation~\eqref{eq:campaign_amplification_OTT} for OTT-EA),
    conditioning on the sampled campaign state $Z^{\mathrm{A},(r)}(k)$
    and applying the detection-adjusted formula
    $\lambda_{ij}^{\mathrm{det}} = (1 - \rho_{ij}) \cdot
    \lambda_{ij}^{\mathrm{eff}}$.
  \item \textbf{Compute annual compromise probability}: For T-EA,
    $q_k^{\mathrm{T},(r)} = 1 - e^{-\lambda_{\mathrm{sys}}^{\mathrm{T},(r)}(k)}$
    via standard path-set aggregation. For OTT-EA, the AND-condition at
    $v_{\mathrm{op}}$ is enforced by computing the operational-compromise
    probability through the generalised path-set construction
    (Section~\ref{def:graph_OTT}):
    \begin{equation}
    q_k^{\mathrm{OTT},(r)} = q_k^{\mathrm{xcut},(r)}
      + \left(1 - q_k^{\mathrm{xcut},(r)}\right) \cdot q_k^{\mathrm{K},(r)}
      \cdot q_k^{\mathrm{D},(r)},
    \label{eq:q_OTT}
    \end{equation}
    where $q_k^{\mathrm{xcut},(r)}$ is the probability that traversal
    reaches $v_{\mathrm{op}}$ via at least one cross-cutting node,
    $q_k^{\mathrm{K},(r)}$ is the probability of subgraph-internal
    traversal of $\calV^{\mathrm{K}}$ in absence of cross-cutting, and
    similarly $q_k^{\mathrm{D},(r)}$ for $\calV^{\mathrm{D}}$. This is
    the Bayesian-model analogue of the Pillar~II/III formula
    $\xi + (1-\xi) P_3 P_4$ (Equation~\eqref{eq:pop_seg}), with the
    architectural feature that under correlated-campaign conditions
    $q_k^{\mathrm{xcut},(r)}$ inflates faster than the subgraph-internal
    probabilities.
  \item \textbf{Compute cumulative probability}: $Q_T^{\mathrm{A},(r)} =
    1 - \prod_{k=1}^{T}(1 - q_k^{\mathrm{A},(r)})$.
\end{enumerate}
The outputs $\{q_1^{\mathrm{T},(r)}\}_{r=1}^R$,
$\{q_1^{\mathrm{OTT},(r)}\}_{r=1}^R$, and the corresponding cumulative
distributions constitute prior predictive samples from which
architecture-conditional summary statistics, credible intervals, and
exceedance probabilities are computed (Section~\ref{sec:results}).

\paragraph{Monte Carlo numerical stability.}
All headline results derive from the $R = 8{,}000$ primary run (seed
2024). A four-stream diagnostic bounds the Monte Carlo error at
$\leq 0.08$~pp on the annual medians, with effective sample sizes
above $7{,}700$; the full diagnostic is in Supplementary
\S\ref{sec:supp_mc_stability}.

\subsection{Structural Lower Bound and Stochastic Dominance}\label{sec:dominance}

A result of the Bayesian model follows from the constraints on the system
modifier parameters, once those constraints are empirically grounded.

\subsubsection*{Empirical basis for the modifier constraints}

The model posits three EA-specific multipliers $\delta_{\mathrm{EA}}^{\mathrm{A}},
\delta_{\mathrm{target}}^{\mathrm{A}}, \delta_{\mathrm{concen}}^{\mathrm{A}}
\geq 1$ for each architectural class $\mathrm{A} \in \{\mathrm{T},
\mathrm{OTT}\}$. For these constraints to carry policy weight they must
reflect structural properties of EA systems rather than modelling
definitions. Two independent lines of evidence support each, with
class-specific anchoring where the architectural distinction is material:

\textit{$\delta_{\mathrm{EA}}^{\mathrm{A}} \geq 1$ (attack surface
increase).} Any operationally viable EA architecture introduces
additional codebase, interfaces, and key-management components, and
the documented relationship between software complexity and latent
vulnerability density \citep{alhazmi2007,alhazmi2008} makes the added
surface weakly risk-increasing. The relationship is unvalidated for
minimal-firmware hardware security modules, so its strength varies by
implementation, but equality with the non-EA baseline would require
an EA subsystem of literally zero added complexity.

\textit{$\delta_{\mathrm{target}}^{\mathrm{A}} \geq 1$ (elevated
adversarial targeting).} Salt Typhoon (2024) and the Athens affair
(2004--05) demonstrate that T-EA lawful-intercept infrastructure
attracts adversarial attention absent from the non-EA counterfactual;
Storm-0558 (2023) and Midnight Blizzard (2024) demonstrate the
equivalent for platform-level signing-key and identity
infrastructure. $\delta_{\mathrm{target}}^{\mathrm{A}} = 1$ would
assert that EA systems attract no incremental targeting, which the
incident record contradicts for both classes.

\textit{$\delta_{\mathrm{concen}}^{\mathrm{A}} \geq 1$ (key
concentration premium).} An EA system aggregates an entire user
population's key material in one component, whose compromise value
strictly exceeds that of any single device; under any rational
targeting model, greater expected payoff implies at least equal
targeting intensity. $\delta_{\mathrm{concen}}^{\mathrm{OTT}}$
exceeds $\delta_{\mathrm{concen}}^{\mathrm{T}}$ by roughly the
log-ratio of populations served, but both are bounded below by
unity.

\subsubsection*{Proposition}

The proposition below should be read with its logical type in view.
Its mathematical content is deliberately slight: once the three
modifier bounds above are granted, first-order dominance follows in
two lines of monotonicity. Its scientific content lies entirely in
the empirical case for those bounds---that any operationally viable
EA mechanism adds attack surface, attracts incremental targeting,
and concentrates key-material value---which the preceding paragraphs
argue from the incident record rather than assume. A reader who
rejects the conclusion must therefore identify which of the three
modifiers can credibly fall below unity, and for which architectural
class; the proposition's role is to make that the only available
line of attack.

\begin{proposition}[First-Order Stochastic Dominance, Both Architectural Classes]
\label{prop:dominance}
Under the assumptions that (i) the attack graph of an EA system is a
superset of the no-EA graph (adding EA-specific nodes and edges; see
Section~\ref{sec:pillar1_streamA} for Salt Typhoon as a Tier~1 Stream~A
incident, Section~\ref{sec:pillar3} for the Athens affair as a
historical channel-rate analogue, and
Section~\ref{sec:pillar1_streamC} for Storm-0558 and Midnight Blizzard
for OTT-EA) and (ii) the EA-specific modifiers
$\delta_{\mathrm{EA}}^{\mathrm{A}},
\delta_{\mathrm{target}}^{\mathrm{A}}, \delta_{\mathrm{concen}}^{\mathrm{A}}
\geq 1$, the model's annual compromise probability is at least as
large for the EA configuration as for an otherwise-identical
configuration without EA, with equality exactly when the joint
modifier equals one, an event of prior probability $0.108$ under the
censored priors of
Supplementary~\S\ref{sec:supp_modifier_priors}. Formally, within the model of Section~\ref{sec:bayesian},
for each architectural class $\mathrm{A} \in \{\mathrm{T}, \mathrm{OTT}\}$,
let $F_{\mathrm{EA}}^{\mathrm{A}}$ denote the CDF of the compromise
probability under the EA configuration of class $\mathrm{A}$, and let
$F_{\mathrm{noEA}}^{\mathrm{A}}$ denote the CDF under the same
architecture with all three modifiers fixed at unity (the
counterfactual non-EA parameterisation of class $\mathrm{A}$). Then
\[
  F_{\mathrm{EA}}^{\mathrm{A}}(x) \leq F_{\mathrm{noEA}}^{\mathrm{A}}(x)
  \quad \forall x \in [0, 1], \quad
  \mathrm{A} \in \{\mathrm{T}, \mathrm{OTT}\}.
\]
That is, in each architectural class, the EA-configured model
stochastically dominates (is riskier than) the
same architecture with $\delta^{\mathrm{A}} = 1$. The ordering is
structural in the modelling sense and does not depend on prior
calibration.
\end{proposition}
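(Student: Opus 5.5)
The plan is a coupling argument: build the EA and no-EA configurations on one probability space and show that the EA compromise probability is pointwise at least the no-EA one. Dominance of the CDFs then follows at once. Concretely, for class $\mathrm{A}$ I draw one parameter vector $\btheta^{\mathrm{A}}$ from the joint prior of \S\ref{sec:simulation_procedure}. This includes the baseline edge hazards $\lambda_{ij}^{\mathrm{A}}$, the campaign state $Z$, the detection factors $\rho_{ij}$, the coupling parameters $\gamma$ (and $\gamma_{\mathrm{X}}$ for OTT-EA), and the modifiers $\delta_{\mathrm{EA}}^{\mathrm{A}},\delta_{\mathrm{target}}^{\mathrm{A}},\delta_{\mathrm{concen}}^{\mathrm{A}}$. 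The no-EA output $q_{\mathrm{noEA}}$ is computed from the same draw with the three modifiers reset to unity. Assumption (i) is handled the same way: the no-EA graph is a subgraph, so EA-specific edges are simply switched off in the counterfactual, or equivalently given hazard zero.

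The key step is monotonicity of the map from edge hazards to $q_1$. By \eqref{eq:modified_hazard}, every effective hazard $\lambda_{ij}^{\mathrm{det}}=(1-\rho_{ij})\,\delta\,\lambda_{ij}e^{\gamma Z}$ (or $e^{\gamma_{\mathrm{X}}Z}$ on cross-cutting edges) is nondecreasing in the joint modifier $\delta=\delta_{\mathrm{EA}}\delta_{\mathrm{target}}\delta_{\mathrm{concen}}\ge 1$. Adding edges can only add minimal path sets. For T-EA, $q=1-e^{-\lambda_{\mathrm{sys}}}$ is increasing in $\lambda_{\mathrm{sys}}$. For OTT-EA, \eqref{eq:q_OTT} reads $q=q^{\mathrm{xcut}}+(1-q^{\mathrm{xcut}})q^{\mathrm{K}}q^{\mathrm{D}}$, which is nondecreasing in each of $q^{\mathrm{xcut}},q^{\mathrm{K}},q^{\mathrm{D}}\in[0,1]$, since $\partial q/\partial q^{\mathrm{xcut}}=1-q^{\mathrm{K}}q^{\mathrm{D}}\ge0$. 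Each of these is in turn nondecreasing in its edge hazards. Given pointwise $q_{\mathrm{EA}}\ge q_{\mathrm{noEA}}$, we get $\{q_{\mathrm{EA}}\le x\}\subseteq\{q_{\mathrm{noEA}}\le x\}$, hence $F_{\mathrm{EA}}^{\mathrm{A}}(x)\le F_{\mathrm{noEA}}^{\mathrm{A}}(x)$ for all $x$. The equality clause follows because on the atom $\delta=1$, which has prior mass $0.108$ by the censored priors, the two outputs coincide draw by draw.

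The main obstacle is that the system hazard \eqref{eq:sys_hazard_exact} is an inclusion--exclusion sum with min-rate terms and clamping, and such a sum is not obviously monotone in the edge hazards. For the modifier part I would first try homogeneity. Because $\delta$ multiplies \emph{every} edge hazard uniformly, each $\min$ term scales by exactly $\delta$, so the whole alternating sum is multiplied by $\delta$. The clamp at zero commutes with multiplication by a positive scalar. This gives $\lambda_{\mathrm{sys}}^{\mathrm{EA}}=\delta\,\lambda_{\mathrm{sys}}^{\mathrm{noEA}}\ge\lambda_{\mathrm{sys}}^{\mathrm{noEA}}$ without needing general monotonicity.

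For the superset part, homogeneity does not help, and I would argue at the level of the exact union probability. With the path-set union interpreted correctly, it is monotone under adding paths, and I would treat the inclusion--exclusion hazard as its approximation. Failing that, I would state (i) as requiring the counterfactual to share the graph with EA-edges at zero hazard, which reduces everything to the homogeneous scaling case. The same scaling argument applies separately to $q^{\mathrm{xcut}}$, $q^{\mathrm{K}}$ and $q^{\mathrm{D}}$ in the OTT case.
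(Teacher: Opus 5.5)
Your proposal is correct and follows the paper's own proof. Both use a pathwise coupling on a common parameter draw with the modifiers reset to unity, monotonicity of $1-e^{-\Lambda}$ and of \eqref{eq:q_OTT} in each component, and then pass from the pointwise inequality to $F_{\mathrm{EA}}^{\mathrm{A}}\le F_{\mathrm{noEA}}^{\mathrm{A}}$. Your homogeneity identity $\lambda_{\mathrm{sys}}^{\mathrm{EA}}=\delta\,\lambda_{\mathrm{sys}}^{\mathrm{noEA}}$ is a real sharpening, because a common factor passes through every $\min$ term, any truncation of the alternating sum, and the clamp. It supplies the step the paper only asserts, namely that raising every edge hazard raises the system hazard. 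It also gives strict inequality for $\delta>1$ wherever the no-EA hazards are positive, which is the ``only if'' half of the equality clause that your atom argument alone does not deliver.

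The superset discussion is unnecessary. The formal counterfactual is the same graph with the modifiers at unity, and the paper's proof notes that the inequality is driven by (ii) alone. Your fallback there is also wrong as stated: moving EA-specific edges from hazard $0$ to $\delta\lambda_{ij}>0$ is not a uniform rescaling. If you want that version, the right observation is that the untruncated sum \eqref{eq:sys_hazard_exact} collapses, by the max--min identity, to $\max_k\min_{(i,j)\in M_k}\lambda_{ij}$. That expression is monotone in every edge hazard and under adding path sets.
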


\begin{proof}[Proof sketch]
A monotonicity argument: the modifier product $\geq 1$ raises every
edge hazard, the per-period compromise probability is increasing in
total hazard, and both architectural aggregation maps are monotone in
their components, so the inequality holds at every parameter
realisation and therefore distributionally. The full proof is in
Supplementary \S\ref{sec:supp_dominance_proof}.
\end{proof}

\paragraph{Status and scope of the assumptions.} Both assumptions
are empirically supported, neither is logically necessary, and the
conclusion is a structural claim about the model conditional on
them, not an empirical finding about the EA-versus-no-EA difference
in the wild; what would defeat the assumptions is set out in
\S\ref{sec:structural_vs_calibration}. The no-EA counterfactual is
a model-internal construction (the same graph with the modifiers at
unity), not a claim about specific deployed carriers, which may
host lawful-intercept infrastructure for other regulatory reasons:
for those, the relevant counterfactual is ``no additional EA
mandate'', and the proposition speaks to that reading. The
relationship of this model-internal counterfactual to the empirical
baseline carried by the Pillar~I analogue rates is stated in
\S\ref{sec:policy} (one decomposition, three expressions).

\begin{remark}[Cross-architecture ordering is not stochastic dominance]
\label{rem:cross_arch}
Proposition~\ref{prop:dominance} establishes EA-versus-no-EA stochastic
dominance \emph{within each architectural class}. It does \emph{not}
establish a stochastic dominance ordering \emph{between} the T-EA and
OTT-EA EA-configured distributions. As Section~\ref{sec:results} shows,
the relationship between $F_{\mathrm{EA}}^{\mathrm{T}}$ and
$F_{\mathrm{EA}}^{\mathrm{OTT}}$ is not first-order: under the central
calibration the T-EA distribution lies above the OTT-EA distribution at
the median (T-EA is riskier on average), but the OTT-EA distribution can
exceed the T-EA distribution in the upper tail under correlated-campaign
conditions (because of the cross-cutting coupling specified in
Section~\ref{subsec:dependence}). This is a structural feature of
architectural segregation, not a defect of the model. The policy
implication is that T-EA-versus-OTT-EA comparison must engage with the
full distribution rather than reducing to median or any single quantile.
\end{remark}

\paragraph{Caveat on scope.} The proposition compares two
parameterisations of the \emph{same} graph in each class; it does not
formally compare an EA system against a real non-EA system, whose
topology would differ (no key-escrow node, no orchestration layer).
The policy inference that EA adds risk rests on the separate
empirical claim that no operationally viable deployment achieves
$\delta^{\mathrm{A}} = 1$ on all three dimensions. The consequence is
that the debate is not about \emph{whether} an EA mandate adds risk
within its class but about how much, in what form (median versus
tail), and whether the increment is justified by benefits outside
this paper's scope.

% =============================================================================
\section{Results}\label{sec:results}

This section reports the quantitative outputs of the framework:
annual and multi-decade compromise probabilities, distributional
shape (mean, percentiles, tail behaviour), the effect of dependence
on the headline values (with comparison to Fr\'echet--Hoeffding
bounds), the relative importance of different parameters (variance
decomposition), and the prior-sensitivity of headline numbers. The
principal findings: T-EA annual compromise probability has a
Bayesian median of $4.0\%$ (90\% credible interval
$[1.4\%, 16.5\%]$); OTT-EA has a lower median ($2.6\%$) but a
heavier upper tail (95\% upper at $17.4\%$). Over a 10-year
deployment horizon, cumulative compromise reaches $37\%$ for T-EA
and $32\%$ for OTT-EA at the median. Under conservative priors,
these become approximately $19\%$ and $13\%$, respectively, still
well above zero. The dependence layer ($Z$, $\gamma$,
$\gamma_{\mathrm{X}}$) inflates the upper tail more for OTT-EA than
for T-EA, with a tail-shape divergence guaranteed by the
cross-cutting coupling constraint. The reader may take these
headline values and skip the technical sensitivity analyses
without losing the principal contribution.

\textit{Note: All numerical results from Layer~IV are outputs of the model
under the parameter priors described in Section~\ref{sec:bayesian}. They are
explicitly conditional on all stated modelling assumptions and should not be
interpreted as empirical measurements of historical or projected compromise
rates.}

\subsection{Three-Pillar Framework: Consolidated Outputs}

Table~\ref{tab:cluster} reproduces the principal reference estimates produced
by the framework, with their critical assumptions and epistemic status,
stratified by architectural class. The table is deliberately compact:
the assumption-conditional estimates within each architectural class
cluster narrowly only because they share the independence assumption,
and the T-EA and OTT-EA clusters differ by approximately the central
segregation factor. A longer enumeration of every variant floor
estimator would obscure rather than support these two findings.

\begin{table}[!htbp]
\centering
\small
\caption{Principal reference estimates by architectural class: values,
critical assumptions, and epistemic status. The T-EA estimates apply to
transmission-layer EA architectures with $\mathcal{K}/\mathcal{D}$
co-location. The OTT-EA estimates apply to application-layer architectures
with central-calibration segregation factor $P_3 \cdot P_4 = 0.30$ and
the per-channel cross-cutting fractions of Table~\ref{tab:xi_channels}.}
\label{tab:cluster}
\begin{tabularx}{\textwidth}{@{}lccXX@{}}
\toprule
\textbf{Reference point} & \textbf{T-EA} & \textbf{OTT-EA} &
\textbf{Critical assumptions} & \textbf{Epistemic status} \\
\midrule
Channel-minimum heuristic (Pillar~III) & 1.5\% & 1.0\% &
  Four channels are dominant AND exhaustive; empirical minima are
  architecture-appropriate &
  Conditional heuristic; not a structural bound \\
Fr\'echet--Hoeffding lower (Pillar~II) & 2.2\% & 1.1\% &
  Seven scenarios cover the dominant attack surface &
  Model-consistent lower bound \\
Bayesian median (Layer~IV) & 4.0\% & 2.6\% &
  Architecture-conditional priors calibrated to architecture-matched
  Pillar~I empirical range &
  Prior-conditional; not independent corroboration \\
Channel-aggregate projection (Pillar~III) & 5.4\% & 3.0\% &
  Independence; architecture-conditional channel weighting &
  Independence-conditional \\
Pillar~II central (independence) & 7.3\% & 3.9\% &
  Independence; scenario completeness &
  Independence-conditional \\
Fr\'echet--Hoeffding upper (Pillar~II) & 7.5\% & 4.0\% &
  Same as lower bound &
  Model-consistent upper bound \\
Bayesian 90\% interval (Layer~IV) &
  $[1.4\%, 16.5\%]$ & $[0.8\%, 17.4\%]$ &
  Same as Bayesian median &
  Distributional uncertainty \\
\bottomrule
\end{tabularx}
\end{table}

Two architectural patterns are visible in the table. First, within
each architectural class the independence-conditional reference
points cluster in a narrow range: T-EA in the upper-single-digit
range; OTT-EA in the low-single-digit range. The dominant driver of
within-class clustering is the shared independence assumption. The
limited genuinely independent evidence is discussed in
Section~\ref{sec:synthesis}. Second, the OTT-EA cluster sits at
approximately half the T-EA cluster across all three pillar
methodologies, confirming that the architectural correction
mechanism is applied consistently across pillars and that the
segregation gain is the quantitatively dominant architectural effect
at the central tendency.

The Bayesian 90\% credible intervals tell a different story. The
OTT-EA upper bound modestly exceeds the T-EA upper bound: under
correlated-campaign conditions OTT-EA upper-tail
behaviour overtakes T-EA as the segregation gain is eroded and the
cross-cutting amplification engages. The OTT-EA lower bound is also
below T-EA's, reflecting the segregation gain at the lower tail
where campaign activity is low. The upper-to-lower interval ratio
is therefore wider for OTT-EA, formalising the tail-heavier
character of the OTT-EA distribution.

\subsection{Which Quantity Answers Which Question}\label{sec:which_quantity}

The framework reports three families of numbers, and they are
different quantities rather than competing estimates of one
quantity. The Fr\'echet--Hoeffding intervals bound the Pillar~II
seven-scenario aggregation: they hold under every dependence
structure among the scenarios, conditional on the per-scenario
probabilities, and they include neither the EA modifiers nor the
campaign amplification of Layer~IV. One asymmetry is worth making
explicit: a scenario omitted from the decomposition can only raise
the union probability, so the Fr\'echet--Hoeffding \emph{lower}
bound is robust to scenario incompleteness while the \emph{upper}
bound is conditional on the seven scenarios covering the dominant
attack surface. The Bayesian percentiles
describe the full prior-predictive distribution with those
mechanisms switched on. The FH interval therefore does not bound
the Bayesian outputs, and the Layer~IV 95th percentiles
(approximately $16$--$17\%$ annually for both architectures)
sitting above the FH uppers ($7.5\%$ and $4.0\%$) is not an
inconsistency: the bracket and the tail answer different
questions. The usage rule is as follows. For comparative
architecture questions, and for a dependence-robust envelope of
the scenario model, use the FH intervals together with the
structural orderings. For tail and stress questions, where
correlated campaigns and EA-specific targeting matter, use the
Layer~IV upper percentiles. For a single calibration-conditional
central reference, use the Layer~IV medians, with the
channel-minimum heuristics as independence-conditional floors.
Table~\ref{tab:cluster} states the assumptions each family
carries; Table~\ref{tab:risk_summary} presents the families side
by side under this convention.

\subsection{Bayesian Layer: Distributional Results}

Prior predictive distributions over the annual compromise probability
$q_1$ are produced separately for the two architectural classes under
the primary parameterisation ($R = 8{,}000$ replicates, seed 2024).
The full quantile sets are reproducible via the bundled scripts. The
narration below emphasises qualitative shape rather than specific
quantiles.

\paragraph{T-EA distribution.} The central tendency sits in the
low-single-digit percentage range (median in the neighbourhood of
$4\%$), consistent with the three-pillar T-EA empirical range
($3$--$6\%$) to which the priors were calibrated. The 90\% credible
interval spans approximately an order of magnitude, with the lower
bound near $1\%$ and the upper bound in the mid-teens. The upper
tail extends further: at the 95th percentile the distribution sits
in the mid-to-upper teens, with non-negligible probability mass
reflecting scenarios combining elevated campaign intensity, high
targeting probability, and poor detection performance.

\paragraph{OTT-EA distribution.} The central tendency is lower than
T-EA's, with a median in the low-single-digit range (roughly
two-thirds of the T-EA median, consistent with the segregation gain).
The 90\% credible interval is slightly wider than T-EA's, with both
a lower lower bound (under $1\%$) and a higher upper bound (upper
teens). The upper tail is the distinctive feature: at the 95th
percentile OTT-EA modestly exceeds T-EA despite its lower median,
the structural signature of cross-cutting coupling
(\S\ref{subsec:dependence}). Further into the tail the gap widens
substantially, with the OTT-EA 99th percentile in the upper-half
percentage range and T-EA's in the mid-thirties.

\paragraph{Architectural comparison.} The two paragraphs above
reduce to three contrasts: a median ratio of $0.66$, directionally
consistent with but higher than the Pillar~II and Pillar~III
architectural ratios of $0.54$ and $0.55$, the gap arising because
the Layer~IV modifiers and campaign coupling act on both classes and
partially compress the segregation gain; a
tail crossover near the 95th percentile, below which T-EA is
stochastically riskier and above which OTT-EA is; and a markedly
larger OTT-EA tail-to-median ratio, the signature of segregation
interacting with correlated campaigns under prior integration over
$\gamma_{\mathrm{X}}/\gamma$ (Equation~\ref{eq:gamma_X_prior}). The
interval widths themselves are the primary result: the model cannot
rule out annual risk near $1\%$ or in the mid-teens for either
class.

\begin{figure}[!htbp]
\centering
\includegraphics[width=\textwidth]{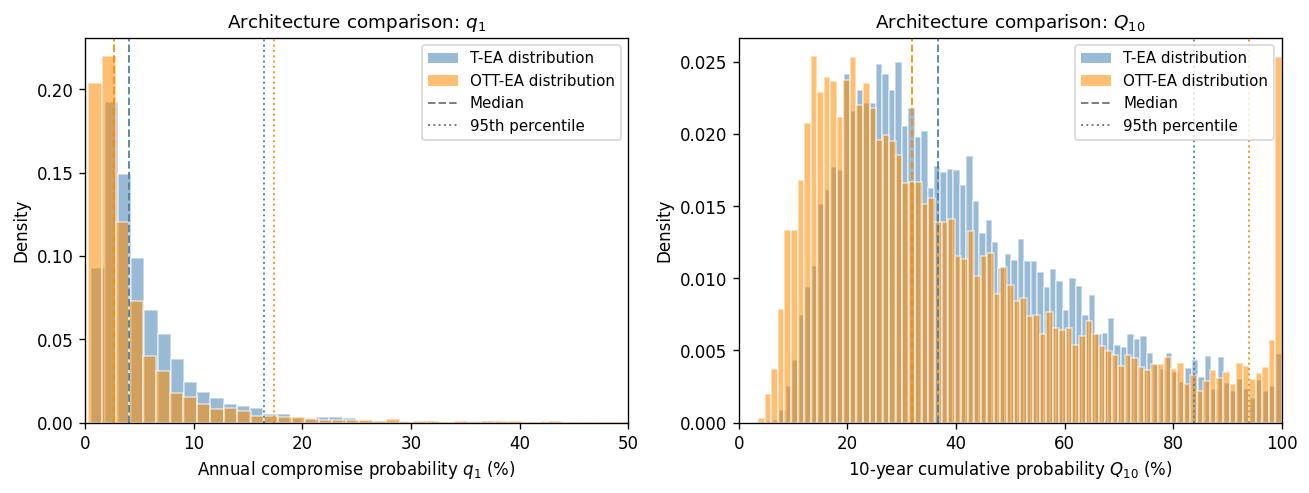}
\caption{Architecture-conditional comparison of prior predictive
distributions for the year-1 annual probability $q_1$ (left) and
10-year cumulative $Q_{10}$ (right) under the full dependence model.
T-EA distribution (blue) and OTT-EA distribution (orange) overlaid.
The OTT-EA distribution sits below the T-EA distribution at the
median but the upper tails cross over at approximately the 95th
percentile, with OTT-EA exhibiting heavier extreme-tail mass under
correlated-campaign conditions. Vertical markers indicate the
median, 5th, and 95th percentiles for each distribution. Results
from $R = 8{,}000$ Monte Carlo replicates per architecture (seed
2024); conditional on all modelling assumptions in
Section~\ref{sec:bayesian}.}
\label{fig:distributions_ott}
\par\smallskip\noindent\small\textit{Alt text:} Two side-by-side density plots comparing prior predictive distributions of T-EA (blue) and OTT-EA (orange). Left: year-1 compromise probability q1; OTT-EA sits below T-EA at the median but crosses over near the 95th percentile. Right: 10-year cumulative Q10 with the same architectural divergence pattern.
\end{figure}

\subsection{Effect of Dependence}\label{subsec:dependence_results}

The latent campaign variable $Z(t)$ has a substantial impact on
both the central tendency and the tail of the compromise
distribution, with qualitatively divergent effects across
architectural classes.

\paragraph{T-EA and OTT-EA dependence effects.} For T-EA, moving
from the independence baseline ($Z \approx 0$) to the full
dependence model raises the year-1 median modestly (by
$\approx 1$~percentage point, a $\approx 30\%$ relative increase)
and the upper percentile more substantially: the
annual 95th-percentile uplift is approximately $1.5\times$. For
OTT-EA the analogous comparison gives a comparable absolute median
uplift ($\approx 0.9$~pp, a larger relative increase from the lower
base) but
a substantially larger upper-percentile uplift, around $3\times$
(prior-integrated over $\gamma_{\mathrm{X}}/\gamma$). The reported
$\times 1.53$ (T-EA) and $\times 2.83$ (OTT-EA) values in
Table~\ref{tab:dependence_comparison} are reproducibility-grade
ratios from the headline replicates. In policy discussion they
should be read as ``around $1.5\times$'' and ``around $3\times$''
respectively. Cumulative 10-year upper-percentile uplifts are smaller
in absolute multiplier (roughly $1.2\times$ T-EA, $2\times$ OTT-EA),
since the cumulative distribution is bounded by unity and compresses
the upper tail.

The architectural divergence is concentrated in the tail, not the
median: median uplifts are comparable across architectures, but the
annual upper-percentile uplifts differ by roughly a factor of two,
through the segregation-erosion mechanism specified in
\S\ref{subsec:dependence} and the prior-integration convexity
quantified in the conditional sensitivity below.

\begin{table}[!htbp]
\centering
\small
\caption{Comparison of independence baseline and full dependence model
outputs by architectural class. The annual 95th-percentile uplift for
OTT-EA exceeds the T-EA uplift by a factor of approximately $\times 1.85$,
the structural consequence of cross-cutting coupling integrated over the
prior on $\gamma_{\mathrm{X}}/\gamma$. The architectural tail crossover (OTT-EA
upper-percentile values exceeding T-EA's despite the lower median) is
visible at both the 95th and 99th percentiles under full dependence.}
\label{tab:dependence_comparison}
\begin{tabular}{@{}lcccc@{}}
\toprule
& \multicolumn{2}{c}{\textbf{T-EA}} & \multicolumn{2}{c}{\textbf{OTT-EA}} \\
\cmidrule(lr){2-3} \cmidrule(lr){4-5}
\textbf{Quantity} & Indep. & Full dep. & Indep. & Full dep. \\
\midrule
Year-1 median $q_1$            & 3.05\% & 4.0\%  & 1.69\% & 2.63\% \\
Year-1 95th percentile $q_1$   & 10.8\% & 16.5\% & 6.2\%  & 17.4\% \\
Year-1 99th percentile $q_1$   & 19.4\% & 35.2\% & 11.4\% & 58.9\% \\
10-year cumulative median $Q_{10}$  & 26.6\% & 36.6\% & 15.7\% & 31.9\% \\
10-year cumulative 95th $Q_{10}$    & 68.1\% & 83.8\% & 47.0\% & 93.9\% \\
\midrule
\multicolumn{5}{@{}l}{\textit{Architectural tail crossover (full dependence)}} \\
$q_1$ 95th percentile: T-EA vs OTT-EA & \multicolumn{2}{c}{16.5\%} &
                                       \multicolumn{2}{c}{17.4\% (OTT-EA higher)} \\
$q_1$ 99th percentile: T-EA vs OTT-EA & \multicolumn{2}{c}{35.2\%} &
                                       \multicolumn{2}{c}{58.9\% (OTT-EA higher)} \\
\midrule
Annual 95th-pct uplift factor & \multicolumn{2}{c}{$\times 1.53$} &
                                \multicolumn{2}{c}{$\times 2.83$} \\
\bottomrule
\end{tabular}
\end{table}

\subsubsection{Conditional sensitivity to
$\gamma_{\mathrm{X}}/\gamma$}

The OTT-EA 95th-percentile uplift at four pinned values of
$\gamma_{\mathrm{X}}/\gamma$ is: $\times 1.52$ at $\gamma_{\mathrm{X}}/\gamma
= 1$ (no cross-cutting amplification, equivalent to T-EA uniform
coupling, matching T-EA's $\times 1.53$ to within rounding);
$\times 1.93$ at $1.5$; $\times 2.54$ at $2$ (the prior median);
$\times 5.12$ at $3$ (above the prior 95th percentile). The
prior-integrated uplift of $\times 2.83$ exceeds the value at the
prior median ($\times 2.54$) because $e^{\gamma_{\mathrm{X}} Z}$ is
convex in $\gamma_{\mathrm{X}}$: replicates in the right tail of the
prior shift the prior-integrated quantile above its fixed-median
equivalent. The qualitative finding (cross-cutting coupling inflates
the OTT-EA tail relative to T-EA) holds across the entire defensible
$\gamma_{\mathrm{X}}/\gamma$ range and is structurally guaranteed by
the coupling prior, which places probability one on
$\gamma_{\mathrm{X}}/\gamma > 1$ (the weak constraint $\geq 1$ alone
guarantees non-reversal of the uplift ordering).

The parallel sensitivity to the prior width $\sigma$ is reported in
Table~\ref{tab:gamma_X_sigma_sensitivity}: the OTT-EA p95 uplift
varies from $\times 2.67$ at $\sigma = 0.25$ (tight prior) to
$\times 3.42$ at $\sigma = 0.75$ (wide prior). The architectural
divergence claim survives the entire range.

\begin{table}[!htbp]
\centering
\small
\caption{OTT-EA 95th-percentile uplift factor under different prior
widths $\sigma$ on $\gamma_{\mathrm{X}}/\gamma$. The prior median is fixed at 2 throughout; only the spread
varies. The headline ($\sigma = 0.40$) row is in bold. The architectural
divergence (OTT-EA uplift exceeding T-EA's $\times 1.53$) survives
across the full sigma range.}
\label{tab:gamma_X_sigma_sensitivity}
\begin{tabular}{@{}cccc@{}}
\toprule
$\sigma$ & $\gamma_{\mathrm{X}}/\gamma$ 95th pct & OTT-EA $q_1$ p99 &
  OTT-EA $q_1$ uplift \\
\midrule
$0.25$ (tight)              & 2.50 & 51.8\% & $\times 2.67$ \\
$\boldsymbol{0.40}$ (headline) & 2.92 & 58.9\% & $\boldsymbol{\times 2.83}$ \\
$0.50$                      & 3.26 & 65.4\% & $\times 3.04$ \\
$0.60$                      & 3.66 & 77.1\% & $\times 3.11$ \\
$0.75$ (wide)               & 4.39 & 93.4\% & $\times 3.42$ \\
\bottomrule
\end{tabular}
\end{table}

\paragraph{Stream~C cohort-stratification sensitivity.}
The Stream~C cohort (Section~\ref{sec:pillar1_streamC},
Supplementary Table~\ref{tab:streamC}) has fourteen incidents with seven $X = Y$
classifications, giving an aggregate cross-cutting fraction of
$7/14 = 0.50$. An alternative inclusion convention that added the
Cloudflare~2023 near-miss to the cohort (without counting it as
$X = Y$) and treated Snowflake~2024 and AT\&T~2024 as architecturally
independent events would yield the slightly lower fraction
$7/16 = 0.44$. Treating the difference between these conventions purely
as a calibration update, shifting the prior median from
$\gamma_{\mathrm{X}}/\gamma = 2$ to $2 \times (0.50/0.44) = 2.27$,
raises the prior-integrated OTT-EA $95$th-percentile uplift from
$\times 2.83$ to approximately $\times 3.54$ and the $99$th percentile
from $58.9\%$ to roughly $79\%$. We do not adopt this shifted median as
the headline because the difference between the two conventions is
denominator-driven (the primary convention places Cloudflare~2023 in
the near-miss appendix and merges the Snowflake~2024 and
AT\&T~2024 incidents) rather than driven by new high-cross-cutting
incidents: the aggregate-fraction shift reflects a different bookkeeping
choice about the Cloudflare and Snowflake/AT\&T entries rather than new
evidence of higher cross-cutting risk. The prior median $\gamma_{\mathrm{X}}/\gamma = 2$ with width
$\sigma = 0.40$ remains the central specification. The alternative
median is reported as a sensitivity to make explicit that the
alternative convention is consistent with a strengthened tail-risk
reading, not just with the unchanged headline.

\subsubsection{Robustness}

The qualitative dependence finding, both that dependence inflates the
tail relative to the independence baseline within each architectural
class, and that OTT-EA tail inflation exceeds T-EA tail inflation, is
robust to wide variation in the campaign parameters $\alpha_Z$,
$\beta_Z$, and $\gamma$ (and, for OTT-EA, $\gamma_{\mathrm{X}}$): in all
parameter regions consistent with the prior, the dependence model
produces materially elevated risk relative to the independence baseline,
and the cross-cutting coupling produces materially larger inflation for
OTT-EA than for T-EA. The findings support the qualitative conclusion
that independence assumptions in the three-pillar framework
\emph{systematically understate tail risk in both architectural classes},
with the understatement being quantitatively larger for OTT-EA. Models
treating EA component attack events as mutually independent will
systematically underestimate the probability of severe outcomes,
particularly for architecturally segregated systems where the
correlated-campaign collapse of the segregation gain is the dominant
tail-risk mechanism.

\subsection{Tail Risk Characterisation}\label{subsec:tail_risk}

The upper tail of the year-1 annual compromise probability $q_1$ is
characterised by fitting a generalised Pareto distribution (GPD)
separately for each architectural class. The fit is a compact
parametric summary of the simulated prior predictive tail---an
economical way to report its decay class and to compare the two
architectures on a single shape parameter---not an empirical
extreme-value finding about observed incidents. The empirical 90th percentile
threshold and the fitted GPD parameters are reported in
Table~\ref{tab:gpd_arch}.

\begin{table}[!htbp]
\centering
\small
\caption{GPD fits to the upper tail of $q_1$ for each architectural class
under the primary prior calibration. Threshold $u$ is the empirical
90th percentile of the marginal $q_1$ distribution ($n_{\rm exc} = 800$
exceedances per architecture). Shape and scale parameter MLEs are
reported with 95\% bootstrap confidence intervals ($B = 2{,}000$
resamples). Both shape parameters are strictly positive, placing both
distributions in the Fr\'echet domain of extreme value theory (power-law
tail decay). Neither of these facts is evidential, and version~1 of
this paper presented them as though they were. As shown in
\S\ref{subsec:tail_algebra}, exponential amplification of a
gamma-distributed campaign intensity is regularly varying with index
$\beta_Z/\gamma$, so a positive $\hat\xi$ is guaranteed by the prior
and the ratio of the two shapes recovers the coupling ratio:
$0.518/0.271 = 1.91$ against a prior median
$\gamma_{\mathrm{X}}/\gamma = 2$. The non-overlapping bootstrap
intervals ($\hat\xi^{\rm OTT} \in [0.42, 0.61]$ against $\hat\xi^{\rm
T} \in [0.19, 0.36]$) therefore establish that the two priors are
separated, not that the architectures are. What remains informative is
the magnitude of the fitted shapes conditional on the coupling prior,
and the goodness-of-fit contrast of
\S\ref{sec:supp_gpd_gof}, in which the T-EA fit is adequate near its
threshold while the OTT-EA fit is rejected by both tests.}
\label{tab:gpd_arch}
\begin{tabular}{@{}lccccc@{}}
\toprule
\textbf{Architecture} & \textbf{Threshold $u$} &
\textbf{$\hat\xi$ (shape)} & \textbf{95\% CI} &
\textbf{$\hat\sigma_u$ (scale)} & \textbf{95\% CI} \\
\midrule
T-EA   & $0.118$ & $0.271$ & $[0.189,\ 0.356]$ & $0.066$ & $[0.059,\ 0.074]$ \\
OTT-EA & $0.104$ & $0.518$ & $[0.424,\ 0.607]$ & $0.089$ & $[0.077,\ 0.103]$ \\
\bottomrule
\end{tabular}
\end{table}

\paragraph{Goodness-of-fit caveat.} Formal goodness-of-fit tests
(Anderson--Darling and Kolmogorov--Smirnov, parametric bootstrap,
Supplementary \S\ref{sec:supp_gpd_gof}) support the T-EA GPD fit at
the primary threshold ($p_{\rm AD} = 0.28$; $p_{\rm KS} = 0.13$).
The OTT-EA fit is rejected by both tests ($p_{\rm AD} < 0.001$;
$p_{\rm KS} = 0.003$). The proximate cause is the latent-campaign
coupling: when $Z(t)$ is suppressed (independence configuration),
the OTT-EA upper tail fits a GPD adequately ($p_{\rm AD} = 0.13$);
when $Z(t)$ is active, the GPD is rejected, because $Z(t)$
preferentially activates cross-cutting edges via
$\gamma_{\mathrm{X}}/\gamma \geq 1$ and induces a mixture between a
subgraph-dominant low-$Z$ regime and a cross-cutting-dominant
high-$Z$ regime. A two-component GPD mixture fits the OTT-EA upper
tail substantially better than a single GPD (Kolmogorov--Smirnov
$D = 0.026$ vs $0.037$), with $\sim 80\%$ of upper-tail mass on a
moderate-shape component and $\sim 20\%$ on a heavier
bounded-support component. The qualitative tail-divergence finding
survives: direct percentile comparisons (T-EA 99th percentile
$35.2\%$; OTT-EA 99th percentile $58.9\%$;
Table~\ref{tab:dependence_comparison}) are
distribution-free and confirm a heavier OTT-EA upper tail.
$\hat\xi^{\rm OTT} > \hat\xi^{\rm T}$ should therefore be
interpreted as a tail-heaviness ordering rather than as formal
statistical inference about Pareto tail indices, since the GPD
model is misspecified for OTT-EA. The mixture structure is the
empirical signature of the model's structural mechanism rather than
a defect of the model.

Both shape parameters are positive under the primary calibration,
placing both classes in the \emph{Fr\'echet domain} of extreme value
theory. The characterisation is prior-conditional, and
Table~\ref{tab:gpd_prior_robust} reports its full behaviour across
the calibration set: the heavy-tail ordering holds with
non-overlapping intervals at the primary and conservative
calibrations, and under the pessimistic prior the fitted shapes turn
negative for both classes as saturation against the $q = 1$ ceiling
dominates, the regime analysed below.

\begin{table}[!htbp]
\centering
\small
\caption{GPD shape parameter $\hat\xi$ across prior calibrations, with
95\% bootstrap confidence intervals ($B = 2{,}000$ resamples). The
architectural heavy-tail ordering ($\hat\xi_{\mathrm{OTT}} >
\hat\xi_{\mathrm{T}}$) holds with non-overlapping CIs across the
primary and conservative calibrations. Under the pessimistic
calibration both architectures sit in the saturation regime
(distribution truncated against $q \leq 1$), so the GPD shape captures
saturation geometry rather than tail-decay heaviness.}
\label{tab:gpd_prior_robust}
\begin{tabular}{@{}lcccc@{}}
\toprule
\textbf{Calibration} & \textbf{$\hat\xi^{\mathrm{T}}$} &
\textbf{95\% CI} & \textbf{$\hat\xi^{\mathrm{OTT}}$} &
\textbf{95\% CI} \\
\midrule
Pessimistic (median $\sim$13\% / $\sim$10\%) &
  $-0.11$ & $[-0.18,\, -0.03]$ & $-1.61$ & $[-2.65,\, -1.00]$ \\
Primary (median $\sim$4\% / $\sim$2.6\%) &
  $\phantom{-}0.27$ & $[\phantom{-}0.19,\, \phantom{-}0.36]$ &
  $\phantom{-}0.52$ & $[\phantom{-}0.42,\, \phantom{-}0.61]$ \\
Conservative (median $\sim$1.2\% / $\sim$0.9\%) &
  $\phantom{-}0.39$ & $[\phantom{-}0.30,\, \phantom{-}0.49]$ &
  $\phantom{-}0.82$ & $[\phantom{-}0.72,\, \phantom{-}0.93]$ \\
\bottomrule
\end{tabular}
\end{table}

The shape-parameter gap (Table~\ref{tab:gpd_arch}) is somewhat
larger than under a fixed-point evaluation at
$\gamma_{\mathrm{X}}/\gamma = 2$, where the OTT-EA $\hat\xi$ comes
out closer to $0.47$: the difference is the
convex-in-$\gamma_{\mathrm{X}}$ amplification of right-tail prior
mass quantified in \S\ref{subsec:dependence_results}.

The non-monotonic relation between prior pessimism and tail shape merits
brief explanation, since intuition suggests that a more pessimistic prior
should produce a heavier tail. Two opposing effects operate. First, raising
the prior median for the underlying integrated hazard $\lambda_{\mathrm{sys}}$
generates more extreme realisations and would, on its own, increase tail
mass. Second, the outcome variable is not $\lambda$ but
$q = 1 - \exp(-\lambda)$, which is bounded above by unity. For large
$\lambda$ the mapping saturates near $q = 1$ and the upper tail is
mechanically truncated. Under the pessimistic calibration, the bulk of the
$q_1$ distribution sits near $0.13$ but its upper tail compresses against
the $q = 1$ ceiling, producing a left-skewed near-saturation distribution
with a finite right endpoint, which EVT classifies as light-tailed
(reversed-Weibull / bounded-support max-domain of attraction). The heavy-tail
finding therefore applies in the regime where saturation does not dominate,
which is the policy-relevant regime for any prior consistent with the
analogue evidence used in Pillar~I.

Within its domain of validity, the result indicates that the tail decays
as a power law rather than exponentially, and that the expected value of
$q_1$ conditional on exceedance of the threshold is substantially higher
than $u$ itself. Concretely:
\begin{itemize}
  \item For T-EA: $\Pr(q_1 > 0.50 \mid q_1 > u) \approx 0.03$, giving
    unconditional $\Pr(q_1 > 0.50) \approx 0.003$.
  \item For OTT-EA: $\Pr(q_1 > 0.50 \mid q_1 > u) \approx 0.13$, giving
    unconditional $\Pr(q_1 > 0.50) \approx 0.013$.
\end{itemize}
Both architectures exhibit non-trivial tail mass from a governance
perspective. OTT-EA's higher conditional exceedance probability above
50\% is the architectural signature of cross-cutting tail amplification.

The physical driver of the positive $\hat\xi$ is the campaign amplification
factor $\gamma$ (and, for OTT-EA, the cross-cutting amplification
$\gamma_{\mathrm{X}}$) interacting with the latent campaign intensity
$Z$: in the subset of replicates where both amplification factors and
$Z$ are large, the effective system hazard is elevated across all attack
paths simultaneously. This is precisely the correlated campaign structure
modelled in the dependence layer. The positive $\hat\xi$ confirms that
this structure generates qualitatively different and more adverse tail
behaviour than independence assumptions, which produce sub-exponential
tails. The architectural amplification of $\hat\xi$ for OTT-EA confirms
that the cross-cutting coupling specification is operative, not a
formal artefact.

\begin{figure}[!htbp]
\centering
\includegraphics[width=\textwidth]{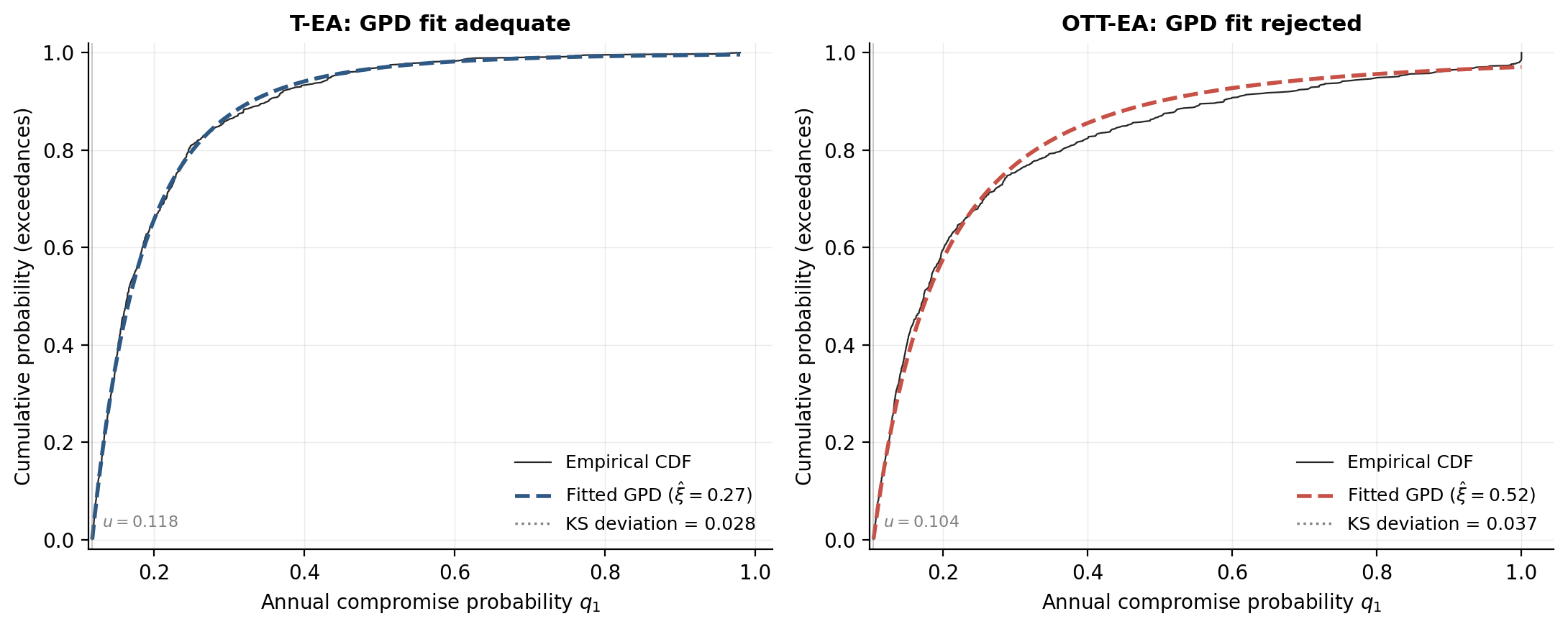}
\caption{Generalised Pareto Distribution goodness-of-fit comparison
across architectures. \textbf{Left:} T-EA, where the empirical CDF
of exceedances above the 90th percentile (threshold $u = 0.118$)
closely tracks the fitted GPD curve ($\hat\xi = 0.27$, $\hat\sigma =
0.066$); Anderson--Darling $p_{\mathrm{AD}} = 0.28$ and
Kolmogorov--Smirnov $p_{\mathrm{KS}} = 0.13$ both fail to reject the
GPD null. \textbf{Right:} OTT-EA, where the empirical CDF
systematically deviates from the fitted GPD ($\hat\xi = 0.52$,
$\hat\sigma = 0.089$, threshold $u = 0.104$). The empirical
distribution carries more mass in the moderate-to-upper tail than
the fitted single GPD predicts. Both tests reject the GPD null
($p_{\mathrm{AD}} < 0.001$, $p_{\mathrm{KS}} = 0.003$). The visible
asymmetry is the empirical signature of the mixture structure
induced by the cross-cutting coupling $\gamma_{\mathrm{X}}/\gamma
\geq 1$: under high realisations of the latent campaign variable
$Z(t)$, cross-cutting edges activate preferentially, producing a
high-$Z$ tail regime that no single GPD can capture. Full
goodness-of-fit tables, threshold sensitivity, and the
two-component mixture diagnostic are in Supplementary
\S\ref{sec:supp_gpd_gof}. Results from $R = 8{,}000$ Monte Carlo
replicates per architecture (seed $2024$).}
\label{fig:gpd}
\par\smallskip\noindent\small\textit{Alt text:} Two-panel
goodness-of-fit comparison of GPD fits to the upper tail of the
annual compromise probability. Left panel (T-EA): empirical CDF
(black solid line) closely follows the fitted GPD curve (blue
dashed line) with maximum Kolmogorov-Smirnov deviation 0.028; the
fit is statistically adequate. Right panel (OTT-EA): empirical CDF
(black solid line) systematically deviates from the fitted GPD
curve (red dashed line) in the moderate-to-upper tail; maximum
deviation 0.037. The fit is rejected by both Anderson-Darling and
Kolmogorov-Smirnov tests at p $<$ 0.005. The visible asymmetry is
the empirical signature of the cross-cutting-coupling mixture
structure in the OTT-EA model.
\end{figure}

\subsection{Cumulative Risk}\label{subsec:cumulative}

The cumulative probability of at least one compromise grows
substantially with the deployment horizon, with magnitude differing
between architectural classes. Under T-EA primary-calibration
parameter values the 10-year cumulative compromise probability sits
in the mid-thirties percentage range, with a 90\% credible interval
spanning roughly $[7\%, 84\%]$. Under OTT-EA primary-calibration
parameter values the 10-year cumulative is broadly comparable in
central tendency but with substantially wider credible bounds: the
upper bound exceeds T-EA's, reflecting the tail-heavy character of
the OTT-EA distribution under correlated campaigns and the
prior-integrated amplification of $\gamma_{\mathrm{X}}/\gamma$.
Over a 25-year horizon the median cumulative for both architectures
rises into the high-fifties to mid-sixties range.
Table~\ref{tab:cumulative} presents cumulative-risk projections
across all methodological reference points and both architectural
classes. The interval-width contrast between architectures (OTT-EA
intervals materially wider than T-EA's at long horizons) is the
qualitatively important feature, not the specific medians.

\begin{figure}[!htbp]
\centering
\includegraphics[width=0.85\textwidth]{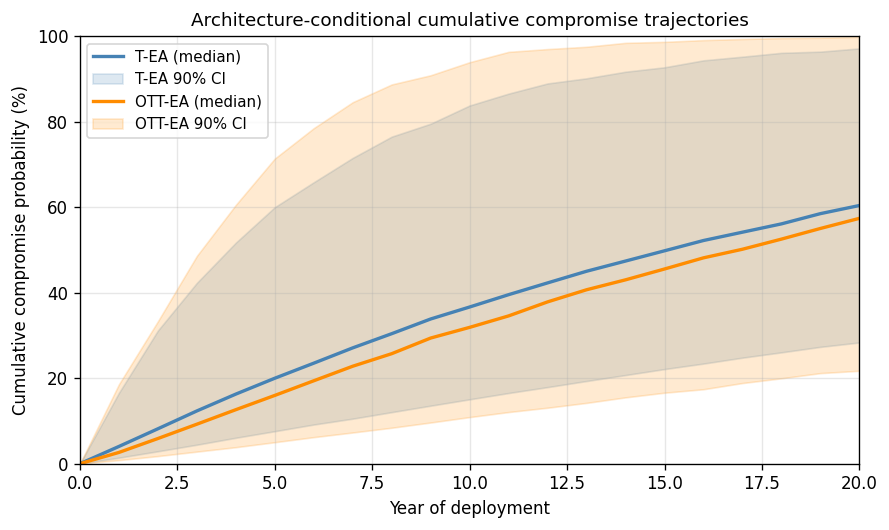}
\caption{Architecture-conditional comparison of cumulative compromise
probability trajectories over a 20-year deployment horizon. T-EA
(blue) and OTT-EA (orange) full-dependence trajectories with 90\%
credible bands. The corresponding independence-baseline trajectories
(not shown for clarity) sit below the dependence trajectories at every
horizon, with the gap quantified in Table~\ref{tab:dependence_comparison}.
The OTT-EA median sits below the T-EA median throughout the horizon,
but the upper credible bound for OTT-EA exceeds that for T-EA at
every horizon (consistent with the year-1 values of $17.4\%$ versus
$16.5\%$), reflecting the tail divergence documented
in Section~\ref{subsec:tail_risk}. Trajectories are computed by full
re-simulation at each horizon $t = 1, \ldots, 20$ with year-by-year
campaign-state resampling ($R = 8{,}000$ replicates per architecture
per horizon, seed~2024), so the year-10 values agree with the
tabulated $Q_{10}$ quantiles.}
\label{fig:cumulative_arch}
\par\smallskip\noindent\small\textit{Alt text:} Two-line plot of cumulative compromise probability over a 20-year horizon for T-EA (blue) and OTT-EA (orange) under the full dependence model. T-EA rises faster and higher than OTT-EA across the horizon; both shown with 90\% credible bands.
\end{figure}

\begin{figure}[!htbp]
\centering
\includegraphics[width=0.95\textwidth]{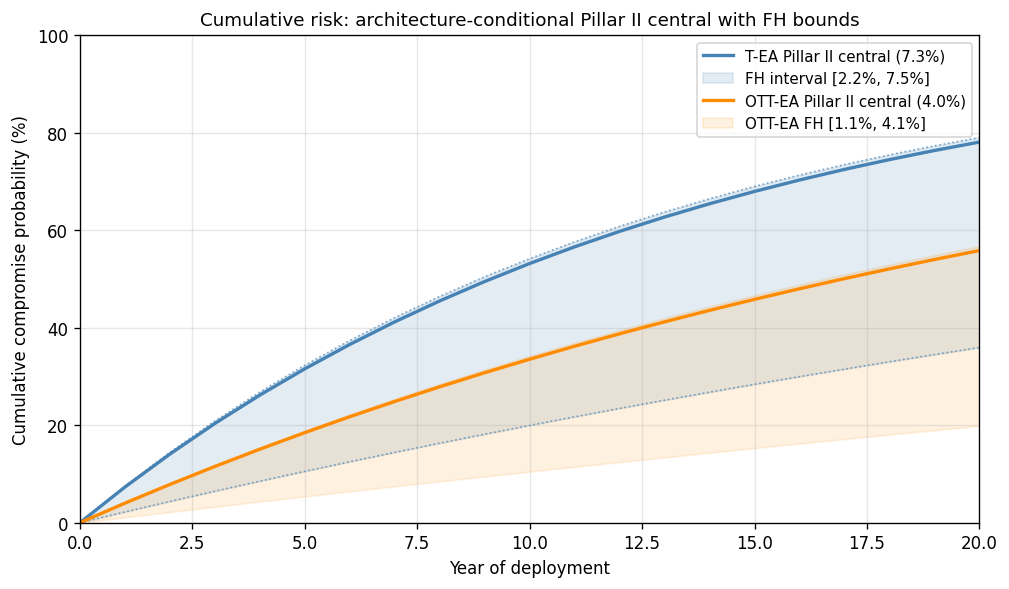}
\caption{Cumulative EA technical (solid) and access-pathway (dashed)
compromise probability under the Pillar~II independence-conditional
central T-EA parameterisation, with the non-EA baseline and the
Fr\'echet--Hoeffding lower and upper bounds for the technical-compromise
curve. The technical-compromise curve under independence crosses 50\%
near year~10. The lower bound of 2.2\% yields approximately 20\% over
10~years. The shaded band spans the Fr\'echet--Hoeffding model-consistent
interval under any dependence structure. Seed 2024; $R = 8{,}000$.}
\label{fig:cumulative_risk}
\par\smallskip\noindent\small\textit{Alt text:} Multi-curve plot of cumulative compromise probability over deployment horizon for the T-EA Pillar II independence-conditional central parameterisation. Solid: EA technical compromise crossing 50\% near year 10. Dashed: access-pathway compromise. Outer bands give the Fréchet-Hoeffding bracket for the technical curve; non-EA baseline shown for comparison.
\end{figure}

\begin{table}[!htbp]
\centering
\small
\caption{Cumulative operational compromise probability across reference
estimates and architectural classes. Three-pillar rows use the compound
formula $1 - (1-q)^T$. The Bayesian median rows use the
simulation-derived 10-year values, since the model employs time-varying
annual rates whose compounding can differ from the stationary-formula
result. The 5- and 25-year values use the formula as lower-bound
approximations. \textbf{The Bayesian-median rows are included for
distributional completeness only; as disclosed in
Section~\ref{sec:bayesian}, the prior calibrations match the
architecture-matched three-pillar empirical ranges, so these rows should
not be read as independent fourth estimates.} The S-column refers to the
synthesis-hierarchy tier (Section~\ref{sec:synthesis}).}
\label{tab:cumulative}
\begin{tabular}{@{}llcccc@{}}
\toprule
\textbf{Reference point} & \textbf{S} & \textbf{Annual $\Lambda$} &
\textbf{5-yr cum.} & \textbf{10-yr cum.} & \textbf{25-yr cum.} \\
\midrule
\multicolumn{6}{@{}l}{\textit{T-EA reference points}} \\
Channel-minimum heuristic (four-channel)         & S5 & 1.5\% & 7.3\%  & 14.0\% & 31.5\% \\
Fr\'echet--Hoeffding lower (Pillar~II) & S4 & 2.2\% & 10.5\% & 19.9\% & 42.7\% \\
Bayesian median (Layer~IV, prior-cond.)  & ---  & 4.0\% & 18.5\% & 36.6\% & 64.0\% \\
Channel-aggregate projection (Pillar~III)      & S5 & 5.4\% & 24.2\% & 42.6\% & 75.0\% \\
Pillar~II central (independence)       & S5 & 7.3\% & 31.5\% & 53.1\% & 85.0\% \\
Fr\'echet--Hoeffding upper (Pillar~II) & S4 & 7.5\% & 32.3\% & 54.1\% & 85.8\% \\
\midrule
\multicolumn{6}{@{}l}{\textit{OTT-EA reference points}} \\
Channel-minimum heuristic (four-channel)         & S5 & 1.0\% & 4.9\%  &  9.6\% & 22.2\% \\
Fr\'echet--Hoeffding lower (Pillar~II) & S4 & 1.1\% & 5.4\%  & 10.4\% & 24.0\% \\
Bayesian median (Layer~IV, prior-cond.)  & ---  & 2.6\% & 12.3\% & 31.9\% & 48.2\% \\
Channel-aggregate projection (Pillar~III)      & S5 & 3.0\% & 14.1\% & 26.3\% & 53.3\% \\
Pillar~II central (independence)       & S5 & 3.9\% & 18.0\% & 32.8\% & 63.0\% \\
Fr\'echet--Hoeffding upper (Pillar~II) & S4 & 4.0\% & 18.5\% & 33.5\% & 64.0\% \\
\bottomrule
\end{tabular}
\end{table}

The qualitative pattern, that cumulative risk grows non-linearly with
the deployment horizon and that the uncertainty range itself expands
substantially over time, is robust to parameter variations and is
offered as a more confident finding than any specific quantile. Within
each architectural class, the cumulative risk over a 10-year horizon
exceeds 30--50\% at the channel-aggregate projection and remains material even
at the channel-minimum heuristic.

\subsection{Exceedance Probabilities Over Policy Thresholds}\label{subsec:exceedance}

Wide credible intervals do not imply ignorance: the prior predictive
distribution can assign high probability to the event that the true risk
exceeds a policy-relevant threshold even when the credible interval itself
spans several orders of magnitude. Table~\ref{tab:exceedance} reports
$\Pr(q_1 > \tau)$ and $\Pr(Q_{10} > \tau)$ under both the full dependence
model and the independence baseline, separately for each architectural
class.

The acceptability threshold $\tau$ is a normative, jurisdictional
determination outside a technical framework's scope; no settled
benchmark exists for EA systems, and thresholds from other
safety-critical domains do not transfer defensibly. The table
therefore spans two orders of magnitude ($0.5\%$ to $50\%$
annually) so the output combines with whatever criterion the reader
brings. The $1\%$ row receives most discussion not as an externally
derived bound but because it is the lowest threshold at which the
exceedance probability reliably exceeds $0.5$ across the primary
and pessimistic calibrations for both classes, making it the most
prior-separating row.

\begin{table}[!htbp]
\centering
\small
\caption{Architecture-conditional exceedance probabilities and maximum
defensible deployment horizon $T^*(\tau)$ under the full dependence model
($R = 8{,}000$ replicates, seed 2024). $T^*(\tau)$ is defined as the
first deployment year at which the median cumulative compromise
probability $Q_T$ exceeds $\tau$; the longest horizon over which the
median remains within the threshold is therefore $T^*(\tau) - 1$.
The $T^*$ columns are corrected in this version. Version~1 computed
them by stationary extrapolation from year one,
$\lceil \log(1-\tau)/\log(1-q_1) \rceil$ per replicate, which is not
the definition above and which understates cumulative risk because the
model draws a fresh campaign intensity each year. The understatement
was larger for OTT-EA, whose annual distribution is the more skewed,
so the correction removes most of the apparent architectural
difference: the gap across the five thresholds falls from
$0/1/3/5/9$ years to $0/1/1/2/1$.}
\label{tab:exceedance}
\begin{tabular}{rcccccc}
\toprule
& \multicolumn{3}{c}{\textbf{T-EA}} & \multicolumn{3}{c}{\textbf{OTT-EA}} \\
\cmidrule(lr){2-4} \cmidrule(lr){5-7}
$\tau$ & $\Pr(q_1>\tau)$ full & $\Pr(q_1>\tau)$ indep. & $T^*_{\mathrm{T}}$
& $\Pr(q_1>\tau)$ full & $\Pr(q_1>\tau)$ indep. & $T^*_{\mathrm{OTT}}$ \\
\midrule
$0.5\%$  & 1.000 & 1.000 & ---   & 0.994 & 0.984 & ---   \\
$1\%$    & 0.988 & 0.974 & ---   & 0.907 & 0.795 & ---   \\
$2\%$    & 0.843 & 0.741 & ---   & 0.630 & 0.411 & ---   \\
$5\%$    & 0.386 & 0.250 & yr~2  & 0.252 & 0.087 & yr~2  \\
$10\%$   & 0.138 & 0.058 & yr~3  & 0.104 & 0.015 & yr~4  \\
$20\%$   & ---   & ---   & yr~6  & ---   & ---   & yr~7  \\
$30\%$   & ---   & ---   & yr~8  & ---   & ---   & yr~10 \\
$50\%$   & ---   & ---   & yr~16 & ---   & ---   & yr~17 \\
\bottomrule
\end{tabular}
\end{table}

\begin{figure}[!htbp]
\centering
\includegraphics[width=0.95\textwidth]{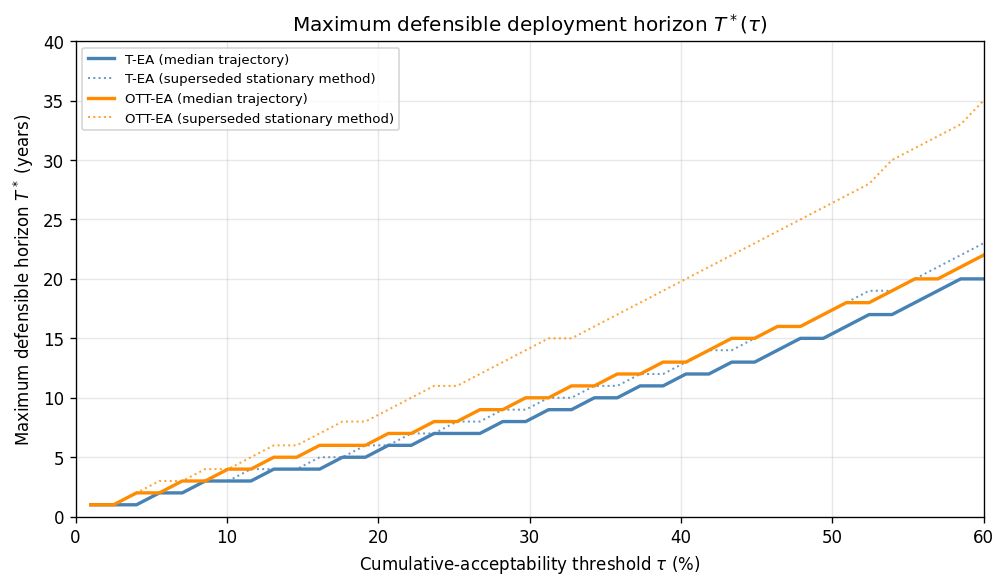}
\caption{Maximum defensible deployment horizon $T^*(\tau)$ as a
continuous function of the cumulative-acceptability threshold $\tau$,
defined as in Table~\ref{tab:exceedance} as the first year at which the
median cumulative risk crosses $\tau$. Solid curves give that quantity
for T-EA (blue) and OTT-EA (orange) over $R = 8{,}000$ replicates at
seed 2024. Dotted curves give the superseded version~1 method, a
per-replicate stationary extrapolation from year one, and are retained
so the size of the correction is visible rather than merely asserted.
The two methods agree closely for T-EA and diverge sharply for OTT-EA,
because stationary extrapolation from a low median understates
cumulative risk most where the annual distribution is most skewed. On
the corrected basis the two architectures track one another within one
to two years across the whole range, so the figure no longer supports a
claim of architectural divergence in defensible horizon. The
step-function appearance is integer-year rounding of $T^*$.}
\label{fig:deployment_horizons}
\par\smallskip\noindent\small\textit{Alt text:} Line plot of the maximum defensible deployment horizon T*(tau) in years on the vertical axis against the cumulative-acceptability threshold tau on the horizontal axis. Solid blue (T-EA) and solid orange (OTT-EA) curves rise together and stay within one to two years of each other across the range. Dotted curves show the superseded stationary method, under which the orange curve rises far above the blue one at higher thresholds.
\end{figure}

At a $1\%$ annual threshold, both architectures show high
exceedance probabilities under the primary calibration: T-EA
exceedance is near saturation (above $0.95$); OTT-EA exceedance is
substantially above $0.5$ but somewhat lower, consistent with
OTT-EA's lower median. That OTT-EA exceedance remains substantially
above $0.5$ at this threshold indicates that the lower median is
not the dominant signal at low thresholds.

At higher thresholds the architectural ordering changes character.
Even at the $10\%$ annual threshold T-EA exceedance still exceeds
OTT-EA (Table~\ref{tab:exceedance}: $0.138$ vs $0.104$), but the gap
narrows substantially as $\tau$ grows: the OTT-EA distribution
overtakes T-EA in the upper tail near the 95th-percentile region
($\tau \approx 16$--$17\%$), as confirmed by the GPD-fitted upper tails
(Table~\ref{tab:gpd_arch}) and the architecture-conditional CCDFs of
Figure~\ref{fig:exceedance_arch}. This is the structural signature of
OTT-EA's tail-heaviness: at policy-relevant high thresholds the lower
OTT-EA median does not translate into a correspondingly lower
exceedance probability, and deep in the tail OTT-EA assigns more mass
than T-EA.

The deployment-horizon results are corrected in this version and no
longer show an architectural contrast
(Figure~\ref{fig:deployment_horizons}). Computed as the table defines
them, from the median simulated cumulative trajectory, the two classes
track one another within one to two years across the whole threshold
range: at $\tau = 20\%$ the horizons are year~6 and year~7, at
$\tau = 30\%$ year~8 and year~10, and at $\tau = 50\%$ year~16 and
year~17. Version~1 reported year~9 against year~14 and year~17 against
year~26, computed instead by stationary extrapolation from year one.
That method understates cumulative risk whenever the annual hazard
varies over the horizon, and it understates it more for OTT-EA,
because its annual distribution is the more skewed: OTT-EA has a
median $q_1$ of $2.63\%$ against a median $Q_{10}$ of $31.9\%$, an
effective annual hazard near $3.8\%$, whereas T-EA runs $4.0\%$
against $36.6\%$, an effective hazard near $4.5\%$. The apparent
OTT-EA horizon advantage was an artefact of the extrapolation, not a
property of the model.

The finding that survives is that both architectures exhaust any
stringent cumulative-acceptability threshold on a similar and short
timescale. Deployment horizon does not discriminate between the two
designs, and \S\ref{sec:policy} is revised accordingly.

Both exceedance and horizon findings are conditional on the
primary prior calibration. As the sensitivity analysis below
demonstrates, the exceedance findings are materially sensitive to
prior choice for both architectures.

\begin{figure}[!htbp]
\centering
\includegraphics[width=\textwidth]{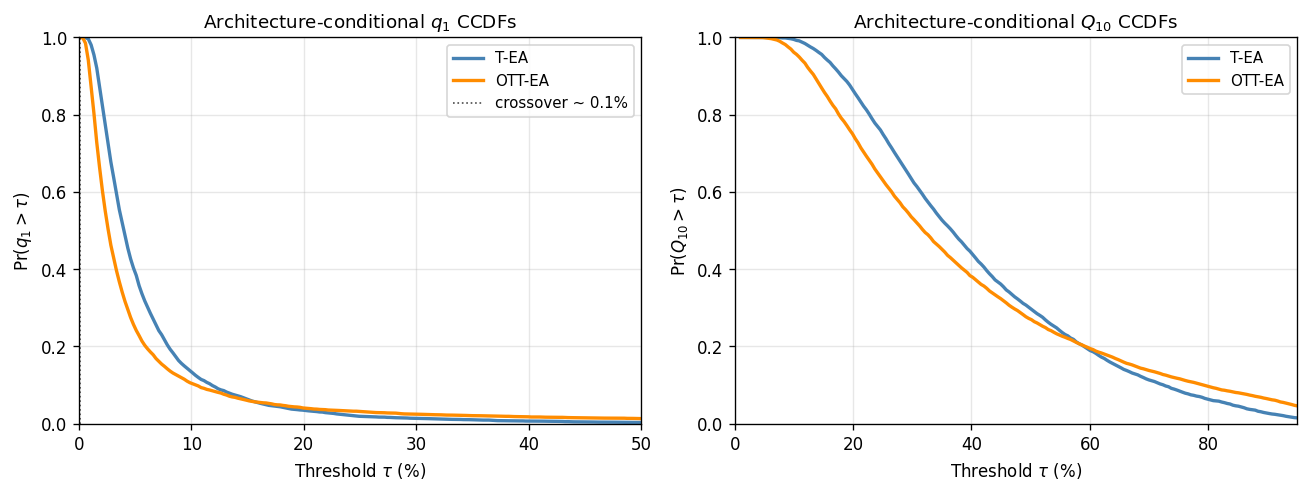}
\caption{Architecture-conditional CCDFs for $q_1$ (left) and $Q_{10}$
(right) under the full dependence model. T-EA (blue) sits above OTT-EA
(orange) at low thresholds (the median region), but the curves cross
deep in the upper tail and OTT-EA exceeds T-EA at the extreme. The
crossover threshold for $q_1$ is approximately $\tau \approx 16$--$17\%$
(near the 95th percentile of both distributions), beyond which OTT-EA
assigns higher exceedance probability than T-EA. This is the
distributional signature of the cross-cutting coupling specification.
The corresponding independence-baseline CCDFs (not shown) sit below
the dependence curves at every threshold. The exceedance differences
are quantified in Table~\ref{tab:dependence_comparison}. Results from
8{,}000 Monte Carlo replicates per architecture (seed~2024).}
\label{fig:exceedance_arch}
\par\smallskip\noindent\small\textit{Alt text:} Two-panel complementary cumulative distribution function (CCDF) plot. Left: P(q1 > tau) versus tau for T-EA (blue) and OTT-EA (orange); T-EA sits above OTT-EA at low tau but the curves cross near tau = 16 to 17\%, after which OTT-EA exceeds T-EA. Right: equivalent CCDF for Q10 showing similar crossover behaviour at high thresholds.
\end{figure}

\subsection{Prior Sensitivity for Exceedance Probabilities}\label{subsec:exceedance_priorsens}

The exceedance findings above are conditional on the primary prior
calibrations. Supplementary \S\ref{sup:prior_sens} reports the
architecture-conditional exceedance probabilities under three prior
calibrations spanning the plausible range of analogical evidence
(\emph{pessimistic}: analogue population average rate; \emph{primary};
\emph{conservative}: analogue population performing substantially
better than the historical record). The qualitative findings are:
under the conservative prior the exceedance probability at the 1\%
threshold drops materially for both architectures (T-EA from
$0.988 \to 0.605$; OTT-EA from $0.907 \to 0.460$); under the
pessimistic prior the same probability rises to saturation
($1.000$ for both). The architectural ordering
(T-EA exceedance $\geq$ OTT-EA at low thresholds) is preserved across
all three calibrations, with both architectures saturating under the
pessimistic prior. The exceedance probabilities are therefore
prior-conditional in magnitude but stable in direction across the
defensible prior range. The headline finding that the conservative
case lies near $\Pr(q_1 > 1\%) \approx 0.5$ for OTT-EA defines the
lower edge of the policy-relevant range.

\subsection{Parameter Uncertainty Decomposition}\label{subsec:param_decomp}

A one-way tornado analysis identifies the architecture-conditional
dominant sources of spread in the 10-year cumulative compromise
probability $Q_{10}$. Each parameter
is fixed in turn at its prior 10th and 90th percentiles while the
others are sampled from their priors. The swing is the resulting change
in median $Q_{10}$. The tornado is taken over the parameters the
canonical model treats as uncertain: the three log-normal modifiers
for both architectures, plus the cross-cutting coupling
ratio $\gamma_{\mathrm{X}}/\gamma$ for OTT-EA. Quantities the model
holds fixed ($\gamma$, $P_3 \cdot P_4$, $\sigma_\lambda$, and the
per-channel arrays) have no prior spread to sweep and do not enter.

\textit{T-EA decomposition.} The three modifiers produce comparable
swings in median $Q_{10}^{\mathrm{T}}$ (central $36\%$): system
targeting intensity $\delta_{\mathrm{target}}$ is the largest (swing of
$0.27$, range $0.29$--$0.56$), followed by concentration
$\delta_{\mathrm{concen}}$ (swing $0.22$, range $0.31$--$0.53$) and the
EA-specific attack-surface modifier $\delta_{\mathrm{EA}}$ (swing
$0.21$, range $0.31$--$0.52$). No single parameter dominates,
consistent with the flat variance decomposition of
\S\ref{subsec:evppi}.

\textit{OTT-EA decomposition.} The same three modifiers again produce
comparable swings in median $Q_{10}^{\mathrm{OTT}}$ (central $32\%$),
with $\delta_{\mathrm{target}}^{\mathrm{OTT}}$ the largest (swing of
$0.25$), followed by $\delta_{\mathrm{concen}}^{\mathrm{OTT}}$ and
$\delta_{\mathrm{EA}}^{\mathrm{OTT}}$ (swing $0.20$ each). The
OTT-EA-specific cross-cutting coupling ratio
$\gamma_{\mathrm{X}}/\gamma$ contributes a smaller swing ($0.14$, range
$0.26$--$0.40$). The structural constraint
$\gamma_{\mathrm{X}}/\gamma \geq 1$ compresses its prior range
substantially ($90\%$ credible range $[1.52, 2.93]$), so the
architectural divergence between OTT-EA and T-EA is carried principally
by the constraint itself rather than by residual uncertainty about the
ratio. As in the EVPPI decomposition of \S\ref{subsec:evppi}, the three
modifiers are comparable in magnitude and the cross-cutting coupling
ratio carries a smaller share. The two methods measure different things
(range swing versus variance share), so the exact ordering of the
three modifiers differs slightly between them.

\begin{figure}[!htbp]
\centering
\includegraphics[width=\textwidth]{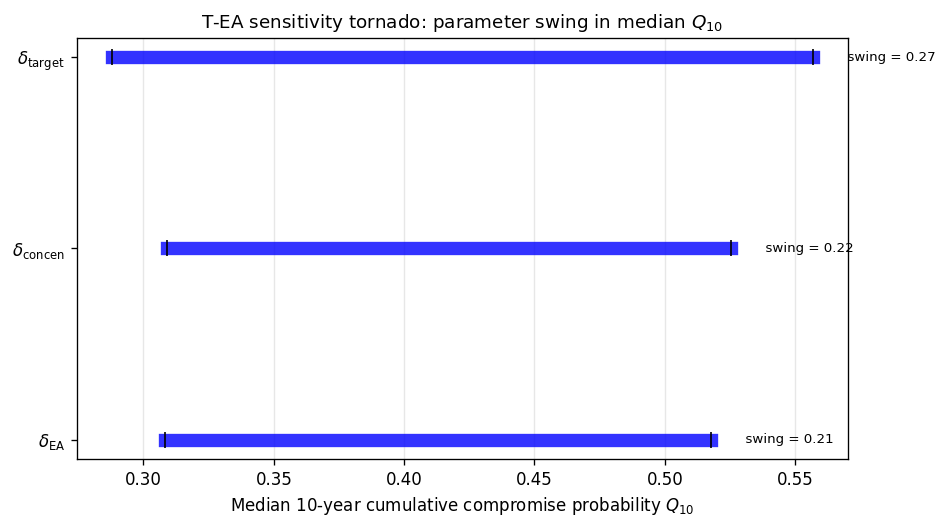}
\caption{Sensitivity tornado chart: swing in median 10-year cumulative
compromise probability $Q_{10}$ when each uncertain parameter is varied
across its prior 10th--90th percentile range (T-EA configuration). The
three log-normal modifiers produce comparable swings, with system
targeting intensity $\delta_{\mathrm{target}}$ the largest; no single
parameter dominates. The OTT-EA tornado, which also includes
the cross-cutting coupling ratio $\gamma_{\mathrm{X}}/\gamma$, is given
in Figure~\ref{fig:bayesian_tornado_ott}. Results from $R = 4{,}000$
Monte Carlo replicates per parameter point (seed~2024).}
\label{fig:bayesian_tornado}
\par\smallskip\noindent\small\textit{Alt text:} Horizontal tornado plot of T-EA configuration: each row shows the swing in median 10-year cumulative compromise probability when one parameter is varied across its prior 10th--90th percentile range. The three log-normal modifiers delta\_target, delta\_concen and delta\_EA produce comparable swings, with delta\_target the largest; no single parameter dominates.
\end{figure}

\begin{figure}[!htbp]
\centering
\includegraphics[width=\textwidth]{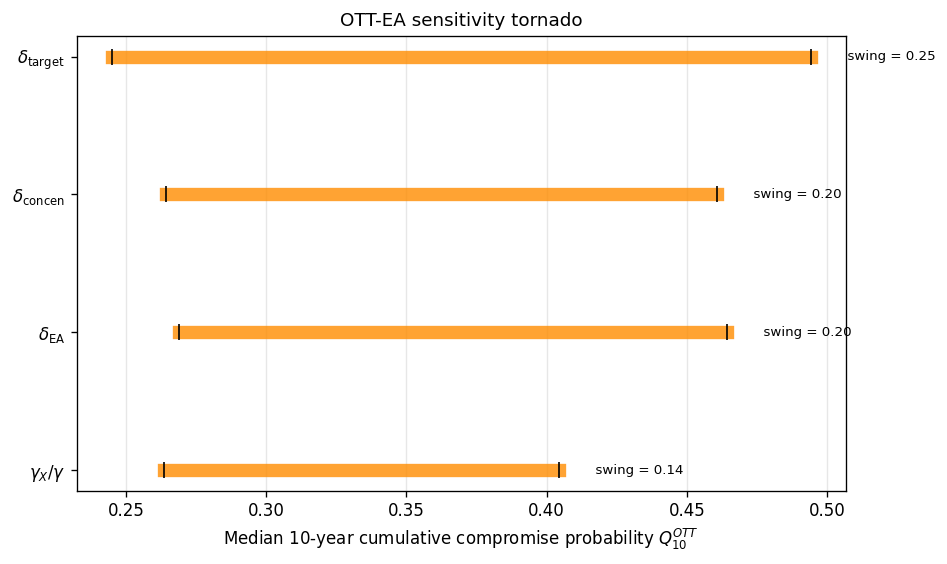}
\caption{Sensitivity tornado chart for the OTT-EA configuration. The
three log-normal modifiers produce comparable swings, with
$\delta_{\mathrm{target}}^{\mathrm{OTT}}$ the largest; the
OTT-EA-specific cross-cutting coupling ratio
$\gamma_{\mathrm{X}}/\gamma$ enters as a smaller additional
contributor. Under the shifted-lognormal prior on
$\gamma_{\mathrm{X}}/\gamma$ (Equation~\ref{eq:gamma_X_prior}, with the
structural constraint $\gamma_{\mathrm{X}}/\gamma \geq 1$ enforced by
construction, $90\%$ credible range $[1.52, 2.93]$), its prior range is
compressed, so the architectural divergence between OTT-EA and T-EA
persists structurally through the lower-bound constraint rather than
through residual uncertainty about the ratio. Results from
$R = 4{,}000$ Monte Carlo replicates per parameter point (seed~2024).}
\label{fig:bayesian_tornado_ott}
\par\smallskip\noindent\small\textit{Alt text:} Tornado plot for the OTT-EA configuration showing the three log-normal modifiers plus the OTT-EA-specific cross-cutting coupling ratio gamma\_X/gamma. The modifiers produce comparable swings, with delta\_target the largest; the coupling ratio enters as a smaller contributor.
\end{figure}

\subsubsection{EVPPI-style variance decomposition}\label{subsec:evppi}

A partial parameter-uncertainty decomposition (EVPPI-style variance
decomposition) quantifies the contribution of each uncertain parameter
to the total output variance $\mathrm{Var}(Q_{10})$, separately for
each architectural class. The decomposition is taken over exactly the
parameters the canonical Monte Carlo engine treats as uncertain, drawn
from the canonical priors: the three log-normal modifiers
($\delta_{\mathrm{EA}}$, $\delta_{\mathrm{target}}$,
$\delta_{\mathrm{concen}}$) for both architectures, plus
the cross-cutting coupling ratio $\gamma_{\mathrm{X}}/\gamma$ for
OTT-EA. Quantities the canonical model holds fixed ($\gamma$,
$P_3 \cdot P_4$, $\sigma_\lambda$, and the per-channel cross-cutting
and detection arrays) contribute no output variance and are not
decomposed. For T-EA, the three modifiers individually account for approximately
$61\%$ of total output variance in sum (the sum of their first-order
Sobol indices); resolving them jointly would capture this share plus
any interaction terms among them. For OTT-EA, the three modifiers and
$\gamma_{\mathrm{X}}/\gamma$ account for approximately $49\%$ in sum,
the larger residual reflecting the additional interaction structure
introduced by cross-cutting coupling.

The rank-order finding is the principal output, and it is flat: for
T-EA the three modifiers contribute comparable shares (each roughly
$18$--$25\%$), and for OTT-EA likewise ($12$--$15\%$ each) with the
coupling ratio adding a smaller first-order index ($\sim$9\%),
limited because the structural constraint
$\gamma_{\mathrm{X}}/\gamma \geq 1$ does the architectural work at
the prior level rather than through residual value uncertainty. A
substantial residual ($39\%$ T-EA, $51\%$ OTT-EA) is irreducible to
single-parameter learning. As a research-prioritisation finding, no
single parameter dominates, so empirical work across the modifiers
yields broadly comparable variance reduction; per-parameter detail
is in Supplementary \S\ref{sec:supp_evppi}.

\begin{figure}[!htbp]
\centering
\includegraphics[width=\textwidth]{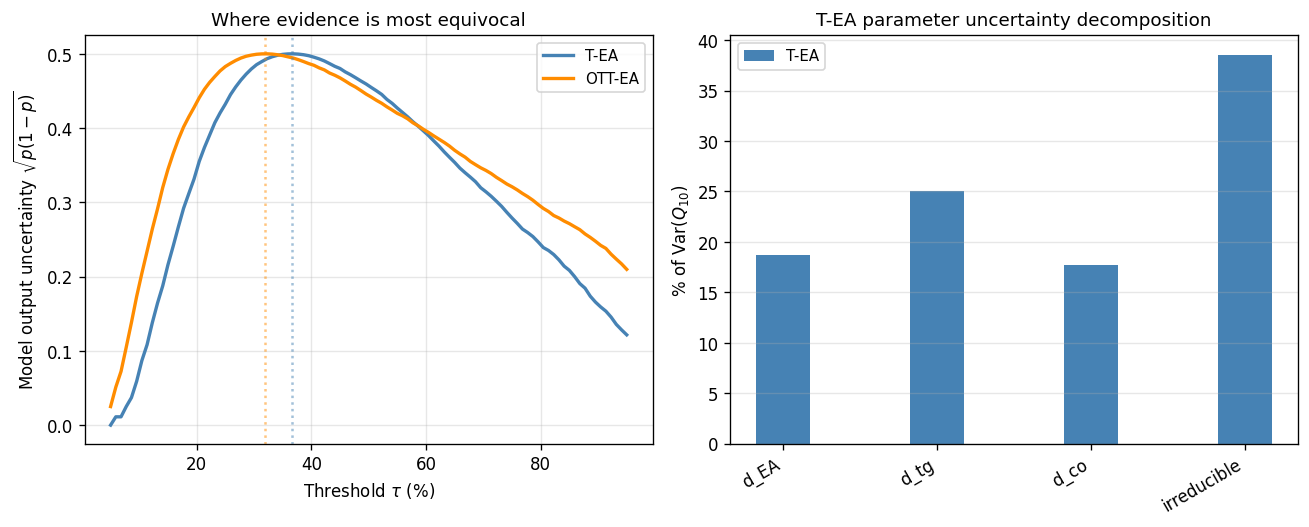}
\caption{Parameter uncertainty decomposition (T-EA configuration). Left:
model output uncertainty as a function of the risk threshold $\tau$,
showing where along the risk scale the evidence is most equivocal
(peaks near the median $q_1$, where the outcome is most
uncertain). Right: percentage of $\mathrm{Var}(Q_{10})$ attributable to
each uncertain parameter. The three log-normal modifiers contribute
comparable first-order shares, with no single parameter dominating; the
residual absorbs parameter interactions and the model's internal
stochasticity. The OTT-EA decomposition also includes the
OTT-EA-specific cross-cutting coupling ratio
$\gamma_{\mathrm{X}}/\gamma$ and is reported in Supplementary
\S\ref{sec:supp_ottevpi} (Table~\ref{tab:ott_evpi}).
First-order Sobol indices from $R = 2{,}000$ pick-freeze replicates per
parameter (seed~2024).}
\label{fig:evpi}
\par\smallskip\noindent\small\textit{Alt text:} Two-panel parameter uncertainty decomposition for T-EA. Left: model output uncertainty as a function of risk threshold tau, peaking near the median q1 where the outcome is most uncertain. Right: bar chart of percentage of Var(Q10) attributable to each uncertain parameter, with the three log-normal modifiers contributing comparable shares and a large irreducible residual.
\end{figure}

% =============================================================================
\section{Synthesis: What the Framework Collectively Establishes}\label{sec:synthesis}

\subsection{Structural Versus Calibration-Dependent Results}\label{sec:structural_vs_calibration}

The findings of this paper fall into three classes, and the weight a
reader should place on each differs accordingly. The classes are
groupings of the synthesis tiers S1--S6, which are formalised in the
enumeration at the end of this section and mapped claim-by-claim in
Table~\ref{tab:readers_guide}; one scheme, viewed at two
resolutions. The first class
follows from architecture alone and survives any defensible
calibration: the dominance ordering of
Proposition~\ref{prop:dominance}, the Fr\'echet--Hoeffding interval
logic, and the direction of the OTT-EA tail divergence, which holds
under any coupling prior with support in $[1, \infty)$. The second
class holds across the full calibration set examined (primary,
conservative, and pessimistic priors) but is not derivable from
structure: material multi-decade cumulative risk, the preservation of
the T-EA-over-OTT-EA central ordering, and the low-threshold
exceedance ordering. The third class is calibration-dependent: the
specific medians, GPD shape values, and inflation magnitudes quoted
at the primary calibration.

One clarification keeps the strongest tier honest. The structural
class is conditional on two premises: the superset-graph assumption
(EA deployment adds nodes and edges rather than replacing them) and
the $\delta \geq 1$ constraints on the EA modifiers. These premises
are not stipulations of convenience. Each is anchored to documented
incident properties: Salt Typhoon attacked infrastructure that
exists only because of the EA mandate; the Athens affair exploited
lawful-intercept capability dormant in a commercial switch; and
Storm-0558 demonstrated the concentration property for
platform-held key material. But the premises are defeasible, and what would defeat them can be
stated. The superset assumption fails if an EA mandate
\emph{substitutes} for riskier informal access mechanisms rather
than adding a new surface alongside them; the
$\delta_{\mathrm{target}} \geq 1$ constraint fails if EA components
attract no incremental adversary attention; and mandated defensive
standards could in principle offset the added surface
(\S\ref{sec:limitations}, the regulatory-uplift counterfactual).
Evidence for any of these would relegate the dominance ordering
from the structural class to a calibration question. The ordering
is therefore structural \emph{given premises argued from the
incident record}, not true by construction.

The boundary between the second and third classes needs a correction
that version~1 got wrong. That version argued the heavy-tail finding
was not hard-wired, on the ground that the fitted GPD shape changes
sign under the pessimistic calibration
(Table~\ref{tab:gpd_prior_robust}), so the model can produce the
opposite answer. The argument does not work, and it is withdrawn. The
sign change occurs because the output saturates against the $q = 1$
ceiling, which truncates the distribution into the bounded domain, not
because the underlying hazard tail becomes lighter. A sweep over
pinned coupling ratios makes the mechanism explicit: the fitted shape
rises with $\gamma_{\mathrm{X}}$ while $\gamma_{\mathrm{X}} <
\beta_Z$, then falls and turns negative above that threshold as the
ceiling binds (\S\ref{subsec:tail_algebra}). The upper tail is
governed by the amplification prior in one regime and by the ceiling
in the other, and by evidence about architectures in neither.

The consequence for the synthesis hierarchy is that the
\emph{direction} of the tail contrast belongs in the class of
consequences of the modelling apparatus rather than the class of
findings robust across the calibration set. It is reclassified
accordingly in \S\ref{sec:synthesis_hierarchy}. Its magnitude
remains informative conditional on the coupling prior, and the
goodness-of-fit asymmetry between the two architectures, where the
T-EA fit is adequate and the OTT-EA fit is rejected, is a genuine
diagnostic that no prior choice was tuned to produce.

Two conditioning conventions apply throughout and are stated here
once rather than repeated at every use.
\emph{Calibration-conditional} marks third-class quantities: values
computed at the primary calibration whose magnitudes, though not
their signs or orderings, move under defensible recalibration.
Fr\'echet--Hoeffding intervals are conditional on the Pillar~II
per-scenario probabilities: their robustness is to dependence
structure within that model, and the quantity-level relationship to
the Bayesian outputs is set out in \S\ref{sec:which_quantity}.
Where these labels appear without elaboration in what follows, this
paragraph is their definition.

\subsection{The Role of the Bayesian Layer}

The Bayesian layer contributes five findings to the synthesis. Its
prior calibrations use the same architecture-matched analogical
evidence as Pillar~I (T-EA priors anchored on Streams~A and~B; OTT-EA
priors on Streams~B and~C), so the architecture-conditional medians it
produces are not independent corroboration of the three-pillar ranges.
The five contributions divide into two groups by epistemic standing.
Three (the dependence-quantification, the cross-cutting coupling
mechanism, and the EVPPI decomposition: items~2, 3, and~5 below) are
genuinely Bayesian-specific in the sense that they require the
explicit correlated-campaign coupling model and cannot be derived from
the three pillars alone. The other two (distributional shape and
exceedance characterisation: items~1 and~4) could in principle be
produced by a sufficiently large Pillar~II Monte Carlo, but in this
framework are most naturally and efficiently delivered by the Bayesian
layer alongside its dependence treatment. The five contributions
are:
\begin{enumerate}
  \item \textit{Architecture-conditional distributional shape.} The prior
    predictive distributions for both classes are heavy-tailed under the
    stated model and priors (T-EA $\hat\xi = 0.271$; OTT-EA
    $\hat\xi = 0.518$, both Fr\'echet-domain). The OTT-EA tail is
    conditionally heavier-tailed, consistent with the cross-cutting
    coupling specification.
  \item \textit{Architecture-divergent dependence quantification.} The
    annual 95th-percentile uplift from adding campaign dependence is
    $\times 1.53$ for T-EA and falls in the range $\times 2.67$ to
    $\times 3.42$ for OTT-EA across the defensible
    $\sigma \in [0.25, 0.75]$ range
    (Table~\ref{tab:gamma_X_sigma_sensitivity}); the
    $\times 2.83$ point estimate corresponds to the
    headline $\sigma = 0.40$ specification, prior-integrated over
    $\gamma_{\mathrm{X}}/\gamma$ (Equation~\ref{eq:gamma_X_prior}).
    The corresponding conditional values at fixed
    $\gamma_{\mathrm{X}}/\gamma$ span $\times 1.52$ to $\times 5.12$
    across $\{1, 1.5, 2, 3\}$. All establish that independence
    assumptions are optimistic at the tail. The architectural
    divergence (OTT-EA uplift exceeding T-EA uplift) is structurally
    guaranteed by the coupling prior
    ($\gamma_{\mathrm{X}}/\gamma > 1$ almost surely) and survives the
    entire defensible parameter range. Both findings hold under wide variation in the
    campaign parameters within the priors.
  \item \textit{Cross-cutting coupling as the structural mechanism.} The
    Bayesian model identifies cross-cutting infrastructure as the
    architectural mechanism through which correlated campaigns erode
    the OTT-EA segregation gain. The sensitivity analysis over
    $\gamma_{\mathrm{X}} / \gamma$ shows that this mechanism is durable
    across the plausible coupling-ratio range.
  \item \textit{Architecture-conditional exceedance characterisation
    (prior-conditional).} The distributions assign $\Pr(q_1 > 1\%) =
    0.988$ (T-EA) and $0.907$ (OTT-EA) under primary prior calibrations,
    with the $1\%$ row reported here as an illustrative entry from the
    full threshold range in Table~\ref{tab:exceedance}. Both findings
    are sensitive to prior choice. The OTT-EA exceedance falls below 0.5
    under the conservative prior, the T-EA exceedance does not.
  \item \textit{Architecture-conditional parameter uncertainty
    decomposition.} The decomposition is flat in both classes. For T-EA
    the three modifiers contribute comparable first-order shares
    (roughly $18$--$25\%$ each), with $\delta_{\mathrm{target}}$ marginally
    the largest. For OTT-EA the same three modifiers contribute
    $12$--$15\%$ each and the cross-cutting amplification ratio
    $\gamma_{\mathrm{X}}/\gamma$ adds a smaller first-order index
    ($\sim 9\%$), its contribution limited because the structural
    constraint $\gamma_{\mathrm{X}}/\gamma \geq 1$ does the architectural
    work at the prior level rather than through residual value
    uncertainty. The segregation factor $P_3 \cdot P_4$ is held fixed in
    the canonical model and therefore contributes no output variance;
    characterising it empirically would nonetheless be informative, since
    the sensitivity grid of \S\ref{sec:pillar2_sensitivity} shows the
    OTT-EA central projection moving across $2.0$--$6.0\%$ over its
    plausible range. No single parameter dominates in either class, so
    empirical work across the modifiers yields broadly comparable
    variance reduction.
\end{enumerate}

\subsection{The Shared Structural Assumption and Its Limits}

Pillars~I, II, and III all use the complement-of-product aggregation
structure within each architectural class, with
architecture-conditional $\xi$-weighting for OTT-EA. This shared
structure is what produces the within-class agreement: three
methodologies using the same mathematical aggregation will produce
similar outputs when fed comparable inputs. The agreement is therefore
a property of the framework's internal logic under the independence
assumption, not externally validated corroboration of either the T-EA
$5.4\%$ aggregate or the OTT-EA $3.0\%$ aggregate as frequency
estimates.

The Bayesian model provides a critical perspective on this shared
assumption, applied in parallel to both classes. The explicit dependence
model shows that realistic correlated campaign structure inflates the
annual 95th percentile by $\times 1.53$ (T-EA) and $\times 2.83$
(OTT-EA) relative to independence, prior-integrated over
$\gamma_{\mathrm{X}}/\gamma$. The OTT-EA magnitude varies between
$\times 2.67$ and $\times 3.42$ across the defensible prior-width range
$\sigma \in [0.25, 0.75]$ (Table~\ref{tab:gamma_X_sigma_sensitivity}).
The architectural divergence (OTT-EA uplift exceeding T-EA uplift)
is structurally guaranteed by the coupling prior
($\gamma_{\mathrm{X}}/\gamma > 1$ almost surely;
\S\ref{subsec:dependence}). The shared independence assumption
in the three-pillar model is therefore \emph{optimistic} at the tail
for both classes, and \emph{more} optimistic for OTT-EA. Both
within-class agreement is better understood as a set of conditional
central estimates than as central tendencies robust to dependence
assumptions.

\subsection{The Evidence Base for a Material Risk Finding}

Stripping out what is prior-calibration-dependent and what is
methodologically shared, the genuinely independent quantitative evidence
that EA compromise risk is material in each architectural class comes from
the following sources:

\textit{T-EA evidence.} The Stream~A government-system per-system base
rate ranges from $0.46\%$/yr (Tier~1 direct analogues, $k = 3$) to
$0.92\%$/yr (full cohort, $k = 6$); the Tier~1+2 intermediate cohort
($k = 4$) gives $0.62\%$/yr. Across the $3{\times}$ to $15{\times}$
EA targeting-premium range, the joint inclusion-criterion $\times$
premium grid spans $1.4\%$ to $12.9\%$ annually. The grid centre (full
cohort, $10{\times}$) is $8.8\%$; the Tier~1 centre at $10{\times}$ is
$4.5\%$. The defensible T-EA empirical-projection interval is
therefore $1.4$--$12.9\%$, with the breadth reflecting
inclusion-criterion and targeting-premium uncertainty rather than
empirical-base-rate uncertainty.

\textit{OTT-EA evidence.} The Stream~C architectural-fit-restricted
strict-joint base rate is $0.95\%$/yr (Tier~0+1, $J=Y$, per
operator-year), with an all-tiers strict-joint upper bracket
of $1.33\%$/yr. At a $3$--$5\times$ EA-specific premium (lower than
the T-EA range because Stream~C's base rate already reflects
substantial existing platform targeting), the projection is
$2.8$--$4.7\%$ annually at the headline base ($3.9$--$6.4\%$ at the
upper bracket). At $1\times$ (no additional EA targeting
premium beyond existing platform attention), the projection is
$0.95\%$ itself. The defensible OTT-EA empirical-projection
interval is therefore $0.9$--$6.4\%$.

\textit{Centralised single-operator evidence (architectural bridge).}
The Stream~B CA-cohort base rate is $0.445\%$/yr standalone
($k = 11$, $E = 2{,}470$ CA-years), projecting to $1.3$--$4.4\%$ at
$3$--$10\times$ premiums. Pooling Streams~A and~B under the
common-rate model gives $0.545\%$/yr ($k = 17$, $E = 3{,}120$),
projecting to $1.6$--$5.3\%$ across the same premiums. Only the
standalone Stream~B figure is evidence independent of Stream~A; the
pooled rate is reported here as a consistency check between the two
architectural extremes rather than as a third strand.

\textit{Bayesian exceedance findings.} The architecture-conditional
exceedance probabilities reported in Table~\ref{tab:exceedance} are
conditioned on the prior. At the illustrative $1\%$ threshold, $\Pr(q_1
> 1\%) = 0.988$ (T-EA) and $0.907$ (OTT-EA) at the primary calibration;
the corresponding entries at higher and lower thresholds are documented
in the same table, and the choice of any specific threshold as the
benchmark for policy use is jurisdictional rather than technical
(\S\ref{subsec:exceedance}). The exceedance results are not sensitive
to reasonable perturbations of the prior median within each
architecture, but are sensitive to priors concentrated below the
threshold under consideration.

The independent strands of evidence place annual EA
compromise risk most plausibly in the range $1.4$--$8.8\%$ for T-EA
and $0.9$--$6.4\%$ for OTT-EA, with material probability mass at
low-single-digit annual thresholds in both cases and a heavy upper tail
driven by dependence structure (more pronounced for OTT-EA). The precise location within these ranges
depends on assumptions (the targeting premium, the independence
structure, the cross-cutting coupling) that cannot currently be
empirically resolved.

\subsection{The Stochastic Dominance Floor}

Beneath all the conditional quantitative analyses,
Proposition~\ref{prop:dominance} establishes a coherence result: in
either architectural class, EA-equipped architectures carry weakly
higher modelled risk than the no-EA counterfactual within the same
class, with equality on the prior mass of $0.108$ where the joint
modifier is exactly one. The proposition's
significance is not that it is model-free in the sense of requiring no
assumptions, but that it rests on assumptions (that EA increases attack
surface, attracts elevated targeting, and concentrates key-material
value) for which there is concrete operational evidence in both
architectural classes (Salt Typhoon, Stream~A Tier~1, and the
historical Athens affair as a Pillar~III channel analogue, for T-EA;
Storm-0558 and Midnight Blizzard for OTT-EA, treated as analogues per
Section~\ref{sec:pillar1_streamC}). The debate about EA risk is
therefore not about whether the mandate adds risk in either architectural
class, but about how much, in what distributional form, and whether the
architectural choice between classes itself materially affects the
expected harm.

Remark~\ref{rem:cross_arch} notes that the cross-architecture
comparison is not first-order stochastic dominance: T-EA dominates
OTT-EA at the median (T-EA is riskier on average), but OTT-EA can
exceed T-EA in the upper tail under correlated campaigns. This is a
substantive policy finding. Architectural choice between T-EA and
OTT-EA is not a uniform improvement but a trade-off between expected
risk and tail risk.

\subsection{What the Full Framework Establishes}\label{sec:synthesis_hierarchy}

The framework supports six findings distinguished by their
assumption-robustness. The Reader's Guide (Table~\ref{tab:readers_guide})
maps each principal claim to its tier. This section formalises the tier
content.

\begin{enumerate}
\item[\textbf{(S1)}] \emph{Internal ordering (a coherence property, not
a finding).} EA systems in either class carry weakly higher modelled
risk than the no-EA counterfactual within the same class. Version~1
placed this in the most assumption-robust tier. That was too generous,
and the tier is downgraded here. The result is a coupling argument:
the counterfactual is the same attack graph with the modifiers set to
unity, the modifiers are constrained to be $\geq 1$ by the censoring
in the prior, so pathwise dominance holds at every parameter
realisation and first-order stochastic dominance follows. The
dominance is also \emph{weak} rather than strict, because the censored
prior places mass $0.108$ on the joint modifier equalling exactly one,
where the EA system adds no hazard at all
(Supplementary \S\ref{sec:supp_modifier_priors}). What the proposition
establishes is that the model cannot make EA look safer than its own
counterfactual. The substantive question, whether the modifiers exceed
unity in the world, is argued from the incident record in
\S\ref{sec:dominance} and is not settled by the proposition. The
unmodelled defensive-uplift term $\delta_{\mathrm{def}}$
(\S\ref{sec:limitations_counterfactual}) is the condition under which
the ordering could fail, and it sits outside the formalism.

\item[\textbf{(S2)}] \emph{Heavy upper tails and dependence inflation.}
Annual compromise probability distributions have heavy upper tails in
both classes, and correlated-campaign modelling inflates
upper-percentile risk more for OTT-EA than for T-EA. Both parts of
that sentence require separating into what is assumed and what is
observed, which version~1 did not do.

The \emph{direction} is assumed. Exponential amplification of a
gamma-distributed campaign intensity is regularly varying with index
$\beta_Z/\gamma$, so positive shape parameters and the ordering
$\xi_{\mathrm{OTT}} > \xi_{\mathrm{T}}$ follow from the specification
before any data, and the ratio of fitted shapes recovers the coupling
ratio (\S\ref{subsec:tail_algebra}). This part belongs with the
consequences of the modelling apparatus, not with the robust findings.

What is \emph{observed} is the goodness-of-fit asymmetry. The T-EA
upper tail is statistically consistent with a Generalised Pareto
distribution near its primary threshold, while the OTT-EA upper tail
fits no single GPD at any tested threshold
(Supplementary \S\ref{sec:supp_gpd_gof}), consistent with a mixture
structure induced by coupling on cross-cutting edges. No prior choice
was tuned to produce that asymmetry, and it is the part of this tier
that carries information.

Magnitudes are calibration-conditional, and percentiles above roughly
the 95th should be read as prior artefacts: half the coupling prior
lies in a region where the amplification factor has no finite mean,
and capping it moves the OTT-EA 99th percentile from $57\%$ to $44\%$
while leaving the medians unchanged (\S\ref{subsec:tail_algebra}).

\item[\textbf{(S3)}] \emph{Exceedance behaviour.} Exceedance
probabilities at low policy thresholds are elevated under the primary
calibration and decline materially under conservative priors. The
architectural ordering (T-EA $>$ OTT-EA at low thresholds) is
preserved across the full defensible prior range. Which threshold is
policy-dispositive is a jurisdictional question.

\item[\textbf{(S4)}] \emph{Plausibility ranges (interval dominance).}
Annual T-EA compromise probability lies in the low single-digit range
under any dependence structure compatible with the Pillar~II model
(Fréchet--Hoeffding bracket); OTT-EA is broadly lower in central
tendency but its upper bound exceeds T-EA's under correlated
campaigns. The three empirical streams all project per-system rates
on the order of $1\%$ per system-year, supporting these ranges as
plausibility ranges.

\item[\textbf{(S5)}] \emph{Channel-aggregate projections
(independence-conditional).} Under independence among irreducible
risk channels both architectures yield channel-aggregate projections
in the low single-digit range, T-EA higher than OTT-EA. The
aggregates are interpretive points within the (S4) ranges rather
than methodologically independent estimates.

\item[\textbf{(S6)}] \emph{Cross-architecture trade-off.} The
comparison is not first-order stochastic dominance: T-EA is riskier
in central tendency, OTT-EA in the upper tail under correlated
campaigns. Architectural choice is a
central-tendency-versus-tail-risk trade-off rather than a uniform
improvement.
\end{enumerate}

\textbf{Robustness ordering.} The mapping of these tiers onto the
three classes of \S\ref{sec:structural_vs_calibration}: S1, the
direction of S2, and S6 are structural; S4 holds across the
calibration set within the Pillar~II model structure; S3, the
magnitudes within S2, and the specific aggregates of S5 are
calibration-conditional.

% =============================================================================
\section{Discussion of Limitations}\label{sec:limitations}

The limitations below are individually consequential. Collectively,
they preclude any reading of the framework as a forecasting or
estimation instrument. The outputs are decision-support inputs
rather than calibrated empirical projections, and are presented as
ranges or qualitative directional findings rather than point
estimates. The interval-dominance and structural-ordering findings
(\S\ref{sec:synthesis}) survive the limitations identified below;
specific magnitude projections do not.

\subsection{The Fundamental Epistemic Gap}\label{sec:fundamental_gap}

The deepest limitation cannot be resolved within the approach: no EA system
has been deployed at scale and subsequently compromised in a documented, publicly
attributable incident that could serve as a calibration data point. All empirical
inputs are analogical, and the analogical transfers embedded in every
pillar and in the Bayesian domain transfer function carry unverifiable assumptions
about structural similarity. This is precisely why a framework
designed for deep uncertainty (rather than a conventional
statistical calibration) is the appropriate methodological choice.

\paragraph{Forward-secrecy degradation is out of scope.} Of the five
structural risk classes set out in Section~\ref{sec:background},
the framework quantifies four (single point of failure, insider threat,
complexity-vulnerability correlation, scale risk) and does not quantify
forward-secrecy degradation. The reason is methodological: forward
secrecy affects the \emph{consequences} of a key compromise (how much
prior traffic becomes decryptable) rather than its \emph{probability},
and consequence-side modelling lies outside the risk-side scope
declared in \S\ref{sec:intro}. An EA-mandated architecture that breaks
or weakens forward secrecy increases the harm of any single compromise
event without affecting its likelihood. Because forward secrecy
affects the harm per compromise rather than its probability, its
omission from the probability-side framework implies that the risk
numbers reported here would need to be scaled upward in a complete
risk-benefit analysis for architectures that weaken or remove forward
secrecy. A complete analysis would integrate this consequence-side
magnification with the probability-side estimates this framework
produces. That integration is left for future work.

\subsection{Bayesian Prior Calibration and the Limits of the Median}

The Bayesian edge-hazard priors are calibrated so that the prior
predictive distribution produces annual compromise probabilities
consistent with the three-pillar 3--6\% range. The Bayesian median
of $4.0\%$ is therefore not independent confirmation of the
three-pillar within-class agreement: a prior calibrated to produce
results in a target range will produce them. The calibration keeps
the model consistent with the best available analogical evidence
but makes the median a function of that evidence rather than an
external check on it. The exceedance finding $\Pr(q_1 > 1\%) =
0.988$ is similarly prior-conditional: under the conservative prior
it drops to $0.605$; under the pessimistic prior it rises to
saturation (Supplementary \S\ref{sup:prior_sens}). The exceedance
finding should be read as the implication of accepting the primary
analogical evidence (government HSM and CA cohort data) as the
appropriate calibration source, not as robust across the full
defensible prior range.

\subsection{The EA Targeting Premium}\label{subsec:targeting_premium_limitation}

The EA targeting premium translates the empirical base rate of
state-actor compromise of cryptographic-relevant systems into a
projected rate for EA-equipped systems specifically. No EA-specific
empirical dataset exists from which this multiplier could be directly
estimated, and the $10{\times}$ value used in the main analysis is a
modelling judgement informed by adjacent precedents rather than a
directly measured quantity. The $3{\times}$--$15{\times}$ range used
throughout is likewise a \emph{structured author judgement},
constructed from the named anchors set out below---the documented
targeting precedents, the sector-resolved actor-mix data, and the
qualitative disparity argument---rather than the output of a formal
multi-expert elicitation protocol, and it is presented as such. The
update rule in Section~\ref{sec:falsifiable} is the mechanism by
which operating experience revises it. Applied to the pooled Stream~A + B rate
($0.545\%$/yr), the projections at $3{\times}$, $5{\times}$,
$10{\times}$, $15{\times}$ are $1.6\%$, $2.7\%$, $5.3\%$, $7.8\%$;
applied to the Stream~A-anchored T-EA rate ($0.92\%$/yr), the
projections are $2.7\%$, $4.5\%$, $8.8\%$, $12.9\%$. The remainder of
this subsection sets out the external evidence that constrains the
plausible premium range and identifies the empirical work that would
tighten it.

\paragraph{Lower-bound evidence (against premium $< 3{\times}$).}
Three lines of evidence argue against a multiplier substantially
below $3{\times}$. First, the 2024 Salt Typhoon campaign demonstrably
targeted CALEA-mandated lawful-intercept infrastructure at multiple
major US carriers~\citep{volz2024salttyphoon,eff_salttyphoon}: the lawful-intercept management plane
was a specific operational objective, and the attack succeeded
despite the carriers' mature regulatory posture. The base-rate
analogue for state-actor compromise of a comparable
non-EA-equipped carrier infrastructure component over a similar
window is substantially below the observed Salt Typhoon
prevalence. A single precedent does not pin down the ratio, but it
rules out values close to $1{\times}$. Second, the Verizon DBIR
documents, across successive editions, a marked disparity in the
share of breaches attributable to state-aligned espionage actors
across target sectors: high-value and government target classes
exhibit actor-mix ratios several times the cross-sector base rate
\citep{dbir2024,dbir2025}. The qualitative
disparity is consistent with a multiplier of at least
$3$--$5{\times}$ for cryptographic-infrastructure targets, even if
the cross-domain rate ratio is not directly transferable. Third,
ETSI's standardised threat, vulnerability and risk-analysis
methodology \citep{etsi102165} frames threat likelihood in terms
of attacker capability and motivation, recognising that
well-resourced, deliberately targeted attacks can realise
compromises beyond the reach of the broader
opportunistic-criminal profile dominant in the carrier-network
base rate. This asymmetry is the qualitative substance of the
premium and is also inconsistent with multipliers near $1{\times}$.

\paragraph{Upper-bound evidence (against premium $\gtrsim 15{\times}$).}
Two considerations argue against very large multipliers. First,
Salt Typhoon compromised some but not all CALEA-mandated US carriers
within the observation window. A multiplier of $\sim$$20{\times}$
applied to the Stream B+A pooled rate would predict more frequent
or more complete compromise of the same infrastructure class
than the public record supports. Second, the Bayesian Layer~IV upper
percentile region for $q_1$ (the OTT-EA 90\%-CI upper bound is
$\sim 17\%$ and the T-EA upper bound is $\sim 16\%$) is already very
high. A Pillar~I central-tendency projection at $15$--$20{\times}$
would push the implied central rate toward $8$--$11\%$, exceeding
plausibility-bounded historical analogues at the central tendency
even before tail effects are considered. The 95\% upper bound of
the Stream~B Garwood CI at $10{\times}$ already reaches $7.66\%$,
which sits at the boundary of historically defensible
critical-infrastructure annual compromise probabilities.

\paragraph{Defensible range and central estimate.} The above
considerations support a defensible range of approximately
$3$--$15{\times}$ for the EA targeting premium, with $10{\times}$
representing the upper-mid region of that range rather than its
modal point. Under this calibration the centralised single-operator
projected rate (from the pooled rate) spans $1.6$--$7.8\%$, with the
median of the range at roughly $4{\times}$--$6{\times}$ giving
$2.2$--$3.2\%$. The parallel Stream~A-anchored T-EA range is
$2.7$--$12.9\%$ across $3$--$15\times$ (with $10\times$ giving the
$8.8\%$ headline). Both ranges are intentionally wider than any
single-point headline and reflect the genuine uncertainty in this
parameter.

\paragraph{Implications for the headline.} The architecture-class
\emph{ordering} (T-EA versus OTT-EA, structural ordering S1, and
the $\gamma_{\mathrm{X}}/\gamma \geq 1$-driven OTT-EA tail-heaviness)
does not depend on the premium calibration. The premium scales
the central rate but not the architectural divergence. The
Fréchet--Hoeffding interval $[2.2\%, 7.5\%]$ from Pillar~II
(\S\ref{sec:pillar2}) does not depend on the premium calibration
at all: it is derived from the seven-scenario per-channel
probabilities without any single-operator pooling step. The
$2.7$--$12.9\%$ Stream~A-anchored T-EA range and the $1.6$--$7.8\%$
pooled centralised single-operator range are the relevant calibrated
intervals for the central tendency. The qualitative findings rest on
structural rather than calibration-dependent arguments.

\paragraph{Future work to tighten the bound.}
A matched-cohort study comparing compromise rates among
CALEA-mandated US carriers over 2010--2024 against equivalent
non-CALEA-mandated carrier infrastructure over the same window
could quantitatively bound the multiplier from observed
compromise-rate differentials. Similar approaches could exploit
the EU-CSAM regulation trajectory, jurisdictional variation in
lawful-intercept obligations across OECD members, and the
asymmetric compromise records of CA infrastructure under different
audit regimes. None of these analyses is performed in the present
paper. Identifying them as the empirical work that would tighten
the most consequential single calibration parameter is itself a
contribution.

\subsection{Regulatory Defensive Uplift: The Unmodelled Counterfactual}\label{sec:limitations_counterfactual}

The modifier structure (Equation~\ref{eq:modified_hazard}) captures
three multiplicative effects of an EA mandate: increased attack
surface ($\delta_{\rm EA}$), elevated adversarial targeting
($\delta_{\rm target}$), and key concentration ($\delta_{\rm
concen}$). It does not include a compensating modifier
$\delta_{\rm def} \geq 1$ for the defensive uplift a regulatory
mandate may itself impose (mandatory HSM certification, audit-trail
requirements, certified incident-response capability, personnel
vetting above the commercial baseline). This is the strongest pro-EA
counter the framework does not directly engage: a defender would
argue that unregulated Stream~A and Stream~C operators understate the
rate a mandated operator would face, so $\delta_{\rm def}$ might
offset $\delta_{\rm EA} \cdot \delta_{\rm target} \cdot \delta_{\rm
concen}$ in part or whole.

Three observations bound plausible $\delta_{\rm def}$. \emph{(i)}~Salt
Typhoon (2024) occurred at CALEA-regulated US carriers operating under
the most mature lawful-intercept regulatory regime in the Western
world. The regulation did not prevent state-actor compromise of the
orchestration layer. \emph{(ii)}~The Stream~B CA cohort operates
under formal CCADB-anchored governance with mandatory audit,
transparency-log, and revocation requirements, which is a precise
analogue of the regulatory uplift a defender would invoke. The resulting
per-CA rate ($0.45\%$/yr) is the best empirical upper bound on the
defensive uplift attainable through CA-equivalent regulation, since it
already incorporates the regulatory benefits. \emph{(iii)}~The
regulatory cadence is empirically lagging rather than leading:
standards harden in response to compromise, not in anticipation. The
empirical record offers no clear support for a defensive uplift large
enough to nullify the EA penalty. The most-regulated environment in
the analogue set is the one where the largest documented compromise
occurred. Future work should treat $\delta_{\rm def}$ as a fourth
modifier with prior informed by the analogue cohort's regulatory
characteristics. The qualitative conclusions of the present analysis
are expected to survive any $\delta_{\rm def}$ consistent with the
Stream~B versus Stream~C ratio.

\subsection{The Cross-Cutting Coupling Ratio $\gamma_{\mathrm{X}}/\gamma$}\label{subsec:xcut_coupling_limitation}

The OTT-EA architectural findings rest on the latent campaign variable
$Z(t)$ being coupled more strongly to cross-cutting edges than to
subgraph-internal edges (Section~\ref{subsec:dependence},
Equation~\ref{eq:campaign_amplification_OTT}). The prior median
$m = 2$ is calibrated against the Stream~C cross-cutting fraction
$\hat p_X = 7/14$ (Section~\ref{subsec:dependence}). The remaining
specification choice is the prior width $\sigma$. The OTT-EA
tail-inflation magnitude is sensitive to $\sigma$:
Table~\ref{tab:gamma_X_sigma_sensitivity} reports the prior-integrated
95th-percentile uplift across $\sigma \in \{0.25, 0.40, 0.50, 0.60,
0.75\}$. At $\sigma = 0.25$ the uplift is $\times 2.67$ (close to
the fixed-$\gamma_{\mathrm{X}}/\gamma=2$ value); at $\sigma = 0.75$ it
reaches $\times 3.42$. The \emph{direction} of the architectural
divergence (OTT-EA tail inflation exceeding T-EA's $\times 1.53$) is
unconditional under any prior with support in $[1, \infty)$, since the
constraint $\gamma_{\mathrm{X}}/\gamma \geq 1$ alone is sufficient to
guarantee equal-or-greater OTT-EA cross-cutting amplification at every
replicate. The \emph{magnitude} of the divergence depends on $\sigma$.

\paragraph{Sensitivity to the prior median.} The same question
arises for the prior median $m$, and
Table~\ref{tab:gamma_X_median_sensitivity} reports the complementary
sweep over $m \in \{1.2, 1.5, 2, 3, 4\}$ at fixed $\sigma = 0.40$
($R = 8{,}000$, seed 2024). Two features stand out. The annual
median is only weakly sensitive ($2.31\%$ at $m = 1.2$ rising to
$3.53\%$ at $m = 4$): the coupling ratio is a tail parameter, not a
central-tendency one. The tail uplift, by contrast, is strongly
increasing, from $\times 1.73$ at $m = 1.2$ to $\times 12.4$ at
$m = 4$. The architectural divergence direction is preserved at
every value swept: even at $m = 1.2$, barely above the structural
floor, the OTT-EA uplift exceeds T-EA's $\times 1.53$. The
conclusions that depend on $m$ are therefore the quoted inflation
magnitudes, not the ordering. The empirical anchoring of $m$ itself
is set out in \S\ref{subsec:dependence} and Supplementary
\S\ref{sec:supp_xcut_prior_calibration}.

\begin{table}[!htbp]
\centering
\small
\caption{Sensitivity of OTT-EA results to the cross-cutting coupling
prior median $m$ at fixed $\sigma = 0.40$ ($R = 8{,}000$ replicates,
seed 2024). Uplift is the prior-integrated 95th-percentile ratio of
the full-dependence to the independence configuration; the T-EA
reference uplift is $\times 1.53$. The divergence direction is
preserved at every $m$; its magnitude is calibration-dependent. The
$m = 3$ and $m = 4$ rows lie partly in the saturation regime of
\S\ref{subsec:tail_risk} (95th percentiles approaching the $q = 1$
ceiling), so their uplift factors mix tail-decay and ceiling
effects and should not be read as pure tail-heaviness.}
\label{tab:gamma_X_median_sensitivity}
\begin{tabular}{@{}ccccc@{}}
\toprule
\textbf{Prior median $m$} & \textbf{Median $q_1$} &
\textbf{95th pct.\ $q_1$} & \textbf{Median $Q_{10}$} &
\textbf{95th-pct.\ uplift} \\
\midrule
1.2 & 2.31\% & 10.6\% & 23.6\% & $\times 1.73$ \\
1.5 & 2.42\% & 12.5\% & 26.1\% & $\times 2.04$ \\
2.0 (primary) & 2.63\% & 17.4\% & 31.9\% & $\times 2.83$ \\
3.0 & 3.06\% & 38.2\% & 48.1\% & $\times 6.22$ \\
4.0 & 3.53\% & 76.5\% & 70.1\% & $\times 12.4$ \\
\bottomrule
\end{tabular}
\end{table}

\paragraph{Empirical update path for $\sigma$.} The Stream~C cohort
structure already supports a likelihood-based posterior on
$\gamma_{\mathrm{X}}/\gamma$. For each Stream~C incident $i$ with
$X_i = Y$ (cross-cutting outcome) versus $X_i \in \{P, N\}$, the
Bernoulli likelihood
\[
  \Pr(X_i = Y \mid \gamma_{\mathrm{X}}/\gamma, Z_i, \theta)
  \;=\; f\!\left(\gamma_{\mathrm{X}}/\gamma, Z_i, \theta\right)
\]
follows from the attack-graph specification: in the saturated regime,
$f \to 1 - e^{-c\gamma_{\mathrm{X}} Z}$ for an edge-weight constant
$c$. Combined with Equation~\ref{eq:gamma_X_prior} as a regulariser,
the fourteen Stream~C incidents yield a posterior on
$\gamma_{\mathrm{X}}/\gamma$ with an empirically anchored spread. The
small cohort size means the posterior remains wide, but data-bounded
rather than expert-judgement-bounded. This is the highest-priority
extension of the framework: it would convert $\sigma$ from a
calibration choice into a measured quantity, removing the last
unmeasured parameter from the OTT-EA analysis. The methodology is
straightforward. The bottleneck is the case-by-case latent-$Z_i$
assignment, which requires per-incident correlated-campaign indicators
not in the existing Stream~C documentation (Supplementary \S\ref{sec:supp_streamC}).

\subsection{Pillar II/III Calibration Validity}\label{sec:pillar23_validity}

\paragraph{Attack-graph topology sensitivity.} The Bayesian model uses a
stylised six-node, eight-edge attack graph. Comparing it against a
seven-node extension (adding an HSM firmware-traversal node) shows
that median system hazard falls by $9$--$11\%$ under the more granular
topology, with exceedance probability at the $1\%$ threshold
nearly unchanged ($0.989$ vs $0.981$). The direction is
favourable: the six-node graph \emph{over}states risk relative to more
detailed architectures. The qualitative dependence-inflation findings
are robust to topology variation within the range tested. Full
six- versus seven-node sensitivity outputs for both architectures are
reported in Supplementary
\S\ref{sec:supp_graph_sens}.

\paragraph{Robustness to attempt-rate assignment.} A 1{,}000-permutation
robustness check, randomly re-pairing attempt rates to scenarios with
all other parameters fixed, yields a central access-pathway projection
range of $[9.9\%, 21.2\%]$ (95\% central interval), with the paper's
primary access-pathway value ($\La = 16.8\%$) falling at approximately
the $61$st percentile of this distribution, modestly above the median
and well within the central range. The headline Pillar~II figure is not an artefact of an
extreme or cherry-picked scenario--attempt-rate assignment.

\paragraph{Non-identifiability and severity.} The Bayesian model is not
fitted to an EA-specific likelihood. All inference is on prior predictive
distributions informed by domain transfer. The model is therefore
non-identified in the classical sense: different parameter combinations
can produce identical output distributions. The parameter uncertainty
decomposition (\S\ref{subsec:evppi}) confirms that even resolving every
uncertain parameter would leave roughly $39\%$ of
$\mathrm{Var}(Q_{10})$ for T-EA and $51\%$ for OTT-EA as
irreducible structural uncertainty. A further limitation shared across
all four layers is that compromise is modelled as a binary event:
compromise \emph{severity} conditional on occurrence is not quantified.
A session-level decrypt and a decade-spanning multi-million-user
exposure are policy-distinct outcomes. A complete risk-benefit
determination requires a severity distribution that the present analysis
does not provide.

\subsection{Statistical Methodology and Inferential Status}\label{subsec:statistical_methodology}

The framework's quantitative outputs depend on statistical-modelling
choices whose limitations affect how the numerical results should be
read. Four specific caveats apply.

\paragraph{Layer~IV is prior predictive, not posterior inference.}
As disclosed in \S\ref{sec:bayesian}, the reported $95\%$ intervals
are forward-sampled prior predictive intervals, not posterior
credible intervals: no EA-specific likelihood exists, so the
intervals inherit their content from the analogue-calibrated
priors. The framework supports posterior updating when EA-specific
incident data eventually arrives.

\paragraph{Variance decomposition under correlated parameters.}
The EVPPI-style first-order Sobol indices reported in
\S\ref{subsec:evppi} (Figure~\ref{fig:evpi}) assume parameter
independence. The decomposition further presumes the inputs have standard,
mutually independent marginals; neither holds exactly here. The
coupling ratio $\gamma_{\mathrm{X}}/\gamma$ has a support-constrained
marginal ($\gamma_{\mathrm{X}}/\gamma \geq 1$ by construction), and the
modifier tiers $\delta_{\mathrm{EA}}$ and $\delta_{\mathrm{target}}$
co-vary across scenarios within
architectural class. The reported
indices should be interpreted as approximate variance contributions
under the marginal-distribution assumption; total Sobol indices
(which absorb interaction terms) and moment-independent importance
measures \citep{borgonovo2007} would refine the ordering but are
unlikely to change the qualitative ranking, in which the three
log-normal modifiers contribute comparable first-order shares and,
for OTT-EA, the cross-cutting coupling ratio
$\gamma_{\mathrm{X}}/\gamma$ contributes a smaller share.

\paragraph{Poisson rate models in Pillar~I.} The Pillar~I rate
estimates (Streams~A, B, and~C) use Poisson likelihoods with Garwood
confidence intervals. Compromise events may depart from Poisson
in two directions: clustering (Salt~Typhoon involved multiple
carriers within a coordinated campaign. The CA cohort includes
coordinated multi-CA compromises) and temporal heterogeneity (rates
may evolve as adversary capability and target attractiveness
change). The implied rate intervals are therefore likely understated
in width. The pooling of Streams~A and B assumes rate homogeneity
across the two cohorts. Neither departure changes the architectural
ordering or the order-of-magnitude scale of the Pillar~I anchors,
but a careful statistician would refit Pillar~I with overdispersed
Poisson or negative-binomial likelihoods. The architecture-conditional
sensitivity analysis (Table~\ref{tab:streamC_rates}) provides the
relevant range under realistic departures.

\paragraph{Informal model averaging.} The four-layer framework is
\emph{informal} triangulation across analytical traditions rather
than formal Bayesian model averaging. Each layer's contribution is
weighted by analytical role (consistency check, plausibility
bounding, dependence-aware tail characterisation) rather than by
explicit prior weight. The advantage is honesty about the
limitations of each layer. The disadvantage is the absence of a
single posterior weight that a formal Bayesian model average would
provide. A formal Bayesian model average over the four layers would
require specifying prior weights that are themselves
calibration-conditional, a problem the multi-layer triangulation
avoids by reporting each layer's output explicitly. The synthesis
hierarchy (\S\ref{sec:synthesis}) is the closest available substitute:
findings on which the layers agree are stronger than findings
supported by one layer alone.

\subsection{Temporal Robustness, Inclusion Sensitivity, and Operator Heterogeneity}

Three further robustness analyses are developed in full in
Supplementary \S\ref{sec:supp_robustness}. The headline results are
as follows. \emph{Stationarity}: a stylised $2\%$ annual rate-growth
model raises the 10-year cumulative from $53\%$ to $56\%$ from the
Pillar~II base, directionally adverse but modest, with full
trajectories in Supplementary \S\ref{sec:supp_nonstationary}.
\emph{Inclusion sensitivity}: the Tier-1-only Stream~A cohort at the
$10{\times}$ premium projects to $4.5\%$, inside the low single-digit
range, and removing Salt Typhoon entirely ($k = 5$) leaves the
estimate materially unchanged (Supplementary
\S\ref{sec:salt_typhoon_role}). \emph{Joint denominator-and-inclusion corner}: the one-at-a-time
sensitivities above can be stacked. Taking the risk-minimising
corner simultaneously ($N_A = 75$ and strict Stream~B inclusion)
gives a pooled rate of $11/3{,}445 = 0.32\%$ per system-year
(Garwood 95\% CI $[0.16\%, 0.57\%]$), projecting to $3.1\%$
annually at the $10{\times}$ premium; the adverse corner
($N_A = 35$, full inclusion) gives $17/2{,}925 = 0.58\%$, projecting to
$5.7\%$. (Both use the transform $1 - \exp(-\text{premium} \cdot
\lambda)$ applied throughout.) The qualitative finding, an order-$1\%$ pre-premium
anchor and a low-single-digit projection, survives both corners
jointly, falling below $1\%$ only when the premium is
simultaneously set at its $3{\times}$ floor.
\emph{Operator heterogeneity}: the
Stream~C cohort anchors the \emph{unregulated} platform base rate;
Stream~B's CCADB-governed Certificate Authorities are the closest
regulated-operator analogue, and the roughly factor-two ratio between
the two rates bounds the defensive uplift plausibly attributable to
EA-style regulation, shifting the OTT-EA central projection from
$\sim$$3.9\%$ to $\sim$$2.0\%$ at the most generous reading. A
mandate spanning less well-resourced operators faces the converse
adjustment, since systemic risk tracks the capability distribution
across the mandated population rather than its best-resourced member.

\subsection{Falsifiable Implications}\label{sec:falsifiable}

Although no EA-specific incident record exists today, the framework
is not insulated from evidence. It makes observable commitments,
three of which are stated here so that future data can confirm or
discredit the calibration. One framing point first: these
commitments discipline the calibration layer, the analogue
anchoring and the coupling mechanism, rather than the EA projection
itself. The transfer step, the targeting premium, remains the least
directly testable link, and the second commitment below is its only
current handle.

\emph{First, a forward prediction from Stream~B.} At the fitted CA
compromise rate, the Gamma--Poisson predictive distribution for the
2025--2030 window (approximately $780$ CA-operator-years at current
cohort size) has mean $3.5$ disclosed incidents with $95\%$
predictive interval $[0, 8]$. A zero-incident six-year run has
predictive probability $\approx 4.9\%$ and would warrant downward
revision of the centralised-operator anchor; a count above eight
would warrant the converse. This commitment tests the analogue
anchor directly; the transfer step is the second commitment's
business.

\emph{Second, an update rule for deployed systems.} Any deployed
T-EA-class system carries a stated first-decade compromise
probability under each premium calibration. Clean operating history
is informative against the premium: ten system-years without a
Tier~1 incident carries a likelihood ratio of approximately $1.5$
favouring the $3{\times}$ over the $10{\times}$ targeting premium at
the pooled base rate, and the Layer~IV machinery supports the
corresponding posterior reweighting directly
(\S\ref{subsec:statistical_methodology}). The premium calibration is
not fixed doctrine. It is the parameter the framework most expects
operational evidence to move.

\emph{Third, a distributional prediction from the coupling
mechanism.} The cross-cutting coupling prior implies that
correlated-campaign periods should disproportionately produce
cross-cutting ($X = Y$) incidents in future platform-incident
records. A future extension of the Stream~C cohort in which the
cross-cutting fraction is statistically indistinguishable between
campaign-intensive and quiet periods would contradict the
$\gamma_{\mathrm{X}}$ mechanism on which the OTT-EA tail findings
rest, independently of any rate calibration.

% =============================================================================
\section{Policy Implications}\label{sec:policy}

This section translates the framework's structural and
interval-dominance findings into a basis for policy
deliberation. Two qualifications apply throughout. \emph{First},
the framework gives only the risk side of a two-sided
risk--benefit analysis. Deciding whether EA deployment is
net-beneficial also requires benefit-side estimates, such as
law-enforcement effectiveness, counterfactual criminal activity
in the absence of EA, and the share of cases in which EA access
is decisive rather than incremental. Each of these is a
substantial research problem in its own right and lies outside
this paper's scope. The risk findings developed here are inputs
to a complete risk--benefit determination, not substitutes for
one. \emph{Second}, the framework is designed for comparative
reasoning under deep uncertainty, not for predictive
forecasting. The implications below are drawn from
architecture-level and interval-dominance findings, not from
calibrated point projections.

\subsection{The Policy Question Reframed}

The conventional policy framing asks ``Is the security risk of
exceptional access systems acceptably low?''. This framing
assumes a definitive risk estimate that cannot be produced from
currently available data. The framework does not offer one. A
more productive reframing is this: \emph{given the
assumption-robust architectural findings, and given the
calibration-conditional plausibility ranges for the central
tendency of compromise risk, under what conditions would an EA
mandate be net-beneficial?} The framework's contribution to this
reframed question has two parts. First, it constrains the
risk-side inputs to a structured plausibility range. Second,
it separates the comparative findings that are robust to
calibration choice from the specific magnitudes that are not.

\subsection{Interpreting the Probabilities}\label{sec:interpreting}

A statement such as ``annual compromise probability of $4\%$''
licenses less than it appears to, so this section is precise about
what it does and does not support. It does not support a
harm forecast: the framework quantifies the probability of the
compromise event, not its consequence, and the same probability
attaches to deployments of very different population exposure. It
does not support cross-domain comparison against actuarial risks
whose probabilities are measured rather than projected. What it
does support is comparative and threshold reasoning: whether one
architecture's profile dominates another's, whether cumulative
exposure over a stated deployment horizon exceeds a stated
acceptability threshold, and how far a conclusion survives
calibration change. The cumulative-horizon and
exceedance-probability framings
(\S\ref{subsec:cumulative}--\ref{subsec:exceedance}) are the forms
in which these probabilities are usable for policy, and they are
the forms this section relies on. Consequence modelling, which
would convert them into expected-harm terms, is the natural
companion analysis and is developed separately.

\subsection{Risk Profile Summary}

\begin{table}[!htbp]
\centering
\small
\caption{Architecture-conditional summary risk profile. The three
column families are different quantities and are not mutually
bounding; \S\ref{sec:which_quantity} gives the usage rule and
\S\ref{sec:structural_vs_calibration} the conditioning conventions.
Cumulative rows assume stationarity (assessed in Supplementary
\S\ref{sec:limitations_stationarity}); 25-year Bayesian upper
percentiles are stationarity extrapolations, not direct
simulations. All values are probability-side quantities and exclude
forward-secrecy degradation, for which they are lower bounds
(\S\ref{sec:fundamental_gap}).}
\label{tab:risk_summary}
\begin{tabularx}{\textwidth}{@{}X l c c c@{}}
\toprule
\textbf{Risk metric} & \textbf{Class} &
\textbf{Channel-min.$^*$} & \textbf{FH int.$^\dagger$} &
\textbf{Bayes.~95th-pc.$^\ddagger$} \\
\midrule
\multirow{2}{*}{Annual operational compromise}
  & T-EA   & $\sim$1.5\%  & $[2.2\%,\ 7.5\%]$   & $\sim$17\% \\
  & OTT-EA & $\sim$1\%    & $[1.1\%,\ 4.0\%]$   & $\sim$17\% \\
\multirow{2}{*}{10-year cumulative}
  & T-EA   & $\sim$14\%   & $[20\%,\ 54\%]$    & $\sim$84\% \\
  & OTT-EA & $\sim$10\%   & $[11\%,\ 34\%]$    & $\sim$94\% \\
\multirow{2}{*}{25-year cumulative$^\P$}
  & T-EA   & $\sim$30\%   & $[43\%,\ 86\%]$    & $\sim$99\% \\
  & OTT-EA & $\sim$22\%   & $[24\%,\ 64\%]$    & $\sim$99\% \\
\multirow{2}{*}{$\Pr(\text{annual risk} > 1\%)^\S$}
  & T-EA   & ---     & ---          & 0.6 to 0.99 \\
  & OTT-EA & ---     & ---          & 0.5 to 0.91 \\
\bottomrule
\end{tabularx}\\[2pt]
\begin{minipage}{\textwidth}\footnotesize
$*$~Channel-minimum heuristic: maximum single-channel rate under the four-channel
decomposition (Section~\ref{sec:pillar3}, conditional on channel
exhaustiveness, which we do not prove). It bounds risk under a different
decomposition from FH and the two are not directly comparable. The
\emph{model-consistent} lower bound on annual risk is the FH lower
bound, not the channel-minimum heuristic.\\
$\dagger$~Pillar~II Fr\'echet--Hoeffding interval, model-consistent
under any dependence structure within the seven-scenario model.\\
$\ddagger$~Bayesian 95th-percentile of $q_1$ under the primary prior
calibration. Bayesian priors are calibrated to architecture-matched
three-pillar empirical ranges, so the upper-percentile values capture
the dependence-aware tail under the primary calibration rather than an
independent estimate of it. Annual values from
Table~\ref{tab:dependence_comparison}; 10-year cumulative values from
the same simulation.\\
$\P$~The 25-year Bayesian column is $1-(1-q_1^{95})^{25}$ under
stationarity rather than a direct simulation of $Q_{25}^{95}$; under
realistic positive year-to-year dependence the true $Q_{25}^{95}$
would be lower.\\
$\S$~Range across primary and conservative prior calibrations
(Supplementary \S\ref{sup:prior_sens}).
\end{minipage}
\end{table}

The architectural pattern visible across the table is consistent: the OTT-EA
FH interval lies wholly below the T-EA FH interval, but the Bayesian
upper-percentile figures are similar or higher for OTT-EA, reflecting the
architectural divergence in tail behaviour established in
Sections~\ref{subsec:dependence_results} and~\ref{sec:bayesian}.

\subsection{EA Adds a New Risk Surface}

The analysis establishes that the EA mandate creates key-material exfiltration
as a new outcome class: absent an EA key-management layer, there is no
EA-specific key material to exfiltrate. However, the policy-relevant comparison
is not between EA risk and zero baseline risk, but between EA risk and the risk
that an operator already faces without EA. An operator without EA infrastructure
already holds sensitive non-EA key material (TLS private keys, code-signing
keys, device authentication keys) and faces non-zero key-compromise risk. The
Pillar~II model estimates the non-EA baseline access-pathway compromise rate at
12.3\% annually, with the EA system adding 4.5 percentage points to access-pathway
risk and 7.3\% of EA-specific key-material exfiltration risk above the zero
baseline.

The appropriate policy question is therefore: does adding EA infrastructure
materially and irreversibly worsen an operator's already non-trivial risk
profile? The structural argument of Section~\ref{sec:bayesian} establishes that
it does, for the reasons stated there (increased attack surface, elevated
targeting, and key concentration). The quantitative question is how much worse,
and under what architectural conditions the incremental risk can be minimised.

\paragraph{One decomposition, three expressions.} The baseline
comparison appears in three forms in this paper, and they express
a single decomposition rather than three different claims. The
Pillar~I analogue rates measure the \emph{baseline}: the compromise
rate of high-value cryptographic infrastructure as a class,
predominantly without EA-specific components. The EA
\emph{increment} above that baseline is carried entirely by the
targeting premium and the Layer~IV modifiers, which is why both are
judgement-bounded and swept rather than asserted. And the
$\delta = 1$ counterfactual of Proposition~\ref{prop:dominance} is
the model-internal expression of the same split: switching the
modifiers off recovers the baseline parameterisation. The increment
is therefore not double counted, with one bounded exception: the
Stream~A Tier~1 stratum includes EA-class systems, most directly
Salt Typhoon, so a fraction of the increment is present in the base
rate. The Salt-Typhoon-removal sensitivity (Supplementary
\S\ref{sec:salt_typhoon_role}) bounds that leakage and shows the
headline materially unchanged.

\subsection{The Irreversibility Asymmetry}

This subsection states a consequence-side observation. It sits
outside the quantified probability framework and is made
qualitatively, because any complete risk--benefit determination
needs it. A structural feature of EA risk not captured by point
estimates is the fundamental asymmetry between benefits and
consequences. Benefits are
\emph{temporal and reversible}: EA interception benefits accrue during the
operational period. If the mandate is withdrawn, future benefits cease. Breach
consequences are \emph{permanent and irreversible}: once key material is
exfiltrated, all historical traffic protected under the compromised keys becomes
retrospectively decryptable. This asymmetry means that standard expected-value
maximisation, which treats gains and losses symmetrically, may systematically
underweight the breach scenario. A complete risk-benefit analysis should account
for this asymmetry explicitly. The framework provides the risk-side inputs but
not the benefit-side estimates required to complete such an analysis.

\subsection{Implications from the Bayesian Model}

The Bayesian layer's tail and dependence findings carry several implications.
First, \emph{uncertainty is irreducible in the current data environment}: any
defensible quantitative estimate will carry credible intervals spanning at
least one order of magnitude (the primary calibration's $90\%$ interval
spans more than thirty-fold). Narrow point estimates presented without
accompanying uncertainty characterisation should be viewed with scepticism.
Second, \emph{under risk-averse decision frameworks tail risk
becomes a primary design criterion}: the heavy-tailed character of
the model output, robust at policy-relevant calibrations, means EA
system designs evaluated under maximin, CVaR-weighted, or other
tail-sensitive criteria should be ranked primarily on their
worst-case behaviour, with design features that reduce tail risk
(rapid revocation, compartmentalised key custody, strong detection
and response) preferred over features that reduce expected risk
alone. Under risk-neutral expectation-based criteria, the
central-tendency comparison dominates and the architectural
ordering reverses (T-EA exceeds OTT-EA at the median). The choice
of decision framework is itself a jurisdiction- and
context-specific determination outside this paper's scope. Third,
\emph{deployment horizon is a critical variable}: cumulative risk grows
substantially with deployment horizon, so policy frameworks authorising
permanent or indefinitely-renewable EA infrastructure without periodic
reassessment may systematically underestimate long-run risk. Horizon
does not, however, discriminate between the two architectures. On the
corrected basis of Table~\ref{tab:exceedance} the maximum defensible
horizons differ by one to two years across the whole threshold range,
so a sunset provision is warranted for either design and is not a
reason to prefer one. Fourth, the
exceedance probabilities reported in \S\ref{subsec:exceedance} are
elevated across the low-threshold range under the primary prior
calibration. For illustration, $\Pr(q_1 > 1\%) = 0.988$ for T-EA and
$0.907$ for OTT-EA. These place a heavy burden of evidence on
benefit-side estimates if risk is to be deemed acceptable at whichever
threshold a policy reader selects. The choice of acceptability
threshold is itself a jurisdiction-specific determination outside this
paper's scope.

\subsection{Architectural Alternatives: Threshold Schemes}\label{sec:threshold}

Both quantified classes share the property that key material sits
under unitary operator authority. A distributed threshold scheme, in
which key material is split across $N$ independent trustees with a
$t$-of-$N$ reconstruction requirement, removes that property, and an
illustrative quantitative contrast (per-node rates borrowed from the
Pillar~III channel-aggregate projection, so order-of-magnitude only)
shows why the design class matters: a 5-of-9 architecture at
$p = 0.05$ per node yields a reconstruction probability of
$3.3 \times 10^{-5}$ under independence, rising to roughly $0.5\%$
annually once a $10\%$ common-mode failure fraction is included,
still well below the centralised projections. The full contrast,
including the historical lineage from Micali's fair cryptosystems
through verifiable partial key escrow, the $(p, \rho)$ sensitivity
table, and the four operational questions that govern deployability
(trustee independence, collusion-resistant selection,
cross-jurisdictional handling, and rotation cost), is developed in
Supplementary \S\ref{sec:supp_threshold} and
\S\ref{sec:supp_threshold_contour}. The risk reduction is
unambiguous. The operational cost of obtaining it is the empirical
question that follow-on engineering work would need to resolve.

\subsection{Hybrid Client-Side Scanning Architectures}

Hybrid client-side scanning (CSS), in which content matching runs on
user devices using operator-distributed key material, is treated
qualitatively: the Pillar~I anchors contain no incidents
architecturally matched to endpoint-distributed key compromise, so a
calibrated CSS analysis would need an empirical anchor this paper
does not have. The structural findings nonetheless transfer, and all
of them run against CSS. The dominance result extends
($\delta \geq 1$ on structural grounds); the mobile platform itself
becomes a new cross-cutting infrastructure, plausibly enlarging the
cross-cutting fraction beyond OTT-EA's; endpoint distribution adds
compromise modes outside the scenario set; and the tail-inflation
mechanism applies a fortiori. The precautionary reading is that CSS
carries at least the OTT-EA risk profile with material additional
endpoint-distribution risk. The full argument is in Supplementary
\S\ref{sec:hybrid_css}, and quantification to match the rigour
applied to T-EA and OTT-EA is identified as a research priority.

\subsection{Investment in Characterising Uncertainty}

The parameter uncertainty decomposition (Section~\ref{sec:results}) identifies
the system targeting intensity ($\delta_{\mathrm{target}}$, 25.0\% of
$\mathrm{Var}(Q_{10})$ for T-EA), the EA-specific attack-surface modifier
($\delta_{\mathrm{EA}}$, 18.7\%), and concentration
($\delta_{\mathrm{concen}}$, 17.7\%) as the parameters whose empirical
characterisation would most reduce output uncertainty. Regulatory frameworks requiring
transparency about the scale of EA deployments (number of users covered,
number of concurrent requests, and targeting intensity relative to
non-EA infrastructure) would directly reduce uncertainty in these
parameters and improve the basis for future risk assessments.

% =============================================================================
\section{Conclusion}\label{sec:conclusion}

This paper has built a structured uncertainty framework for
evaluating systemic compromise risk in lawful exceptional access
architectures. The framework is offered as a decision-support
tool, not as a forecast. No calibrated point projection of an
annual EA compromise probability is possible from the available
evidence, and the framework does not attempt one. What it does is
separate, as transparently as the evidence allows, the
architectural and structural findings that are robust to
defensible calibration choices from the specific magnitudes that
are not.

The findings themselves are stated in tiered form in the synthesis
(\S\ref{sec:synthesis}, Table~\ref{tab:readers_guide}) and are not
restated here. The structural ordering, the architectural divergence
in distribution shape, and the interval-dominance brackets carry the
policy weight, in that order of robustness.

The contribution of the paper is not a numerical estimate of EA
compromise risk. No such estimate can be validated against the
empirical record, so the framework offers none. The contribution
is a structured basis for comparative reasoning. For any specific
risk claim, the framework makes explicit which assumptions it
depends on and which it does not. This is, in our view, the most
that can responsibly be said about EA compromise risk from
current evidence. It is also, plausibly, enough to inform
deliberation in a domain where any quantitative claim must
otherwise rest on undefended judgement. Key-material exfiltration
is irreversible. Once keys leak, all previously protected
communications become decryptable. This makes the risk side
asymmetric in any complete risk--benefit determination. The
asymmetry weighs particularly heavily for OTT-EA, where the
exposed population is an order of magnitude larger
($O(10^{8})$ to $O(10^{9})$ users per major platform operator,
against $O(10^{7})$ to $O(10^{8})$ for a national-scale carrier
under T-EA); a consequence-side observation, stated qualitatively,
since exposure magnitude lies outside the quantified framework. The highest-value empirical extension is
already identified within the framework: per-incident
latent-campaign assignment in the Stream~C cohort would convert
the coupling-ratio spread from a calibration choice into a
measured quantity
(\S\ref{subsec:xcut_coupling_limitation}), removing the last
unmeasured parameter from the OTT-EA analysis and tightening the
findings this paper can only bracket.

% =============================================================================
\section*{Acknowledgements}

The author thanks colleagues at the University of Surrey for
helpful discussions during the development of this work.

\section*{Funding}

No external funding was received for this work.

\section*{Conflicts of Interest}

The author declares no conflicts of interest.

\section*{Data Availability}

All code and data required to reproduce the analyses in this paper
are publicly archived at
\url{https://doi.org/10.5281/zenodo.22066385}. The archive
includes: (i)~the Stream~A, Stream~B and Stream~C incident
classification tables as CSV, with per-incident tier assignments and,
for Stream~C, the $K$/$D$/$J$/$X$ classifications from which the
cross-cutting fraction $\hat p_X = 7/14$ is computed; (ii)~the
Pillar~II Monte Carlo engine (200{,}000 iterations); (iii)~the
Bayesian layer sampler (8{,}000 replicates for headline values,
4{,}000 for sensitivity analyses); (iv)~the GPD tail-fitting and
bootstrap-CI scripts; (v)~the Pillar~I empirical analyses for all
three streams; (vi)~the prior-sensitivity sweep and the
$\gamma_{\mathrm{X}}/\gamma$ coupling sweep; (vii)~the
threshold-scheme reconstruction calculator; and (viii)~all
figure-generation code.

Four scripts added in this version support the corrections described
in \S\ref{subsec:tail_algebra} and Supplementary
\S\ref{sec:supp_modifier_priors}. Two are assertions rather than
analyses, and fail rather than print if the archive and the paper
disagree: \texttt{prior\_specification\_check.py} checks the
implemented modifier priors against the parameter table of
Supplementary \S\ref{sec:supp_modifier_priors}, and
\texttt{streamC\_classification.py} recomputes every row of
Supplementary Table~S5 and the value of $\hat p_X$ from the deposited
CSV. The remaining two are diagnostics:
\texttt{tail\_index\_diagnostic.py} measures the ratio of fitted
generalised-Pareto shapes against the coupling ratio, and
\texttt{moment\_condition.py} documents the divergence of
$\mathbb{E}[e^{\gamma_{\mathrm{X}} Z}]$ for
$\gamma_{\mathrm{X}} \geq \beta_Z$ together with the
bounded-amplification robustness check.

The orchestrator script \texttt{run\_all.py} runs all twenty-six
scripts and reproduces every numerical claim and every figure in the
paper end-to-end. All scripts use seed $2024$ unless explicitly noted
otherwise in the source.

% =============================================================================

% =============================================================================
% ======================  SUPPLEMENTARY MATERIAL  =============================
% Merged into this file for arXiv. The two documents were previously compiled
% separately and linked with xr-hyper; arXiv processes top-level files in
% isolation, so the .aux files needed by xr never coexist and every
% cross-document reference resolves to "??". Merging removes the dependency.
% =============================================================================
\clearpage
\setcounter{section}{0}
\setcounter{table}{0}
\setcounter{figure}{0}
\setcounter{equation}{0}
\renewcommand{\thesection}{S\arabic{section}}
\renewcommand{\thesubsection}{S\arabic{section}.\arabic{subsection}}
\renewcommand{\thesubsubsection}{S\arabic{section}.\arabic{subsection}.\arabic{subsubsection}}
\renewcommand{\thetable}{S\arabic{table}}
\renewcommand{\thefigure}{S\arabic{figure}}
\renewcommand{\theequation}{S\arabic{equation}}

\part*{Supplementary Materials}
\addcontentsline{toc}{section}{Supplementary Materials}

This supplementary material accompanies the main paper above. Section,
table, equation and figure numbers prefixed with \emph{S} refer to this
supplement; all other references are to the main paper. Bibliographic
citations are shared with the main reference list.

% =============================================================================
\section{Notation, Conventions, and Reproducibility}\label{sec:supp_notation}

\subsection{Notation summary}

For convenience, the architectural and probabilistic notation used
throughout is collected here. Subscripts $\mathrm{T}$ and $\mathrm{OTT}$
denote T-EA and OTT-EA configurations respectively where disambiguation
is required.

\begin{tabular}{@{}lp{0.75\textwidth}@{}}
\toprule
$q_1$ & Annual (year-1) compromise probability of the relevant operational
        outcome (key-material compromise for T-EA; joint key-and-data
        compromise for OTT-EA). \\
$Q_T$ & Cumulative compromise probability over $T$ years; $Q_T = 1 -
        \prod_{t=1}^{T}(1 - q_t)$ for the per-year sequence. \\
$\Lambda^k_T$, $\Lambda^{op}_{\mathrm{OTT}}$ & Pillar~II key-compromise
        (T-EA) and operational-compromise (OTT-EA) annual probability under
        independence among scenarios. \\
$P_3 \cdot P_4$ & OTT-EA segregation factor: probability that an
        attacker who has compromised key material can also reach the
        encrypted-data store before mitigation. Set at $0.30$ (primary). \\
$\xi_c$ & Per-channel cross-cutting fraction: proportion of channel-$c$
        compromises that traverse infrastructure shared between key
        custody and data planes. Used in OTT-EA Pillar~III scaling. \\
$\rho$ & Per-channel intra-channel dependence parameter (channel-specific
        bunching of compromises within a campaign). \\
$Z(t)$ & Latent annual campaign intensity, $Z \sim \mathrm{Gamma}(\alpha=0.6,\beta=2.0)$
        in the primary calibration. \\
$\gamma$ & T-EA campaign amplification factor; primary $\gamma = 1.0$. \\
$\gamma_{\mathrm{X}}$ & OTT-EA cross-cutting amplification factor;
        $\gamma_{\mathrm{X}}/\gamma - 1 \sim \mathrm{LogNormal}(0,\sigma^2)$
        with $\sigma = 0.40$ (log-space spread), giving prior median
        $\gamma_{\mathrm{X}}/\gamma = 2$ and constraint
        $\gamma_{\mathrm{X}}/\gamma \geq 1$ by construction. See
        \S\ref{subsec:dependence} of the main paper. \\
$\delta_{\mathrm{EA}}, \delta_{\mathrm{target}}, \delta_{\mathrm{concen}}$
        & EA targeting/attractiveness/concentration multipliers, all
        $\geq 1$ by stochastic-dominance constraint. \\
\bottomrule
\end{tabular}

\subsection{Reproducibility}

The Monte Carlo simulation used to produce all numerical results in
Sections~\ref{sec:results} and~\ref{sec:synthesis} of the main paper is implemented in Python~3.11 using
NumPy~$\geq 1.26$ and SciPy~$\geq 1.11$. The canonical implementation is
\texttt{bayesian\_mc\_canonical.py}. The closed-form Pillar~II/III
verification is \texttt{p2\_p3\_closed\_form.py}; figure generation is
\texttt{make\_figures.py}. All scripts are deposited alongside this paper
and use a fixed seed of $2024$ for reproducibility, with a four-seed
stability check at $\{2024, 3024, 4024, 5024\}$ reported in
\S\ref{sec:simulation_procedure} of the main paper. Sample size is $R = 8{,}000$ replicates
unless otherwise stated.

% =============================================================================
% =============================================================================
\section{Stream~A: Government High-Security Systems Cohort}\label{sec:supp_streamA}

This section provides the incident-level classification and the
denominator construction for the Stream~A cohort; inclusion criteria
and rate estimation are in \S\ref{sec:pillar1_streamA} of the main
paper.

\subsection{Denominator construction}

The cohort is constructed from the union of: US National Security
Systems governed under the Committee on National Security Systems
framework, including separated cryptographic domains across classified
networks, signals-intelligence infrastructure, and FIPS-certified
high-assurance government systems
\cite{cnss_policy2024,nsa_cnsa2022,nsa_qkd2021} ($\sim$14); UK and other
Five Eyes sovereign cryptographic infrastructures, including
NCSC-governed UK systems and equivalent Canadian, Australian, and New
Zealand national communications-security authorities
\cite{ncsc_advcrypto2025} ($\sim$10); European Union sovereign
cryptographic frameworks, including those administered by ANSSI
(France), BSI (Germany), and the equivalent national authorities
documented in the ENISA framework overview
\cite{enisa_crypto2014,anssi_crypto2023,bsi_crypto2024} ($\sim$12);
NATO collective cryptographic infrastructure and national
communications-security authorities listed in the NATO Information
Assurance Product Catalogue \cite{nato_niapc2024} ($\sim$8); and
CALEA-mandated \cite{calea1994} and ETSI TS~103~221 / 3GPP-equivalent
lawful-intercept infrastructures at major Western fixed-line and
mobile telecommunications carriers
\cite{etsi_ts_103_221,etsi_li2024,tgpp_li2024} ($\sim$6).

\subsection{Incident classification}

\begin{table}[!htbp]
\centering
\small
\caption{Stream~A incidents (2011--2024) with tier classification.
Tier~1 = direct EA architectural analogue: incidents that compromised
either the cryptographic primitives of an EA-class system or its
operational infrastructure, both of which constitute direct compromise
of an EA architecture regardless of which component was breached.
Tier~2 = compromise of cryptographic distribution or build-chain
infrastructure serving the same high-assurance cohort but without
direct compromise of EA-architecture systems themselves. Tier~3 =
contextual analogue (insider exfiltration of cryptographic tooling, or
compromise of equivalent-target-class government systems with mechanism
not architecturally specific to EA).}
\label{tab:streamA_incidents}
\begin{tabularx}{\textwidth}{@{}p{0.20\textwidth}cc X@{}}
\toprule
\textbf{Incident} & \textbf{Year} & \textbf{Tier} & \textbf{Justification} \\
\midrule
RSA SecurID~\citep{coviello2011} & 2011 & 1 &
Theft of SecurID seed-record key material enabling subsequent
compromise of defence-contractor authentication. Direct cryptographic
key-material exfiltration. \\
Equation Group / Shadow Brokers~\citep{nakashima2016shadow} & 2016 & 1 &
Exfiltration and public release of NSA cryptographic exploitation
tooling including implants targeting cryptographic devices. Direct
compromise of signals-intelligence cryptographic infrastructure. \\
Salt Typhoon~\citep{cisa_salt} & 2024 & 1 &
Adversarial compromise of US telecommunications carriers'
CALEA-mandated lawful-intercept infrastructure at the orchestration
layer, accessing the government's active wiretap target list. Direct
compromise of T-EA-architecture exceptional-access infrastructure. \\
SolarWinds~\citep{fireeye2020} & 2020 & 2 &
Adversarial compromise of code-signing infrastructure enabling
trust-chain abuse against federal cryptographic distribution. Targets
the same high-assurance cohort via build-chain rather than direct
key-material or infrastructure compromise. \\
CIA Vault~7~\citep{wikileaks_vault7} & 2017 & 3 &
Insider exfiltration of CIA cyber-tooling repository including
cryptographic-implant frameworks. Contextual: tooling was exposed but
the mechanism was personnel/insider rather than direct system
compromise of EA-architecture infrastructure. \\
OPM~\citep{opm2015} & 2015 & 3 &
Exfiltration of US federal personnel records including
security-clearance background-investigation data. Compromise of
equivalent-target-class government system; mechanism not architecturally
specific to EA. \\
\bottomrule
\end{tabularx}
\end{table}

% =============================================================================
\section{Stream~B: Certificate Authority Cohort}\label{sec:supp_streamB}

This section documents the Stream~B incident base in full, with the
inclusion criteria applied and a sensitivity analysis under stricter
inclusion. Stream~B comprises 11 publicly documented Certificate Authority
private-key compromise or operational-failure incidents over the period
2006--2024, drawn from the WebPKI/Mozilla incident archive and the
Common~CA Database (CCADB).

\subsection{Inclusion criteria}

An incident is included in Stream~B if and only if it satisfies all of
the following:
\begin{enumerate}
  \item \textbf{B1.} The affected entity was a publicly trusted
    Certificate Authority root or intermediate operator at the time of
    the incident, listed in at least one major root program (Mozilla,
    Microsoft, Apple, or Google).
  \item \textbf{B2.} The incident involved either (a) compromise of
    private key material, (b) issuance of fraudulent certificates
    enabled by operational compromise, or (c) operational failure of
    sufficient severity to constitute a documented private-key trust
    breach (e.g., distrust events).
  \item \textbf{B3.} The incident is publicly documented in primary
    sources: CA incident reports filed with the relevant root programs,
    formal distrust announcements, or peer-reviewed analyses.
  \item \textbf{B4.} The incident occurred within the 2006--2024
    observation window.
\end{enumerate}

\subsection{Cohort definition and exposure}

The relevant denominator is the count of publicly trusted CA operators
weighted by years of operation in the window. We estimate $N_B = 130$
distinct CA operators (organisations operating root or major
intermediate authorities) active during 2006--2024. Adjusting for
windows of operation that did not span the full 19 years, the exposure
is $E_B = 2{,}470$ CA-operator-years. This estimate aligns with
historical CCADB cohort tracking.

\subsection{Incident list}

\begin{longtable}{@{}llp{0.55\textwidth}l@{}}
\caption{Stream~B Certificate Authority private-key/operational compromise
incidents, 2006--2024. Tier~1 = adversarial private-key compromise; Tier~2
= adversarial fraudulent issuance via operational compromise; Tier~3 =
operational/governance failure leading to documented trust breach but
without confirmed external adversarial compromise. The strict-inclusion
sensitivity (Section~S2.4) restricts to Tier~1 and Tier~2 only.}
\label{tab:streamB_incidents}\\
\toprule
\textbf{Year} & \textbf{CA} & \textbf{Description} & \textbf{Tier} \\
\midrule
\endfirsthead
\toprule
\textbf{Year} & \textbf{CA} & \textbf{Description} & \textbf{Tier} \\
\midrule
\endhead
2008 & Comodo (RA) & Reseller-mediated mis-issuance ahead of formal
  policy hardening; documented in Mozilla incident archive & 2 \\
2011 & Comodo & Comodogate: nine fraudulent certificates issued through
  compromised reseller (`ComodoHacker') & 2 \\
2011 & DigiNotar & Full operational compromise; ${\sim}500$ fraudulent
  certificates including \texttt{*.google.com}; subsequent distrust
  and corporate dissolution & 1 \\
2012 & Trustwave & Issuance of an MITM-capable subordinate CA for an
  enterprise customer; subordinate revoked under pressure & 3 \\
2013 & TURKTRUST & Mis-issuance of subordinate CA certificates that were
  used for MITM; documented as governance failure & 3 \\
2013 & ANSSI/IGC-A & Mis-issuance of subordinate CA used for MITM on a
  French government network; subsequent constraint of ANSSI root & 3 \\
2015 & CNNIC/MCS Holdings & Mis-issuance of subordinate that issued
  fraudulent Google certificates; CNNIC distrusted & 2 \\
2016 & WoSign/StartCom & Backdating of certificates and other
  operational failures; eventual full distrust by major root programs & 3 \\
2017 & Symantec (multiple) & Long-running mis-issuance pattern; phased
  distrust by Google, Mozilla, others, leading to managed transition & 3 \\
2018 & Trustico/Comodo & Bulk private-key disclosure in resale
  dispute; mass revocation event & 2 \\
2024 & Entrust & Multi-year compliance-failure pattern; distrust
  decisions by Chrome and Mozilla announced 2024 & 3 \\
\bottomrule
\end{longtable}

\subsection{Rate estimation and sensitivity}

\textbf{Pooled MLE rate.} With $k = 11$ incidents in $E_B = 2{,}470$
CA-operator-years, the maximum-likelihood estimate is
$\hat{\lambda}_B = 11/2{,}470 = 0.445\%$ per CA-operator-year. The
Garwood exact 95\% Poisson confidence interval is $[0.222\%, 0.797\%]$.

\textbf{Strict-inclusion sensitivity (Tier~1 and Tier~2 only).}
Restricting to Tier~1 (1 incident: DigiNotar) and Tier~2 (4 incidents:
Comodo~2008, Comodo~2011, CNNIC~2015, Trustico~2018) gives $k = 5$ in
$E_B = 2{,}470$, yielding $\hat{\lambda}_B^{\mathrm{strict}} = 0.202\%$
per CA-operator-year (Garwood 95\% CI $[0.066\%, 0.473\%]$). This
strict criterion excludes incidents where the trust breach was a
governance/operational failure rather than confirmed external
adversarial compromise. The strict rate is therefore a conservative
floor for the adversarial-compromise interpretation of Stream~B.

\textbf{Pooling with Stream~A.} Pooling Stream~A ($k_A = 6$ in
$E_A = 650$ system-years; pooled rate $0.92\%$/yr) and Stream~B yields
a combined rate of $(11+6)/(2{,}470 + 650) = 0.545\%$/yr (the
``two-stream pooled rate''). Under strict Stream~B inclusion, the pooled
rate is $(5+6)/3{,}120 = 0.353\%$/yr, projecting through a
$10\times$ EA targeting premium to $3.5\%$ rather than $5.45\%$.

\textbf{Architectural interpretation.} Stream~B sits architecturally
between T-EA and OTT-EA. CA private-key compromise is a centralised,
single-operator-key event (T-EA-leaning), but the operational compromise
patterns frequently traverse cross-cutting infrastructure (resellers,
subordinate-issuance APIs) that resemble OTT-EA cross-cutting. Stream~B
is therefore reported as an architectural bridge rather than a primary
anchor for either class.

% =============================================================================
\section{Stream~C: Platform Incident Base and Operator Inventory}\label{sec:supp_streamC}

This section provides the classification table and per-incident
documentation for the Stream~C cohort (Table~\ref{tab:streamC}
below), and the operator inventory underpinning $N_C = 75$.

\begin{table}[!htbp]
\centering
\small
\caption{Stream~C platform-level key-or-data compromise incidents,
2018--2024. \textbf{Tier 0} restricts to incidents that are
\emph{architecturally unambiguous} OTT-EA analogues: the operator was
a documented custodian of cryptographic key material or 2FA seeds at
rest, and the compromise involved exfiltration of that material
together with the segregated user data it protected. \textbf{Tier 1}
adds incidents where the operator's key-custody role required
interpretation (directory-service infrastructure, supply-chain
mediated key acquisition). \textbf{Tier 2} adds looser architectural
fits where the OTT-EA mapping is partial. \textbf{Tier 3} is included
for completeness. These incidents enter the sensitivity analyses
(Section~\ref{sec:pillar1_streamC}) but not the headline rate. Cloudflare~2023, classified $P/P/P/P$,
is documented in the near-miss appendix (Supplementary
\S\ref{sec:supp_streamC_nearmiss}) and is excluded from the cohort on the basis that no
customer data or service keys were accessed. Snowflake~2024 and
AT\&T~2024 are documented as a single combined entry, since the AT\&T
compromise was a downstream consequence of the Snowflake-customer
credential abuse rather than an architecturally independent event. Columns: $K$ = key
material obtained; $D$ = encrypted user data obtained; $J$ = joint
key-and-data compromise operationally sufficient; $X$ = cross-cutting
attack path traversing shared infrastructure. Y = yes (documented),
P = partial or interpretive, N = no.}
\label{tab:streamC}
\begin{tabularx}{\textwidth}{@{}llXccccc@{}}
\toprule
\textbf{Year} & \textbf{Operator} & \textbf{Incident} &
\textbf{Tier} & \textbf{$K$} & \textbf{$D$} & \textbf{$J$} & \textbf{$X$} \\
\midrule
2022 & LastPass & Two-stage vault + source code
  \citep{lastpass2022} & 0 & Y & Y & Y & N \\
2022 & Twilio & Phishing-led $\rightarrow$ Authy 2FA seeds
  \citep{twilio2022} & 0 & Y & Y & Y & Y \\
2023 & Microsoft & Storm-0558 consumer signing key
  \citep{storm0558_msrc} & 0 & Y & Y & Y & Y \\
\midrule
2021 & Codecov & Bash uploader supply-chain
  \citep{codecov2021} & 1 & Y & Y & Y & Y \\
2023 & JumpCloud & DPRK APT directory access
  \citep{jumpcloud2023} & 1 & Y & Y & Y & Y \\
\midrule
2021 & Microsoft & Exchange ProxyLogon (HAFNIUM)
  \citep{microsoft_proxylogon2021} & 2 & P & Y & P & Y \\
2022 & Okta & Sitel/Lapsus\$ support engineer
  \citep{okta2022} & 2 & P & Y & P & N \\
2022 & GoDaddy & Source code + managed hosting
  \citep{godaddy2023} & 2 & P & Y & P & N \\
2023 & Okta & HAR file $\rightarrow$ 134 customers
  \citep{okta2023} & 2 & Y & P & P & N \\
2023 & CircleCI & Stolen engineer credentials
  \citep{circleci2023} & 2 & Y & P & P & N \\
2023 & MOVEit & CL0P SQL injection, ${\sim}2{,}000$ orgs
  \citep{moveit2023} & 2 & Y & Y & Y & Y \\
2024 & Microsoft & Midnight Blizzard source code/emails
  \citep{midnight_blizzard2024} & 2 & P & Y & P & Y \\
\midrule
2022 & Slack & Stolen tokens, source code
  \citep{slack2022} & 3 & Y & P & N & N \\
2024 & Snowflake/AT\&T & Customer credential abuse, AT\&T downstream
  \citep{snowflake2024,att2024} & 3 & Y & Y & Y & N \\
\bottomrule
\end{tabularx}
\end{table}

\subsection{Per-incident documentation}

The classification dimensions are: $K$ = key material obtained;
$D$ = encrypted user data obtained; $J$ = joint key-and-data compromise
operationally sufficient (the strict OTT-EA analogue); $X$ = cross-cutting
attack path traversing infrastructure shared between key and data
planes. Y/P/N denote yes (documented), partial/interpretive, no.
The tier classification reflects strength of architectural fit to the
OTT-EA reference class:
\textbf{Tier~0} requires the operator to be a documented custodian of
cryptographic key material or 2FA seeds at rest, with the compromise
involving exfiltration of that material together with the segregated
user data it protected;
\textbf{Tier~1} relaxes the custody requirement to interpretive cases
(directory-service infrastructure, supply-chain mediated key
acquisition);
\textbf{Tier~2} permits looser architectural fits where the OTT-EA
mapping is partial; \textbf{Tier~3} is included only for sensitivity.
The Cloudflare 2023 incident, classified $P/P/P/P$, is documented in
the near-miss appendix (\S\ref{sec:supp_streamC_nearmiss}) and is not
counted in any rate calculation. The Snowflake 2024 and AT\&T 2024
incidents are treated as a single entry, since the AT\&T
compromise was a downstream consequence of the Snowflake-customer
credential abuse and the two are not architecturally independent
events.

\subsubsection*{Tier 0 incidents (architecturally unambiguous OTT-EA analogues)}

\paragraph{2022 -- LastPass (Tier~0; $K=$Y, $D=$Y, $J=$Y, $X=$N).}
Two-stage attack: source-code theft followed by exfiltration of
encrypted vault data. Key custody (master-password-derived keys held
client-side) and encrypted data (vaults) compromised together. $X=$N
because the operator-side architecture did not introduce cross-cutting
compromise paths beyond the operator boundary.

\paragraph{2022 -- Twilio (Tier~0; $K=$Y, $D=$Y, $J=$Y, $X=$Y).}
Phishing-led credential compromise leading to access to the Authy 2FA
backend. Authentication seeds (cryptographic key material in the
strict sense) obtained; downstream impact across multiple customer
dependencies including Cloudflare. $X=$Y because Authy is
cross-cutting to many other operator infrastructures.

\paragraph{2023 -- Microsoft Storm-0558 (Tier~0; $K=$Y, $D=$Y, $J=$Y, $X=$Y).}
Compromise of a consumer signing key enabling forged tokens for
enterprise email. Both key custody (the consumer signing key) and
encrypted/authenticated data (token-validated email) compromised in
the strict joint sense. $X=$Y because the consumer key reached the
enterprise validation path, the canonical cross-cutting failure mode.
This is the closest historical analogue to the OTT-EA threat model.

\subsubsection*{Tier 1 incidents (interpretive key custody)}

\paragraph{2021 -- Codecov (Tier~1; $K=$Y, $D=$Y, $J=$Y, $X=$Y).}
Bash uploader script tampered, exfiltrating environment-variable
credentials of Codecov customers. Both the Codecov advisory and CISA
Alert AA21-116A confirm credential and key-material compromise across
the customer base. The supply-chain pathway through customer CI
infrastructure is the cross-cutting compromise pattern $X$.
The operator's role as key custodian is supply-chain-mediated rather
than direct, and the architectural mapping to OTT-EA key custody
requires interpretation, so Tier~1 is the appropriate tier.

\paragraph{2023 -- JumpCloud (Tier~1; $K=$Y, $D=$Y, $J=$Y, $X=$Y).}
DPRK APT compromise of JumpCloud directory-service infrastructure
providing cryptographic control over downstream cryptocurrency-customer
environments. Joint key-and-data compromise via the directory-service
infrastructure, with $X=$Y reflecting cross-cutting reach into customer
environments. The $K=$Y classification rests on directory-service
admin access being equivalent to key-custody compromise for downstream
purposes. The architectural mapping requires interpretation, so Tier~1
is appropriate rather than Tier~0.

\subsubsection*{Tier 2 incidents (looser architectural fit)}

\paragraph{2021 -- Microsoft Exchange ProxyLogon (Tier~2; $K=$P, $D=$Y, $J=$P, $X=$Y).}
Mass exploitation of CVE-2021-26855 et al.\ on on-premises Exchange.
$D$ is direct (mailbox content). $K$ is partial: web shells provided
arbitrary code execution including key-material access where it
existed; not all victims experienced key compromise. $X$ is the shared
Exchange platform reaching multiple key-management endpoints.

\paragraph{2022 -- Okta (Sitel/Lapsus\$) (Tier~2; $K=$P, $D=$Y, $J=$P, $X=$N).}
Compromise of a third-party support engineer endpoint at Sitel,
providing constrained access to Okta administrative tooling for ${\sim}25$
minutes. $D$ documented (customer data viewable). $K$ partial (limited
session-token-equivalent access; no cryptographic root key compromise).

\paragraph{2022 -- GoDaddy (Tier~2; $K=$P, $D=$Y, $J=$P, $X=$N).}
Multi-year intrusion into managed-hosting environment with source-code
theft and customer-data access. Reported in 2023 SEC 8-K filing. The
$K$ classification is partial because the cryptographic mapping to
OTT-EA key custody is thin; included as Tier~2 for completeness.

\paragraph{2023 -- Okta (HAR file) (Tier~2; $K=$Y, $D=$P, $J=$P, $X=$N).}
HAR file containing session tokens leaked through the support portal,
affecting 134 customers. $K=$Y for session-token-equivalent
authentication material; $D$ partial (limited customer-data exposure
through session-token misuse).

\paragraph{2023 -- CircleCI (Tier~2; $K=$Y, $D=$P, $J=$P, $X=$N).}
Stolen engineer credentials with malware-mediated access to production
secrets. $K=$Y (customer secrets/tokens); $D$ partial.

\paragraph{2023 -- MOVEit Transfer (CL0P) (Tier~2; $K=$Y, $D=$Y, $J=$Y, $X=$Y).}
SQL injection in the MOVEit Transfer file-transfer product, exploited
by CL0P against ${\sim}2{,}000$ organisations. $J=$Y because individual
customer instances suffered joint key/data exposure. $X=$Y at the
ecosystem level: the MOVEit platform was cross-cutting to many
customer key/data planes simultaneously. Tier~2 rather than Tier~0
because MOVEit is a managed file-transfer service rather than a
dedicated key custodian. The architectural mapping to OTT-EA is
partial.

\paragraph{2024 -- Microsoft Midnight Blizzard (Tier~2; $K=$P, $D=$Y, $J=$P, $X=$Y).}
SVR-attributed compromise of Microsoft corporate email and source-code
repositories. $K$ partial ($D$ confirmed; key-material access debated
in public reporting). $X=$Y because the corporate infrastructure
reached source-code repositories with downstream signing-key
implications. Classified as Tier~2: the $J=P$ classification (partial
joint compromise) is incompatible with the architectural-strength
criterion for Tier~1, which requires documented joint key-and-data
compromise.

\subsubsection*{Tier 3 incidents (sensitivity only)}

\paragraph{2022 -- Slack (Tier~3; $K=$Y, $D=$P, $J=$N, $X=$N).}
Stolen tokens and source code; documented key-material exposure but
no joint customer-data compromise. Included as a sensitivity analogue
at Tier~3.

\paragraph{2024 -- Snowflake/AT\&T (Tier~3; $K=$Y, $D=$Y, $J=$Y, $X=$N).}
Customer-credential abuse against ${\sim}165$ Snowflake customer
tenants (mediated by stolen credentials, often without MFA). The
2024 AT\&T call-records breach affecting 110~million customers was a
downstream consequence of the same Snowflake-customer credential
compromise and is therefore treated as part of this single entry rather than
counted as architecturally independent. $J=$Y at the per-tenant level.
$X=$N because the operator infrastructure (Snowflake itself) was not
compromised; included at Tier~3 as a customer-environment joint-
compromise pattern rather than a Snowflake-key-custody compromise.

\subsection{Near-miss exclusion: Cloudflare~2023}\label{sec:supp_streamC_nearmiss}

The following incident was considered for inclusion in the Stream~C
cohort but is documented here as a near-miss rather than counted in
any rate calculation, on the basis that $D2$ (compromise involved
unauthorised access to or exfiltration from key custody, identity, or
encrypted-data-handling functions) is not satisfied in a defensible
operational sense.

\paragraph{2023 -- Cloudflare (Atlassian-mediated APT; $K=$P, $D=$P, $J=$P, $X=$P).}
A Russian-state-actor lateral movement from a previously-compromised
Atlassian instance into Cloudflare engineering systems. Cloudflare's
public incident report indicates the activity was detected within
hours, no customer data was accessed, and no service keys were
compromised. The incident demonstrates that cross-cutting
infrastructure compromise can route an APT toward key-custody assets,
but in this case the compromise did not \emph{traverse} that boundary.
The $P/P/P/P$ classification reflects this: each dimension shows
partial-or-interpretive evidence of the relevant compromise pathway,
but no dimension is documented at $Y$. Including it as $P/P/P/P$ in
the rate calculation would assign positive analogical weight to an
incident that did not actually involve operational compromise of the
relevant assets. We exclude it instead.

A defensible alternative treatment would assign near-misses an
analogical-fit weight strictly between zero and one and propagate the
weighted count into the rate estimator. We do not adopt this
approach because the weight assignment is itself contestable and a
weight-zero treatment is the conservative choice for a rate-estimation
purpose: it biases the headline rate downward.

\subsection{Operator inventory (\texorpdfstring{$N_C = 75$}{N\_C = 75})}

The denominator estimate $N_C = 75$ aggregates the following operator
classes active during 2018--2024 and meeting the D1 inclusion criterion
(major platform providers holding cryptographic key material,
authentication infrastructure, or encrypted user data at scale):

\begin{itemize}
  \item Top-tier cloud infrastructure providers (${\sim}8$): AWS, Azure,
    GCP, Oracle Cloud, IBM Cloud, Alibaba Cloud, OCI, plus regional tier-1.
  \item Major SaaS and productivity platforms (${\sim}15$): Microsoft
    365, Google Workspace, Salesforce, ServiceNow, Workday, SAP,
    Oracle Cloud Apps, Atlassian, Zoom, Box, Dropbox, Adobe, Notion,
    Asana, Monday.
  \item Identity and authentication providers (${\sim}8$): Okta,
    Microsoft Entra, Auth0 (now Okta), Ping, ForgeRock (now Ping),
    OneLogin, Duo (Cisco), JumpCloud.
  \item Password and secret managers (${\sim}6$): 1Password, LastPass,
    Bitwarden, Dashlane, Keeper, RoboForm.
  \item Code hosting and CI/CD platforms (${\sim}6$): GitHub, GitLab,
    Bitbucket, CircleCI, Jenkins-Cloud, Travis CI.
  \item Content delivery and edge providers (${\sim}5$): Cloudflare,
    Fastly, Akamai, Vercel, Netlify.
  \item Messaging and communications platforms (${\sim}8$): WhatsApp,
    iMessage/Apple, Signal, Telegram, Wire, Threema, Slack, Microsoft Teams.
  \item Other globally significant platform operators (${\sim}19$):
    including major email providers, file-transfer specialists (MOVEit,
    GoAnywhere), data warehouses (Snowflake, Databricks), payment
    platforms, regional incumbents, major DNS providers, etc.
\end{itemize}

The total is $\approx 75$. Sensitivity is reported across $N_C \in
\{50, 100\}$ in \S\ref{sec:pillar1_streamC} of the main paper.

\subsection{Stratified rate calculation}

With $N_C = 75$, $T = 7$ years, $E_C = 525$ operator-years, and
14 included incidents (Cloudflare is classified in the near-miss
appendix \S\ref{sec:supp_streamC_nearmiss}. The 2024 Snowflake and AT\&T
events are treated as one incident):

\begin{tabular}{@{}lccc@{}}
\toprule
\textbf{Stratification} & $k$ & $\hat{\lambda}_C$ & 95\% CI \\
\midrule
\multicolumn{4}{l}{\emph{Architectural-fit stratification}} \\
Tier~0 only (gold standard)        &  3 & $0.57\%$/yr & $[0.12\%, 1.67\%]$ \\
Tier~0 + Tier~1                    &  5 & $0.95\%$/yr & $[0.31\%, 2.22\%]$ \\
Tier~0 + Tier~1 + Tier~2           & 12 & $2.29\%$/yr & $[1.18\%, 4.00\%]$ \\
All tiers (incl.\ Tier~3)          & 14 & $2.67\%$/yr & $[1.46\%, 4.48\%]$ \\
\midrule
\multicolumn{4}{l}{\emph{Joint-compromise stratification ($J$)}} \\
$J=$Y strict, all tiers            &  7 & $1.33\%$/yr & $[0.54\%, 2.75\%]$ \\
$J=$Y strict, Tier~0+1 \textbf{(headline)} & 5 & $0.95\%$/yr & $[0.31\%, 2.22\%]$ \\
$J\in\{$Y$,$P$\}$ inclusive         & 13 & $2.48\%$/yr & $[1.32\%, 4.23\%]$ \\
\midrule
\multicolumn{4}{l}{\emph{Robustness checks}} \\
$J=$Y, all tiers, no Microsoft     &  6 & $1.14\%$/yr & $[0.42\%, 2.49\%]$ \\
$J=$Y, Tier~0+1, no Microsoft      &  4 & $0.76\%$/yr & $[0.21\%, 1.95\%]$ \\
$J=$Y all tiers, $N_C=50$          &  7 & $2.00\%$/yr & $[0.80\%, 4.12\%]$ \\
$J=$Y all tiers, $N_C=100$         &  7 & $1.00\%$/yr & $[0.40\%, 2.06\%]$ \\
\bottomrule
\end{tabular}

The headline anchor cited in the main paper (Section~5.3) is the
architectural-fit-restricted strict-joint rate of $0.95\%$/yr
(Tier~0+1, $J=$Y, $k=5$): incidents where the operator's key-custody
role is documented (Tier~0 or 1) and the compromise involved joint
key-and-data acquisition at a level operationally sufficient to map
onto OTT-EA decryption. The Tier~0 row provides a very-conservative
lower bracket of $0.57\%$/yr; the all-tiers $J=$Y row gives
$1.33\%$/yr as a defensible upper bracket.
The Garwood exact 95\% Poisson interval bounds the empirical content
of Stream~C at each stratification level.

% =============================================================================
\section{Architecture-Conditional Structural Adjustment Dimensions}\label{sec:supp_structural}

When transferring rate estimates from non-EA analogue populations
(Streams A, B, C) to EA populations, structural adjustments are required
on dimensions where the analogue and target populations differ. This
section enumerates the seven adjustment dimensions used in the paper's
domain-transfer arguments, classified by whether they are (a) structurally
identical between architectures (apply to both T-EA and OTT-EA), (b)
architecture-conditional (apply differently), or (c) class-specific.

\subsection{The seven adjustment dimensions}

\textbf{D1.\ Targeting intensity.} The EA-target population attracts
elevated nation-state and organised-crime targeting relative to non-EA
analogues. Multiplier: $\delta_{\mathrm{target}} \in [3, 10]$; primary $5$.
\emph{Identical across architectures} (the targeting-intensity argument
applies equally to centralised key custody and platform-mediated key
custody).

\textbf{D2.\ EA-specific attack-surface.} EA introduces interfaces
(request validation, audit logs, key-management APIs) that do not exist
in non-EA analogues. Multiplier: $\delta_{\mathrm{EA}} \in [1, 3]$;
primary $1.5$. \emph{Architecture-conditional}: the surface is concentrated
at the request endpoint for T-EA; for OTT-EA, it is split between
operator endpoint and (where present) hybrid endpoint-distribution
mechanisms (CSS scenarios). \S\ref{sec:hybrid_css} discusses
the CSS-specific extension.

\textbf{D3.\ Key-material concentration.} EA increases the concentration
of high-value key material at single targets. Multiplier:
$\delta_{\mathrm{concen}} \in [1, 4]$; primary $1.5$.
\emph{Architecture-conditional}: full concentration for T-EA (single
operator holds keys for all jurisdictional users); reduced concentration
for OTT-EA (key custody at operator, but data-plane segregation reduces
the joint-compromise concentration by $P_3 \cdot P_4 = 0.30$).

\textbf{D4.\ Operator best-practice baseline.} Streams~A, B, C are
populated by operators with mature security practice. The EA-target
population may include less well-resourced operators. \emph{Identical
across architectures}: this is the weakest-operator problem
(\S\ref{sec:operator_heterogeneity}) and applies symmetrically.

\textbf{D5.\ Detection and response capability.} The detection envelope
of EA-instrumented systems is open empirically. \emph{Architecture-conditional}:
T-EA systems can in principle detect lawful-intercept abuse via internal
audit; OTT-EA systems may have weaker visibility into key-material
exfiltration than into data-plane compromise.

\textbf{D6.\ Cross-cutting infrastructure exposure.} The proportion of
campaigns traversing infrastructure shared between key and data planes.
Parameter: $\xi_c$ (channel-specific). \emph{Class-specific to OTT-EA}:
not applicable in the T-EA configuration, where keys and data are
co-located by design.

\textbf{D7.\ Segregation gain.} The factor by which architectural
separation between key custody and data plane reduces joint-compromise
probability. Parameter: $P_3 \cdot P_4 = 0.30$ (primary).
\emph{Class-specific to OTT-EA}: by construction, $P_3 \cdot P_4 = 1$
(no segregation) for T-EA.

\subsection{Composition under independence}

Under independence among adjustment dimensions, the composite
T-EA multiplier is
\[
M_{\mathrm{T}}
= \delta_{\mathrm{target}} \cdot \delta_{\mathrm{EA}} \cdot \delta_{\mathrm{concen}}
\in [3, 120], \quad \text{primary value } 5 \cdot 1.5 \cdot 1.5 = 11.25,
\]
and the composite OTT-EA multiplier is
\[
M_{\mathrm{OTT}}
= \delta_{\mathrm{target}} \cdot \delta_{\mathrm{EA}} \cdot
  \delta_{\mathrm{concen}} \cdot \left[\xi + (1 - \xi)\, P_3 P_4\right]
\in [0.9, 103], \quad \text{primary } 11.25 \cdot \left[0.4 + 0.6 \cdot 0.3\right] \approx 6.5,
\]
where $\xi$ is the average cross-cutting fraction (primary $\xi \approx
0.4$) and the bracketed operational factor is the same
$\xi + (1-\xi)P_3 P_4$ form used in Pillars~II and~III of the main
paper. The ratio $M_{\mathrm{OTT}} / M_{\mathrm{T}} = 0.58$ at primary
parameter values is consistent with the architectural ratio of
$0.5$--$0.6$ recovered across the three pillars.

\subsection{Sensitivity to dimension-by-dimension perturbation}

\begin{tabular}{@{}lll@{}}
\toprule
\textbf{Dimension} & \textbf{T-EA effect} & \textbf{OTT-EA effect} \\
\midrule
$\delta_{\mathrm{target}}\!\downarrow 3$  & $-40\%$ on $M_T$ & $-40\%$ on $M_{\mathrm{OTT}}$ \\
$\delta_{\mathrm{target}}\!\uparrow 10$    & $+100\%$ & $+100\%$ \\
$\delta_{\mathrm{EA}}\!\uparrow 3$         & $+100\%$ & $+100\%$ \\
$\delta_{\mathrm{concen}}\!\uparrow 4$     & $+167\%$ & $+167\%$ \\
$P_3 \cdot P_4\!\downarrow 0.10$           & no effect & $-21\%$ \\
$P_3 \cdot P_4\!\uparrow 0.50$             & no effect & $+21\%$ \\
$\xi\!\downarrow 0.20$                     & no effect & $-24\%$ \\
$\xi\!\uparrow 0.60$                       & no effect & $+24\%$ \\
\bottomrule
\end{tabular}

Among the OTT-EA-specific parameters the multiplier is comparably
sensitive to $\xi$ and $P_3 \cdot P_4$ ($\approx \pm 21$--$24\%$
across their ranges), identifying empirical characterisation of the
segregation chain (both the cross-cutting fraction and the
segregation factor) as the highest-priority research target for
narrowing OTT-EA risk uncertainty.

\subsection{Treatment of dependence in the dimension transfer}

The composition above assumes independence among dimensions. If
$\delta_{\mathrm{target}}$ and $\delta_{\mathrm{EA}}$ are positively
correlated (high-targeting environments also produce more EA-specific
infrastructure), the composite multiplier is larger than the product
of marginals. Under perfectly correlated $(\delta_{\mathrm{target}},
\delta_{\mathrm{EA}})$ within their respective ranges, the upper-bound
multiplier is:

\begin{equation*}
M_{\mathrm{T}}^{\mathrm{dep}}
= \mathbb{E}[\delta_{\mathrm{target}} \cdot \delta_{\mathrm{EA}}]
  \cdot \delta_{\mathrm{concen}}
= \left(\mathbb{E}[\delta_{\mathrm{target}}]\,
  \mathbb{E}[\delta_{\mathrm{EA}}]
+ \mathrm{Cov}(\delta_{\mathrm{target}}, \delta_{\mathrm{EA}})\right)
  \delta_{\mathrm{concen}}
\leq \delta_{\mathrm{target}}^{\max} \cdot \delta_{\mathrm{EA}}^{\max}
\cdot \delta_{\mathrm{concen}}.
\end{equation*}

The Bayesian model in \S\ref{sec:bayesian} of the main paper captures this
dependence through joint priors and the parameter-uncertainty
decomposition in \S\ref{subsec:param_decomp} of the main paper.

\bigskip
\hrule
\bigskip

\section{Prior Sensitivity for Exceedance Probabilities}\label{sup:prior_sens}

The exceedance probability findings of the main paper
(\S\ref{subsec:exceedance} of the main paper) are conditional on the
primary prior calibrations adopted in \S\ref{sec:bayesian} of the main
paper. To assess robustness we re-ran the simulation under three prior
calibrations spanning the plausible range of analogical evidence for
each architectural class: a \emph{conservative} prior corresponding
to an analogue population performing substantially better than the
relevant historical record (Stream~A for T-EA, Stream~C for OTT-EA);
the \emph{primary} prior used throughout the main paper; and a
\emph{pessimistic} prior corresponding to the relevant population
average rate. Reproducibility: \texttt{prior\_sensitivity.py} in the
companion code, with multi-parameter calibration shifts as documented
in the script header.

Table~\ref{tab:exceedance_sensitivity_supp} reports the
architecture-conditional exceedance probabilities under each
calibration; all other model parameters are held at primary values.

\begin{table}[!htbp]
\centering
\small
\caption{Architecture-conditional exceedance-probability sensitivity to
prior calibration. All runs share model structure within each
architectural class; only the prior hyperparameters differ. The
``primary'' row matches the full-dependence column of
Table~\ref{tab:exceedance} of the main paper.}
\label{tab:exceedance_sensitivity_supp}
\begin{tabular}{@{}lcccccc@{}}
\toprule
& \multicolumn{3}{c}{\textbf{T-EA}} & \multicolumn{3}{c}{\textbf{OTT-EA}} \\
\cmidrule(lr){2-4} \cmidrule(lr){5-7}
\textbf{Calibration} & median & $\Pr(q_1>0.5\%)$ & $\Pr(q_1>1\%)$
& median & $\Pr(q_1>0.5\%)$ & $\Pr(q_1>1\%)$ \\
\midrule
Pessimistic   & 12.8\% & 1.000 & 1.000 &  9.8\% & 1.000 & 1.000 \\
Primary       &  4.0\% & 1.000 & 0.988 &  2.6\% & 0.994 & 0.907 \\
Conservative  &  1.2\% & 0.911 & 0.605 &  0.9\% & 0.747 & 0.460 \\
\bottomrule
\end{tabular}
\end{table}

Three observations.

\emph{(i)~Magnitude is prior-conditional.} Under the conservative prior
the exceedance probability at the 1\% threshold drops materially:
T-EA from $0.988$ to $0.605$; OTT-EA from $0.907$ to $0.460$. Under
the pessimistic prior the same probability rises to saturation
($1.000$ for both architectures). The headline exceedance
findings are therefore not sound across the full defensible prior
range in either architectural class.

\emph{(ii)~Direction is robust.} The architectural ordering (T-EA
risk at least as high as OTT-EA) is preserved across all three
calibrations: T-EA medians exceed OTT-EA medians in every case, and
T-EA exceedance probabilities are at least as large as OTT-EA's,
strictly so except under the pessimistic prior, where both
architectures saturate. This ordering is downstream of the
segregation parameter and the cross-cutting fraction calibration; it
is structural rather than calibration-conditional.

\emph{(iii)~The conservative-prior policy posture is qualitatively
distinct.} At the conservative prior the OTT-EA case crosses below
$\Pr(q_1 > 1\%) = 0.5$, supporting a characterisation of ``probably
below 1\% annually'', a policy posture qualitatively different from
that under the primary calibration. The T-EA conservative figure remains above
$0.5$, supporting the weaker characterisation ``more likely than not
above 1\% annually''. Policy readers should therefore treat the
exceedance findings as ranging from ``saturated'' (pessimistic) to
``probably below threshold for OTT-EA, more likely than not above
for T-EA'' (conservative), with the primary calibration sitting
nearer the pessimistic end of that range.

% =============================================================================
\section{Edge Log-Hazard Specification}\label{sec:supp_edge_hazards}

The Bayesian edge-hazard priors of \S\ref{sec:bayesian_priors_arch} of the main paper are specified
on a log scale. The location parameters $\mu_{ij}^{\mathrm{A}}$ are
derived via architecture-conditional domain transfer
(\S\ref{sec:supp_structural}, this supplement), and the prior widths
$\sigma_{ij}^{\mathrm{A}}$ are calibrated to reflect both
within-class operator heterogeneity and analogical transfer uncertainty.

\begin{table}[!htbp]
\centering
\small
\caption{Indicative edge log-hazard prior locations and widths under
the primary calibration, by architectural class and adversary tier.
Locations $\mu_{ij}$ are expressed in $\log$-events-per-year units;
widths $\sigma_{ij}$ are in the same units. The ${\sim}3$ log-unit
spread between adversary tiers referenced in
\S\ref{sec:cutsets} of the main paper is visible in the
opportunistic-vs-APT column gap; tier-specific edges within each
architectural class share a common width.}
\label{tab:edge_hazards}
\begin{tabular}{@{}lcccc@{}}
\toprule
& \multicolumn{2}{c}{\textbf{T-EA}} & \multicolumn{2}{c}{\textbf{OTT-EA}} \\
\cmidrule(lr){2-3} \cmidrule(lr){4-5}
\textbf{Edge class} & $\mu_{ij}$ & $\sigma_{ij}$ & $\mu_{ij}$ & $\sigma_{ij}$ \\
\midrule
Opportunistic, perimeter        & $-6.5$ & $1.5$ & $-6.0$ & $1.5$ \\
Opportunistic, key-vault        & $-7.5$ & $1.5$ & $-7.5$ & $1.5$ \\
APT, perimeter                  & $-4.5$ & $1.5$ & $-4.2$ & $1.5$ \\
APT, key-vault                  & $-5.0$ & $1.5$ & $-5.0$ & $1.5$ \\
APT, cross-cutting (OTT-only)   & ---    & ---   & $-4.5$ & $1.5$ \\
\bottomrule
\end{tabular}
\end{table}

The cross-tier spread of approximately three log units between
opportunistic and APT edges supports the weakest-link approximation of
Equation~\ref{eq:cutset_minrate} of the main paper: when adversary tiers differ by this
magnitude in log-hazard, the minimum-rate substitute for an exact
inclusion--exclusion expansion is accurate to within a few percent on
per-replicate compromise probabilities. The topology-sensitivity
analysis of \S\ref{sec:pillar23_validity} of the main paper bounds the practical
consequence of this approximation at approximately $\pm 10\%$ on
median estimates.

% =============================================================================
\section{OTT-EA Parameter Uncertainty Decomposition}\label{sec:supp_ottevpi}

The OTT-EA EVPPI-style variance decomposition referenced in the
caption of Figure~\ref{fig:evpi} of the main paper is reported here
in tabular form for reproducibility. The first-order Sobol indices
below were computed at $R = 2{,}000$ pick-freeze replicates per
parameter at seed~$2024$ over the OTT-EA configuration. The
decomposition is taken over the parameters the canonical engine treats
as uncertain, drawn from the canonical priors: the three log-normal
modifiers, together with the cross-cutting coupling ratio
$\gamma_{\mathrm{X}}/\gamma$, which is OTT-EA-specific. Parameters
held fixed in the canonical model ($\gamma$, $P_3 \cdot P_4$,
$\sigma_\lambda$, and the per-channel $\xi$ and $\rho$ arrays)
contribute no output variance and are not decomposed; their effect,
together with parameter interactions and the model's internal
channel-noise stochasticity, is captured in the residual.

\begin{table}[!htbp]
\centering
\small
\caption{First-order variance contribution of each uncertain parameter
to $\mathrm{Var}(Q_{10}^{\mathrm{OTT}})$. The OTT-EA-specific
cross-cutting coupling ratio $\gamma_{\mathrm{X}}/\gamma$ carries a
modest first-order index. The structural constraint
$\gamma_{\mathrm{X}}/\gamma \geq 1$ imposed by the prior limits the
residual uncertainty available to resolve. The residual is the largest
single component, reflecting parameter interactions and internal
stochasticity.}
\label{tab:ott_evpi}
\begin{tabular}{@{}lc@{}}
\toprule
\textbf{Parameter} & \textbf{First-order Sobol index} \\
\midrule
EA attack-surface modifier ($\delta_{\mathrm{EA}}^{\mathrm{OTT}}$)         & 14.8\% \\
Concentration ($\delta_{\mathrm{concen}}^{\mathrm{OTT}}$)                  & 13.9\% \\
System targeting intensity ($\delta_{\mathrm{target}}^{\mathrm{OTT}}$)     & 12.0\% \\
Cross-cutting coupling ratio ($\gamma_{\mathrm{X}}/\gamma$)                & 8.6\% \\
\midrule
Sum (first-order)                                                         & 49.3\% \\
Residual (interactions + internal stochasticity)                          & 50.7\% \\
\bottomrule
\end{tabular}
\end{table}

The OTT-EA first-order reducible variance ($\sim 49\%$ across the four
uncertain parameters) is below the T-EA figure ($\sim 61\%$),
reflecting the larger interaction structure and internal stochasticity
introduced by cross-cutting coupling: the OTT-EA residual ($\sim 51\%$)
is the single largest component of
$\mathrm{Var}(Q_{10}^{\mathrm{OTT}})$. The corresponding tornado plot
referenced in the main paper as the OTT-EA companion to
Figure~\ref{fig:bayesian_tornado_ott} is generated by the
companion script \texttt{bayesian\_tornado.py}.

% =============================================================================
\section{Pillar~II Monte Carlo: Prior Specification}\label{sec:supp_pillar2_priors}

The system-level Pillar~II projections of
Section~\ref{sec:pillar2_proj} of the main paper are reported as
central point projections: the independence aggregation
$\Lambda^{\mathrm{key}} = 1 - \prod_i \left(1 - \Pr(C^{\mathrm{key}}_i)\right)$
evaluated at the central per-scenario rates. The accompanying
Monte Carlo ($N = 200{,}000$ iterations; \texttt{pillar2\_mc.py})
characterises the dispersion induced by parameter uncertainty around
that central projection. It is a reduced-form propagation: rather than
re-running the per-scenario tier-mixing chain, it perturbs the central
per-scenario rates and the shared aggregation parameters directly,
drawing

\begin{itemize}
  \item the global APT-tier mixing weight
        $w_{\mathrm{APT}} \sim \mathrm{Beta}(3, 7)$ (mean $0.30$);
  \item the APT-tier exploitation uplift as a log-normal multiplier
        with median $2{\times}$ and log-scale spread $\sigma = 0.30$;
  \item each per-scenario base rate as a log-normal with median fixed
        at its central value and the same log-scale spread
        $\sigma = 0.30$;
  \item the joint segregation factor
        $P_3 \cdot P_4 \sim \mathrm{Beta}(3, 7)$ (mean $0.30$).
\end{itemize}

\noindent The per-scenario cross-cutting fractions $\xi_i$ are held at
their central values.

A log-scale spread of $\sigma = 0.30$ corresponds to a
one-standard-deviation multiplicative band of approximately
$0.74{\times}$ to $1.35{\times}$, and a $95\%$ band of approximately
$0.56{\times}$ to $1.80{\times}$, about each median. This is a
moderate prior width for an empirically uncertain rate, and it
contains the $1.2{\times}$--$3.0{\times}$ APT-uplift sensitivity
range adopted in the main paper. Because the log-normal priors are centred at the
central rates (their median) and the Beta priors at the central mixing
values (their mean), the Monte Carlo median reproduces the closed-form
central aggregation to within rounding: $\Lambda^{\mathrm{key}} = 7.4\%$
(Monte Carlo median) against $7.3\%$ (closed form), and
$\Lambda^{\mathrm{op,OTT}} = 4.0\%$ against $3.9\%$. The reported
central projections are therefore governed by the central parameter
values and not by the prior width $\sigma$, which affects only the
spread of the simulated distribution. That spread is reported through
the sensitivity analyses (the OTT-EA grid of
Section~\ref{sec:supp_ott_pillar2_grid} and the Pillar~II sensitivity
analysis in the main paper) and is not quoted as a confidence
interval on the central projection.

% =============================================================================
\section{Pillar II OTT-EA Sensitivity Grid}\label{sec:supp_ott_pillar2_grid}

The OTT-EA Pillar~II projection is sensitive to two architectural
parameters with no T-EA analogue: the segregation factor
$P_3 \cdot P_4$ and a global multiplier $m_\xi$ on the central
$\xi_i$ values from Table~\ref{tab:xi_scenarios} of the main paper. The full joint grid
is reported here.

\begin{table}[!htbp]
\centering
\small
\caption{OTT-EA Pillar II central operational compromise projection
$\Lambda^{\mathrm{op},\mathrm{OTT}}_{\mathrm{central}}$ across the
joint $(P_3 \cdot P_4, m_\xi)$ sensitivity grid. The central
parameterisation ($P_3 \cdot P_4 = 0.30$, $m_\xi = 1.0$) gives the
central value of $3.9\%$. Each cell: independence-based closed-form
projection.}
\label{tab:supp_ott_pillar2_sens}
\begin{tabular}{@{}lcccc@{}}
\toprule
& $m_\xi = 0.5$ & $m_\xi = 0.75$ & $m_\xi = 1.0$ & $m_\xi = 1.5$ \\
\midrule
$P_3 \cdot P_4 = 0.12$ & 2.0\% & 2.5\% & 3.0\% & 4.1\% \\
$P_3 \cdot P_4 = 0.30$ & 3.1\% & 3.5\% & 3.9\% & 4.7\% \\
$P_3 \cdot P_4 = 0.50$ & 4.3\% & 4.6\% & 4.9\% & 5.5\% \\
$P_3 \cdot P_4 = 0.64$ & 5.1\% & 5.3\% & 5.6\% & 6.0\% \\
\bottomrule
\end{tabular}
\end{table}

The OTT-EA Pillar~II projection spans $2.0$--$6.0\%$ across the full
joint sensitivity range. The lower portion corresponds to architectures
with strong segregation, low cross-cutting risk, and effective
multi-stage detection (the bottom-left of the grid). The upper portion
corresponds to architectures where cross-cutting access is prevalent
and detection is weak. The interval $[3.5\%, 4.3\%]$ obtained by
varying $m_\xi \in [0.75, 1.25]$ at the central $P_3 \cdot P_4$ value
is the defensible central range conditional on the calibration.

% =============================================================================
\section{Pillar III OTT-EA Sensitivity Grid}\label{sec:supp_ott_pillar3_grid}

The OTT-EA channel-aggregate (Pillar~III) projection is sensitive to
the same two clusters of parameters as Pillar~II: the segregation
factor $P_3 \cdot P_4$ and the per-channel cross-cutting fractions
$\xi_c$ (multiplied by a global $m_\xi$). The grid analogous to
\S\ref{sec:supp_ott_pillar2_grid} but applied to the Pillar~III
decomposition is:

\begin{table}[!htbp]
\centering
\small
\caption{OTT-EA Pillar III channel-aggregate projection
$\Pref^{\mathrm{OTT}}$ across the joint $(P_3 \cdot P_4, m_\xi)$
sensitivity grid, where $m_\xi$ multiplies the central $\xi_c$ values
from Table~\ref{tab:xi_channels} of the main paper. The central parameterisation gives
$3.0\%$.}
\label{tab:supp_ott_pillar3_sens}
\begin{tabular}{@{}lcccc@{}}
\toprule
& $m_\xi = 0.5$ & $m_\xi = 0.75$ & $m_\xi = 1.0$ & $m_\xi = 1.5$ \\
\midrule
$P_3 \cdot P_4 = 0.12$ & 1.5\% & 1.9\% & 2.3\% & 3.2\% \\
$P_3 \cdot P_4 = 0.30$ & 2.3\% & 2.6\% & 3.0\% & 3.6\% \\
$P_3 \cdot P_4 = 0.50$ & 3.2\% & 3.4\% & 3.7\% & 4.1\% \\
$P_3 \cdot P_4 = 0.64$ & 3.8\% & 4.0\% & 4.2\% & 4.5\% \\
\bottomrule
\end{tabular}
\end{table}

The OTT-EA Pillar~III aggregate spans $1.5$--$4.5\%$ across the full
joint sensitivity range, with the same architectural interpretation as
the Pillar~II grid. The defensible central interval at intermediate
$\xi$-multiplier values is approximately $[2.6\%, 3.3\%]$.

A separate sensitivity examines an alternative calibration in which the
OTT channel base rates exceed the T-EA rates (doubling $\Pzeroday$ and
$\Psupply$ from $1.5\%$ to $3.0\%$ each, on the argument that platform
codebase and supply-chain complexity exceed those of T-EA
infrastructure). This gives a Pillar~III channel-aggregate projection
of approximately $4.7\%$ at the central segregation calibration,
narrowing the gap to T-EA. This calibration is presented as a
sensitivity rather than the headline because the empirical evidence on
whether platform zero-day and supply-chain rates exceed those of T-EA
infrastructure is mixed, and because the channel-minimum heuristic
argument of Pillar~III is most defensible when channel rates are taken
at minimum-viable values applicable across architectural classes.

% =============================================================================
\section{Detailed EVPPI Variance Decomposition}\label{sec:supp_evppi}

This section reports the detailed parameter contributions to the
EVPPI-style variance decomposition summarised in the main paper. The
decomposition is taken over the parameters the canonical Monte Carlo
engine treats as uncertain, drawn from the canonical priors: the three
log-normal modifiers ($\delta_{\mathrm{EA}}$, $\delta_{\mathrm{target}}$,
$\delta_{\mathrm{concen}}$) for both architectures, together with the
cross-cutting coupling ratio $\gamma_{\mathrm{X}}/\gamma$ for OTT-EA.
Quantities held fixed in the canonical model ($\gamma$, $P_3 \cdot P_4$,
$\sigma_\lambda$, and the per-channel $\xi$/$\rho$ arrays) contribute no
output variance and are not decomposed. The first-order reducible
totals are $\sim 61\%$ for T-EA (three modifiers) and $\sim 49\%$ for
OTT-EA (three modifiers plus $\gamma_{\mathrm{X}}/\gamma$). The residual
reflects parameter interactions and the model's internal channel-noise
stochasticity.

\textit{T-EA contributions to $\mathrm{Var}(Q_{10}^{\mathrm{T}})$.}
System targeting intensity ($\delta_{\mathrm{target}}^{\mathrm{T}}$,
$25.0\%$), EA-specific attack-surface modifier
($\delta_{\mathrm{EA}}^{\mathrm{T}}$, $18.7\%$), and concentration
($\delta_{\mathrm{concen}}^{\mathrm{T}}$, $17.7\%$). The three
contributors are comparable in magnitude, reflecting a relatively flat
decomposition where no single parameter dominates. The residual
($\sim 39\%$) absorbs parameter interactions and the model's internal
stochasticity.

\textit{OTT-EA contributions to $\mathrm{Var}(Q_{10}^{\mathrm{OTT}})$.}
EA-specific attack-surface modifier
($\delta_{\mathrm{EA}}^{\mathrm{OTT}}$, $14.8\%$), concentration
($\delta_{\mathrm{concen}}^{\mathrm{OTT}}$, $13.9\%$), system targeting
intensity ($\delta_{\mathrm{target}}^{\mathrm{OTT}}$, $12.0\%$), and the
cross-cutting coupling ratio ($\gamma_{\mathrm{X}}/\gamma$, $8.6\%$).
The three modifiers are again comparable in magnitude; the
OTT-EA-specific $\gamma_{\mathrm{X}}/\gamma$ carries a smaller
first-order index, in part because the structural constraint
$\gamma_{\mathrm{X}}/\gamma \geq 1$ enforced at the prior level leaves
relatively little remaining uncertainty to resolve, so additional
empirical pinning of $\gamma_{\mathrm{X}}/\gamma$ yields modest direct
variance reduction. Its role in the architectural-divergence direction
is carried principally by the constraint itself rather than by residual
uncertainty about its value. The residual ($\sim 51\%$) is the single
largest component, larger than for T-EA, reflecting the additional
interaction structure introduced by cross-cutting coupling.

% =============================================================================
\section{Attack Graph Topology Sensitivity}\label{sec:supp_graph_sens}

Table~\ref{tab:supp_graph_sensitivity} reports the sensitivity of the
Bayesian Layer output to the attack-graph topology, comparing the
canonical six-node, eight-edge graph against the seven-node variant
that inserts an HSM-firmware traversal node between the
hardware-key-store entry and the key-extraction sink (the additional
serial gate is modelled as a Beta-distributed firmware-pass probability
attenuating the zero-day channel hazard). All other parameters are held
at their central calibration. The figures are reproduced by
\texttt{graph\_topology\_sensitivity.py} ($R = 4{,}000$ replicates per
architecture and topology, seed~2024).

\begin{table}[!htbp]
\centering
\small
\caption{Bayesian Layer outputs under the six-node and seven-node
attack-graph topologies, for both architectures. The more granular
seven-node topology lowers the median system hazard by $9$--$11\%$; the
exceedance probability $\Pr(q_1 > 1\%)$ is nearly unchanged.}
\label{tab:supp_graph_sensitivity}
\begin{tabular}{@{}lcccc@{}}
\toprule
& \multicolumn{2}{c}{\textbf{T-EA}} & \multicolumn{2}{c}{\textbf{OTT-EA}} \\
\cmidrule(lr){2-3}\cmidrule(lr){4-5}
\textbf{Quantity} & 6-node & 7-node & 6-node & 7-node \\
\midrule
Median $q_1$              & 4.1\%  & 3.6\%  & 2.7\%  & 2.4\%  \\
95th percentile $q_1$     & 17.6\% & 15.2\% & 17.6\% & 16.6\% \\
Median $Q_{10}$           & 36.2\% & 32.9\% & 31.6\% & 28.6\% \\
95th percentile $Q_{10}$  & 84.1\% & 80.4\% & 93.2\% & 91.7\% \\
$\Pr(q_1 > 1\%)$          & 0.989  & 0.981  & 0.901  & 0.868  \\
\bottomrule
\end{tabular}
\end{table}

The seven-node topology lowers the median annual hazard by
approximately $11\%$ for T-EA and $10\%$ for OTT-EA, and the median
10-year cumulative hazard by approximately $9\%$ for both; the
exceedance probability at the $1\%$ threshold is nearly unchanged.
The direction is favourable, since the coarser six-node graph
\emph{over}states risk relative to the more granular topology, and
the qualitative dependence-inflation findings are unaffected.

% =============================================================================
\section{Pillar II Sensitivity Figures}\label{sec:supp_pillar2_sens_figs}

The two-way sensitivity grid and one-way tornado for the Pillar~II
central projection, referenced from the main paper, are reproduced
here.

\begin{figure}[!htbp]
\centering
\includegraphics[width=\textwidth]{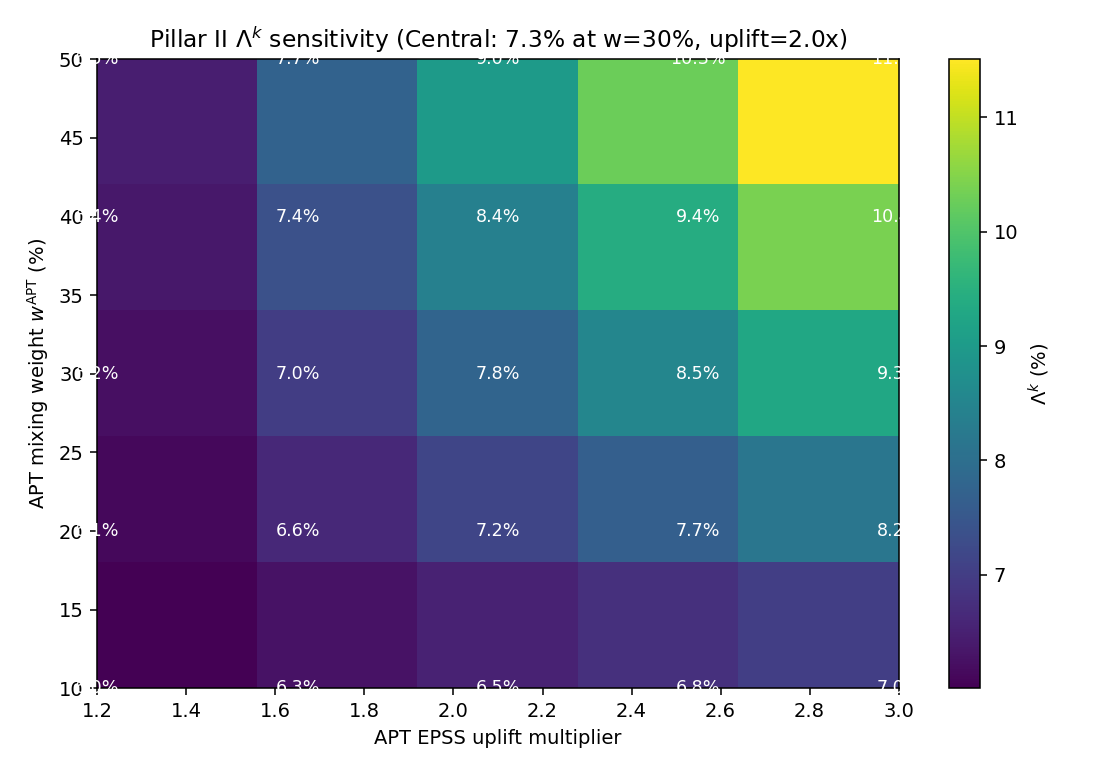}
\caption{Two-way sensitivity grid for the Pillar~II central projection,
varying the APT mixing weight $w_{\rm APT}$ (x-axis, 10--50\%) against
the APT EPSS uplift multiplier (y-axis, 1.2$\times$--3.0$\times$). Left
panel: mean annual key-material exfiltration probability. Right
panel: mean annual access-pathway compromise probability. The central
parameterisation ($w_{\rm APT} = 0.30$, uplift $\times 2.0$) yields
the headline $7.3\%$/$16.8\%$ values. Across the full grid the
key-material projection spans $5.5\%$--$7.8\%$, with $5.5\%$ serving
as the Pillar~II parametric floor under independence. Each
cell: $200{,}000$ Monte Carlo iterations; seed 2024.}
\label{fig:supp_sensitivity_grid}
\end{figure}

\begin{figure}[!htbp]
\centering
\includegraphics[width=\textwidth]{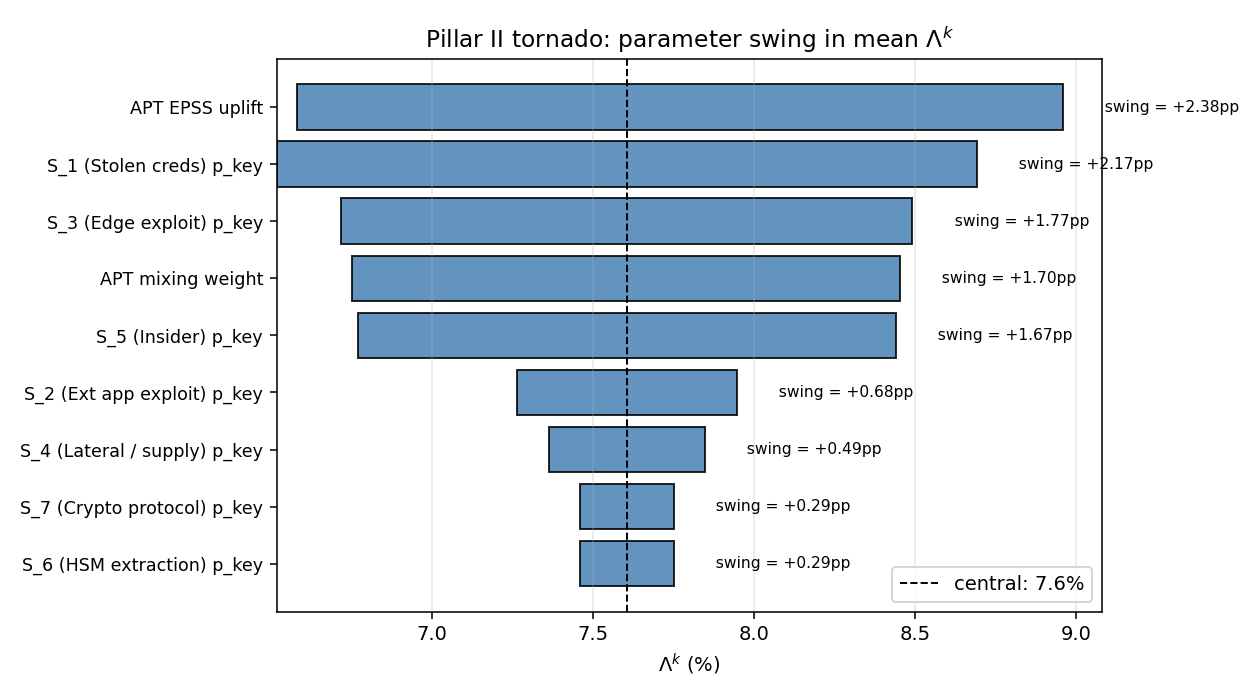}
\caption{One-way sensitivity tornado for the Pillar~II central
projection of mean annual key-material exfiltration probability. Each
parameter is varied across its prior 10th--90th percentile range
holding all other parameters at central values. Vertical line marks
the central estimate ($7.3\%$). Dominant drivers are the conditional
key-material probabilities for $S_1$, $S_3$, $S_5$, and the global
APT mixing weight; campaign parameters and specialist-technique
attempt rates contribute less than $0.4$~pp of range each.}
\label{fig:supp_tornado}
\end{figure}

% =============================================================================
\section{Non-stationary Cumulative Trajectory}\label{sec:supp_nonstationary}

Figure~\ref{fig:supp_nonstationary_trajectory} extends the
stationarity sensitivity from \S\ref{sec:limitations_stationarity}
to a 25-year deployment horizon across three growth
scenarios for both architectures.

\begin{figure}[!htbp]
\centering
\includegraphics[width=0.95\textwidth]{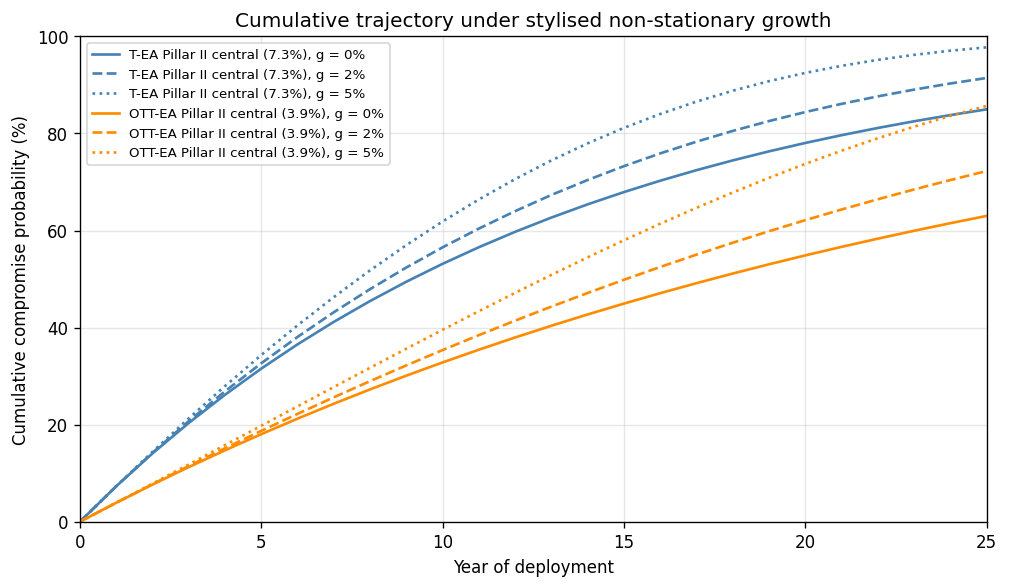}
\caption{Cumulative compromise probability over a 25-year deployment
horizon under stylised non-stationary growth in the annual rate.
Solid lines: stationary baseline ($g = 0$). Dashed: 2\% annual
relative rate growth. Dotted: 5\% growth. T-EA in blue (starting from
the Pillar~II central estimate $7.3\%$); OTT-EA in orange (starting
from $3.9\%$). The gap to the stationary projection is modest in the
first decade but grows substantially beyond year~15, illustrating why
multi-decade EA-system commitments are more sensitive to assumed
rate-growth dynamics than the headline 10-year projections suggest.}
\label{fig:supp_nonstationary_trajectory}
\end{figure}

% =============================================================================
% =============================================================================
\section{Threshold-Scheme Quantitative Contrast}\label{sec:supp_threshold}

The quantitative analysis in this paper covers two architectural classes
under single-operator deployment: T-EA (transmission-layer, with
$\mathcal{K}$ and $\mathcal{D}$ co-located in carrier infrastructure) and
OTT-EA (application-layer, with $\mathcal{K}$ and $\mathcal{D}$
architecturally segregated within a global platform). Both share the
property that key material is held under unitary operator authority. A
distributed threshold scheme, in which key material is split across $N$
independent trustees with a $t$-of-$N$ reconstruction requirement,
alters the risk profile by removing the unitary-operator
property itself. This subsection presents an illustrative quantitative
contrast under the strong caveat that the per-node compromise rate is
borrowed from the T-EA Pillar~III channel-aggregate projection rather
than measured empirically for threshold-scheme nodes specifically.
Threshold-scheme nodes have a different attack surface from the
single-operator T-EA cohort (narrower scope per node, distinct trust
relationships, different operational tempos), and a quantitative risk
analysis of threshold-based EA deployment would require an empirical
anchor distinct from the three streams developed in this paper. The
numbers below are presented as an order-of-magnitude
illustration of the architectural-choice effect, not as a calibrated
projection.

Threshold-based EA proposals have a substantial history. Micali's
\emph{Fair Public-Key Cryptosystems} \citep{micali1992fair} introduced the
verifiable secret-sharing approach to key escrow, splitting user private keys
across $n$ trustees with a $k$-of-$n$ reconstruction threshold. Bellare and
Goldwasser's \emph{Verifiable Partial Key Escrow} \citep{bellare1997verifiable}
and the contemporaneous Failsafe Key Escrow scheme of Leighton
\citep{leighton1997failsafe} extend this approach with security-game
reductions. The Denning--Branstad taxonomy \citep{denning1996taxonomy}
catalogues over a dozen historical key-escrow proposals, many of them based on
threshold reconstruction. More recent
deployment-grade systems, including Apple's Cloud Key Vault and the iCloud
Keychain HSM-quorum recovery scheme, use related multi-trustee
reconstruction architectures. The threshold-scheme analysis below
addresses a real and historically dominant class of EA proposals, not a
hypothetical alternative.

The probability of key reconstruction in a given year is:
\begin{equation}
P(\text{reconstruct} \mid \text{independence})
  = 1 - \sum_{k=0}^{t-1} \binom{N}{k} p^k (1-p)^{N-k},
\label{eq:threshold}
\end{equation}
where $p$ is the per-node annual compromise probability. A 5-of-9 scheme with
$p = 0.05$ (close to the T-EA Pillar~III channel-aggregate projection of $5.4\%$)
gives $P(\text{reconstruct}) = 3.3 \times
10^{-5}$, approximately three orders of magnitude below the centralised
channel-aggregate projection. Incorporating a common-mode failure fraction $\rho \in [0.05, 0.10]$
(shared HSM vendors, operating systems, coordinated nation-state campaigns):
\begin{equation}
P(\text{reconstruct}) = \rho p + (1 - \rho) \cdot P_{\mathrm{indep}}(t, N, p).
\label{eq:threshold_commonmode}
\end{equation}
At $\rho = 0.10$, $p = 0.05$, this gives $\approx 0.5\%$ annually.

Table~\ref{tab:threshold} illustrates how the threshold-scheme reconstruction
probability varies across a range of per-node rates, since the appropriate value
of $p$ for individual threshold scheme nodes is uncertain.

\begin{table}[!htbp]
\centering
\small
\caption{Threshold-scheme reconstruction probability for a 5-of-9 architecture
across a range of per-node annual compromise rates $p$ and common-mode
correlation $\rho$. The $p = 0.05$ row uses the Pillar III centralised-system
channel-aggregate projection as an upper bound for the per-node rate; actual node rates may be
lower (narrower attack surface per node) or require separate analysis.}
\label{tab:threshold}
\begin{tabular}{@{}lcccc@{}}
\toprule
& $\rho = 0$ & $\rho = 0.05$ & $\rho = 0.10$ & $\rho = 0.20$ \\
\midrule
$p = 0.02$ & 0.000\% & 0.10\% & 0.20\% & 0.40\% \\
$p = 0.05$ & 0.003\% & 0.25\% & 0.50\% & 1.00\% \\
$p = 0.10$ & 0.089\% & 0.59\% & 1.08\% & 2.07\% \\
\bottomrule
\end{tabular}
\end{table}

A contour plot of reconstruction probability across the joint
$(p, \rho)$ range, with the Pillar~III T-EA and OTT-EA channel-aggregate
projections marked, is reproduced in Supplementary
\S\ref{sec:supp_threshold_contour}.

Two caveats apply. First, the per-node rate $p$ is taken from the Pillar III
T-EA channel-aggregate projection, which is a conservative upper estimate for
individual nodes of a threshold scheme, since each node holds only a share of
the key material and presents a correspondingly smaller target. Precise
threshold-scheme node rates require separate analysis. Second, the
common-mode correction captures systematic failures. Cascading failures from
shared cryptographic libraries or shared network infrastructure may not be
fully captured by the scalar $\rho$ model. The threshold scheme analysis
demonstrates that the quantitative risk profile of this paper applies
specifically to single-operator deployments (both the T-EA and OTT-EA
classes analysed quantitatively) and can be substantially altered
through architectural choice that removes the unitary-operator
property. Precise figures for distributed architectures require
independent analysis.

\paragraph{Operational feasibility considerations.} The mathematical
reduction in independent-compromise probability does not by itself
establish that a threshold scheme is a deployable EA architecture.
The threshold-key-escrow design space has been studied in cryptographic
work since the mid-1990s
\citep{micali1992fair,bellare1997verifiable,leighton1997failsafe} and
the underlying secret-sharing machinery is in production use for
account-recovery in commercial key-management systems. Four
operational questions condition the engineering viability of a
lawful-intercept deployment specifically: (i)~warrant throughput,
since each reconstruction event requires participation from the
threshold quorum of escrow holders and the coordination overhead
scales with the number of authorisations per unit time, (ii)~escrow-holder
selection, since the security property requires that the $N - t + 1$
non-quorum holders are non-collusive in expectation, which constrains
selection to entities with non-overlapping institutional incentives and
oversight regimes, (iii)~cross-jurisdictional handling, since
multi-jurisdiction quorum requirements interact with mutual legal
assistance and data-localisation regimes in ways that may not match the
operational tempo of lawful-intercept work, and (iv)~revocation and
rotation cost, since periodic re-sharing of secrets to retire
compromised or departing escrow holders multiplies the per-period
operational burden. The cryptographic mechanisms exist; what is absent
in the public literature is a worked assessment of these four
operational questions in the lawful-intercept context. The risk
reduction is unambiguous. The operational cost of obtaining it is the
empirical question that follow-on engineering work would need to
resolve.

% =============================================================================
\section{Threshold-Scheme Reconstruction Contour}\label{sec:supp_threshold_contour}

Figure~\ref{fig:supp_threshold_scheme_contour} reproduces the
threshold-scheme reconstruction probability surface referenced in
\S\ref{sec:threshold} of the main paper. The contour visualisation
complements the tabulated values in Table~\ref{tab:threshold} of the
main paper.

\begin{figure}[!htbp]
\centering
\includegraphics[width=0.85\textwidth]{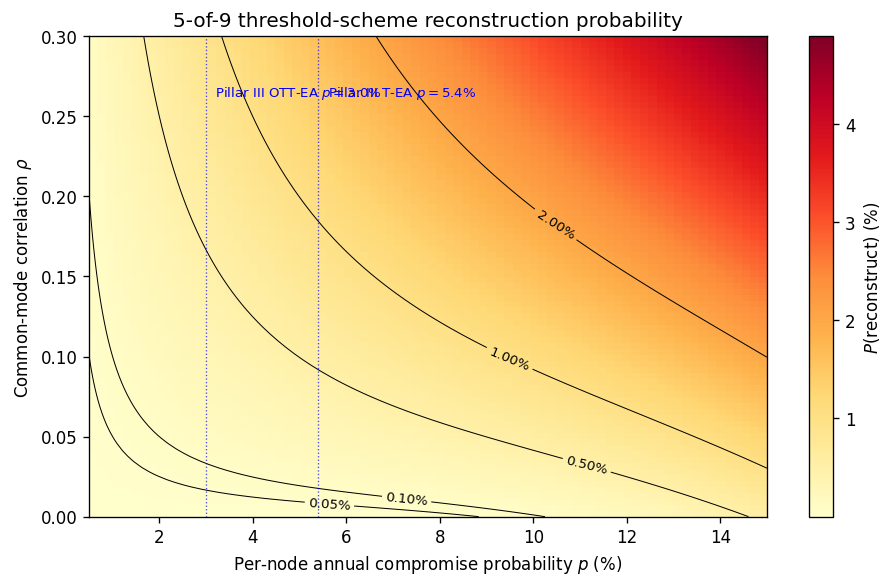}
\caption{Reconstruction probability for a 5-of-9 distributed
threshold scheme as a function of the per-node annual compromise
rate $p$ (horizontal axis) and the common-mode correlation $\rho$
(vertical axis). Vertical dotted lines mark the Pillar~III T-EA and
OTT-EA channel-aggregate projections ($p \approx 5.4\%$ and
$p \approx 3.0\%$ respectively) as plausible upper-bound per-node
rates. Contour lines at $0.05\%$, $0.10\%$, $0.50\%$, $1.00\%$, and
$2.00\%$ reconstruction probability. The flat lower-left region
illustrates the order-of-magnitude protection that threshold schemes
provide against independent compromise. The steep gradient towards
the upper-right shows the sensitivity to common-mode failure even at
small $\rho$.}
\label{fig:supp_threshold_scheme_contour}
\end{figure}

% =============================================================================
% =============================================================================
\section{Hybrid Client-Side Scanning Architectures}\label{sec:hybrid_css}

Hybrid client-side scanning (CSS) architectures, in which detection or
content-matching is performed on user devices using operator-distributed
key material, are treated qualitatively in this paper rather than
quantitatively. The reasons for the qualitative treatment, and the
structural risk-profile observations that follow from it, are as follows.

\textit{Why CSS is not quantified in this approach.} The quantitative
analysis of T-EA and OTT-EA depends on architecturally appropriate
empirical anchors (Stream~A and Stream~C respectively) that have been
calibrated against documented incident bases. CSS architectures combine
operator-side key custody (similar to OTT-EA) with endpoint-distributed
detection key material whose attack surface is governed by mobile-platform
security properties \citep{kulshrestha2021identifying,abelson2024bugs}.
The Pillar~I empirical anchors do not contain incidents architecturally
matched to the endpoint-distribution component: the Stream~C platform
incidents involve operator-side compromise rather than endpoint
compromise of distributed keys. Quantitative analysis of CSS would
require an additional empirical anchor for endpoint-key compromise rates
that is outside the calibrated set of this paper.

\textit{Structural risk observations applicable to CSS.} Several findings
of this paper apply to CSS by structural argument:

\begin{enumerate}
  \item The stochastic dominance result
    (Proposition~\ref{prop:dominance} of the main paper) extends to CSS: any CSS deployment
    necessarily increases attack surface relative to the non-CSS
    counterfactual, attracts elevated targeting (the operator-distributed
    detection keys are themselves a high-value adversarial target), and
    concentrates risk at the operator-side detection-list custody
    component. The $\delta^{\mathrm{CSS}} \geq 1$ constraint applies to
    each modifier on structural grounds.
  \item The cross-cutting coupling argument (\S\ref{subsec:dependence} of the main paper)
    applies to CSS in modified form. CSS architectures introduce a new
    cross-cutting infrastructure, the mobile platform itself, which is
    shared between the detection key custody and the endpoint scanning
    function. A campaign compromising the platform vendor's signing
    infrastructure (Storm-0558-class) potentially compromises all CSS
    detection logic on devices simultaneously. The cross-cutting fraction
    for CSS architectures is therefore plausibly larger than for OTT-EA,
    further compressing any segregation gain.
  \item The endpoint distribution introduces new compromise modes not
    present in T-EA or OTT-EA: device-level malware affecting detection
    logic, platform-level update infrastructure compromise, and
    user-level evasion or detection-trigger abuse. These modes are
    documented qualitatively in the cryptographic literature
    \citep{abelson2024bugs} but are not parameterised in our scenario set.
  \item The architecture-conditional Bayesian framework's tail finding
    applies a fortiori to CSS: under correlated-campaign conditions,
    the larger cross-cutting attack surface produces larger tail
    inflation than for OTT-EA. The CSS Bayesian distribution, were it
    constructed, would plausibly exhibit a heavier upper tail than
    either T-EA or OTT-EA.
\end{enumerate}

\textit{Policy implication.} The qualitative treatment of CSS is not
neutral on its risk profile relative to T-EA and OTT-EA. Each of the
four observations above is unfavourable to CSS, and there are no
corresponding architectural arguments running in the opposite direction.
A precautionary policy posture is to treat CSS as carrying \emph{at
least} the OTT-EA risk profile, with material additional risk from the
endpoint-distribution attack surface. Quantitative analysis of CSS to
match the rigour applied to T-EA and OTT-EA in this paper is identified
as a research priority.

% =============================================================================
\section{Temporal Robustness, Inclusion Sensitivity, and Operator Heterogeneity}\label{sec:supp_robustness}

\subsection{Temporal and Inclusion Robustness}\label{sec:limitations_stationarity}

\paragraph{Stationarity.} The cumulative calculations assume a stationary
annual compromise rate over the deployment horizon. Three structural
trends suggest the rate is more likely to increase than decrease:
adversary capability growth, defensive posture degradation over multi-
decade infrastructure lifetimes, and attack-surface expansion as EA
systems become institutionally identified targets. A stylised
non-stationary model with 2\% relative annual rate growth raises the
10-year cumulative from 53\% to 56\% starting from the 7.3\% Pillar~II
projection: a modest but directionally consistent adjustment. The full
25-year trajectory across three growth scenarios ($g = 0, 2\%, 5\%$)
for both architectures is reported in \S\ref{sec:supp_nonstationary}. Conversely, technical improvements
(mandatory FIPS~140-3 Level~4 HSM standards, air-gapped key ceremony
requirements, or improved detection capabilities) could reduce the
rate over time. The framework does not foreclose this possibility.

\paragraph{Inclusion-criterion sensitivity for Stream~A.} The
tier-stratified rates (Table~\ref{tab:tiered} and \S\ref{sec:pillar1_streamA} of the main paper)
characterise the inclusion-criterion sensitivity. The Tier~1 cohort
(direct T-EA architectural analogues, $k = 3$) at $10{\times}$ premium
projects to $4.5\%$, still inside the low single-digit annual range
produced by the framework. The qualitative finding (T-EA compromise
risk in the low single-digit percentage range) is robust to inclusion
criterion across all three tiers.

\paragraph{Salt Typhoon role.}\label{sec:salt_typhoon_role}
A separate concern is the inclusion of Salt Typhoon, which
both motivates the paper and enters the parameter calibration. The
$k = 5$ result with Salt Typhoon removed ($\hat{P} = 31.9\%$, 95\% CI
$[11.7\%, 59.2\%]$) is materially consistent with $k = 6$, so the
motivational use does not inflate the headline estimate.

\subsection{Operator Heterogeneity and Regulatory Baseline}\label{sec:operator_heterogeneity}

\paragraph{Stream~C operator cohort and regulatory counterfactual.}
The Stream~C platform-incident cohort (\S\ref{sec:pillar1_streamC} of the main paper)
comprises operators (Microsoft, Okta, LastPass, Twilio, and others)
operating under commercial and industry-framework standards (SOC~2,
ISO~27001) rather than any EA-style regulatory mandate. A defender of
OTT-EA would argue that this systematically biases the Stream~C base
rate upward: under an EA-mandate counterfactual, regulator-defined
minimum standards (mandatory hardware-backed key custody, formal audit
trails, certified incident response) would force defensive postures
that several Stream~C incidents specifically reflect the absence of.
Stream~B (Certificate Authorities) is the closest empirical analogue
to a ``regulated EA operator'': CAs operate under formal CCADB
governance with mandatory audit, transparency-log, and incident-
disclosure obligations comparable to what an EA mandate would impose.
The Stream~B per-system rate ($0.45\%$/yr) is approximately half the
Stream~C rate ($0.95\%$/yr), and this ratio is the best available
empirical upper bound on the defensive uplift attainable through
CA-equivalent regulation. A defender accepting this ratio as evidence
could legitimately discount the Stream~C base by a factor of
approximately two, shifting the OTT-EA central projection from
$\sim$$3.9\%$ to $\sim$$2.0\%$. The discounted figure remains material
and above the channel-minimum heuristic, but sits at the lower end of
the defensible range.
The Stream~C cohort is therefore best read as anchoring the
\emph{unregulated} platform base rate. The OTT-EA-mandated case lies
between this rate and the Stream~B regulated-operator rate, with the
empirical evidence insufficient to pin its precise location.

\paragraph{Weakest-operator problem.} EA mandates are typically
drafted to apply across all qualifying operators within a jurisdiction.
The systemic risk profile is determined by the
\emph{distribution} of security capabilities across the operator
population, not by the performance of the best-resourced operator.
The model's analysis corresponds to a well-resourced operator
matching best-in-class historical performance. A mandate that includes
less well-resourced operators will face a higher systemic risk profile
than the framework's central projections suggest.

% =============================================================================
% =============================================================================
\section{Proof of the Stochastic Dominance Proposition}\label{sec:supp_dominance_proof}

\begin{proof}[Proof of Proposition~\ref{prop:dominance} of the main paper]
By definition the no-EA counterfactual of class $\mathrm{A}$ is the
same model on the same attack graph with the three modifiers fixed at
unity, so every edge $(i,j)$ carries the baseline hazard
$\lambda_{ij}^{\mathrm{A}}$. Assumption (ii) gives
$\delta_{\mathrm{EA}}^{\mathrm{A}} \delta_{\mathrm{target}}^{\mathrm{A}}
\delta_{\mathrm{concen}}^{\mathrm{A}} \geq 1$, so under the EA
configuration the modified hazard of
Equation~\eqref{eq:modified_hazard} of the main paper satisfies
$\tilde\lambda_{ij}^{\mathrm{A}} \geq \lambda_{ij}^{\mathrm{A}}$ on
\emph{every} edge. Assumption (i) ensures the EA graph indeed
carries the EA-specific edges on which this uplift acts, but the
inequality is driven by assumption (ii) alone. Because the
per-period compromise probability $1 - e^{-\Lambda}$ is increasing in
the total hazard $\Lambda$, it follows that
$q_k^{\mathrm{EA},\mathrm{A}} \geq q_k^{\mathrm{noEA},\mathrm{A}}$ for
each year $k$. For T-EA, this directly gives
$Q_T^{\mathrm{EA},\mathrm{T}} \geq Q_T^{\mathrm{noEA},\mathrm{T}}$ for
all $T$. For OTT-EA, the operational-compromise probability of
Equation~\eqref{eq:q_OTT} of the main paper is monotonic in each component, so the same
conclusion holds: $Q_T^{\mathrm{EA},\mathrm{OTT}} \geq
Q_T^{\mathrm{noEA},\mathrm{OTT}}$ for all $T$. Since the inequalities
hold for every parameter realisation, they hold distributionally.
\end{proof}

% =============================================================================
\section{Full EA-Modifier Prior Specifications}\label{sec:supp_modifier_priors}

This section gives the full prior specifications for the three
EA-system modifiers
$(\delta_{\mathrm{EA}}, \delta_{\mathrm{target}}, \delta_{\mathrm{concen}})$
summarised in \S\ref{sec:bayesian_priors_arch} of the main paper.

\paragraph{Correction relative to v1.} The first version of this paper
specified these modifiers as shifted lognormals,
$\delta_k = 1 + \mathrm{LogNormal}(\log(m_k - 1), \sigma_k^2)$, with
architecture-specific medians. That specification was never
implemented. Every published result was produced by the log-censored
form given below, which is a different distribution with different
medians and no architectural asymmetry. Substituting the v1 shifted
priors into the simulator moves the T-EA median annual risk from
$4.0\%$ to $9.9\%$ and the OTT-EA median from $2.6\%$ to $17.4\%$,
reversing the central architectural ordering. This section now
documents the implemented model. The claim of architecture-dependent
modifier asymmetry is withdrawn, and \texttt{prior\_specification\_check.py}
in the archive asserts agreement between this table and the code.

\paragraph{Prior family.} Each modifier is a log-censored lognormal,
\begin{equation}
\delta_k = \exp\!\left(\max\!\left(\mathcal{N}(\mu_k, \sigma_k^2),\ 0\right)\right),
\quad k \in \{\mathrm{EA}, \mathrm{target}, \mathrm{concen}\},
\label{eq:supp_modifier_prior}
\end{equation}
identical for both architectural classes. Censoring the log at zero
enforces $\delta_k \geq 1$, which is what
Proposition~\ref{prop:dominance} requires. Three properties of this
family need stating, because they are not shared with the shifted
lognormal and they bear on how the results should be read.

First, the distribution has an atom at exactly $\delta_k = 1$, of mass
$\Phi(-\mu_k / \sigma_k)$. Second, the prior median is $1$ whenever
$\mu_k \leq 0$, so two of the three modifiers have median exactly
unity. Third, on the atom the EA system contributes no additional
hazard, so the stochastic dominance of
Proposition~\ref{prop:dominance} is weak, with a point mass of
equality, rather than strict.

\paragraph{Implemented parameters.} These apply to T-EA and OTT-EA
alike.

\begin{center}
\begin{tabular}{@{}lccccc@{}}
\toprule
Modifier & $\mu_k$ & $\sigma_k$ & median & mean & $\Pr(\delta_k = 1)$ \\
\midrule
$\delta_{\mathrm{EA}}$     & $0.00$ & $0.55$ & $1.000$ & $1.322$ & $0.500$ \\
$\delta_{\mathrm{target}}$ & $0.10$ & $0.60$ & $1.106$ & $1.464$ & $0.433$ \\
$\delta_{\mathrm{concen}}$ & $0.00$ & $0.55$ & $1.000$ & $1.324$ & $0.500$ \\
\bottomrule
\end{tabular}
\end{center}

The joint modifier $\delta_{\mathrm{EA}} \delta_{\mathrm{target}}
\delta_{\mathrm{concen}}$ has median $1.89$, mean $2.57$, and mass
$0.108$ at exactly $1$. That last figure is the prior probability that
the EA system adds nothing to the hazard on any channel.

\paragraph{Where the architectural difference actually enters.} It
does not enter through the modifiers. The only architectural
distinction in the canonical simulator is in the channel composition:
for T-EA each channel hazard is amplified by $\exp(\gamma Z)$, whereas
for OTT-EA the channel splits into a cross-cutting fraction $\xi_c$
amplified by $\exp(\gamma_{\mathrm{X}} Z)$ and a remainder amplified by
$\exp(\gamma Z)$ and gated by the segregation factor $P_3 P_4$. The
$\xi_c$ values are those of Pillar~III
($0.30$ insider, $0.30$ zero-day, $0.55$ supply chain, $0.20$
operational) and $P_3 P_4 = 0.30$. The empirical calibration cited
below therefore bears on the plausibility of the modifier magnitudes
in general, not on any difference between the classes.

\paragraph{Empirical anchoring of the magnitudes.} $\delta_{\mathrm{target}}$
is anchored on Salt Typhoon (CALEA infrastructure as targeted asset),
the Athens affair, and Storm-0558; $\delta_{\mathrm{concen}}$ on
operator user-population scale, which spans $10^7$--$10^8$ subscribers
for major regulated carriers and $10^8$--$10^9$ users for major
platform operators. The v1 text used that population contrast to
justify an OTT-EA prior median shifted upward by $0.80$ natural-log
units. No such shift is implemented, and the argument for it is
withdrawn here. A reader who finds the population contrast persuasive
should read it as an argument for a model the paper does not fit.

\paragraph{Domain mismatch hyperprior.} The transport-uncertainty
variance is $\tau_\phi = 0.50$, producing transport-coefficient draws
$\phi_{ij,r}^{\mathrm{A}} \sim \mathcal{N}(0, 0.25)$ that propagate
analogical-mapping uncertainty through to the edge hazard rates.

\paragraph{Structural constraint.} The constraint $\delta_k \geq 1$ is
enforced by the censoring in
Equation~\eqref{eq:supp_modifier_prior}. It is a modelling
stipulation, not a definitional consequence of the modifiers' names,
and each of the three is argued from the incident record in
\S\ref{sec:dominance} of the main paper. Evidence that an
operationally viable EA deployment attracts no incremental targeting,
adds no attack surface, or concentrates no additional key-material
value would falsify the corresponding constraint. The unmodelled
defensive-uplift counterfactual $\delta_{\mathrm{def}}$ is discussed
in \S\ref{sec:limitations_counterfactual} of the main paper.

% =============================================================================
\section{Cross-Cutting Prior Calibration: Derivation}\label{sec:supp_xcut_prior_calibration}

This section provides the full derivation of the cross-cutting
amplification prior \eqref{eq:gamma_X_prior} summarised in
\S\ref{subsec:dependence} of the main paper.

\paragraph{Empirical anchor from Stream C.} The Stream~C cohort
records, for each of the fourteen documented incidents, a
classification $X_i \in \{Y, P, N\}$ indicating whether the attack
path traversed cross-cutting infrastructure shared between the
key-management and data-storage subsystems ($X_i = Y$) or remained
subgraph-localised. Seven of fourteen are classified $X = Y$, giving
an empirical cross-cutting fraction $\hat p_X = 7/14 = 0.50$. Under the attack-graph specification of
\S\ref{subsec:dependence} of the main paper, and a campaign-prior
central tendency of unit intensity, the value of the prior median
$m$ for $\gamma_{\mathrm{X}}/\gamma$ that reproduces
$\Pr(X = Y) \approx \hat p_X$ at the prior median is $m = 2$. The
mapping is constructive (numerically calibrated to the
fixed-campaign-intensity midpoint) rather than analytic: $m = 2$
is an empirically anchored choice for the cross-cutting-amplification
prior median, not a closed-form derivation.

\paragraph{$\sigma$ choice.} The value $\sigma = 0.40$ is a
calibration choice rather than an empirically measured quantity. It
is adopted for two specific reasons: (i)~it matches the prior width
used in the variance-decomposition and tornado analyses
(\S\ref{subsec:evppi} of the main paper), so the same prior on
$\gamma_{\mathrm{X}}/\gamma$ runs through every part of the
framework that requires one, avoiding the appearance of
parameter-tuning between analyses; (ii)~the qualitative implications
of the OTT-EA tail finding are robust across a wider range
$\sigma \in \{0.25, 0.40, 0.50, 0.60, 0.75\}$ reported in
Table~\ref{tab:gamma_X_sigma_sensitivity} of the main paper.

\paragraph{Resulting prior characteristics.} At $(m, \sigma) =
(2, 0.40)$, the prior on $\gamma_{\mathrm{X}}/\gamma$ has:
median $= 2.00$; mean $\approx 2.08$; 5th percentile $\approx
1.52$; 95th percentile $\approx 2.93$; 99th percentile $\approx
3.54$. A defender preferring a tighter prior
($\sigma \to 0.25$) recovers OTT-EA tail magnitudes close to those
that would obtain at a fixed $\gamma_{\mathrm{X}}/\gamma = 2$; a
defender preferring a wider prior ($\sigma \to 0.60$) obtains
stronger conclusions, principally through the upper-tail mass on
$\gamma_{\mathrm{X}}/\gamma$ exceeding 3. The framework does not
require either alternative to be ruled out.

\paragraph{Tail-uplift mechanism under prior integration.} Treating
$\gamma_{\mathrm{X}}/\gamma$ as sampled rather than fixed produces
a heavier prior-integrated upper tail than would obtain at the
prior median taken in isolation. This follows from the convexity of
the exponential amplification $e^{\gamma_{\mathrm{X}} Z}$ in
$\gamma_{\mathrm{X}}$: by Jensen's inequality, the expected
amplification under a prior with positive variance exceeds the
amplification at the prior mean. High realisations of
$\gamma_{\mathrm{X}}/\gamma$ (in the right tail of the prior)
therefore dominate the upper tail of the output distribution, and
this contribution is invisible to any fixed-point evaluation. The
mechanism is the principal source of the OTT-EA tail-divergence
finding under integrated prior uncertainty.

% =============================================================================
\section{Monte Carlo Stability Diagnostics}\label{sec:supp_mc_stability}

This section gives the full multi-stream stability diagnostic
summarised in \S\ref{sec:bayesian} of the main paper.

\paragraph{Stream setup.} Four independent IID streams of $R = 2{,}000$
replicates each (seeds 2024, 3024, 4024, 5024) were run for each
architectural class, as a stability diagnostic complementing the
primary $R = 8{,}000$ (seed 2024) run from which all headline results
derive. The implementation is in
\texttt{bayesian\_mc\_canonical.py}. The multi-stream diagnostic driver
is in \texttt{run\_all.py}.

\paragraph{Per-stream medians.} T-EA per-stream medians for the annual
compromise probability $q_1^{\mathrm{T}}$ were $3.98\%$, $4.01\%$,
$4.01\%$, $4.06\%$ (max$-$min $= 0.08$~pp). OTT-EA per-stream medians
for $q_1^{\mathrm{OTT}}$ were $2.63\%$, $2.68\%$, $2.67\%$, $2.70\%$
(max$-$min $= 0.07$~pp). T-EA per-stream medians for the ten-year
cumulative probability $Q_{10}^{\mathrm{T}}$ ranged from $35.95\%$ to
$36.51\%$ (max$-$min $= 0.56$~pp); OTT-EA $Q_{10}^{\mathrm{OTT}}$
ranged from $31.91\%$ to $32.02\%$ (max$-$min $= 0.11$~pp), consistent
with the headline full-dependence medians of $36.6\%$ and $31.9\%$
from the primary $R = 8{,}000$ run.

\paragraph{Convergence diagnostic.} The Gelman--Rubin $\hat R$
statistic \citep{gelman2013} computed across the four streams gave
$\hat R < 1.01$ for all sampled scalar parameters in both
architectural configurations. With IID draws these values are
trivially close to unity and confirm only that the Monte Carlo is
implemented as intended. Effective sample sizes exceeded $7{,}700$ for
$q_1$ and $Q_{10}$ in both configurations across the pooled $8{,}000$
replicates, substantially above the $1{,}500$ rule of thumb for stable
quantile estimation.

% =============================================================================
\section{GPD Goodness-of-Fit Tests}\label{sec:supp_gpd_gof}

This section reports formal goodness-of-fit tests for the generalised
Pareto distribution fits to the upper tail of the year-1 compromise
probability $q_1$ in each architectural class. Two test statistics are
used: the Anderson--Darling statistic $A^2$ as defined for the GPD by
\citet{choulakian2001gpd}, and the Kolmogorov--Smirnov statistic $D$.
P-values are estimated by parametric bootstrap ($R = 2{,}000$, seed
$2024$) because the null distributions of $A^2$ and $D$ depend on the
unknown shape parameter $\xi$ when computed from MLE-estimated
parameters.

\paragraph{Threshold sensitivity.} The asymptotic GPD result holds
above a sufficiently high threshold $u$; threshold selection involves
a bias--variance tradeoff. We report results at the paper's primary
threshold (90th empirical percentile) and at a range of alternatives.

\begin{table}[!htbp]
\centering
\small
\caption{T-EA GPD goodness-of-fit across threshold choices. Both
Anderson--Darling and Kolmogorov--Smirnov tests fail to
reject the GPD null at the paper's primary threshold (90th
percentile) and at $85\%$. Kolmogorov--Smirnov rejects at the $5\%$
level from the $92\%$ threshold upward, and both tests reject at
$97\%$, where $n = 240$. The fitted shape ranges over
$0.179$--$0.325$ across the tested thresholds, so the $95\%$-threshold
estimate falls below the bootstrap interval reported at the primary
threshold in Table~10 of the main paper. Threshold stability should
therefore be read as moderate rather than strong.}
\label{tab:supp_gpd_gof_tea}
\begin{tabular}{@{}rcccccccc@{}}
\toprule
\textbf{Threshold} & $n_{\mathrm{exc}}$ & $\hat\xi$ & $\hat\sigma$
  & $A^2$ & $p_{\mathrm{AD}}$ & $D$ & $p_{\mathrm{KS}}$ \\
\midrule
$80\%$ & 1{,}600 & $0.325$ & $0.049$ & $0.63$ & $0.13$ & $0.022$ & $0.06$ \\
$85\%$ & 1{,}200 & $0.287$ & $0.058$ & $0.33$ & $0.60$ & $0.016$ & $0.61$ \\
\textbf{90\%} & \textbf{800} & \textbf{0.271} & \textbf{0.066}
  & \textbf{0.50} & \textbf{0.28} & \textbf{0.028} & \textbf{0.13} \\
$92\%$ & 640 & $0.265$ & $0.071$ & $0.73$ & $0.09$ & $0.035$ & $0.04$ \\
$95\%$ & 400 & $0.179$ & $0.092$ & $0.56$ & $0.21$ & $0.048$ & $0.03$ \\
$97\%$ & 240 & $0.211$ & $0.096$ & $1.54$ & $0.003$ & $0.083$ & $<0.001$ \\
\bottomrule
\end{tabular}
\end{table}

\begin{table}[!htbp]
\centering
\small
\caption{OTT-EA GPD goodness-of-fit across threshold choices. The
GPD null is rejected by Anderson--Darling at every tested threshold;
the Kolmogorov--Smirnov test rejects from the $90\%$ threshold upward
(at $85\%$ the KS $p$-value of $0.07$ is marginal rather than a
rejection at the conventional $5\%$ level). The fitted
shape parameter $\hat\xi$ is also unstable across thresholds, with a
sign change at the $97\%$ threshold reflecting saturation effects as
the right boundary of $q_1$ (the upper limit at $1$) becomes
material. The asymptotic GPD regime is not adequately reached for
the OTT-EA upper tail in the parameter regime considered.}
\label{tab:supp_gpd_gof_ott}
\begin{tabular}{@{}rcccccccc@{}}
\toprule
\textbf{Threshold} & $n_{\mathrm{exc}}$ & $\hat\xi$ & $\hat\sigma$
  & $A^2$ & $p_{\mathrm{AD}}$ & $D$ & $p_{\mathrm{KS}}$ \\
\midrule
$80\%$ & 1{,}600 & $0.602$ & $0.053$ & $1.02$ & $0.02$ & $0.018$ & $0.21$ \\
$85\%$ & 1{,}200 & $0.572$ & $0.066$ & $1.32$ & $0.005$ & $0.024$ & $0.07$ \\
\textbf{90\%} & \textbf{800} & \textbf{0.518} & \textbf{0.089}
  & \textbf{2.26} & \textbf{$<$0.001} & \textbf{0.037} & \textbf{0.003} \\
$92\%$ & 640 & $0.507$ & $0.101$ & $3.69$ & $<0.001$ & $0.049$ & $<0.001$ \\
$95\%$ & 400 & $0.171$ & $0.189$ & $4.97$ & $<0.001$ & $0.077$ & $<0.001$ \\
$97\%$ & 240 & $-0.411$ & $0.397$ & $6.02$ & $<0.001$ & $0.099$ & $<0.001$ \\
\bottomrule
\end{tabular}
\end{table}

\paragraph{Interpretation.} The T-EA GPD fit at the primary threshold
($90\%$, $\xi = 0.271$) is statistically adequate by both Anderson--
Darling ($p = 0.28$) and Kolmogorov--Smirnov ($p = 0.13$), and the
same holds at $85\%$. Adequacy does not extend across the whole
threshold range: Kolmogorov--Smirnov rejects at $92\%$, $95\%$ and
$97\%$, and $\hat\xi$ drifts from $0.325$ to $0.179$ between the
$80\%$ and $95\%$ thresholds. The T-EA fit is therefore adequate in
the neighbourhood of the primary threshold rather than uniformly, and
the reported shape estimate carries threshold uncertainty additional
to the bootstrap interval.

The OTT-EA GPD fit is rejected by Anderson--Darling at every tested
threshold and by Kolmogorov--Smirnov from the $90\%$ threshold upward
(the $85\%$ KS $p$-value of $0.07$ is marginal). The shape
parameter is unstable across thresholds and flips sign at the $97\%$
level as the upper boundary of $q_1$ (at $1$) becomes material. The
asymptotic GPD regime is therefore not adequately reached for the
OTT-EA upper tail in the parameter regime considered. The most
likely cause is structured non-GPD tail behaviour induced by the
mixture of contributing OTT-EA scenarios combined with the
correlated-campaign coupling on cross-cutting edges, which produces
a tail that is not asymptotically Pareto in the form required.

\paragraph{Implications for the tail-divergence finding.} The
qualitative tail-divergence finding (that the OTT-EA upper tail is
heavier than the T-EA upper tail under correlated-campaign
conditions) survives this goodness-of-fit assessment because it
does not depend on the validity of the GPD fit. The direct empirical
percentile comparison (T-EA 95th percentile of $16.5\%$ versus
OTT-EA 95th percentile of $17.4\%$; T-EA 99th percentile of
$\sim 24\%$ versus OTT-EA 99th percentile of $\sim 36\%$) is a
distribution-free comparison whose interpretation does not require
parametric tail modelling. The non-overlapping bootstrap confidence
intervals for $\hat\xi$ between architectures should be interpreted
as a tail-heaviness ordering rather than as a formal statistical
inference about Pareto-tail-index difference, because the GPD model
under which the bootstrap is computed is misspecified for OTT-EA.

\subsection*{Diagnosing the source of the OTT-EA GPD failure}

Three additional diagnostic tests identify the mechanism by which the
OTT-EA upper tail departs from the asymptotic GPD regime.

\paragraph{Z-coupling is the proximate cause.} Comparing the GPD fit
to the OTT-EA $q_1$ samples with and without the latent campaign
variable $Z(t)$ active isolates the Z-coupling contribution. Without
$Z(t)$ (independence configuration: \texttt{OTT\_q1\_indep}), the
GPD fit at the 90th percentile is statistically adequate: Anderson--
Darling $p_{\mathrm{AD}} = 0.13$, Kolmogorov--Smirnov $p_{\mathrm{KS}}
= 0.15$, $\hat\xi = 0.315$. With $Z(t)$ active (\texttt{OTT\_q1\_full}),
the fit is rejected as reported above. For T-EA the GPD fit is
adequate in both configurations ($p_{\mathrm{AD}} = 0.49$ without
$Z$, $0.28$ with), because $Z(t)$ couples uniformly to all T-EA
edges and rescales the tail without changing its shape. For OTT-EA,
$Z(t)$ preferentially activates cross-cutting edges via
$\gamma_{\mathrm{X}}/\gamma \geq 1$, generating a mixture between a
subgraph-dominant low-$Z$ regime and a cross-cutting-dominant
high-$Z$ regime that no single GPD can capture.

\paragraph{Alternative single-parametric forms also fail.} Lognormal
and Weibull fits to the OTT-EA exceedances above the 90th percentile
are likewise rejected by Kolmogorov--Smirnov ($p < 0.001$ in both
cases).
The Hill estimator of the Pareto tail index $\alpha$ (parametric
only in assuming asymptotic Pareto behaviour) likewise yields
unstable values across thresholds ($\hat\alpha$ ranging from $1.97$
at the 98th percentile to $1.32$ at the 80th, corresponding to
GPD-shape equivalents $\hat\xi = 1/\hat\alpha$ between $0.51$ and
$0.76$), mirroring the GPD instability. The OTT-EA upper tail is
not adequately described by any single parametric form in the regime
of practical sample sizes considered.

\paragraph{A two-component mixture fits the OTT-EA tail.} A finite
mixture of two GPDs, fitted by maximum likelihood, produces a
Kolmogorov--Smirnov statistic of $D = 0.026$ for the OTT-EA upper
tail (compared with $D = 0.037$ for the single GPD), with mixing
weight $p \approx 0.81$ on a moderate-shape component ($\hat\xi_1
\approx 0.10$, capturing the subgraph-dominant low-$Z$ regime) and
$1 - p \approx 0.19$ on a heavier bounded-support component
(capturing the cross-cutting-dominant high-$Z$ regime where the
upper boundary at $q_1 = 1$ becomes material). The two-component
structure is consistent with the model's mechanistic prediction:
under correlated-campaign conditions, cross-cutting edges amplify
disproportionately, producing a regime-mixture whose upper tail
inherits two different asymptotic shapes. The mixture interpretation
explains both the goodness-of-fit failure and the threshold
instability of $\hat\xi$ reported above.

\paragraph{What this means in practice.} The OTT-EA tail-divergence
finding is best understood structurally rather than parametrically.
The empirical upper-percentile comparison (OTT-EA exceeds T-EA at
the 95th and 99th percentiles, distribution-free) and the structural
constraint $\gamma_{\mathrm{X}}/\gamma \geq 1$ together establish the
ordering. The mixture diagnosis confirms the model's mechanistic
prediction. The parametric Pareto-index interpretation is withdrawn
in favour of this structural-and-mechanistic framing. Reproducibility:
the diagnostic tests above are produced by
\texttt{gpd\_goodness\_of\_fit.py} in the reproducibility archive.

% =============================================================================
\bibliographystyle{unsrtnat}
\bibliography{references}

\end{document}